\numberwithin{equation}{section}
\pgfplotsset{compat=newest}
\tikzset{
  cross/.style={cross out, draw=black, minimum size=2*(#1-\pgflinewidth), inner sep=0pt, outer sep=0pt},
  cross/.default={3pt},
  vertex/.style={circle, draw, fill=black, inner sep=0pt, minimum width=4pt},
}
\newcommand{\drawpred}[2]{
  \draw (#1, #2) node[cross,red] {};
}
\newcommand{\drawreal}[2]{
  \draw[black!40!green] (#1, #2) circle (3pt);
}
\newcommand{\interval}[4]{
  \draw (#2, #4) node[anchor=east]{#1} -- (#3, #4);
  \draw (#2, #4-0.1) -- (#2, #4+0.1);
  \draw (#3, #4-0.1) -- (#3, #4+0.1);
}
\newcommand{\intervalp}[5]{
  \interval{#1}{#2}{#3}{#4}
  \drawpred{#5}{#4}
}
\newcommand{\intervalr}[5]{
  \interval{#1}{#2}{#3}{#4}
  \drawreal{#5}{#4}
}
\newcommand{\intervalpr}[6]{
  \interval{#1}{#2}{#3}{#4}
  \drawpred{#5}{#4}
  \drawreal{#6}{#4}
}
\theoremstyle{theorem}
\newtheorem{theorem}{Theorem}[section]
\newtheorem{lemma}[theorem]{Lemma}
\newtheorem{coro}[theorem]{Corollary}
\newtheorem{claim}[theorem]{Claim}
\newtheorem{definition}[theorem]{Definition}
\newtheorem{obs}[theorem]{Observation}
\newtheorem{lem}[theorem]{Lemma}
\newcommand{\pred}[1]{\overline{#1}}
\newcommand{\ALG}{\mathrm{ALG}}
\newcommand{\OPT}{\mathrm{OPT}}
\newcommand{\opt}{\mathrm{opt}}
\newcommand{\w}{\ensuremath{\overline{w}}} 
\newcommand{\ZZ}{\mathbb{Z}}
\newcommand{\RR}{\mathbb{R}}
\newcommand{\eps}{\ensuremath{\varepsilon}\xspace} 
\newcommand{\sym}{\ensuremath{\Delta}\xspace} 
\DeclareMathOperator{\EX}{\mathbb{E}}
\newcommand{\ud}{\ensuremath{D}}
\newcommand{\hs}{\ensuremath{\mathcal{H}}}
\title{
Learning-Augmented Query Policies
\thanks{Research partially supported by EPSRC grants EP/S033483/1 and EP/T01461X/1, and by the German Science Foundation (DFG) under contract ME~3825/1.}
}
\author{%
  Thomas Erlebach\thanks{Durham University, Department of Computer Science, \texttt{thomas.erlebach@durham.ac.uk}} \and Murilo S.\ de Lima\thanks{K\'{o}pavogur, Iceland, \texttt{mslima@ic.unicamp.br}. Work done while employed at University of Leicester.} \and Nicole Megow\thanks{University of Bremen, Faculty of Mathematics and Computer Science, \texttt{\{nicole.megow,jschloet\}@uni-bremen.de}} \and Jens Schl{\"o}ter\footnotemark[3]
}
\begin{document}

\maketitle

\begin{abstract}

We study how to utilize (possibly machine-learned) predictions in a model for computing under uncertainty in which an algorithm can query unknown data. The goal is to minimize the number of queries needed to solve the problem. We consider fundamental problems such as finding the minima of intersecting sets of elements or sorting them (these problems can also be phrased as (hyper)graph orientation problems), as well as the minimum spanning tree problem. We discuss different measures for the prediction accuracy and design algorithms with performance guarantees that improve with the accuracy of predictions and that are robust with respect to very poor prediction quality. These measures are intuitive and might be of general interest for inputs involving uncertainty intervals. We show that our predictions are PAC learnable. We also provide new structural insights for the minimum spanning tree problem that might be useful in the context of explorable uncertainty regardless of predictions. Our results prove that untrusted predictions can circumvent known lower bounds in the model of explorable uncertainty. We complement our results by experiments that empirically confirm the performance of our algorithms. 
	 \end{abstract} 

\section{Introduction}
Dealing with uncertainty
is a common challenge
in many real-world settings. The research area
of \emph{explorable uncertainty}
\cite{kahan91queries,erlebach15querysurvey}
considers such scenarios assuming that, for
any uncertain input element, a \emph{query} can be used to obtain
the exact value of that element. The uncertain input value
is often represented by an interval that
contains the exact
value, and a query returns that exact value.
Queries are costly, and hence the goal is to make as few queries as possible
until sufficient information has been obtained to solve the given
problem. The major challenge is to balance the resulting exploration-exploitation tradeoff. 
For all problems that we consider, there exist input instances that are impossible to solve without querying the entire input. Therefore, instead of aiming to derive absolute bounds on the number of queries required for an input of size~$n$ in the worst case, we use 
competitive analysis to compare the number of queries
made by an algorithm with the
minimum number of queries among all feasible solutions, i.e., we aim for query-competitive algorithms.

In this query model, we consider very fundamental problems  that underlie numerous
applications: sorting, computing the minimum element, and computing
a minimum spanning tree in a graph with uncertain edge weights.
These problems
are well understood in the setting of explorable uncertainty: The best
known deterministic algorithms are $2$-competitive and no deterministic algorithm can be better \cite{erlebach08steiner_uncertainty,megow17mst,halldorsson19sortingqueries,kahan91queries}.
For the sorting and minimum problems,
we consider the setting where we want to solve the problem
for a number of different, possibly overlapping subsets of a
given ground set of uncertain elements. Such settings can be
motivated e.g.\ by distributed database caches~\cite{olston2000queries}
where one wants to answer database queries using cached local data
and a minimum number of queries to the master database.
Interestingly, these problems can also be phrased as (hyper)graph orientation problems, where the goal is to orient each (hyper)edge towards its minimum-weight vertex~\cite{BampisDEdLMS21}.
The minimum spanning tree (MST) problem is one of the most fundamental combinatorial
problems.
It is among the most widely studied problems in computing with
explorable uncertainty and has been a cornerstone in the
development of algorithmic approaches as well as lower
bound techniques \cite{erlebach08steiner_uncertainty,erlebach15querysurvey,megow17mst}.

Instead of assuming that no information about an uncertain
value is available except for the interval in which it is contained,
we study
a setting 
where predictions for the uncertain values are available.
For example, machine learning (ML) methods could be used to predict the value
of an interval.
Given the tremendous progress in artificial intelligence and ML in recent decades, we can expect that those predictions are of good accuracy but there is no guarantee and the predictions might be completely wrong.
This lack of provable performance guarantees for ML
often causes concerns regarding how confident one can be that
an ML algorithm will perform sufficiently well in all circumstances.
In many settings, e.g., safety-critical applications,
having provable performance guarantees is highly desirable or
even obligatory.
It is a very natural question
whether the availability of such (ML) predictions can be exploited by query 
algorithms for computing with explorable uncertainty.
Ideally, an algorithm should perform very well if
predictions are accurate, but even if they are arbitrarily
wrong, the algorithm should not perform worse than an algorithm
that handles the problem without access to predictions.
To emphasize
that the predictions can be wrong, we refer to them
as \emph{untrusted predictions}.

We note that the availability of both uncertainty intervals and
untrusted predictions is natural in many scenarios. For example,
a known past location and maximum movement speed of a mobile node
yield an uncertainty interval that is guaranteed to contain the
current location, while a ML method may predict
the node's precise location based on past movement data.
Similarly, in a distributed database system where the master database
updates a value in the local database only if the new value is outside
a fixed interval around the previously stored value, we have an
interval that is guaranteed to contain the current value while
a ML method may predict the precise current value
based on past time series data.

We study for the first time the combination of
explorable uncertainty and untrusted predictions. Our work draws inspiration
from recent work that has considered untrusted predictions in the context
of online algorithms, where the input is revealed to an algorithm incrementally
and the algorithm must make decisions without knowledge of future inputs.
We adopt the notions of $\alpha$-consistency
and $\beta$-robustness~\cite{lykouris2018competitive,purohit2018improving}: 
An algorithm is $\alpha$-consistent if it
is $\alpha$-competitive when the predictions are correct, and it is
$\beta$-robust if it is $\beta$-competitive no matter how wrong the
predictions are. Furthermore, we are interested in a smooth transition
between the case with correct predictions and the case with arbitrarily
incorrect predictions: We aim for performance guarantees
that degrade gracefully with the \emph{amount} of prediction error.
This raises interesting questions regarding appropriate ways of
measuring prediction errors, and we explore several such measures.

Our results show that, in the setting of explorable uncertainty,
it is in fact possible to exploit ML predictions of the uncertain
values in such a way that the performance of an algorithm is improved
when the predictions are good, while at the same time a strong bound
on the worst-case performance can be guaranteed even when the predictions
are arbitrarily bad. Following this approach, ML can thus be embedded within a
system to improve the typical performance while still maintaining
provable worst-case guarantees. In this way users can be shielded from
occasional failures of the ML algorithms.
Therefore, our approach contributes methods and analysis techniques
that can help to address the challenge of building trustable AI systems.

\paragraph{Main results} We show how to
utilize (possibly machine-learned) predictions in a query-based model for computing under uncertainty and prove worst-case performance guarantees. Our major contribution is twofold: 
\begin{inparaitem}
\item[(a)] we prove that untrusted predictions can circumvent lower bounds in the context of explorable uncertainty, and 
\item[(b)] we provide new structural insights for previously studied, fundamental problems that might be useful regardless of predictions.
\end{inparaitem}
Finally, we conduct experiments that show that our algorithms are practical and that confirm the theoretical improvement.

For the problems of {sorting or identifying minima in intersecting sets}
and finding an MST,
we give algorithms that are $1.5$-consistent
and $2$-robust, and show that this is the best possible consistency when aiming for optimal robustness. 
It is worth noting that our algorithms achieve the
improved competitive ratio of $1.5$ in case of accurate predictions
while maintaining the worst-case ratio of~$2$. This is in contrast
to work on other online problems with predictions where the exploitation
of predictions usually incurs an increase in the worst-case
ratio (see, e.g, \cite{purohit2018improving,AntoniadisGKK20}).
We also give a parameterized {robustness-consistency} tradeoff.
Our major focus lies on a more fine-grained performance analysis
with guarantees that improve with the accuracy of the predictions.
We compare three different measures $k_{\#}, k_h, k_M$ for the prediction accuracy. The number of inaccurate predictions $k_{\#}$ is
too crude to allow for a performance improving on the lower bounds of~$2$ for the setting without predictions~\cite{erlebach08steiner_uncertainty,kahan91queries}.
We propose two measures that take structural insights about uncertainty intervals into account, the hop distance $k_h$ and the mandatory query distance $k_M$. The latter can be proven to be more restrictive,~i.e.,~$k_M\leq k_h$. We give proper definitions later. While the hop distance $k_h$ is very intuitive and possibly of general interest, the mandatory query distance $k_M$ is tailored to problems with explorable uncertainty.
We show that the predictions are PAC-learnable w.r.t.~both error measures, $k_h$ and~$k_M$.

For the problem of identifying the minima in intersecting sets, we provide an algorithm with competitive ratio $\min\{(1+\frac{1}{\gamma-1})(1+ \frac{k_M}{\opt}), \gamma \}$, for any integral $\gamma\geq 2$. Here, $\opt$ is the minimum number of queries in an offline solution; precise definitions are given below. This is best possible for $k_M{=0}$ {and large $k_M$}. With respect to the hop distance, we achieve the stronger bound $\min\{(1+\frac{1}{\gamma})(1+ \frac{k_h}{\opt}), \gamma\}$, for any integral $\gamma \geq 2$, which is also best possible for $k_h = 0$ and large $k_h$. It is not difficult to see that the sorting problem can be reduced to the minimum problem by creating a set for each pair of elements that are in the same set of the sorting instance, so these bounds also apply to the sorting problem. For the special case of sorting a single set, we obtain an algorithm with competitive ratio $\min\{1+k/\opt, 2\}$ for any of the considered accuracy measures, which is best possible.
Finding the MST under uncertainty is the combinatorially most challenging problem. As our main result, we give an algorithm with competitive ratio $\min\{ 1+ \frac{1}{\gamma} +  (5 + \frac{1}{\gamma}) \cdot \frac{k_h}{\opt}, \max\{3,\gamma + \frac{1}{\opt}\}\}$, for any integral $\gamma \geq 2$.
All our algorithms have polynomial running time except the algorithms for the minimum problem, which may involve solving an NP-hard vertex cover problem. We justify this complexity by showing that even the offline variant of the minimum problem is NP-hard.

Omitted proofs are provided in the appendix.

\paragraph{Further related work}
There is a long history of research on the tradeoff between exploration and exploitation when coping with uncertainty in the input data. Stochastic models are often assumed,
e.g., in work on multi-armed bandits~\cite{Thompson33,BubeckC12,GittinsGW11-book}, Weitzman's Pandora's box problem \cite{Weitzman1979}, and more recently query-variants of combinatorial optimization problems; {see, e.g.,
\cite{singla2018price,gupta2019markovian}, and specific problems such as stochastic knapsack~\cite{DeanGV08,Ma18}, orienteering~\cite{GuptaKNR15,BansalN15}, matching~\cite{ChenIKMR09,BansalGLMNR12,BlumDHPSS20,BehnezhadFHR19,AssadiKL19}, and probing problems \cite{AdamczykSW16,GuptaN13,GuptaNS16}.}

In our work, we assume no knowledge of stochastic information and aim for robust algorithms that perform well even in a worst case. This line of research on (adversarial) explorable uncertainty has been initiated by Kahan~\cite{kahan91queries} in the context of selection problems.
In particular, he showed for the problem of identifying all maximum elements of a set of uncertain
values that querying the intervals in order of non-increasing right endpoints
requires at most one more query {than} the optimal query set.
Subsequent work addressed finding the~$k$-th smallest value in a set of uncertainty intervals~\cite{gupta16queryselection,feder03medianqueries},
caching problems
\cite{olston2000queries}, computing a function value~\cite{khanna01queries}, sorting~\cite{halldorsson19sortingqueries}, and
combinatorial optimization problems, such as shortest path~\cite{feder07pathsqueires}, the knapsack problem~\cite{goerigk15knapsackqueries}, scheduling problems~\cite{DurrEMM20,arantes18schedulingqueries,albersE2020}, the MST problem and matroids~\cite{erlebach08steiner_uncertainty,erlebach14mstverification,megow17mst,focke17mstexp,MerinoS19}. 

Most related to our work are previous results on the MST problem and sorting with explorable uncertainty.
For the MST problem with uncertain edge weights represented by open intervals, a $2$-competitive
deterministic algorithm was presented and shown to be best possible~\cite{erlebach08steiner_uncertainty}.
The algorithm is based on the concept of \emph{witness sets}, i.e., sets of uncertain elements
with the property that any feasible query set must query at least one element of the set. The algorithm
from~\cite{erlebach08steiner_uncertainty} repeatedly
identifies a witness set of size~$2$ that corresponds to two candidates for the maximum-weight edge
in a cycle of the given graph, and queries both its elements. It is also known that randomization admits 
an improved competitive ratio of $1.707$ for the MST problem with uncertainty~\cite{megow17mst}. {Both, a deterministic $2$-competitive algorithm and a randomized $1.707$-competitive algorithm, are known for the more general problem of finding the minimum base in a matroid~\cite{erlebach16cheapestset,megow17mst}, even for the case with non-uniform query costs~\cite{megow17mst}.}
For sorting a single set, a $2$-competitive algorithm exists (even
with arbitrary query costs) and is
best possible~\cite{halldorsson19sortingqueries}. In the case of uniform query costs, the algorithm simply queries witness
sets of size~$2$; in the case of arbitrary costs, it first queries a minimum-weight vertex cover of the interval
graph corresponding to the instance and then executes any remaining queries that are still necessary.
For uniform query cost, the competitive ratio can be improved to~$1.5$ using randomization~\cite{halldorsson19sortingqueries}.

Our work is the first to consider explorable uncertainty in the recently proposed framework of online algorithms using (machine-learned) predictions \cite{medina2017revenue,purohit2018improving,lykouris2018competitive}. After work on revenue optimization~\cite{medina2017revenue} and online caching~\cite{lykouris2018competitive}, Kumar et al.~\cite{purohit2018improving} studied online algorithms with respect to consistency and robustness in the context of classical online problems, ski-rental and non-clairvoyant scheduling. They also studied the performance as a function of the prediction error. This work initiated a vast growing line of research. Studied problems include rent-or-buy problems \cite{purohit2018improving,GollapudiP19,WeiZ20}, revenue optimization~\cite{medina2017revenue},
	scheduling and bin packing~\cite{purohit2018improving,angelopoulos2019online,moseley2020online,Mitzenmacher20,AzarLT2021,LattanziLMV20}, caching and metrical task systems~\cite{lykouris2018competitive,rohatgi2019near,AntoniadisCEPS2020,Wei20}, matching~\cite{KumarPSSV19,AntoniadisGKK20} and secretary problems~\cite{DuettingLLV21,AntoniadisGKK20}. Very recently and in a similar spirit as our work, Lu et al.~\cite{LuRSZ2021} studied a generalized sorting problem with additional predictions. Their model strictly differs from ours, as they focus on bounds for the absolute number of pair-wise comparisons whereas we aim for query-competitive algorithms.
Overall, learning-augmented online optimization is a highly topical concept which has not yet been studied in the explorable uncertainty model.

There is a significant body of work on computing in models where information
about a hidden object can be accessed only via queries. The 
hidden object can for example be a function, a matrix, or a graph.
In the graph context, property testing~\cite{Goldreich2017} has been studied extensively
since the early 1990s. A typical problem is to decide
whether a given graph has a certain property or is ``far'' from having that
property using a small (sublinear or even constant) number of queries that look
up entries of the adjacency matrix of the graph.
Many other types of queries have
been studied (see e.g.~\cite{Mazzawi2010,Beame2018,Rubinstein2018,Chen2020,Assadi2020} and many more).
Such work has often considered graph reconstruction problems or
parameter estimation problems (e.g., estimating the number of edges).
The bounds on the number of queries made by an algorithm that have been shown
in these problems are usually absolute, i.e., given as a function of the input size,
but independent of the input graph itself, and the resulting correctness guarantees
are often probabilistic.

In contrast to much of the work on algorithms with query access to a hidden
object, we evaluate our algorithms in an instance-dependent manner: For each
input, we compare the number of queries made by an algorithm with the best
possible number of queries \emph{for that input}, using competitive analysis.
In computing with uncertainty, there are typically inputs where even an
optimal query set has to query essentially the whole input in order to be
able to solve the problem, hence absolute bounds on the number of queries
depending only on the size of the input would often be trivial. The goal
is hence to devise algorithms that use, on each input, a number of queries
that is not much larger than the optimal query set for that input.

\section{{Definitions, accuracy of predictions, and lower bounds}}
\label{sec:prelim}
\paragraph{Problem definitions} In the \emph{minimum problem under uncertainty}, we are given a set $\mathcal{I}$ of $n$ uncertainty intervals with a predicted value $\pred{w}_i\in I_i$ for each $I_i \in \mathcal{I}$, and a family $\mathcal{S}$ of $m$ subsets of $\mathcal{I}$.
The \emph{true} value of interval $I_i$ is denoted by $w_i$ and can be revealed by a \emph{query}.
The task is to identify for each $S \in \mathcal{S}$ the element with the minimum true value. Note that this may not require to determine the actual value of this element.

The {\em sorting problem under uncertainty} is closely related to the minimum problem. For the same input, the task is to sort, for each set $S \in \mathcal{S}$, the intervals in non-decreasing order of their true values.
	
In the {\em minimum spanning tree (MST) problem under uncertainty}, we are given a graph $G=(V,E)$, with uncertainty intervals $I_e$
and predicted values $\pred{w}_e \in I_e$ for the weight of each edge $e \in E$. A minimum spanning tree (MST) is a tree that connects all vertices of $G$ at a minimum total edge weight.
The task is to find an MST with respect to the true values of the edge weights.

In all three problems, the goal is to solve the task using a minimum number of queries. 
Note that the exact value of a solution, i.e., the minimum value or the weight of the MST, does not need to be determined. 
We further assume that each uncertainty interval is either trivial or open, i.e., $I_i = (L_i,U_i)$ or $I_i = \{w_i\}$,  
as otherwise a simple lower bound of $n$ on the competitive ratio exists for the minimum and MST problems~\cite{gupta16queryselection}.

Further, we study \emph{adaptive
strategies} that make queries sequentially and utilize the results of previous steps to decide upon the next query.
We impose no time/space complexity constraints on the algorithms, as we are interested in understanding the competitive ratio of the problems.
We assume the algorithms never query intervals that are trivial or that were previously queried.
A set $W$ of queries is called a \emph{witness set} \cite{bruce05uncertainty,erlebach08steiner_uncertainty} if every feasible solution (i.e., every set of queries that solves the problem) contains at least one query in~$W$.

\paragraph{Competitive analysis}
We employ competitive analysis and compare the outcome of our algorithms with the best offline solution, i.e., the minimum number of queries needed to verify a solution when all values are known in advance. We call the offline variants of our problems \emph{verification problems}.
{By $\OPT$ we denote an arbitrary optimal query set for the verification problem, and by $\opt$ its cardinality.
For an algorithm for the online problem, we denote by $\ALG$ the set of queries it makes and by $|\ALG|$ the cardinality of that set.}
An algorithm is $\rho$-{\em competitive} if it executes, for any problem instance, at most $\rho \cdot {\opt}$ queries.
Further, we quantify the performance of our algorithms depending on the quality of predictions. For the extreme cases, we say that an algorithm is $\alpha$-\emph{consistent} if it is $\alpha$-competitive if the predictions are correct, and $\beta$-\emph{robust} if it is $\beta$-competitive if the predictions are inaccurate. 

Clearly, an algorithm that assumes the predicted values to be correct and solves the verification problem is $1$-consistent. However, such an algorithm may have an arbitrarily bad performance if the predictions are incorrect. Similarly, the known deterministic $2$-competitive algorithms for the online problems without predictions~\cite{erlebach08steiner_uncertainty, kahan91queries} are $2$-robust and $2$-consistent. The known lower bounds of $2$ rule out any robustness factor less than $2$ for our model. They build on the following simple example with two intersecting intervals $I_a,I_b$. No matter which interval a deterministic algorithm queries first, say $I_a$, the realized value could be $w_a\in I_a\cap I_b$, which requires a second query. If the adversary chooses $w_b\notin I_a\cap I_b$, querying just $I_b$ would have been sufficient to identify the minimal interval.

We give a bound on the best~achievable~tradeoff between consistency and robustness. Later, we will  provide algorithms with matching performance guarantees.

\begin{restatable}{theorem}{ThmLBTradeoffWithoutError}
\label{theo_minimum_combined_lb}
  Let $\beta \geq 2$ be a fixed integer.
  For the minimum (even in a single set), sorting and MST problems under uncertainty, there is no deterministic $\beta$-robust algorithm that is $\alpha$-consistent for $\alpha < 1 + \frac{1}{\beta}$. {And vice versa, no deterministic $\alpha$-consistent algorithm, with $\alpha>1$, is $\beta$-robust for $\beta < \max\{\frac{1}{\alpha-1},2\}$.}
\end{restatable}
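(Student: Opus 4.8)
\medskip
\noindent We sketch how to establish both inequalities. The plan is to prove them for the minimum problem on a single set and then transfer them: sorting contains the single‑set minimum problem via its standard reduction (or a direct analogue with a few auxiliary sets), and for the MST problem one embeds the construction into a cycle in which all edges but a few have known weight, so that the algorithm must identify the maximum‑weight edge exactly as in the two‑interval witness argument recalled before the theorem. So fix the integer $\beta\ge 2$ and work with one set $S$.

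For the first inequality I would exhibit a \emph{single} instance with $\opt=\beta$ under correct predictions on which no $\beta$‑robust algorithm makes fewer than $\beta+1$ queries when the predictions turn out to be correct. Let $S$ consist of a ``minimum'' interval $I_0=(0,1)$ with predicted value $\pred{w}_0=1-\varepsilon$ and of $\beta$ further intervals $I_1,\dots,I_\beta$, all equal to $(\varepsilon,M)$ for a large $M$, with distinct predicted values slightly above $1$. Under correct predictions the unique optimal query set is $\{I_1,\dots,I_\beta\}$: querying these confirms $w_j>1>w_0$ for all $j$, whereas omitting some $I_j$ (its true value may dip below $w_0$) or querying $I_0$ instead leaves the minimum undetermined; in particular $\opt=\beta$. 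Hence any algorithm that is $\alpha$‑consistent for some $\alpha<1+\tfrac1\beta$ makes at most $\alpha\beta<\beta+1$, i.e.\ exactly $\beta$, queries on the correct realization, namely $I_1,\dots,I_\beta$ in some order. Now run the adversary: answer the first $\beta-1$ of these queries according to the predictions, and when the last one, say $I_\beta$, is queried, return a value in $(\varepsilon,1)$; the minimum is now undetermined (it is $I_0$ or $I_\beta$), so the algorithm is forced to query $I_0$ as well, making $\beta+1$ queries. Finally set $w_0$ to a tiny value below $\varepsilon$, so $I_0$ is the true minimum and the offline optimum for this realization is $\{I_0\}$ (one query suffices since $w_0<\varepsilon\le w_j$ for every $j$). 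The algorithm is then not $\beta$‑competitive on this realization, contradicting $\beta$‑robustness; hence no $\beta$‑robust algorithm is $\alpha$‑consistent for $\alpha<1+\tfrac1\beta$.

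For the converse, $\max\{1/(\alpha-1),2\}$ splits into two parts. The ``$\ge 2$'' part is the unconditional two‑interval lower bound recalled before the theorem and holds for every algorithm regardless of predictions. For the ``$\ge 1/(\alpha-1)$'' part I would reuse the construction above with $\beta$ replaced by an integer $k$ chosen in the open interval $(\beta-1,\,1/(\alpha-1))$; this interval has length $1/(\alpha-1)-\beta+1>1$ and hence contains an integer whenever $\beta<1/(\alpha-1)$. Since $k<1/(\alpha-1)$ we have $\alpha<1+\tfrac1k$, so an $\alpha$‑consistent algorithm again makes exactly $k=\opt$ queries on the correct realization, and the same adversary forces $k+1>\beta$ queries on a realization whose offline optimum is $1$; thus the algorithm is not $\beta$‑robust.

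The step I expect to require the most care is verifying, in the first construction (and its $k$‑scaled variant), that an $\alpha$‑consistent algorithm with $\alpha<1+\tfrac1\beta$ is genuinely pinned to querying exactly $\{I_1,\dots,I_\beta\}$ on the correct‑prediction run — which amounts to checking that this is the \emph{unique} optimal query set, i.e.\ a short case analysis showing every $(\beta-1)$‑element set and every $\beta$‑element set other than $\{I_1,\dots,I_\beta\}$ is infeasible under the predicted values. The rest — confirming that the adversary's realizations are valid (values inside their intervals, a well‑defined unique minimum), that the claimed offline optima are correct, and filling in the sorting reduction and the dual ``maximum‑edge‑in‑a‑cycle'' gadget for MST — should be routine.
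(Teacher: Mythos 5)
Your proposal is correct and takes essentially the same approach as the paper's proof: a single-set gadget of $\beta+1$ intervals whose unique size-$\beta$ feasible query set under correct predictions is $\{I_1,\dots,I_\beta\}$, so that $\alpha$-consistency with $\alpha<1+\frac{1}{\beta}$ pins the algorithm to those $\beta$ queries, after which the adversary perturbs the final queried value to force a $(\beta+1)$st query on a realization with $\opt=1$; the converse inequality is then reduced to this first part plus the general lower bound of~$2$. The paper's concrete intervals are $I_0=(0,2)$ and $I_j=(1,3)$ rather than your $(0,1)$ and $(\varepsilon,M)$, and your choice of an integer $k\in(\beta-1,\,1/(\alpha-1))$ handles integrality slightly more explicitly than the paper's substitution $\beta=\beta'-1$, but these are cosmetic differences and the logic is identical.
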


\paragraph{Accuracy of predictions}
We aim for a more fine-grained performance analysis giving guarantees that depend on the quality of predictions.
A very natural, simple error measure is the number of inaccurate predictions $k_\#=|\{I_i \in \mathcal{I}\,|\, w_i \not= \w_i\}|$. However, we show that for $k_{\#} \geq 1$ the competitive ratio cannot be better than the known lower bounds of~$2$.
The reason for the weakness of this measure is that it completely ignores 
the interleaving  structure of intervals. (Similarly, using an $L_1$ error metric such as
$\sum_{I_i\in \mathcal{I}}|w_i - \w_i|$ would not be meaningful because only
the \emph{order} of the values and the interval endpoints matters for our problems.)
To address this, we consider {two refined measures for the predictor quality}.

 \emph{Hop distance.} This metric captures naturally the relation between a predicted and a true value in relation to other intervals. 
For a non-trivial interval $I_j=(L_j,U_j)$,
we say that the value of interval $I_i$ \emph{passes over} $L_j$ if one of $w_i,\w_i$ is
$\le L_j$ and the other is $>L_j$. Similarly,
the value of $I_i$ \emph{passes over} $U_j$ if one of $w_i,\w_i$ is
$< U_j$ and the other is $\ge U_j$. Intuitively, the value of $I_i$ passes over one endpoint of~$I_j$
if it enters or leaves $I_j$, and it passes over both endpoints of $I_j$ if it jumps over $I_j$
when going from predicted to true values.
For a trivial interval $I_j=\{w_j\}$, we say that the value of $I_i$
\emph{jumps over}~$I_j$ if one of $w_i,\w_i$ is strictly smaller than $w_j$ and the other is strictly
larger than $w_j$.
{To avoid counting values passing over endpoints of irrelevant intervals,} 
let $A_i$ be the set of intervals that potentially interact with $I_i$.
For the minimum and the sorting problem, we let $A_i$ be the union of all sets that contain $I_i$. When
considering the MST problem, we consider the maximal biconnected component containing $I_i$, i.e., $A_i$
is the union of all intervals on cycles containing~$I_i$.
Now define $h_i=h_i(A_i)$ to be the number of non-trivial intervals $I_j\in A_i$ such that the value of~$I_i$ passes
over $L_j$ plus the number of non-trivial intervals $I_j\in A_i$ such that the value of~$I_i$ passes
over $U_j$, plus the number of trivial intervals $I_j=\{w_j\}$ in $A_i$ such that
the value of $I_i$ jumps over~$I_j$.
The hop distance of a given instance is then $k_h = \sum_{i=1}^n h_i$; 
see also Figure~\ref{fig_hop_ex}.
Note that $k_{\#} = 0$ implies $k_h = 0$, so Theorem~\ref{theo_minimum_combined_lb} implies that no algorithm can simultaneously have competitive ratio better than $1 + \frac{1}{\beta}$ if $k_h = 0$ and $\beta$ for arbitrary~$k_h$.

\begin{figure}[t]
	\centering
\subfigure[]{\label{fig_hop_ex}
	\begin{tikzpicture}[line width = 0.3mm, scale = 0.9, transform shape]	
	\intervalpr{$I_1$}{0}{4}{1.5}{1}{2.75}	
	\intervalpr{$I_2$}{1.5}{6}{2.25}{4.5}{2}	
	\intervalpr{$I_3$}{2.5}{6}{3}{4.5}{5.5}							
	\intervalpr{$I_4$}{3}{6}{3.75}{3.25}{3.75}		
	
	\begin{scope}[on background layer]
	\draw[line width = 0.2mm,,lightgray] (0,1) -- (0,4);
	\draw[line width = 0.2mm,,lightgray] (1.5,1) -- (1.5,4);
	\draw[line width = 0.2mm,,lightgray] (2.5,1) -- (2.5,4);
	\draw[line width = 0.2mm,,lightgray] (3,1) -- (3,4);
	\draw[line width = 0.2mm,,lightgray] (6,1) -- (6,4);
	\draw[line width = 0.2mm,,lightgray] (4,1) -- (4,4);
	\end{scope}
	
	\draw[decoration= {brace, amplitude = 5 pt, aspect = 0.5}, decorate] (2.75,1.3) -- (1,1.3);
	\node[] (l1) at (1.85,1) {\scalebox{0.75}{$h_1=2$}};
	
	\draw[decoration= {brace, amplitude = 5 pt, aspect = 0.5}, decorate] (4.5,2.1) -- (2,2.1);
	\node[] (l1) at (3.2,1.8) {\scalebox{0.75}{$h_2=3$}};
	
	\draw[decoration= {brace, amplitude = 2.5 pt, aspect = 0.5}, decorate] (5.5,2.85) -- (4.5,2.85);
	\node[] (l1) at (4.95,2.6) {\scalebox{0.75}{$h_3=0$}};
	
	\draw[decoration= {brace, amplitude = 1 pt, aspect = 0.5}, decorate] (3.75,3.6) -- (3.25,3.6);
	\node[] (l1) at (3.5,3.4) {\scalebox{0.75}{$h_4=0$}};
	\end{tikzpicture}
}
\hspace*{2cm}
\subfigure[]{\label{fig_man_ex}
	\begin{tikzpicture}[line width = 0.3mm, scale = 0.9, transform shape]
		\intervalpr{$I_1$}{0}{4}{1.5}{1}{1}		
		\intervalpr{$I_2$}{1.5}{6}{2.25}{3.25}{5.5}		
		\intervalpr{$I_3$}{2.5}{6}{3}{3.25}{5.5}								
		\intervalpr{$I_4$}{3}{6}{3.75}{3.25}{5.5}				
		
		\node[] (l1) at (1.85,0.95) {};	
	\end{tikzpicture}
}
\caption{Examples for the minimum problem with a single set $S=\{I_1,I_2,I_3,I_4\}$. Circles illustrate true values and crosses illustrate the predicted values. \subref{fig_hop_ex} Predictions and true values with a total hop distance of $k_h = 5$.
\subref{fig_man_ex} Predictions and true values with a mandatory query distance of $k_M = 1$.
}
\label{fig_error_ex}
\end{figure}
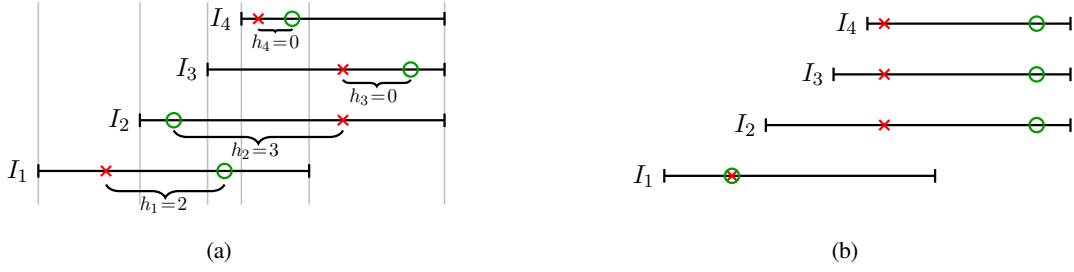

\emph{Mandatory query distance.} While the hop distance takes structural information regarding the interval structure into account, it does not distinguish whether a ``hop'' affects a feasible solution. We introduce a third and strongest measure for the prediction accuracy based on the following definition. 

\begin{definition}[mandatory]
	Given a problem instance with uncertainty intervals, an interval
	is {\em mandatory} if it is in each feasible query set of the verification problem. An interval is {\em prediction mandatory} if it is in each feasible query set assuming that the predictions $\w$ are accurate.
\end{definition}

Let $\mathcal{I}_P$ be the set of prediction mandatory elements, 
and let $\mathcal{I}_R$ be the set of really mandatory elements. 
The 
\emph{mandatory query distance} is the size of the symmetric difference of $\mathcal{I}_P$ and $\mathcal{I}_R$, i.e., $k_M = |\mathcal{I}_P \sym \mathcal{I}_R| = |(\mathcal{I}_P \cup \mathcal{I}_R) \setminus (\mathcal{I}_P \cap \mathcal{I}_R)| = |(\mathcal{I}_P \setminus \mathcal{I}_R) \cup (\mathcal{I}_R \setminus \mathcal{I}_P)|$.
Figure~\ref{fig_man_ex} shows an example with $k_M=1$.
Considering the true values in the example, both $\{I_1\}$ and $\{I_2,I_3,I_4\}$ are feasible solutions. Thus, no element is part of every feasible solution and $\mathcal{I}_R = \emptyset$. 
Assuming correct predicted values, $I_1$ has to be queried even if all other intervals have already been queried and, therefore, $\mathcal{I}_P = \{I_1\}$.
It follows $k_M = |\mathcal{I}_P \sym \mathcal{I}_R| = 1$.

Obviously, $k_M$ is a problem-specific error measure as, in a given set of uncertainty intervals, different intervals may be mandatory for different problems.
We can relate $k_M$ to $k_h$ in the following theorem.
 
\begin{restatable}{theorem}{HopDistanceMandatoryDistance}
	\label{Theo_hop_distance_mandatory_distance}
	For any instance of the minimum, sorting and MST problems under uncertainty, the hop distance is at least as large as the mandatory query distance, $k_M \leq k_h$.
\end{restatable}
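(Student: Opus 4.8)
The plan is to establish $k_M \le k_h$ by a deformation (``morphing'') argument. A naive approach --- showing $h_i \ge 1$ for every $I_i \in \mathcal{I}_P \sym \mathcal{I}_R$ and summing --- fails, because an interval can change its mandatory status due to the movement of the values of \emph{other} intervals; e.g.\ in Figure~\ref{fig_man_ex}, $I_1$ leaves $\mathcal{I}_R$ only because $w_2, w_3, w_4$ all move above $U_1$, while $h_1 = 0$. Instead, I will interpolate continuously from the predicted profile $\w$ to the true profile $w$, moving one value at a time, and charge each change of the set of mandatory intervals to a single hop.

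The key ingredient is a characterization of mandatory intervals that depends only on the relative order of the values $w_j$ and the endpoints $L_j, U_j$, and that is ``local''. It follows from the elementary fact that $I_i$ is non-mandatory if and only if querying every interval other than $I_i$ already determines the solution. This yields: for the minimum problem, $I_i$ is mandatory iff there is a set $S \in \mathcal{S}$ with $I_i \in S$ and $\min_{j \in S \setminus \{i\}} w_j \in (L_i, U_i)$; for sorting, the simpler condition that some $I_j \in A_i$ satisfies $w_j \in (L_i, U_i)$; and for MST, $I_i$ is mandatory iff $m_i \in (L_i, U_i)$, where $m_i$ is the minimum, over all paths joining the endpoints of the edge of $I_i$ while avoiding that edge, of the largest true weight on the path. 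What matters for the argument is that $L_i$ and $U_i$ occur only in the condition for $I_i$ itself, and that the quantity tested against them ($\min_{j \in S \setminus \{i\}} w_j$, respectively $m_i$) depends on a value $w_j$ only when $I_j \in A_i$ --- i.e.\ when $I_j$ shares a set with $I_i$ (minimum/sorting), or the edges of $I_i$ and $I_j$ lie on a common cycle (MST).

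I then process the intervals in an arbitrary order and, in ``phase~$i$'', slide the value of $I_i$ from $\w_i$ to $w_i$ while keeping the other coordinates fixed; I assume general position (no value ever coincides with an interval endpoint --- the boundary cases are absorbed by the inequality conventions in the definition of $h_i$). Each phase is a sequence of elementary events --- a value crossing another value, or a value crossing a single endpoint $L_j$ or $U_j$. Writing $M(\cdot)$ for the mandatory set of a profile, so that $M(\w) = \mathcal{I}_P$ and $M(w) = \mathcal{I}_R$, I claim: (i) a value--value crossing leaves $M$ unchanged, since the two crossing values are adjacent in the combined order and hence lie on the same side of every endpoint, and for the minimum problem the witness for mandatoriness merely transfers between the two clauses of the characterization (according to whether $I_i$ is the minimum of a shared set); (ii) crossing an endpoint of some $I_j \notin A_i$ leaves $M$ unchanged, since that endpoint enters no condition and $w_i$ enters no condition built from $A_i$-free data; (iii) crossing $L_j$ or $U_j$ with $I_j \in A_i$ flips only the status of $I_j$ --- in the MST case because $m_j$, viewed as a function of $w_i$, is continuous, nondecreasing, and piecewise-linear with slopes in $\{0,1\}$ and coincides with $w_i$ on any unit-slope piece, so it can equal $L_j$ or $U_j$ only when $w_i$ does. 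Granting the claim, phase~$i$ changes $M$ at most once for each endpoint of an interval in $A_i$ that the value of $I_i$ passes over, hence at most $h_i$ times; by the triangle inequality for the symmetric-difference metric, $k_M = |\mathcal{I}_P \sym \mathcal{I}_R| \le \sum_t |M(v^{(t-1)}) \sym M(v^{(t)})| \le \sum_{i=1}^n h_i = k_h$, where $v^{(0)}, v^{(1)}, \dots$ lists the profiles traversed during the morph.

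The main obstacle is item~(iii) for the MST problem: I must verify that the bottleneck quantity $m_i$ --- a min--max over paths --- is indeed a continuous, nondecreasing, piecewise-linear function of each coordinate whose only unit-slope piece coincides with that coordinate, and that it is completely insensitive to the weights of edges outside the biconnected component $A_i$. The minimum/sorting case is more elementary but still requires the case analysis behind invariance~(i) and the verification of the mandatory-element characterization. Should a uniform treatment prove unwieldy, the fallback is to run the same morph-and-charge scheme separately for each of the three problems.
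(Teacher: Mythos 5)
Your proof is correct and takes a genuinely different route from the paper's, so let me compare the two. The paper gives a \emph{direct charging} argument: for each interval $I_i \in \mathcal{I}_P \sym \mathcal{I}_R$, it exhibits a specific witness $I_j$ (in a common set with $I_i$, resp.\ on a common cycle) whose value passes over $L_i$ or $U_i$ when moving from $\w_j$ to $w_j$, and charges $I_i$ to the pair $(j,i)$; distinct second coordinates make the charges disjoint, and each pair contributes at least one to $h_j$. Your approach is a \emph{homotopy/morphing} argument: you deform $\w$ into $w$ one coordinate at a time, discretize each slide into elementary events (value--value crossings and endpoint crossings), and show via the local characterization that $M$ is locally constant except at endpoint crossings with $I_j \in A_i$, where exactly one status can flip; the triangle inequality on $\sym$ then yields $k_M \le \sum_i h_i$. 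The paper's argument is shorter and requires no genericity or continuity considerations, because it goes straight from $\w$ to $w$ and reads off a witness from the combinatorial definition of (prediction) mandatory. Your argument requires more machinery — the uniform ``$m_i \in (L_i,U_i)$'' characterization, the piecewise-linear unit-slope analysis of the bottleneck quantity for MST, and a general-position assumption to make the event decomposition clean — but buys a conceptually transparent picture of why one hop can change the mandatory set by at most one element, which the paper never states explicitly. Two details to attend to if you write this out fully: (a) the paper's definition of ``passes over'' has asymmetric $\le$ / $<$ conventions at $L_j$ vs.\ $U_j$, so the ``absorbed by the inequality conventions'' clause about general position needs to be checked carefully rather than waved away, since in the original $\w$ and $w$ values may coincide with endpoints (the paper's proof needs no such care); and (b) you should note explicitly that $A$ is a symmetric relation, so that $I_j \notin A_i$ (equivalently $I_i \notin A_j$) does indeed make $m_j$ independent of $v_i$, which is what both your claims (ii) and (iii) quietly use.
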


We provide a lower bound on the competitive ratio that is stronger than Theorem~\ref{theo_minimum_combined_lb}, and later we give matching algorithms for the minimum and sorting problems.
The choice of $\gamma \geq 2$ is due to the lower bound of $2$ in the robustness for all problems we consider.

\begin{restatable}{theorem}{ThmLBMandQueryDist}
\label{theo_lb_sym_diff}
  Let $\gamma \geq 2$ be a fixed rational value. 
  If a deterministic algorithm for the minimum, sorting or MST problem is
  $\gamma$-robust,
  then it cannot have competitive ratio better than $1 + \frac{1}{\gamma-1}$ for $k_M = 0$.
  Furthermore, if an algorithm has competitive ratio $1 + \frac{1}{\gamma-1}$ for $k_M = 0$, then it cannot be better than
	$\gamma$-robust.
\end{restatable}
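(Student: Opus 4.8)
The plan is to obtain both implications from a single adversarial family, constructed for the minimum problem and then transported to sorting and to the MST problem exactly as for the other lower bounds in this section (sorting with intersecting pairs is a special case of the minimum problem, and the interval/cycle gadgets embed into an MST instance as in \cite{erlebach08steiner_uncertainty,megow17mst}). So it suffices to describe a family of minimum‑problem instances on which the stated trade‑off is tight.

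For the first implication, fix a deterministic $\gamma$‑robust algorithm $\mathcal{A}$ and a target value $p$ for $\opt$. I would run $\mathcal{A}$ against an adaptive adversary that commits to the predictions up front and then answers queries so as to keep two kinds of \emph{consistent continuations} alive: a \emph{target} continuation, in which a prescribed set $M$ of $p$ intervals is exactly the really‑mandatory set and is the unique optimal query set (so $\opt=|M|$), and several \emph{trap} continuations, indistinguishable from the target on everything revealed so far but in which one interval $e\notin M$ is really mandatory, or in which postponing the query of $e$ now triggers a chain of forced queries of total length exceeding $\gamma\cdot\opt$. Crucially, the predictions are chosen so that the prediction‑mandatory set of the target instance is again exactly $M$, so the target instance has $k_M=0$. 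At the decision point $\mathcal{A}$ cannot tell the target from the traps, so correctness and $\gamma$‑robustness force it to query $e$; the adversary then commits to the target. Concatenating $\Theta(\gamma)$ essentially independent gadgets of this form—each contributing $\gamma-1$ mandatory queries and one extra forced query, while contributing $0$ to $k_M$—produces an instance with $k_M=0$ on which $\mathcal{A}$ makes at least $\tfrac{\gamma}{\gamma-1}\opt=\bigl(1+\tfrac{1}{\gamma-1}\bigr)\opt$ queries.

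The second implication is the dual of the first and I would prove it with the same family. Assuming $\mathcal{A}$ has competitive ratio $1+\tfrac{1}{\gamma-1}$ on all $k_M=0$ instances, running $\mathcal{A}$ on the target instance forces it to commit to \emph{not} querying some interval at a point where, for a suitable trap sibling, that interval is needed and the offline optimum is small; hence on the trap instance $\mathcal{A}$ uses at least $\gamma$ times the optimum, so it cannot be $(\gamma-\varepsilon)$‑robust for any $\varepsilon>0$. Thus the two implications are the two sides of the same indistinguishable‑prefix argument, and the trade‑off $\bigl(1+\tfrac{1}{\gamma-1}\bigr)$ versus $\gamma$ reflects how many mandatory queries can be charged against each forced ``hedge'' query inside a gadget that still has $k_M=0$.

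The step I expect to be the main obstacle is precisely the simultaneous enforcement of $k_M=0$ and of the forced extra query. In a naive gadget one can only make $\mathcal{A}$ query an interval $e$ it would rather skip by revealing a value lying inside $e$, but this makes $e$ really mandatory and, since the predictions were fixed in advance, pushes $k_M$ away from $0$. The construction must therefore force $e$ purely through hedging against the trap continuations—$e$ gets queried because of what $\mathcal{A}$ has \emph{not yet} seen, not because of a revealed value—while the predictions are tuned so that, in the committed target instance, being prediction‑mandatory and being really mandatory coincide. I expect this to rely on the ``backup query set'' phenomenon visible in Figure~\ref{fig_man_ex}: an alternative feasible query set that certifies the answer from above makes $e$ provably non‑mandatory in the target, yet the predictions still leave a short certificate available only through $e$ in the traps. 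Making this robust against every $\gamma$‑robust $\mathcal{A}$—i.e.\ showing that $\mathcal{A}$ cannot postpone or route around the hedge query without breaking $\gamma$‑robustness—is where the technical care is needed.
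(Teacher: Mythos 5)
Your plan points in the right direction, and you correctly identify the central technical obstacle: you must force the algorithm to make queries that are \emph{not} mandatory (so they add nothing to $\mathcal{I}_R$) and \emph{not} prediction-mandatory (so they add nothing to $\mathcal{I}_P$ either), entirely through the hedging pressure of an indistinguishable trap continuation. But you stop there---you explicitly flag this as ``the step I expect to be the main obstacle'' and gesture at ``the backup query set phenomenon'' without constructing a gadget that realizes it. That is a genuine gap: without the gadget, neither implication is proved.

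For comparison, the paper's gadget is concrete and small. Write $\gamma'=a/b$ with $a\ge 2b>0$, take $b$ copies of the interval $(0,2)$ (the ``left side,'' with predicted value $1.5$ inside the overlap) and $a-b$ copies of $(1,3)$ (the ``right side,'' with predicted value $2.5$ outside the overlap), and one set $S_i=\{I_i\}\cup\{\text{right side}\}$ for each left-side $I_i$. The predictions already make the right side prediction-mandatory and the left side non-prediction-mandatory. The adversary relabels intervals adaptively and reveals harmless values until the algorithm queries the last interval of whichever side it finishes first; at that moment it reveals a value in the overlap, making the opposite side mandatory. If the left side is finished first, one obtains $k_M=0$, $\opt=a-b$, and the algorithm pays $a$, ratio $a/(a-b)=1+1/(\gamma'-1)$; if the right side is finished first, one obtains $k_M=a$, $\opt=b$, and ratio $a/b=\gamma'$. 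The algorithm cannot avoid both outcomes because each unsolved $S_i$ keeps a witness pair $\{I_i,I_a\}$ alive until one entire side is queried. This dichotomy (stated as Claim~\ref{claim:LBM}) is then converted into both implications of the theorem by a short limiting argument over rational $\gamma'$. Notice that the ``extra'' queries in the $k_M=0$ branch are precisely the left-side intervals the algorithm queried before the adversary committed---they are hedges, not revealed-value-forced queries, which is exactly what your obstacle demanded and what your proposal did not supply.

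Two further concrete issues with your write-up. First, ``concatenating $\Theta(\gamma)$ essentially independent gadgets'' is unnecessary and distracting: a single gadget with size parameter $a=\gamma' b$ already realizes the target arithmetic, and for non-integer rational $\gamma'$ you need the $b>1$ version, not a repetition of unit gadgets. Second, your trap continuations as described (``one interval $e\notin M$ is really mandatory'') understate what has to happen: in the trap branch of the paper's gadget, \emph{all} of one side becomes mandatory simultaneously, and it is this all-or-nothing switch between the two potential mandatory sets that lets one branch have $k_M=0$ and the other have $k_M$ maximal. A construction in which only a single interval toggles in and out of $\mathcal{I}_R$ would not produce the stark dichotomy the proof needs.
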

 
We conclude the definition and discussion of measures for the prediction accuracy with a simple lower bound on the competitive ratio regardless of any desired robustness.

\begin{restatable}{theorem}{ThmLBAllErrorMeasures}
\label{thm_lb_error_measure}
	Any deterministic algorithm for minimum, sorting or MST under uncertainty has a competitive ratio $\rho\geq \min\{1+\frac{k}{\opt},2\}$, for any error measure $k \in \{k_{\#}, k_M, k_h\}$, even for {disjoint} sets.
\end{restatable}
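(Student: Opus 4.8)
The plan is to exhibit, for every pair $(k,\opt)$ with $0\le k\le\opt$, an instance built from $\opt$ independent \emph{gadgets} living on pairwise disjoint sets (for the minimum and sorting problems) or on vertex-disjoint triangles joined into a path by weight-$0$ bridge edges (for the MST problem, so that each triangle is a maximal biconnected component). I would use two kinds of gadget. An \emph{easy} gadget consists of two overlapping intervals whose true values lie outside the overlap and whose predictions are correct; there the minimum is ambiguous with no query but determined after one query, so the gadget's offline optimum is exactly~$1$ and it contributes~$0$ to each of $k_{\#},k_h,k_M$. A \emph{hard} gadget uses $I_a=(0,2)$ and $I_b=(1,3)$ with predictions $\w_a=0.5$ and $\w_b=2.5$, both placed \emph{outside} the common part~$(1,2)$; for the MST problem the two non-trivial edges of a triangle play the roles of $I_a,I_b$ and one asks for the heaviest cycle edge instead of the lightest set element, which is symmetric.

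The heart of the argument is the analysis of a single hard gadget, where I claim that $\opt=1$, that every deterministic algorithm is forced to make two queries, and that the gadget contributes exactly~$1$ to each of the three error measures at once. Since the map $x\mapsto 3-x$ swaps $I_a$ with $I_b$ together with their predictions, and since the minimum is ambiguous before any query, we may assume the algorithm probes $I_a$ first; the adversary then reveals $w_a=1.5\in(1,2)$, which still does not determine the minimum, so $I_b$ must be probed as well, and the adversary reveals $w_b=2.5=\w_b$. Thus the algorithm spends two queries while $\{I_b\}$ alone is feasible, so $\opt=1$ for the gadget. Exactly one prediction is wrong ($w_a\neq\w_a$ but $w_b=\w_b$), so the contribution to $k_{\#}$ is~$1$; the value of $I_a$ passes over exactly one endpoint of an interval of its relevant set ($L_b=1$) and over no trivial interval, so the contribution to $k_h$ is~$1$; and since under the predictions both $\{I_a\}$ and $\{I_b\}$ are feasible we get $\mathcal{I}_P=\emptyset$, while after the reveal $\mathcal{I}_R=\{I_b\}$, so the contribution to $k_M$ is $|\mathcal{I}_P\sym\mathcal{I}_R|=1$. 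The same bookkeeping holds for the MST hard gadget.

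Assembling the bound is then routine. Given $(k,\opt)$, take $k$ hard gadgets and $\opt-k$ easy gadgets; because the gadgets are independent, the global offline optimum is the sum of the per-gadget optima, namely $\opt$, and each of $k_{\#},k_h,k_M$ equals $k$. Any feasible deterministic algorithm makes at least one query in each easy gadget and, by the previous paragraph, at least two in each hard gadget, so $|\ALG|\ge(\opt-k)+2k=\opt+k$ and the competitive ratio is at least $1+k/\opt$; choosing $k=\opt$ gives ratio at least~$2$, which also covers any larger amount of error. Hence $\rho\ge\min\{1+k/\opt,2\}$ for every $k\in\{k_{\#},k_M,k_h\}$, with all sets disjoint, as claimed.

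I expect the only real obstacle to be the design of the hard gadget so that it is at once \emph{costly} for the algorithm (two queries rather than one) and \emph{cheap} under \emph{every} accuracy measure, and in particular under $k_{\#}$: the natural variant that places both predicted values inside the overlap makes both predictions wrong and has $\mathcal{I}_P=\{I_a,I_b\}$, so it contributes~$2$ to $k_{\#}$ per gadget and yields only the weaker bound $1+k_{\#}/(2\opt)$. Moving the predictions outside the overlap fixes this, but one must verify that the adversary can still force the second query no matter which interval is probed first (which is why the gadget and its predictions are built symmetrically), that $\mathcal{I}_P$ genuinely becomes empty, and that the argument carries over unchanged to sorting (where sorting a two-element set coincides with finding its minimum) and to the MST problem via its biconnected components.
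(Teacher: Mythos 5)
Your proposal is correct and follows essentially the same route as the paper: a two-interval gadget with $\opt=1$ that the adversary forces the algorithm to fully query, contributing exactly one unit to each of $k_{\#},k_h,k_M$, then replicated over disjoint sets (resp.\ biconnected components for MST) and mixed with error-free copies to tune $k/\opt$. The only notable difference is cosmetic: the paper's Figure~\ref{fig_lb_error_measure} places both predicted values inside the overlap and has the adversary keep the \emph{first}-queried prediction correct while moving the second out, which already yields $k_{\#}=k_h=k_M=1$ per gadget; your concern that the ``natural'' inside-the-overlap variant would make both predictions wrong misreads what the adversary needs to do, though it led you to a slightly different (also valid) gadget with predictions outside the overlap. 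Your explicit account of mixing hard and easy gadgets to hit every value $k\le\opt$ spells out a step the paper compresses into ``taking multiple copies of this instance.''
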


\paragraph{Learnability of predictions} 
We argue that our assumption of having access to machine-learned predictions is realistic. We do so by proving that the predictions are PAC-learnable w.r.t.~$k_h$ and $k_M$. See Appendix~\ref{sec:learnability} for full proofs.
To show PAC-learnability, we assume that the realization $w$ of true values for $\mathcal{I}$ is i.i.d.~drawn from an unknown distribution $\ud$, and
we can i.i.d.~sample realizations from $\ud$ to obtain a training set.
Let $\hs$ denote the set of all possible predictions $\pred{w}$, that is, vectors $\pred{w}$ with $\pred{w}_i \in I_i$ for each $I_i \in \mathcal{I}$.
Let $k_h(w,\pred{w})$ denote the hop distance of the predictions $\pred{w}$ for the realization with the true values $w$.
Since $w$ is drawn from $\ud$, the value $k_h(w,\pred{w})$ is a random variable. 
Analogously, we consider $k_M(w,\pred{w})$ with regard to the mandatory query distance.
Our goal is to learn predictions $\pred{w}$ that (approximately) minimize the expected error $\EX_{w \sim \ud}[k_h(w,\pred{w})]$ respectively $\EX_{w \sim \ud}[k_M(w,\pred{w})]$.

For both error measures, we employ the \emph{empirical risk minimization (ERM)} algorithm. ERM first i.i.d.~samples a training set~$S=\{w^1,\ldots,w^m\}$ of~$m$ true value vectors from~$\ud$.
Then, it returns the~$\pred{w} \in \hs$ that minimizes the \emph{empirical error}~$k_S(\pred{w}) = \frac{1}{m} \sum_{j=1}^{m} k(w^j,\pred{w})$ with $k \in \{k_h,k_m\}$.
We show a polynomial sample complexity $m$ by carefully reducing $\hs$. The challenging part
is to execute ERM in polynomial time.

\begin{restatable}{theorem}{TheoLearningHop}
	\label{theo_learnability_hop}
	For any~$\eps, \delta \in (0,1)$ and $k \in \{k_h,k_M\}$, there exists a learning algorithm that, using a training set of size~$m$, returns predictions $\pred{w} \in \hs$, such that
	$\EX_{w \sim \ud}[k(w,\pred{w})] \le \EX_{w \sim \ud}[k(w,\pred{w}^*)] + \eps$ holds with probability at least~$(1-\delta)$, where~$\pred{w}^* = \arg\min_{\pred{w}' \in \hs} \EX_{w \sim \ud}[k(w,\pred{w}')]$.
	The sample complexity is $m \in \mathcal{O}\left(\frac{(\log(n) - \log(\delta/n))\cdot (2n)^2}{(\eps/n)^2}\right)$ for $k=k_h$, and $m \in \mathcal{O}\left(\frac{(n \cdot \log(n) - \log(\delta))\cdot n^2}{\eps^2}\right)$ otherwise.
	For $k=k_h$, the running time is polynomial in $m$ and $n$.
	Otherwise, the running time is exponential in $n$.
\end{restatable}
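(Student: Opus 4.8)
The plan is to run the ERM algorithm stated before the theorem and to argue separately about (i) how finely the hypothesis set $\hs$ must be discretised so that ERM loses nothing relative to $\pred{w}^*$, (ii) the resulting sample complexity via a Hoeffding bound plus a union bound over the discretised set, and (iii) the running time of discretised ERM. The structural fact that drives everything — and that makes $k_h$ computationally easier than $k_M$ — is that the hop distance \emph{decomposes coordinate-wise}: by definition $k_h(w,\pred{w})=\sum_{i=1}^{n} h_i$, and $h_i$ depends only on $w_i$, on $\pred{w}_i$, and on the fixed, known endpoints of the intervals in $A_i$; it does not depend on $w_j$ or $\pred{w}_j$ for $j\neq i$. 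The mandatory query distance has no such decomposition, since whether an interval is (prediction-)mandatory depends on the relative order of \emph{all} the values.

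\emph{The case $k=k_h$.} First I would show that for every fixed $w_i$ the map $\pred{w}_i\mapsto h_i(w_i,\pred{w}_i)$ is piecewise constant, with breakpoints only at those endpoints of intervals of $A_i$ that lie inside $I_i$; here one must be mildly careful with open versus trivial intervals and with the strict/non-strict inequalities in the definitions of ``passes over''/``jumps over'', so that coinciding endpoints are handled as degenerate singleton cells. Hence $\EX_{w_i\sim\ud}[h_i(w_i,\cdot)]$ is also piecewise constant, so it suffices to let coordinate $i$ range over a set $C_i$ of $\mathcal{O}(|A_i|)=\mathcal{O}(n)$ representatives (one per cell), and restricting $\hs$ to $\hs':=\prod_i C_i$ does not change $\min \EX[k_h]$; so we may assume $\pred{w}^*\in\hs'$. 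Because $k_h$ decomposes, ERM over $\hs'$ also decomposes into $n$ independent per-coordinate ERMs, each over $\mathcal{O}(n)$ candidates, and thus runs in time polynomial in $m$ and $n$. For the sample bound, each $h_i$ is an integer in $[0,2n]$; a Hoeffding bound together with a union bound over the $\sum_i |C_i| = \mathcal{O}(n^2)$ per-coordinate candidates ensures that, with probability $1-\delta$, every per-coordinate empirical error is within $\eps/(2n)$ of its true value. The standard ERM inequality then gives, per coordinate, $\EX[h_i(\hat{\pred{w}}_i)]\le \EX[h_i(\pred{w}^*_i)]+\eps/n$, and summing over $i$ yields the claimed additive-$\eps$ guarantee; tracking constants gives $m\in\mathcal{O}\!\big((2n)^2(\log n - \log(\delta/n))/(\eps/n)^2\big)$.

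\emph{The case $k=k_M$.} Here decomposition fails, so I would discretise $\hs$ globally. The sets $\mathcal{I}_P(\pred{w})$ and $\mathcal{I}_R(w)$, hence $k_M(w,\pred{w})=|\mathcal{I}_P(\pred{w})\,\sym\,\mathcal{I}_R(w)|$, depend only on the \emph{combinatorial type} of $\pred{w}$, i.e.\ on the ordering of the predicted values among themselves and relative to all interval endpoints, since mandatory status in the verification problem is order-determined. There are $n^{\mathcal{O}(n)}$ realisable combinatorial types, so picking one representative $\pred{w}\in\hs$ per type yields a finite $\hs'$ with $|\hs'|=n^{\mathcal{O}(n)}$ and $\min_{\hs'}\EX[k_M]=\min_{\hs}\EX[k_M]$. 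ERM over $\hs'$ enumerates all types and, for each, evaluates the empirical error on the $m$ sampled realisations, computing the (prediction-)mandatory sets; each such evaluation is polynomial using the known order-based characterisations of mandatory elements for the minimum/sorting/MST verification problems, so the total running time is $n^{\mathcal{O}(n)}\cdot \mathrm{poly}(m,n)$ — polynomial in $m$, exponential in $n$. Since $k_M\le n$, a Hoeffding plus union bound over the $n^{\mathcal{O}(n)}$ hypotheses targeting accuracy $\eps$ gives $m\in\mathcal{O}\!\big(n^2(n\log n - \log\delta)/\eps^2\big)$, and the ERM inequality again yields the additive-$\eps$ guarantee.

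\emph{Main obstacle.} I expect the hardest part to be the discretisation arguments — proving that the per-coordinate cells for $k_h$ (respectively the combinatorial types for $k_M$) are fine enough that the restricted minimum equals the unrestricted one — while correctly handling the boundary cases caused by trivial intervals, coinciding endpoints, and the strict-versus-non-strict inequalities in the definitions of ``passes over'' and ``mandatory''. A secondary point needing care is the $\eps/n$ accounting in the decomposed $k_h$ analysis, and, for $k_M$, confirming that mandatory status in each of the three verification problems is indeed polynomial-time computable so that discretised ERM at least terminates.
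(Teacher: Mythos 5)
Your plan is correct and matches the paper's proof quite closely, so only a brief comparison is warranted. For $k=k_h$ your argument is the same as the paper's: per-coordinate piecewise constancy yields a discretisation of size $\mathcal{O}(n)$ per coordinate, the coordinate-wise decomposition of $k_h$ lets ERM split into $n$ independent per-coordinate ERMs, and uniform convergence (you phrase it as Hoeffding plus union bound, the paper cites the uniform-convergence property, which amounts to the same thing) with targets $\eps/n$ and $\delta/n$ per coordinate gives exactly the stated sample bound and a polynomial running time. For $k=k_M$ the high-level scheme — discretise $\hs$ to a finite set of size $n^{\mathcal{O}(n)}$, run global ERM, apply uniform convergence with error range $[0,n]$ — is again identical, but your discretisation differs in flavour from the paper's. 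You propose picking one representative per \emph{combinatorial type} (joint ordering of all predicted values among themselves and relative to endpoints), while the paper keeps the same per-coordinate product discretisation $\hs_1\times\cdots\times\hs_n$ it used for $k_h$ and proves a separate rounding lemma showing that replacing an arbitrary $\pred{w}$ by its coordinate-wise rounded version leaves $\mathcal{I}_P$ unchanged. Your version is conceptually cleaner (preserving the joint order trivially preserves anything order-determined), but it requires enumerating orderings rather than a Cartesian product, so the representatives are less convenient to describe; the paper's version is easier to enumerate but needs the extra lemma because coordinate-wise rounding \emph{can} reverse the relative order of two predicted values lying in the same cell, and one has to argue this doesn't affect $\mathcal{I}_P$ (in the minimum problem the point is that such a reversal can only happen when both values lie inside the same interval, which is then mandatory either way). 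Both give the same $n^{\mathcal{O}(n)}$ bound and hence the same sample complexity. The one thing your sketch defers — actually verifying that $\mathcal{I}_P$ is order-determined for each of the three problems, and handling ties/degeneracies — is precisely the care the paper invests in its rounding lemma (stated for minimum/sorting, with the MST case only remarked), so you correctly identified where the real work lies.
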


Since learning w.r.t.~$\hs$ and~$k_M$ by using Theorem~\ref{theo_learnability_hop} requires exponential running time, we present an alternative approach.
Instead of showing the learnability of the predicted values, we prove that the set $\mathcal{I}_P$ that leads to the smallest expected error can be learned.
Note that access to only $\mathcal{I}_P$ is sufficient to execute the algorithm for the minimum problem that achieves a competitive ratio depending on $k_M$.
To be more specific, let $\mathcal{P}$ be the power set of $\mathcal{I}$, let $\mathcal{I}_w$ denote the set of mandatory elements for the realization with true values~$w$, and let $k_M(\mathcal{I}_w, P)$ with $P \in \mathcal{P}$ denote the mandatory query distance under the assumption that $\mathcal{I}_P = P$ and $\mathcal{I}_R = \mathcal{I}_w$.
Since $w$ is drawn from $\ud$, the value $\EX_{w \sim \ud}[k_M(\mathcal{I}_w,P)]$ is a random variable.
The mandatory query distance is a problem-specific metric as the characterization of (prediction) mandatory intervals may differ for different problems. We show efficient learnability w.r.t.~$\mathcal{P} $ and~$k_{M}$ for all problems for which $k_M(w,\w)$ can be computed efficiently. This is the case for all problems considered in this paper.
As another argument for learning w.r.t.~$\mathcal{P}$ and $k_M$, we show in Appendix~\ref{sec:learnability} that, for any distribution $D$, the best prediction in $\mathcal{P}$ never has a larger expected error than the best prediction in $\hs$. The inverse is not true.

\begin{restatable}{theorem}{TheoLearningMan}
	\label{theo_learnability_mandatory}
	For all problems for which $k_M$ can be determined in polynomial time and for
	any~$\eps, \delta \in (0,1)$, there exists a learning algorithm that, using a training set of size~$m  \in \mathcal{O}\left(\frac{(n - \log(\delta))\cdot n^2}{\eps^2}\right),$ returns a predicted set of mandatory intervals $P \in \mathcal{P}$ in time polynomial in~$n$ and~$m$, such that
	$\EX_{w \sim \ud}[k_M(\mathcal{I}_w,P)] \le \EX_{w \sim \ud}[k_M(\mathcal{I}_w,P^*)] + \eps$ holds with probability at least~$(1-\delta)$, where~$P^* = \arg\min_{P' \in \mathcal{P}} \EX_{w \sim \ud}[k_M(\mathcal{I}_w,P')]$.
\end{restatable}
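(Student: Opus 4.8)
}
The key observation is that, although the hypothesis space $\mathcal{P}$ has exponential size $2^n$, the error function is separable over the $n$ intervals, which makes both empirical risk minimization and the sample-complexity bound easy to handle. For $P \in \mathcal{P}$ and any realization $w$,
\[
k_M(\mathcal{I}_w, P) \;=\; |P \sym \mathcal{I}_w| \;=\; \sum_{i=1}^n \mathbf{1}\big[(I_i \in P) \neq (I_i \in \mathcal{I}_w)\big],
\]
and hence $\EX_{w \sim \ud}[k_M(\mathcal{I}_w, P)] = \sum_{i=1}^n \Pr_{w \sim \ud}[(I_i \in P) \neq (I_i \in \mathcal{I}_w)]$. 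Since the $i$-th term depends on $P$ only through the bit $\mathbf{1}[I_i \in P]$, the expected error is minimized coordinatewise, and $P^*$ contains $I_i$ precisely when $p_i := \Pr_{w \sim \ud}[I_i \in \mathcal{I}_w] \ge \tfrac12$. The empirical error $k_S(P) = \frac1m\sum_{j=1}^m k_M(\mathcal{I}_{w^j}, P)$ decomposes in the same way, so the ERM solution is the per-coordinate majority vote: include $I_i$ iff $I_i$ is mandatory in at least half of the sampled realizations $w^1,\dots,w^m$.

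First I would settle the running time. Computing the ERM solution amounts to computing, for each sample $w^j$, the set $\mathcal{I}_{w^j}$ of really mandatory intervals for the verification instance with true values $w^j$, then counting for each interval how often it appears and thresholding at $m/2$; this takes $\mathcal{O}(mn)$ time on top of $m$ calls to the mandatory-set subroutine. The hypothesis of the theorem --- that $k_M$ can be computed in polynomial time --- provides exactly such a subroutine (for the minimum, sorting, and MST problems one has explicit combinatorial characterizations of the mandatory intervals), so the overall running time is polynomial in $n$ and $m$.

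Next I would bound the sample complexity by uniform convergence. As $k_M(\cdot,\cdot) \in [0,n]$, Hoeffding's inequality gives for each fixed $P$ that $\Pr\big[|k_S(P) - \EX_w[k_M(\mathcal{I}_w,P)]| > \tfrac{\eps}{2}\big] \le 2\exp(-m\eps^2/(2n^2))$, and a union bound over the $2^n$ choices of $P$ shows that, with probability at least $1-\delta$, the empirical and true errors agree up to $\eps/2$ for all $P$ simultaneously, provided $m \ge \frac{2n^2}{\eps^2}\big((n+1)\ln 2 + \ln(2/\delta)\big) \in \mathcal{O}\!\left(\frac{(n-\log\delta)\,n^2}{\eps^2}\right)$ (one could equivalently union-bound over the $n$ coordinates, estimating each $p_i$, with the same asymptotics). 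On this event the standard argument closes the proof: if $\hat P$ is the ERM output, then $\EX_w[k_M(\mathcal{I}_w,\hat P)] \le k_S(\hat P) + \tfrac{\eps}{2} \le k_S(P^*) + \tfrac{\eps}{2} \le \EX_w[k_M(\mathcal{I}_w,P^*)] + \eps$.

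I expect the main obstacle to be conceptual rather than computational: one must recognize the separability of $k_M$ over intervals, which is what simultaneously (i) turns ERM over the exponential-size class $\mathcal{P}$ into a trivially computable coordinatewise majority and (ii) keeps the $\log|\mathcal{P}| = \Theta(n)$ penalty in the sample complexity mild. The remaining ingredients --- Hoeffding, the union bound, and the ERM-to-risk comparison --- are routine; the only subtlety is to pin down that the theorem's computability hypothesis really does supply a polynomial-time algorithm for $\mathcal{I}_w$, which it does for every problem considered here.
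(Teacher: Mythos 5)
Your proposal is correct and follows essentially the same route as the paper: ERM over the finite class $\mathcal{P}$, a uniform-convergence sample bound from $|\mathcal{P}|\le 2^n$ and $k_M\in[0,n]$, and polynomial-time ERM via the coordinatewise decomposition $k_S(P)=\frac1m\bigl(\sum_{I_i\in P}q_i+\sum_{I_i\notin P}p_i\bigr)$, which is exactly your per-interval majority vote ($q_i\le p_i\Leftrightarrow p_i\ge m/2$). The only cosmetic difference is that you spell out Hoeffding plus a union bound where the paper cites the standard uniform-convergence result for finite hypothesis classes.
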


\section{Overview of Techniques}
\label{sec:methods}
Given predicted values for the uncertainty intervals, it is tempting to simply run an optimal offline algorithm~(verification algorithm) under the assumption that the predictions are correct, and to then perform all the queries computed by that verification algorithm.
This is obviously optimal with respect to consistency, but might give arbitrarily bad solutions in the case when the predictions are faulty. 
Instead of blindly following the offline algorithm, we need a strategy to be robust against prediction errors. 
Therefore, we carefully combine structural properties of the problem with {the additional knowledge of untrusted} predictions. 

The crucial structure and unifying concept in all problems under consideration are {\em witness sets}. Witness sets are the key to any comparison with an optimal solution. A ``classical'' witness set is a set of elements for which we can guarantee that any feasible solution must query at least {\em one} of these elements.
Depending on the particular problem, witness sets can be structurally very different (e.g., simply pairs of overlapping intervals for one problem, or sets resulting from a complex consideration of cycles in a certain order for another problem). Nevertheless, in the classical setting without access to predictions, all our problems admit $2$-competitive online algorithms that rely essentially on identifying and querying disjoint witness sets of size two.
We refer to witness sets of size two also as {\em witness pairs}.
While completely relying on querying witness pairs ensures $2$-robustness, it does not lead to any improvements in terms of consistency.
In order to obtain an improved consistency while maintaining optimal or near-optimal robustness, we need to carefully balance the usage of an offline algorithm with the usage of witness sets.

\paragraph{The general framework}
On a high level,
our general algorithmic framework
follows the structure of the offline algorithm: 
In a first stage, it queries elements that are mandatory under the assumption that the predictions are correct.
In a second stage, when no elements are prediction mandatory any more, the algorithm has to decide for each witness pair which of the two elements to query.
If the predictions are correct, the second phase reduces to a vertex cover type problem. 
The following theorem summarizes some of our main results. For minimum and sorting, we show the result by proving the more general Theorem~\ref{thm_minimum:hop} in Section~\ref{sec:minimum} and Appendix~\ref{app:minimum} 
and giving a reduction from the sorting problem to the minimum problem in Section~\ref{sec:sorting}.	For the MST problem, we show the result directly
in Section~\ref{sec:mst} and Appendix~\ref{appx:mst}. 

\begin{theorem}
\label{thm:main1.5-2}
 {For each of the problems, minimum, sorting and MST under uncertainty, there is an algorithm that is $1.5$-consistent and $2$-robust.}
\end{theorem}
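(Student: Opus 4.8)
The plan is to exhibit, for each of the three problems, a \emph{single} algorithm of the two-phase shape sketched in the framework above, and to analyse it by a witness-set charging argument that at once gives the robustness bound $2$ and, on instances with correct predictions, the consistency bound $\tfrac32$. \emph{Phase~1:} while some element is \emph{prediction mandatory} with respect to the information revealed so far, query it, re-evaluating after every query (a query may turn further elements prediction mandatory). \emph{Phase~2:} once no prediction-mandatory element remains, repeat until the instance is solved: identify a witness pair $\{a,b\}$, i.e.\ a two-element set such that every feasible query set for the true instance (given current knowledge) contains $a$ or $b$ — for minimum/sorting a pair of overlapping intervals, for MST the two maximum-weight-edge candidates of a suitably chosen cycle, as in~\cite{erlebach08steiner_uncertainty}; use the prediction to pick the element of $\{a,b\}$ that a minimum feasible query set of the \emph{predicted} instance would query (and simply query both when the prediction is uninformative for this pair), query it, and query the partner only if the revealed value does not already resolve the pair.

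For \textbf{$2$-robustness} I would maintain throughout the execution a collection of pairwise vertex-disjoint witness sets, each of size at most two, whose union contains every query made so far, with every really-mandatory query forming its own singleton witness set. Disjoint witness sets each contribute at least one element to every feasible query set, so their number is a lower bound on $\opt$; the algorithm spends at most two queries per witness set, hence $|\ALG|\le 2\opt$. The delicate point, which I expect to require the most care here, is keeping this invariant under \emph{misleading} predictions: a Phase~1 query on an element $e$ that is prediction mandatory but not really mandatory must be shown to form, together with a query the algorithm is later forced to make, a genuine size-two witness set of the true instance, and a speculative Phase~2 query on $a$ that turns out wrong must merge with the follow-up query on its partner into a single witness pair. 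Establishing these ``merging'' statements needs a structural characterization of (prediction-)mandatory elements — elementary for minimum and sorting, but for MST it is precisely where the new structural insights announced in the introduction enter, since a single cycle can spawn a chain of interacting witness sets.

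For \textbf{$1.5$-consistency}, when all predictions are correct: (i) every Phase~1 query is really mandatory, hence in $\OPT$, and is charged one-to-one; and (ii) in Phase~2 the element selected for each witness pair lies in a minimum feasible query set of the (true $=$ predicted) instance, so querying it resolves the pair and the partner is never queried. Naively this suggests a $1$-competitive bound, but robustness forces the algorithm to commit to certain queries before it can certify them safe, so the honest accounting instead groups the algorithm's queries so that each group of at most three algorithm queries faces at least two $\OPT$ queries; summed with the one-to-one charged mandatory queries this gives $|\ALG|\le\tfrac32\opt$, which by Theorem~\ref{theo_minimum_combined_lb} is best possible among $2$-robust algorithms. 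Making this grouping precise — in particular verifying, across all three problems, that no algorithm query escapes a group, and for MST reading the groups off from the order in which cycles are processed — is the step I expect to be the main obstacle.

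Finally, for minimum and sorting one can alternatively obtain the theorem as a special case of the sharper, error-parameterized guarantee: setting the trade-off parameter to $2$ makes the bound $\min\{\tfrac32(1+k/\opt),2\}$, which is $\tfrac32$ when predictions are correct ($k=0$) and at most $2$ in general; sorting then reduces to the minimum problem by introducing one set per pair of co-occurring intervals. For MST, where no such reduction is available, I would carry out the two-phase scheme and the charging argument directly on the cycle-based witness sets.
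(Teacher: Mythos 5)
Your Phase~1 as described — ``while some element is prediction mandatory, query it'' — is not $2$-robust, and this is precisely the pitfall the paper's technique is designed to avoid. Consider the instance used for the lower bound in Theorem~\ref{theo_minimum_combined_lb}: one leftmost interval $I_0$ and $\beta$ intervals $I_1,\ldots,I_\beta$ that each contain $\pred{w}_0$ but whose predicted values lie outside $I_0$. Every $I_j$ ($j\ge 1$) is prediction mandatory, so your Phase~1 queries all $\beta$ of them; if the adversary confirms the predictions for $I_1,\ldots,I_{\beta-1}$ and then reveals $w_\beta\in I_0$ with $w_0$ outside every $I_j$, the algorithm is forced to also query $I_0$, for $\beta+1$ queries while $\opt=1$. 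Your witness-set bookkeeping cannot repair this: the only witness pairs available for $I_1,\ldots,I_{\beta-1}$ are $\{I_0,I_j\}$, which all share $I_0$ and hence are not vertex-disjoint. The ``merging'' you hope for — that a wasted prediction-mandatory query later joins a size-two witness set — fails exactly when many prediction-mandatory intervals are made mandatory by a single enforcing value.

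The missing idea is that a prediction-mandatory element cannot be queried in isolation; it must be packaged, \emph{before} the query, into a triple $\{I_i,I_j,I_l\}$ where $\pred{w}_j$ enforces $I_i$ and $\{I_j,I_l\}$ is a genuine size-two witness set, and the triple must be queried in the order $I_j,I_l$ first, then $I_i$ only if $w_j\in I_i$ confirms. This ordering is what yields robustness (you never exceed a disjoint size-two witness set when the prediction is wrong) and consistency (when correct, at least two of the three are in $\OPT$, giving the $3/2$ factor). You gesture at ``groups of at most three algorithm queries facing at least two $\OPT$ queries'' as an accounting device, but it is a structural change to the algorithm, not an analysis trick, and identifying such triples is nontrivial — for MST it is where the new witness-set lemmas (Lemmas~\ref{lemma_mst_witness_set_1}, \ref{lemma_mst_witness_set_2}) outside the current cycle/cut are needed, and Phase~2 for MST further requires the K\"onig-duality matching argument because a single wrong query changes the vertex-cover instance. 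Your final remark — invoking the parameterized guarantee of Theorem~\ref{thm_minimum:hop} at $\gamma=2$ for minimum and sorting — is a correct reduction, but it merely restates what must be proved rather than proving it. (Your Phase~1 does happen to work for sorting a \emph{single} set, since there all intervals sharing an enforcing predicted value form a clique, but not for overlapping sets or the general minimum problem.)
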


During the first phase, just querying prediction mandatory elements can be arbitrarily bad in terms of robustness (cf.\ Theorem~\ref{theo_minimum_combined_lb}).
Thus, given a prediction mandatory element that we wish to query, we query further elements in such a way that they form witness sets. Note that it is not sufficient to augment the set to a ``classical'' size-$2$ witness set (which already might be non-trivial to do for some problems), as this would not yield a performance guarantee better than~$2$ even if the predictions are correct.
Instead, we identify a set of elements for which we can guarantee that at least $2$ out of $3$ of them must be queried by any feasible solution. 
Since we cannot always find such elements based on structural properties alone (otherwise, there would be a $1.5$-competitive algorithm for the problem without predictions), we identify such sets under the assumption that the predictions are correct.
Even after identifying such elements, the algorithm needs to query them in a careful order: If the predictions are wrong, we lose the guarantee on the elements and querying all of them might violate the $2$-robustness.
The first phase of our framework repeatedly identifies such elements and queries them in a careful order while adjusting for potential errors, until no unqueried prediction mandatory elements remain.
Identifying the three elements with the guarantee mentioned above is a major contribution which might be of independent interest regardless of predictions, especially in the context of the MST problem. 
While existing algorithms for MST under uncertainty~\cite{erlebach08steiner_uncertainty,megow17mst} essentially follow the algorithms of Kruskal or Prim and only identify witness sets in the cycle or cut that is currently under consideration, we derive criteria to identify additional witness sets outside the current cycle/cut.
Note that the identification of witness sets and the characterization of (prediction) mandatory intervals are problem specific.
The first phase of our framework differs for the different problems only in the identification of witness sets and (prediction) mandatory elements, and the order in which we query them.

In the second phase, there are no more prediction mandatory elements.
Therefore, the algorithm cannot identify any more ``safe'' queries and, for each witness pair, has to decide which element to query. 
This phase boils down to finding a minimum vertex cover in an auxiliary graph representing
the structure of the witness sets.
In particular, the second phase for the minimum problem and the sorting problem is relatively straightforward and consists of finding and querying a minimum vertex cover.
If the predictions are correct, querying the vertex cover solves the remaining problem with an optimal number of queries. 
Otherwise, additional queries might be necessary, but we can show that this does not violate $2$-robustness.

In the MST problem, wrong predictions can change the vertex cover instance dynamically, and therefore it must be solved in a very careful and adaptive way, requiring substantial additional work. 
Nevertheless, the general framework is the same for all problems.

We remark that the computation of the minimum vertex cover in the second phase is the \emph{only} operation of our algorithms that potentially requires exponential running time. 
For the problems of finding an MST and sorting a single set,
we have to compute a minimum vertex cover in  bipartite and interval graphs, respectively. Both problems are well-known to be polynomial-time solvable.

\paragraph{Parameterized error-sensitive guarantees} We show how to refine our general framework to achieve error-sensitive and parameterized performance guarantees.
 {We parameterize the first phase as follows: we repeatedly query $\gamma-2$ prediction mandatory elements in addition to the set of three elements of the general framework. The second phase remains unchanged. For both, the minimum problem and sorting, we can bound the error-sensitivity by charging each queried prediction mandatory element that turns out to be not mandatory and each additional query in the second phase to a distinct prediction error in the $k_h$-metric.}
\begin{restatable}{theorem}{thmminimumhop}
	\label{thm_minimum:hop}
	There is an algorithm for the minimum problem under uncertainty that, given an integer parameter $\gamma \geq 2$, achieves a competitive ratio of $\min\{ (1 + \frac{1}{\gamma})(1 + \frac{k_h}{\opt}), \gamma\}$.
	Furthermore, if $\gamma = 2$, then the competitive ratio is $\min\{1.5 + k_h/\opt, 2\}$.
\end{restatable}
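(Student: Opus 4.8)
The plan is to instantiate the two-phase framework of Section~\ref{sec:methods} with the parameter~$\gamma$ and to charge every query that is not already matched to an offline query against a distinct unit of the hop distance. The starting point is a precise structural description, for the minimum problem, of which intervals are mandatory and of the witness pairs: in a set $S$ the interval $I_a$ with the smallest lower endpoint is special, $\{I_a,I_b\}$ is a (prediction-independent) witness pair for every $I_b\in S$ whose lower endpoint lies in $I_a$, and an interval becomes mandatory exactly when such a configuration forces a query; applying the same characterization to the predicted values~$\w$ describes the prediction-mandatory set $\mathcal{I}_P$. Using this I would establish the ``$2$-out-of-$3$'' statement the framework needs: for any unqueried prediction-mandatory $I_p$ one can name two further intervals $I_a,I_b$ (the predicted-leftmost of a set for which $I_p$ is prediction-mandatory, together with a suitable overlapping interval) so that, \emph{if the predictions are correct}, at least two of $\{I_p,I_a,I_b\}$ lie in every feasible solution, while $\{I_a,I_b\}$ is a witness pair no matter what the true values turn out to be.

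The algorithm then runs in rounds. A round of Phase~1 picks a prediction-mandatory $I_p$, forms the triple above, additionally selects $\gamma-2$ further prediction-mandatory intervals, and queries these $\gamma+1$ intervals in a carefully chosen order: it queries the prediction-mandatory intervals one at a time, after each query re-checking whether the revealed value is consistent with the predicted order-structure, and aborts the round (recomputing $\mathcal{I}_P$) as soon as an inconsistency is detected. When no prediction-mandatory interval remains, Phase~2 builds the auxiliary graph whose edges are the remaining witness pairs --- an interval graph for a single set, a general graph in the multi-set case --- computes a minimum vertex cover $C$ (the only potentially super-polynomial step), queries $C$, and finishes with any still-necessary queries.

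For the analysis I would prove two upper bounds on $|\ALG|$ and take their minimum. \emph{(i) Robustness.} Every round of Phase~1 contains the prediction-independent witness pair $\{I_a,I_b\}$, and $C$ is a vertex cover of a graph of witness pairs, so $\ALG$ decomposes into pieces each carrying a classical witness set; moreover a \emph{completed} round ($\gamma+1$ queries) can occur only when the revealed values agreed with the predictions on all of its intervals, and then the triple and the $\gamma-2$ extras are \emph{really} mandatory, placing $\gamma$ of the round's $\gamma+1$ queries into $\OPT$, whereas an aborted round is short by the careful ordering. Combining these yields $|\ALG|\le\gamma\cdot\opt$ (and $\le 2\opt$ for $\gamma=2$). \emph{(ii) Error-sensitive bound.} If the predictions are accurate, every Phase-1 round spends $\gamma+1$ queries all in $\OPT$ (overhead $1+\tfrac1\gamma$) and $C$ equals $\OPT$ on the remaining intervals, giving $(1+\tfrac1\gamma)$-consistency; for inaccurate predictions I would inject the ``overhead'' queries into units of $k_h$: each queried prediction-mandatory interval that turns out not mandatory, each query of an aborted round that broke the $2$-out-of-$3$ guarantee, and each Phase-2 query beyond $C$ forces the true value of some interval to pass over an endpoint of a relevant interval inside the set defining $k_h$, and these pass-overs can be charged to distinct intervals. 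Summing gives $|\ALG|\le(1+\tfrac1\gamma)(\opt+k_h)=(1+\tfrac1\gamma)(1+\tfrac{k_h}{\opt})\opt$; for $\gamma=2$ the triple already gives consistency $1.5$ and the same charging gives $1.5+k_h/\opt$.

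The main obstacle is reconciling the two requirements inside Phase~1: the triple is only guaranteed to contain two mandatory intervals \emph{under the predicted values}, while robustness needs a guarantee that survives arbitrary true values, and the only lever is the order in which the $\gamma+1$ intervals of a round are queried. I would need to show that this order can be chosen so that (a) a completed round is always efficient in the real instance, (b) an aborted round is both short and pays for its waste with fresh hops, and (c) aborts do not cascade. Making the mandatory/witness characterization tight enough for both the ``$2$-out-of-$3$'' claim and the injectivity of the hop-charging is the technical heart; once these are in place, the Phase-2 vertex-cover argument and the combination of the two bounds are routine. (For $\gamma=2$ the construction specializes to the algorithm behind Theorem~\ref{thm:main1.5-2}, so the genuinely new content is the parameterization and the $k_h$-sensitivity.)
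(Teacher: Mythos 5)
Your high-level plan --- two phases, the ``enforce'' relation, witness pairs, and charging against $k_h$ --- matches the paper's Algorithm~\ref{ALG_min_beta} in outline, but two of your central claims do not hold as stated, and they are exactly where the technical weight lies.

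First, the assertion that for \emph{any} unqueried prediction-mandatory $I_p$ one can always name two intervals $I_a,I_b$ such that $\{I_a,I_b\}$ is a witness pair and, under correct predictions, two of the triple are mandatory, is false. The paper's algorithm explicitly distinguishes the case in Line~\ref{lin_min_beta_cond_trio} (a trio $I_i,I_j,I_l$ with $\pred{w}_j$ enforcing $I_i$ \emph{and} $\{I_j,I_l\}$ a witness set exists) from the fallback in Line~\ref{line_min_beta_pair} (only a pair $I_i,I_j$ with $\pred{w}_j$ enforcing $I_i$ exists, and no third interval can be found). In the pair case the algorithm queries \emph{only} $I_i$ and must then argue (Lemma~\ref{lemma_min_no_witness_implies_solved} and Corollary~\ref{cor_min_beta_no_witness_anymore}) that $I_j$ becomes harmless: it is either revealed mandatory immediately afterwards, or never queried by the algorithm, so $\{I_i,I_j\}$ can still be used as a witness pair in the accounting. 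Without this case analysis the robustness and consistency bookkeeping both break; you cannot assume the triple always exists.

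Second, the abort mechanism you propose diverges from the paper's design in a way that jeopardizes robustness. The paper's algorithm never aborts a round on detecting a misprediction: after each prediction-mandatory query it recomputes $P$ and keeps going, and the trio/pair branch is always attempted. This guarantees (except for at most one iteration, controlled by Lemma~\ref{lemma_min_beta_kh_unique_loop}) that every round's query set $P'$ contains a witness pair, so $|P'|\le\gamma\cdot|\OPT\cap P'|$ structurally, regardless of prediction quality. Under your abort rule a round could consist of a few prediction-mandatory queries all of which turn out non-mandatory, with no witness pair queried; accumulated over many aborted rounds this gives $\gamma$-robustness nothing to hold onto, since $k_h$ cannot be invoked for a worst-case bound. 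Relatedly, your claim that ``a completed round can occur only when the revealed values agreed with the predictions'' is not how robustness is obtained in the paper (and is not needed): the paper's robustness argument is purely structural via witness pairs and the one-exceptional-iteration lemma, not via prediction correctness. You flag these reconciliation issues yourself as the ``technical heart,'' and indeed they are not routine; the paper resolves them by a specific query order inside the trio (query the witness pair $\{I_j,I_l\}$ first, then $I_i$ only after confirming $w_j\in I_i$), by the pair-case lemma above, and by the exceptional-iteration lemma, none of which are present in your sketch.
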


The more adaptive and dynamic nature of the MST problem complicates the handling of Phase $2$.
We therefore employ again a more adaptive strategy to cover for potential prediction errors in Phase $2$.
By using a charging/counting scheme that builds on K\"onig-Egerv\'ary's famous theorem on the duality of minimum vertex covers and maximum matchings in bipartite graphs, we achieve the {following result.}
\begin{restatable}{theorem}{MstTheoremTwo}
	\label{theorem_mst2}
	There is an algorithm for the MST problem under uncertainty with competitive ratio $\min\{ 1+ \frac{1}{\gamma} +  \left(5 + \frac{1}{\gamma}\right) \cdot \frac{k_h}{\opt},\, \max\{3,\gamma + \frac{1}{\opt}\}\}$, for any $\gamma\in\ZZ$ with $\gamma\geq 2$, {where $k_h$ is the hop distance of an instance}.
\end{restatable}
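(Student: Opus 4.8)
The plan is to instantiate the general two-phase framework described in the overview, with the parameterized Phase~1 (query $\gamma-2$ extra prediction-mandatory elements alongside each ``$2$-out-of-$3$'' witness triple) and a carefully adaptive Phase~2 tailored to the MST problem, and then to carry out a charging argument against $\OPT$ that simultaneously yields the two branches of the bound. First I would recall the structural ingredients from the MST sections: the notion of a witness set in a cycle (two candidate maximum-weight edges), the characterization of prediction-mandatory edges, and the new structural lemma (the ``main contribution'' mentioned in the overview) that, under the assumption that predictions are correct, lets us identify a triple of edges of which any feasible query set must query at least two. Phase~1 repeatedly picks such a triple together with $\gamma-2$ further prediction-mandatory edges, and queries all $\gamma+1$ of them in a prescribed order; Phase~2, once no prediction-mandatory edges remain, resolves the residual instance, which is essentially a minimum vertex cover problem on a bipartite auxiliary graph of witness pairs, but which can change dynamically as wrong predictions are revealed, so it must be executed adaptively (query one side of a minimum vertex cover, recompute after each surprise).

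The robustness branch $\max\{3,\gamma+1/\opt\}$ I would prove first, since it must hold regardless of prediction quality. Each Phase~1 round queries $\gamma+1$ edges; I want to charge $\gamma$ of them to distinct elements of $\OPT$ and account for the one leftover. The $2$-out-of-$3$ guarantee gives that $\OPT$ queries at least two of the triple; combined with the $\gamma-2$ prediction-mandatory edges (which are witness sets of size one once re-augmented to genuine witness pairs on error, as in the base framework) this should give roughly $\gamma$ forced queries per round, so the per-round ratio is about $(\gamma+1)/\gamma$ — but that is too weak, so instead I would charge across the whole execution: sum the ``at least $2$ of $3$'' over all rounds, add Phase~2, and use disjointness of the witness structures to conclude $|\ALG|\le \gamma\cdot\opt + O(1)$, where the additive $O(1)$ (a single leftover query, or the $+1/\opt$ term) comes from the last round possibly being charged to an already-exhausted part of $\OPT$ or from a boundary effect in the vertex-cover duality. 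For the value-$3$ floor: when $\gamma$ is small and $\opt$ is tiny, the adaptive Phase~2 in the worst case queries both endpoints of a witness pair whose ``wrong'' side was guessed, giving the familiar factor between $2$ and $3$ seen in MST-under-uncertainty lower bounds, and one verifies $3\ge\gamma+1/\opt$ exactly in the regime where the floor binds.

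The consistency/error-sensitive branch $1+\tfrac1\gamma+(5+\tfrac1\gamma)\tfrac{k_h}{\opt}$ is where the real work lies. I would set up a token/charging scheme as advertised: each query the algorithm makes that is ``unjustified'' — namely, a Phase~1 prediction-mandatory edge that turns out not to be mandatory, or the ``extra'' edge of a triple when predictions misled us about which two are forced, or an additional Phase~2 query forced because the guessed side of a vertex cover was wrong — is charged to a distinct unit of hop distance $k_h$. When predictions are entirely correct ($k_h=0$), every round contributes exactly $\gamma+1$ queries of which $\gamma$ are genuinely in $\OPT$ (the $2$ of the triple plus the $\gamma-2$ truly mandatory ones), so $|\ALG|\le \tfrac{\gamma+1}{\gamma}\opt$... but this is $1+\tfrac1\gamma$ only if the triple's third edge is also charged to $\OPT$, which it is not, so more care is needed: one uses the König–Egerváry duality in Phase~2 to show the vertex cover queried there has size exactly equal to a lower bound derived from a matching inside $\OPT$, and one shows the triples can be merged into the same accounting so that the non-forced third edge of each triple is paid for either by the matching slack or, when predictions err, by a hop. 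The coefficient $5+\tfrac1\gamma$ then emerges from summing: per unit of $k_h$, the damage is (i) one wrongly-queried prediction-mandatory edge, (ii) possibly the destruction of a triple's $2$-of-$3$ guarantee costing its third edge, (iii) a re-augmentation to a genuine witness pair costing one more edge, (iv) a dynamic change to the Phase~2 vertex cover costing up to two more edges — bounded termwise by $5$, plus the $\tfrac1\gamma$ from pro-rating against the $\gamma$-sized batches. The main obstacle, and the step I would spend the most care on, is item (iv): controlling how a single revealed prediction error can alter the minimum vertex cover of the Phase~2 auxiliary bipartite graph, and proving that each such alteration adds only $O(1)$ queries and can be injectively charged to a hop not already used elsewhere — this is exactly where the ``substantial additional work'' over the minimum/sorting case lies, and it is handled by tracking an explicit maximum matching through the execution and invoking König–Egerváry to convert matching changes into vertex-cover changes of bounded size.
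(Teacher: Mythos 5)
Your outline reproduces the paper's two-phase framework (Phase~1 with $\gamma-2$ prediction-mandatory queries plus adaptively chosen witness sets of size at most~$3$; Phase~2 with an adaptive vertex cover and a K\"onig--Egerv\'ary matching threaded through restarts; injective charging of spurious queries against hop errors), and you correctly flag the matching-tracking step as the crux. But two of your accounting steps do not hold up.

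The decomposition of $5+\tfrac1\gamma$ as $1+1+1+2$ plus a $\tfrac1\gamma$ pro-rating is not the paper's, and the discrepancy is substantive, not cosmetic. In the paper the coefficient splits as $(1+\tfrac1\gamma)$ from \emph{all} hop charges in Phase~1 (prediction-mandatory misses, triple third-elements, and the $T_L\setminus T_U$ edges discovered in Phase~2 are all disjoint and charged at this rate), plus an extra $4$ for Phase-2 vertex-cover churn: Lemma~\ref{mst_phase2_2} bounds the number of re-matched partners under the augmenting-path restart by $2k_h$, and each re-match forces a full witness \emph{pair} to be queried, giving $4k_h$ in total. Your item~(iv) caps the churn at $2$ per error; that would undershoot the true damage, so the termwise sum to~$5$ you invoke is not established. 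Separately, you never identify the paper's \emph{recovery strategy~B}, which is the mechanism that makes error-sensitivity compatible with robustness at all: when a matching partner $h(e)$ would be re-used to charge a second queried element $e'$, the algorithm queries $h(e)$ itself and treats $\{e,e',h(e)\}$ as a disjoint witness set of size~$3$. This is exactly why the robustness floor is $3$ rather than $2$; the ``recompute everything on each surprise'' recovery you gesture at gives $2$-robustness but its error blow-up is not $O(k_h)$. Finally, your worry that the per-round Phase-1 ratio $(\gamma+1)/\gamma$ is ``too weak'' is misplaced: the triple's third element is queried only after revealed true values certify that at least two of the three are in every feasible solution (Lemmas~\ref{mst_phase1_case_a}--\ref{mst_phase1_case_c}), so a batch of size $\gamma+1$ contains at least two members of $\OPT$ and $|Q_i|\le\gamma|Q_i\cap\OPT|$ holds per round. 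The additive $\tfrac1\opt$ term comes from the at most $\gamma-2$ prediction-mandatory queries in the last partial Phase-1 iteration, absorbed against Phase~2's $3$-robustness via $|P|\le(\gamma-3)|\OPT_2|+1$, not from a boundary-of-charging effect.
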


Further, we {refine}  our framework to obtain guarantees dependent on {the mandatory query distance} $k_M$ for the minimum and sorting problems. We observe that we can hope to obtain only a slightly worse consistency for the same robustness in comparison to the guarantees dependent on $k_h$ (cf.\ Theorem~\ref{theo_lb_sym_diff}). 
Based on this observation, we adjust the framework to be \enquote{less careful} in the first phase and repeatedly query $\gamma-1$ prediction mandatory elements and one additional element that forms a witness set with one of the queried prediction mandatory elements.
For sorting and minimum, we show that this adjustment with an unchanged second phase leads to the competitive ratio stated in the following theorem.
\begin{restatable}{theorem}{ThmMinAlpha}
	\label{thm:min-alpha}
	There is an algorithm for the minimum problem under uncertainty that, given an integer parameter $\gamma \geq 2$, achieves a competitive ratio of $\min\{ (1+\frac{1}{\gamma-1}) \cdot (1 + \frac{k_M}{\opt}), \gamma\}$.
\end{restatable}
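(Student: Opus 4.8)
The plan is to instantiate the general two-phase framework with the ``less careful'' first phase described above, and to bound $|\ALG|$ against $\OPT$ in two independent ways. \emph{The algorithm.} While the residual instance still contains a prediction-mandatory interval, I would run a \emph{round}: pick $\gamma-1$ prediction-mandatory intervals together with one further interval $I_q$ such that $\{I_p,I_q\}$ is a witness pair --- in the structural, prediction-independent sense of Section~\ref{sec:prelim} --- for one of the chosen intervals $I_p$; such a partner always exists because the characterization of mandatory intervals for the minimum problem implies that a prediction-mandatory interval overlaps another candidate minimum. A round queries these $\gamma$ intervals in a \emph{careful order}: the prediction-mandatory ones first, re-evaluating the residual instance after each query and aborting the round as soon as a revealed value makes the remaining intervals of the round unnecessary or changes the set of prediction-mandatory intervals; this guarantees a round issues at most $\gamma$ queries and always contains a genuine witness pair. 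When no prediction-mandatory interval is left, Phase~2 would be exactly that of the algorithm behind Theorem~\ref{thm_minimum:hop}: compute a minimum vertex cover of the residual witness graph (polynomial time for the claimed cases), query it, and finally make any query still forced after the true values are revealed.

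For \emph{robustness} I would observe that each round issues exactly $\gamma$ queries yet contains a structural witness pair, so at least one query per round lies in every feasible solution, hence in $\OPT$; since intervals are never re-queried, distinct rounds charge distinct elements of $\OPT$, giving $|\ALG_1|\le\gamma\cdot|\OPT\cap\ALG_1|$ for the set $\ALG_1$ of Phase-1 queries. For Phase~2, the restriction of $\OPT$ to the residual intervals is a vertex cover of the residual witness graph, so the queried cover has size at most $|\OPT\setminus\ALG_1|$, and each forced extra Phase-2 query forms a witness pair with a distinct earlier query and thus at most doubles that part. Since $\gamma\ge2$, summing yields $|\ALG|\le\gamma\cdot\opt$.

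For the \emph{error-sensitive} bound I would partition $\ALG$ into (i)~Phase-1 prediction-mandatory queries that are really mandatory, (ii)~Phase-1 prediction-mandatory queries that are \emph{not} really mandatory, (iii)~the witness partners $I_q$ (one per round), and (iv)~the Phase-2 queries. Set~(i) lies in $\mathcal{I}_R\subseteq\OPT$, and the vertex cover inside~(iv), being disjoint from~(i) and of size at most $|\OPT\setminus\ALG_1|$, contributes together with~(i) at most $\opt$. Set~(ii) lies in $\mathcal{I}_P\setminus\mathcal{I}_R$, and the key claim is that each forced extra Phase-2 query can be injectively mapped to an interval of $\mathcal{I}_R\setminus\mathcal{I}_P$ --- such a query is forced only because a mispredicted realized minimum or overlap makes some interval really mandatory although it was not prediction-mandatory --- so (ii) plus the extra Phase-2 queries number at most $|\mathcal{I}_P\sym\mathcal{I}_R|=k_M$. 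Finally, each round consumes $\gamma-1$ distinct Phase-1 prediction-mandatory queries, and the total number of such queries is at most $\opt+k_M$ (the really mandatory ones lie in $\mathcal{I}_R\subseteq\OPT$, the rest witness a discrepancy counted by $k_M$), so set~(iii) has size at most $(\opt+k_M)/(\gamma-1)$, up to a minor correction for the last, possibly shorter, round. Adding the four bounds gives $|\ALG|\le\opt+k_M+(\opt+k_M)/(\gamma-1)=(1+\tfrac1{\gamma-1})(1+\tfrac{k_M}{\opt})\,\opt$; combining with $\gamma$-robustness yields the claimed minimum, and the statement for sorting then follows via the reduction in Section~\ref{sec:sorting}.

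The step I expect to be the main obstacle is the analysis of Phase~2 under faulty predictions: one must argue that the minimum vertex cover computed with respect to the \emph{predicted} witness structure stays cheap against $\OPT$, bound the extra forced queries, and --- the crux --- charge those extras to $\mathcal{I}_R\sym\mathcal{I}_P$ rather than to hop errors, since charging to hop errors would (via Theorem~\ref{Theo_hop_distance_mandatory_distance}) only give a bound in $k_h$, not the sought $k_M$. Building the injection from extra Phase-2 queries into $\mathcal{I}_R\setminus\mathcal{I}_P$, and keeping the round bookkeeping correct despite the prediction-mandatory set changing as values are revealed, is where the real effort goes; the remaining witness-pair accounting is routine.
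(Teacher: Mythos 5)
Your high-level decomposition matches the paper's: two phases, charge non-mandatory prediction-mandatory queries to $\mathcal{I}_P\setminus\mathcal{I}_R$, charge the extra forced Phase-2 queries to $\mathcal{I}_R\setminus\mathcal{I}_P$, and pay for the witness partners by spreading them over the $\gamma-1$ prediction-mandatory queries of each round. But there are two genuine gaps, and they are connected.

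First, you re-evaluate the set of prediction-mandatory intervals after each query. The paper's Algorithm~\ref{ALG_min_alpha} deliberately does \emph{not}: it computes the initial set $P=\mathcal{I}_P$ once, and afterwards only removes queried intervals from $P$ --- it never adds newly prediction-mandatory intervals. This is not an implementation nicety; it is essential to your own accounting. An interval that becomes prediction-mandatory mid-run because a queried interval's revealed value makes it so need not lie in $\mathcal{I}_P$ (which is defined on the \emph{initial} instance). If such an interval turns out not to be really mandatory, it sits outside both $\mathcal{I}_P\setminus\mathcal{I}_R$ and $\mathcal{I}_R\setminus\mathcal{I}_P$ and therefore cannot be charged to $k_M$ at all. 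Your claim that ``set~(ii) lies in $\mathcal{I}_P\setminus\mathcal{I}_R$'' silently assumes the fixed-$P$ design; your described algorithm does not enforce it. The paper flags this explicitly as the reason the $k_M$-algorithm differs from the $k_h$-algorithm (Algorithm~\ref{ALG_min_beta} does re-evaluate, because $k_h$ can absorb such queries; $k_M$ cannot).

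Second, you correctly identify the crux --- that every forced Phase-2 query must land in $\mathcal{I}_R\setminus\mathcal{I}_P$ --- but leave it as a claim. The paper's Lemma~\ref{lemma_min_no_pred_mand_after_vc} proves it, and the proof depends on the fixed-$P$ structure: Phase~1 only terminates once no remaining $p\in P$ has an unqueried witness partner, which forces every leftover $I_j\in\mathcal{I}_P$ to be neither leftmost in any set nor intersecting any leftmost interval, hence irrelevant to every set and never queried in the final mandatory pass. With a re-evaluated prediction-mandatory set, the corresponding termination condition and irrelevance argument are not available, so the lemma would need a new (and not obviously correct) proof. You also phrase the goal as ``injectively mapped to $\mathcal{I}_R\setminus\mathcal{I}_P$,'' but the cleaner (and what the paper establishes) statement is that those queries literally \emph{are} elements of $\mathcal{I}_R\setminus\mathcal{I}_P$, which is what makes the disjoint budget split against $k_M$ work. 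Finally, the ``minor correction for the last, possibly shorter, round'' needs to be made precise: the paper handles it by noting the leftover $P$ with $|P|<\gamma-1$ contributes at most $|P\cap\OPT|$ plus its $\mathcal{I}_P\setminus\mathcal{I}_R$ excess, and for robustness is charged against the nonempty residual $\OPT$.
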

The error-sensitivity can again be shown by charging each queried prediction mandatory element that turns out to not be mandatory and each additional query in the second phase to a distinct error in the $k_M$-metric. 
In contrast to the $k_h$-dependent error-sensitivity, only elements that are prediction mandatory based on the initially given information and that turn out not to be mandatory contribute to $k_M$.
Thus, we fix the set of prediction mandatory elements at the beginning of the algorithm and show that querying only those elements in the first phase is sufficient.
We remark that the integrality requirement in Theorem~\ref{thm:min-alpha} can be removed by randomization at the cost of a slightly worse guarantee.

 {Finally, we observe that for sorting a single set, a substantially better algorithm is possible: a $1$-consistent  $2$-robust algorithm with a competitive ratio that linearly degrades depending on the prediction error. Note that for a }
$1$-consistency, an algorithm \emph{must} follow the offline algorithm and cannot afford additional queries unless the predictions are wrong. 
To simultaneously guarantee $2$-robustness and error dependency, the algorithm has to perform queries in a very carefully selected order, both for the prediction mandatory elements in Phase 1 and the vertex cover in Phase 2.
By employing such a strategy, we achieve the following theorem.
\begin{restatable}{theorem}{sortingsingleset}
	\label{thm:sorting:singleset}
	For sorting under uncertainty for a single set, there is a polynomial-time algorithm with competitive ratio $\min\{1+k/\opt, 2\}$, for any error measure $k \in \{k_{\#}, k_M, k_h\}$.
\end{restatable}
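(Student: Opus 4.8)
\textbf{Proof plan for Theorem~\ref{thm:sorting:singleset}.}
The plan is to instantiate the two--phase framework of Section~\ref{sec:methods} in its most conservative form, so that with correct predictions the algorithm makes \emph{exactly} $\opt$ queries, and then to control errors through a charging argument in which every surplus query is blamed on a distinct prediction error that is counted by each of $k_{\#}$, $k_M$, and $k_h$. First I would pin down the structure of an optimal verification set for sorting a single set: every pair of overlapping non-trivial intervals is a witness pair, an interval $I_i$ is mandatory iff $w_j\in\mathrm{int}(I_i)$ for some $j\neq i$ (this also covers the case of a trivial $I_j$), and one checks that querying all mandatory intervals together with a minimum vertex cover of the graph $G'$ that the interval graph induces on the \emph{non-mandatory} intervals is both feasible and optimal (a revealed value settles a pair unless it lands in the other interval's interior, which makes that interval mandatory). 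Since $G'$ is an interval graph, its minimum vertex cover is computable in polynomial time, so the whole framework runs in polynomial time.

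The algorithm mirrors this structure. In Phase~1 it repeatedly picks an interval that is prediction-mandatory with respect to the current knowledge (revealed true values for queried intervals, predicted values otherwise) and queries it; in Phase~2, once no prediction-mandatory interval remains, it computes and queries a minimum vertex cover of the current residual graph, and finally performs a clean-up round of the queries that newly revealed values have made necessary. If the predictions are correct, the prediction-mandatory set equals the really mandatory set and the residual graph equals $G'$, so the set of queries made is precisely an optimal verification set and the algorithm is $1$-consistent. For the error bounds, the crucial point is the \emph{order} of queries within each phase. When the algorithm queries a prediction-mandatory interval $I_i$ that turns out not to be mandatory, there is an as-yet-unqueried interval $I_j$ with $\w_j\in\mathrm{int}(I_i)$ but $w_j\notin\mathrm{int}(I_i)$; this single discrepancy is counted by $k_{\#}$ (the value of $I_j$ is mispredicted), by $k_h$ (the value of $I_j$ passes over an endpoint of $I_i$, and $I_i$ lies in the relevant set $A_j$ for a single set), and by $k_M$ (then $I_i\in\mathcal I_P\setminus\mathcal I_R$). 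A surplus query in Phase~2 or in the clean-up round is handled the same way: a queried interval $I_a$ leaves its conflict with an unqueried $I_b$ unresolved exactly when $w_a\in\mathrm{int}(I_b)$, which is again a hop of the value of $I_a$ over an endpoint of $I_b$, a change of $I_b$'s mandatory status, and a genuine misprediction. Collecting the charges gives $|\ALG|\le\opt+k$ simultaneously for every $k\in\{k_{\#},k_M,k_h\}$, while a witness-pair argument analogous to the classical $2$-competitive algorithm gives $|\ALG|\le 2\opt$; together these yield the claimed ratio $\min\{1+k/\opt,\,2\}$.

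The main obstacle, and the reason the ordering must be chosen ``very carefully'', is to make the charging injective \emph{for all three measures at once} while never letting a prefix of the run exceed twice the corresponding part of the optimum---$1$-consistency leaves no room for an insurance query, so a wrong prediction must be absorbed purely through the adaptive schedule. The delicate case is when one mispredicted value $\w_j$ lies in the interiors of many prediction-mandatory intervals that are all really non-mandatory: this is a single unit of $k_{\#}$ but could trigger many surplus queries, so the schedule must query the shared witness $I_j$ at the right moment (when a previously revealed value already certifies that $I_j$ is mandatory, hence in the optimum, or when its misprediction has already been exposed) and then prune the now-defunct prediction-mandatory intervals before spending further queries on them. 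Verifying that this pruning always kicks in early enough, that the Phase~2 vertex-cover recomputation does not cascade, and that no error is charged twice, is where the real work of the proof lies; the underlying structural facts about overlapping open intervals are routine.
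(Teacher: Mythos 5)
Your high-level plan matches the paper's two-phase skeleton (query prediction-mandatory intervals, then a vertex cover of the residual path components, then clean up), and you correctly identify the dangerous case: a single mispredicted $\w_j$ sitting inside many prediction-mandatory intervals, which is one unit of $k_\#$ but many queries. However, the resolution you sketch for that case is not what the paper does, and it is the part where I see a genuine gap. You propose to recompute prediction-mandatory status adaptively and to ``prune'' intervals once the shared witness $I_j$ has been queried and its misprediction exposed. The paper's algorithm does the opposite: it fixes $\mathcal{I}_P$ once, commits to querying \emph{all} of it, and absorbs the multiplicity problem not by avoidance but by a \emph{clique partition}. Each prediction-mandatory interval is assigned a single parent $\pi(i)$ (one witness whose predicted value lands inside it); the edges $(I_{\pi(i)},I_i)$ form a forest of arborescences; the prediction-mandatory set is then partitioned into cliques $C_j$ where every member contains $\pred{w}_j$. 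Because each $C_j$ is a clique, $\OPT$ leaves out at most one of its members, and that one surplus is charged to the single misprediction of $\pred{w}_j$; distinct cliques charge to distinct predictions, so the charging is injective. This is the device that makes your ``delicate case'' go away, and it has no analogue in your plan.

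The second missing piece is Phase~2. You say ``computes and queries a minimum vertex cover of the current residual graph,'' but the paper needs more: after Phase~1 the components are paths, and for an even path there are two minimum vertex covers. Choosing the wrong one can leave an isolated singleton clique $C_{x_1}$ stranded, and then $2$-robustness fails. The paper's Lemma~\ref{lem:endpoints} shows that only path endpoints can own a nonempty clique $S_j$, and Lemma~\ref{lem:cliquerepartition} shows how to repartition an arborescence when its root is isolated; together these justify the specific parity-and-clique-size rule in Lines~\ref{line:pathodd}--\ref{line:patheven2} of Algorithm~\ref{fig:sorting1cons2rob}. Your sketch says ``compute a minimum vertex cover'' with no tie-breaking rule, so the $2$-robustness argument cannot go through. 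In short: the framework and the obstacle are right, but the actual machinery --- commit to $\mathcal{I}_P$, build the arborescence forest, form the clique partition, repartition isolated roots, and tie-break the path vertex cover by clique sizes --- is absent, and your proposed adaptive pruning is not a drop-in replacement (it is unclear how to make it simultaneously injective for $k_\#$, $k_M$, and $k_h$, and it changes which intervals get queried, which undermines the witness-clique lower bound on $\OPT$ that the paper relies on).
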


\section{The minimum problem}
\label{sec:minimum}
We show how to implement our general framework for the minimum problem under uncertainty achieving best possible competitive ratios with respect to
the accuracy measures $k_{\#}$ and $k_M$. 

\subsection{Preliminaries}                 

We start by giving a characterization of (prediction) mandatory queries and witness sets. Secondly, we present a verification algorithm, i.e., an optimal offline algorithm.

\begin{restatable}{lem}{LemmaMandatoryMin}
\label{lema_mandatory_min}
An interval~$I_i$ is {\em mandatory} for the minimum problem if and only if
(a) $I_i$ is a true minimum of a set~$S$ and contains $w_j$ of another interval $I_j\in S\setminus \{I_i\}$ (in particular, 
if $I_j\subseteq I_i$), or
(b) $I_i$ is not a true minimum of a set~$S$ with $I_i \in S$ but contains the value of the true minimum of $S$. 
{\em Prediction mandatory} intervals are characterized equivalently, replacing true values by predicted values.
\end{restatable}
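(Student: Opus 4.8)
The plan is to argue directly from the definition of a feasible query set for the verification problem, using that feasibility decomposes over the sets $S\in\mathcal S$. First I would record that a query set $Q$ is feasible if and only if, for \emph{every} $S\in\mathcal S$, the information revealed by $Q$ (the exact value $w_j$ for $I_j\in Q$, and the interval $I_j$ otherwise) already determines the index of the minimizer of $S$, and that this per-set condition is monotone under adding queries. Hence $I_i$ is mandatory if and only if there is a \emph{single} set $S\ni I_i$ such that every query set that is feasible for $S$ alone contains $I_i$; otherwise one picks, for each $S$, a feasible-for-$S$ query set avoiding $I_i$ and takes the union. So it suffices to prove the characterization for one set $S$, distinguishing whether or not $I_i$ is the true minimum of $S$. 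Throughout I would use the standard assumption that the true values are pairwise distinct (with ties broken by a fixed rule); this is the only technical caveat.

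For the ``if'' direction I would fix such an $S$ and an arbitrary query set $Q$ with $I_i\notin Q$, and exhibit an alternative value assignment $w'$ consistent with everything $Q$ reveals under which the minimizer of $S$ changes, contradicting feasibility. In case~(a), $I_i$ is the true minimum and $w_j\in(L_i,U_i)$ for some $I_j\in S\setminus\{I_i\}$; since $w_i<w_j<U_i$, replacing $w_i$ by any value in the nonempty interval $(w_j,U_i)\subseteq I_i$ and leaving everything else fixed is consistent with $Q$ (we only change $w_i$, and $I_i\notin Q$) and makes $I_i$ no longer the minimum. In case~(b), $I_i$ is not the true minimum, the true minimum is $I_a$ with $w_a\in(L_i,U_i)$, hence $L_i<w_a<U_i$; replacing $w_i$ by any value in the nonempty interval $(L_i,w_a)\subseteq I_i$ makes $I_i$ strictly smaller than every other element of $S$, again flipping the answer. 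Thus $I_i$ belongs to every feasible query set, i.e.\ it is mandatory.

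For the ``only if'' direction I would prove the contrapositive for a fixed $S$: if $I_i$ satisfies neither (a) nor (b) with respect to $S$, then $Q_S:=\{I_j\in S\setminus\{I_i\}: I_j\text{ nontrivial}\}$ is feasible for $S$, so $I_i$ is not forced by $S$; ranging over all $S$ then produces a feasible $Q$ avoiding $I_i$. Verifying feasibility of $Q_S$ splits in two. If $I_i$ is the true minimum of $S$ and (a) fails, then no $w_j$ with $j\ne i$ lies in $(L_i,U_i)$; since $w_j>w_i>L_i$, this forces $w_j\ge U_i$, so the revealed values give $w_i<U_i\le w_j$ for all such $j$, certifying $I_i$ as the minimum. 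If $I_i$ is not the true minimum, with true minimum $I_a$, and (b) fails, then $w_a\notin(L_i,U_i)$; since $w_a<w_i<U_i$, this forces $w_a\le L_i$, so the revealed value $w_a$, the bound $w_i>L_i\ge w_a$, and the revealed values of the remaining elements certify that $I_a$ is the minimum. In both cases $Q_S$ solves $S$ without querying $I_i$.

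Finally, the statement about \emph{prediction mandatory} intervals is the same statement applied verbatim to the hypothetical instance in which each true value equals the corresponding prediction $\w_i\in I_i$: none of the arguments above uses anything about $w$ beyond $w_i\in I_i$, so replacing $w$ by $\w$ throughout yields the claimed characterization. I expect the step needing the most care to be the reduction to a single set together with the ``only if'' argument that querying all of $S$ except $I_i$ already suffices when (a) and~(b) fail; the remaining case checks are short and sensitive only to the openness of the intervals (hence the $\le$ versus $<$ distinctions at endpoints) and to the distinctness convention.
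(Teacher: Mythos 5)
Your proposal is correct and takes essentially the same approach as the paper's proof: both directions are argued per set $S$, with the ``(a) or (b) $\Rightarrow$ mandatory'' step shown by exhibiting that even $S\setminus\{I_i\}$ cannot decide the minimum, and the converse shown by checking that $S\setminus\{I_i\}$ is feasible for $S$ when (a) and (b) both fail, then taking the union over all $S$. Your write-up is somewhat more explicit about the alternative value assignments and the monotonicity/decomposition reasoning, but the underlying argument is the one in the paper.
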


\begin{proof}
If~$I_i$ is a true minimum of~$S$ and contains~$w_j$ of another interval $I_j\in S$, then~$S$ cannot be solved even if we query all intervals in $S \setminus \{I_i\}$. 
If~$I_i$ is not a true minimum of a set $S$ with $I_i\in S$ and contains the true minimum value~$w^*$ of $S$, then~$S$ cannot be solved even if we query all intervals in $S \setminus \{I_i\}$, as we cannot prove that $w^* \leq w_i$.

If~$I_i$ is the true minimum of a set $S$, but $w_j \notin I_i$ for every $I_j \in S \setminus \{I_i\}$, then $S \setminus \{I_i\}$ is a feasible solution for~$S$.
If~$I_i$ is not a true minimum of a set~$S$ and does not contain the true minimum value of~$S$, then again $S \setminus \{I_i\}$ is a feasible solution for~$S$. If every set $S$ that contains $I_i$ falls into one of these two cases, then querying all intervals except $I_i$ is a feasible query set for the whole instance.
\end{proof}

	Lemma~\ref{lema_mandatory_min} does not only enable us to identify mandatory intervals given full knowledge of the true values, but also implies criteria to identify \emph{known mandatory} intervals, i.e., intervals that are known to be mandatory given only the intervals, and true values revealed by previous queries.
	We call an interval {\em leftmost} in a set~$S$ if it is an interval with minimum lower limit in~$S$.
	The following corollary follows from 
	Lemma~\ref{lema_mandatory_min} and gives a characterization of known mandatory intervals.

\begin{restatable}{coro}{CorMinLeftMandatory} \label{cor_min_left_mandatory}
If the leftmost interval~$I_l$ in a set~$S$ contains the true value of another interval in~$S$, then~$I_l$ is mandatory.
In particular, if $I_l$ is leftmost in $S$ and $I_j \subseteq I_l$ for some $I_j \in S \setminus \{I_l\}$, then $I_l$ is mandatory.
\end{restatable}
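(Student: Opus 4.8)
The plan is to prove Corollary~\ref{cor_min_left_mandatory} as a direct consequence of Lemma~\ref{lema_mandatory_min}, by verifying that the hypothesis forces interval~$I_l$ into case~(a) of that lemma. First I would set up notation: let $I_l = (L_l, U_l)$ be the leftmost interval in $S$, so $L_l \le L_j$ for every $I_j \in S$, and suppose $I_l$ contains the true value $w_j$ of some $I_j \in S \setminus \{I_l\}$, i.e.\ $w_j \in I_l$. The key observation is that being leftmost makes $I_l$ a candidate for the true minimum of $S$: since $w_j \in I_l = (L_l, U_l)$ we have $w_j > L_l$, and more generally for every $I_t \in S$ either $I_t$ is trivial with $w_t = L_t \ge L_l$, or $I_t$ is open with $w_t > L_t \ge L_l$; in the trivial case $w_t = L_t$, and if $w_t = L_l$ then $w_t \le w_j$ trivially holds with the right interpretation, but the clean way to phrase it is that the true minimum value $w^\ast$ of $S$ satisfies $w^\ast \le w_j < U_l$ and also $w^\ast \ge L_l$ (since every value in $S$ is $\ge L_l$, as open intervals have $w_t > L_t \ge L_l$ and trivial intervals have $w_t = L_t \ge L_l$).

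From here I would split into two cases according to whether $I_l$ is itself the true minimum of $S$. If $I_l$ is the true minimum of $S$: then since $w_j \in I_l$ for $I_j \in S \setminus \{I_l\}$, hypothesis~(a) of Lemma~\ref{lema_mandatory_min} is satisfied verbatim ($I_l$ is a true minimum of $S$ and contains $w_j$ of another interval $I_j \in S \setminus \{I_l\}$), so $I_l$ is mandatory. If $I_l$ is \emph{not} the true minimum of $S$: let $w^\ast$ be the true minimum value of $S$, attained by some $I_{j^\ast} \ne I_l$. Then $w^\ast \le w_j < U_l$ and $w^\ast \ge L_l$ as argued above, so $w^\ast \in [L_l, U_l)$; since $I_l$ is open this gives $w^\ast \in I_l$ unless $w^\ast = L_l$. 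The boundary case $w^\ast = L_l$ needs a small argument: if $w^\ast = L_l$ then the minimizer $I_{j^\ast}$ must be a trivial interval $\{L_l\}$ (an open interval in $S$ has its value strictly above its own lower limit, which is $\ge L_l$), but then $w_j \in I_l = (L_l, U_l)$ would force $w_j > L_l = w^\ast$, so $w_j$ is not the minimum, which is consistent; and $w^\ast = L_l \notin I_l$ since $I_l$ is open. In that subcase $I_l$ is not the minimum and does not contain the minimum value, so Lemma~\ref{lema_mandatory_min} would say $I_l$ is not mandatory — so I need to rule this subcase out or handle it. I expect this boundary issue is in fact impossible here: if $w^\ast = L_l$, take the trivial interval $\{L_l\} \in S$; but then $I_l$ and $\{L_l\}$ together show the true minimum is already determined without querying $I_l$... actually the cleanest resolution is to note that the problem statement restricts to instances where each interval is open or trivial and (implicitly, for the leftmost interval to be ambiguous) the structure makes $w^\ast > L_l$ whenever $I_l$ is non-trivial and not the minimum; I would double-check against the intended reading and, if necessary, add the harmless standing assumption that distinct relevant values are distinct, which is standard in this literature.

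So the main content is: apart from the degenerate boundary subcase, when $I_l$ is not the true minimum of $S$ we get $w^\ast \in I_l$, which is hypothesis~(b) of Lemma~\ref{lema_mandatory_min} ($I_l$ is not a true minimum of $S$ but contains the value of the true minimum), hence $I_l$ is mandatory. Combining the two cases establishes the first sentence of the corollary. For the second sentence, suppose $I_j \subseteq I_l$ for some $I_j \in S \setminus \{I_l\}$; then $w_j \in I_j \subseteq I_l$, so $I_l$ contains the true value of another interval in $S$, and the first sentence applies directly. The statement about \emph{known} mandatory intervals is immediate: both the leftmost property and the containment $I_j \subseteq I_l$ (or the revealed value $w_j \in I_l$) are detectable from the interval endpoints and already-queried values alone, without knowing the other true values.

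The main obstacle, as flagged above, is the handling of boundary/degeneracy at the lower endpoint $L_l$ — ensuring that ``leftmost and contains another true value'' really does force $I_l$ into one of the two mandatory cases of Lemma~\ref{lema_mandatory_min} rather than slipping into the ``$S \setminus \{I_l\}$ is feasible'' situation. I expect this to be resolved cleanly using the openness of non-trivial intervals (so every value strictly exceeds its own lower limit, hence exceeds the global minimum lower limit $L_l$ unless it sits on a trivial interval at $L_l$, which can itself be used to certify the minimum), possibly together with the standard general-position assumption on the values; everything else is a routine case check.
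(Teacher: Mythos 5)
Your two-case argument via Lemma~\ref{lema_mandatory_min} --- check whether $I_l$ attains the true minimum of~$S$, apply case~(a) if so, otherwise show $w^*\in I_l$ and apply case~(b), and reduce the second sentence to the first via $w_j\in I_j\subseteq I_l$ --- is exactly what the paper leaves implicit when it says the corollary ``follows from Lemma~\ref{lema_mandatory_min}.''

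The boundary case you flag is genuine and the paper passes over it silently. If $w^*=L_l$ then, since $I_l$ is open, $w^*\notin I_l$ and case~(b) fails; as you note, this forces a trivial interval $\{L_l\}$ to lie in~$S$ (any open $I_t\in S$ has $w_t>L_t\geq L_l$). A literal reading of the corollary then fails: for $S=\{(L_l,U_l),\{L_l\},I_j\}$ with $w_j\in(L_l,U_l)$, the open interval $I_l=(L_l,U_l)$ is leftmost and contains a true value of another interval in~$S$, yet neither case of Lemma~\ref{lema_mandatory_min} applies, so $I_l$ is not mandatory. The rescue is that in this configuration $\{L_l\}$ is also leftmost and is already a certified minimum of~$S$ (every other value in~$S$ strictly exceeds~$L_l$), so~$S$ is solved without queries and the corollary is never invoked for such a set by the paper's algorithms. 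So the gap is real but benign; your proposed fixes (a non-degeneracy assumption, or reading the hypothesis so that a decided set is excluded) are both adequate, and stating one of them explicitly would tighten the corollary's statement.
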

Every algorithm can query known mandatory intervals without worsening its competitive ratio.
	In addition to exploiting prediction mandatory elements, our algorithms rely on identifying witness sets of size two by using the following lemma.

\begin{lemma}[\cite{kahan91queries}]
		A set $\{I_i, I_j\} \subseteq S$ with $I_i \cap I_j \neq \emptyset$, and $I_i$ or~$I_j$ leftmost in~$S$, is a witness set.
\end{lemma}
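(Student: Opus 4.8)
The statement to prove is: a set $\{I_i, I_j\} \subseteq S$ with $I_i \cap I_j \neq \emptyset$, where $I_i$ or $I_j$ is leftmost in $S$, is a witness set — i.e., every feasible query set for the minimum problem must query at least one of $I_i, I_j$.

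The plan is to argue by contradiction. Suppose without loss of generality that $I_i$ is leftmost in $S$ and that some feasible query set $Q$ queries neither $I_i$ nor $I_j$. Since $Q$ is feasible, after executing the queries in $Q$ the identity of the minimum of $S$ must be determined. First I would fix the true values $w_\ell$ revealed by $Q$ for $\ell \notin \{i,j\}$ and then exhibit two adversarial completions of the instance — two choices of $(w_i, w_j)$ consistent with $I_i, I_j$ and with the already-revealed values — that have different minima of $S$, so that $Q$ cannot distinguish them. The key observation is that since $I_i$ is leftmost, $L_i \le L_\ell$ for all $\ell \in S$, and since $I_i \cap I_j \neq \emptyset$ and the intervals are open, there is a point $p \in I_i \cap I_j$ with $p$ smaller than every revealed value $w_\ell$ that lies in $I_i$ (if any such value sits below the whole overlap, a slightly more careful choice is needed, but the leftmost property guarantees we can pick a common point in $I_i \cap I_j$ at least as small as $U_i$ and below any competing revealed value that matters).

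Concretely, I would split into the two scenarios. In scenario one, place both $w_i$ and $w_j$ at a common small value $p \in I_i \cap I_j$ chosen below $\min\{w_\ell : \ell \in S \setminus \{i,j\}\}$ if the overlap extends that low; then the minimum of $S$ could be $I_i$. In scenario two, keep $w_i = p$ but move $w_j$ to some value $q \in I_j$ with $q < p$ (possible after perturbing $p$ upward within the overlap, using openness so the overlap has nonempty interior); now $I_j$ is the unique minimum. Both scenarios agree on all queried values (none of $I_i, I_j$ is queried, and the other values are fixed), yet they have distinct minimizers, contradicting feasibility of $Q$. The case where $I_j$ is the leftmost interval is symmetric. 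The role of the leftmost hypothesis is precisely to ensure that the common overlap point can be made to "win" against the other revealed values — without it, some third interval entirely to the left could force the minimum regardless of $w_i, w_j$, and the pair would not be a witness set.

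The main obstacle is the bookkeeping around the revealed values $w_\ell$ for $\ell \in S \setminus \{i,j\}$: one must choose $p$ (and the perturbation $q$) inside the open set $I_i \cap I_j$ so that in both scenarios the element realizing the minimum of $S$ is either $I_i$ or $I_j$, never a third element — and one has to handle the boundary subtlety that $I_i$ being leftmost gives $L_i \le L_\ell$ but not necessarily $U_i \le$ anything, so the overlap might sit high. Here the clean resolution is: if every revealed value in $S$ that lies below $\sup(I_i \cap I_j)$ actually forces some fixed third element to be the minimum in \emph{both} scenarios, then that third element is already the determined minimum and $Q$ need not have queried it — but then $Q$ solving $S$ did not depend on $I_i$ or $I_j$ at all, which is fine; the subtle point the lemma really needs is only that we cannot have \emph{both} $I_i$ and $I_j$ unqueried while the answer genuinely hinges on the comparison between them, and the leftmost condition plus nonempty open overlap is exactly what makes that comparison unavoidable in the worst case. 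I expect the cited reference \cite{kahan91queries} handles this with a short two-line adversary argument, and I would mirror that.
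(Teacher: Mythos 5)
The paper does not prove this lemma itself; it is cited from Kahan~\cite{kahan91queries}, so there is no in-paper proof to compare against. Your adversary strategy---produce two realizations consistent with all answered queries that disagree on the identity of the minimum of $S$---is the right approach, and the place you flag as the main obstacle is indeed the crux, but your proposed resolution of it is incorrect and leaves a real gap.

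You suggest that if a revealed value $w_\ell$ (with $\ell\in Q\cap S$) lies below the overlap, then ``$Q$ solving $S$ did not depend on $I_i$ or $I_j$ at all, which is fine.'' It is not fine: the lemma asserts that \emph{every} feasible query set contains $I_i$ or $I_j$, so if some $Q$ avoiding both were feasible here, you would have produced a counterexample to the lemma, not dodged a technicality. What actually closes this case is the following observation. Since $I_i$ is leftmost, every queried $I_\ell\in S$ has $w_\ell>L_\ell\ge L_i$, and since $I_i\cap I_j\ne\emptyset$ we have $L_j<U_i$; hence any $w_\ell\le L_j$ (below the overlap) satisfies $L_i<w_\ell<U_i$, so $w_\ell\in I_i$. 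Taking $\ell$ to be the queried interval in $S$ with the smallest revealed value, the two completions with $w_i'\in(L_i,w_\ell)$ and with $w_i'\in(w_\ell,U_i)$ are both consistent with $Q$'s answers but disagree on whether $I_i$ or $I_\ell$ realizes the minimum of $S$, so $Q$ is infeasible. The adversary comparison simply shifts from ``$I_i$ versus $I_j$'' to ``$I_i$ versus $I_\ell$,'' and either way $Q$ must contain $I_i$, hence must intersect $\{I_i,I_j\}$. This is exactly the role of the leftmost hypothesis you were reaching for: it pushes every revealed value of $S$ strictly above $L_i$ and, together with $L_j<U_i$, into the interior of $I_i$ whenever it threatens to undercut the overlap. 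A minor secondary issue: your Scenario~1 sets $w_i=w_j=p$, which ties the minimum and makes ``the minimum could be $I_i$'' ambiguous; use $w_i=p<q=w_j$ and swap for Scenario~2 so the two minimizers are unambiguously distinct.
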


Since following the predictions can lead to an arbitrarily bad robustness, our algorithms execute additional queries to verify that a prediction mandatory interval is indeed mandatory.
To do so, we observe that some intervals are rendered prediction mandatory by a single predicted value.
A predicted value~$\pred{w}_j$ {\em enforces} another interval~$I_i$ if $\pred{w}_j \in I_i$ and $I_i, I_j \in S$, where~$S$ is a set such that either~$I_i$ is leftmost in~$S$, or~$I_j$ is leftmost in~$S$ and~$I_i$ is leftmost in $S \setminus \{I_j\}$.
Corollary~\ref{cor_min_left_mandatory} implies that $I_i$ is mandatory if the predicted value of $I_j$ is correct.
It is then easy to see that, if $\pred{w}_j$ enforces~$I_i$, then $\{I_i, I_j\}$ is a witness set and $I_i$ is prediction mandatory; moreover, if $w_j \in I_i$ then~$I_i$ is mandatory.

We use Lemma~\ref{lema_mandatory_min} to design a verification algorithm. It follows the same two-phase structure as our general framework:
First, we query all mandatory intervals.
After that, each unsolved set~$S$ has the following configuration: The leftmost interval~$I_i$ has true value outside all other intervals in~$S$, and each other interval in~$S$ has true value outside~$I_i$.
Thus we can either query~$I_i$ or all other intervals that intersect~$I_i$ in~$S$ to solve it.
The optimum solution is to query a minimum vertex cover in the graph with a vertex for each interval and, for each unsolved set, an edge between the leftmost interval and the intervals that intersect it.
This algorithm may require exponential time, but this is not surprising as we can
show that the verification version of the minimum problem is NP-hard (a proof is given in Appendix~\ref{sec:nphard}).

\begin{theorem}
	The verification problem for the minimum problem under uncertainty is NP-hard. The above algorithm solves it with a minimum number of queries. 
\end{theorem}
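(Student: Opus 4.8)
The plan is to prove the theorem in two parts: (i) the verification problem for the minimum problem is NP-hard, and (ii) the described two-phase algorithm is correct and optimal. For part (ii), which is the constructive half, I would first argue correctness of the vertex cover formulation and then argue optimality against an arbitrary feasible offline query set.

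\textbf{Optimality of the algorithm.} First I would observe, using Lemma~\ref{lema_mandatory_min}, that after querying all mandatory intervals every remaining unsolved set $S$ is in the stated ``reduced'' configuration: the leftmost interval $I_i$ has its (now known) true value outside all other intervals of $S$, and every other interval of $S$ has its true value outside $I_i$; otherwise one of cases (a) or (b) of the lemma would apply and some interval of $S$ would still be mandatory, contradicting that all mandatory intervals have been queried. In this configuration the only two ways to certify the minimum of $S$ are to query $I_i$ (proving $w_i$ is below every remaining candidate's interval) or to query every interval of $S$ intersecting $I_i$ (proving each of their values exceeds $w_i$). I would then show that the auxiliary graph $H$ — one vertex per interval, one edge per unsolved set $S$ from its leftmost interval to each interval intersecting it — has the property that a set of queries $Q$ (among the yet-unqueried intervals) solves all remaining sets if and only if $Q$ is a vertex cover of $H$: an edge $(I_i,I_j)$ of set $S$ is covered exactly when $Q$ contains $I_i$ or $I_j$, and covering all edges of $S$ means $Q$ contains $I_i$ or all the intersecting intervals, which is precisely what solves $S$. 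Hence a minimum vertex cover of $H$ is a minimum feasible completion after the mandatory phase.

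To upgrade this to global optimality I would argue that some optimal solution $\OPT$ can be assumed to contain all mandatory intervals (it must, by definition of mandatory), so $\opt = |\{\text{mandatory intervals}\}| + (\text{cost of an optimal completion})$, and the completion cost is exactly $\tau(H)$, the minimum vertex cover number of $H$, by the previous paragraph applied to $\OPT \setminus \{\text{mandatory}\}$. Since the algorithm queries exactly the mandatory intervals plus a minimum vertex cover of $H$, it matches $\opt$. A small technical point to handle carefully: the algorithm sees true values as it goes, so ``mandatory'' here means mandatory with respect to the true instance, and I would note that the reduced configuration — hence the graph $H$ — is well-defined and does not depend on the order in which mandatory intervals are queried.

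\textbf{NP-hardness.} For the hardness half I would give a polynomial-time reduction from Vertex Cover. Given a graph $G=(V,E)$, I would construct an instance where each vertex $v$ gets an interval $I_v$, all sharing a common overlap region, with true values placed so that for each edge $e=\{u,v\}$ there is a set $S_e = \{I_u, I_v, \ldots\}$ whose reduced configuration forces exactly the choice ``query $I_u$ or query $I_v$'' (with auxiliary gadget intervals, already-queried or trivial, ensuring no interval is mandatory and the leftmost structure of $S_e$ makes $I_u$–$I_v$ the sole edge of $H$ contributed by $S_e$). Then feasible query sets of the minimum-problem instance correspond to vertex covers of $G$, and the reduction preserves the optimum up to an additive constant for the gadget intervals. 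The main obstacle is engineering the interval endpoints and true values so that (a) no interval is mandatory in the constructed instance, so the whole instance reduces cleanly to the vertex cover phase, and (b) the leftmost-interval bookkeeping in each $S_e$ yields exactly one edge $\{I_u,I_v\}$ of $H$ and no spurious edges; once the gadget is set up, the equivalence of solutions is routine. (Full details are deferred to Appendix~\ref{sec:nphard} as indicated in the excerpt.)
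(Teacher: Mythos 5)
Your optimality argument is correct and follows the paper's structure: after removing the mandatory intervals, the residual feasibility condition for each unsolved set is exactly ``query the leftmost interval or all intervals intersecting it,'' which is the vertex-cover condition on the dependency graph~$H$, and any optimal query set decomposes as the mandatory set plus a minimum vertex cover of~$H$.

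The NP-hardness half has a genuine gap, and it is in exactly the place you flagged as an ``obstacle.'' The layout you propose --- one interval per vertex, all sharing a common overlap region, with a set $S_e$ per edge forcing the binary choice $I_u$ vs.\ $I_v$ --- cannot realize an arbitrary input graph $G$. If $G$ has a triangle $\{u,v,w\}$, then $I_u, I_v, I_w$ are three pairwise intersecting intervals, none containing another (else one is already mandatory by Corollary~\ref{cor_min_left_mandatory}). Sorting by left endpoints $L_u \le L_v \le L_w$ forces $U_u \le U_v \le U_w$, and pairwise intersection gives $L_w < U_u$, so $I_u \cup I_w = (L_u, U_w) \supseteq I_v$; hence $w_v$ lies in $I_u$ or $I_w$ and, by Lemma~\ref{lema_mandatory_min}, one of them is mandatory, contradicting your requirement that the instance have no mandatory intervals. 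So the dependency graph of such an instance cannot be an arbitrary $G$, and the engineering you deferred is the entire content of the proof. The paper's way around this (Appendix~\ref{sec:nphard}) is to reduce from vertex cover on \emph{2-subdivision} graphs, which remains NP-hard by a classic result of Poljak: the original vertices of $G$ are realized as pairwise \emph{disjoint} intervals, each edge $uv$ is realized by two connector intervals $uv_1, uv_2$ chained so that only $u$--$uv_1$, $uv_1$--$uv_2$, $uv_2$--$v$ intersect, and each such pair is its own size-$2$ set. That construction makes the dependency graph exactly the 2-subdivision and the true values trivially placeable with no mandatory intervals, which is what your ``common overlap'' layout cannot achieve.
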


\subsection{Algorithm regarding hop distance} 

We prove Theorem~\ref{thm_minimum:hop} by presenting Algorithm~\ref{ALG_min_beta} with a performance guarantee depending on the hop distance $k_h$, i.e., the competitive ratio $\min\{ (1 + \frac{1}{\gamma})(1 + \frac{k_h}{\opt}), \gamma\}$  for each integer parameter $\gamma \ge 2$.
This guarantee is best possible for $k_h = 0$ and for large~$k_h$ (Theorem~\ref{theo_minimum_combined_lb}).
Note that, for $\gamma = 2$, the algorithm also satisfies Theorem~\ref{thm:main1.5-2} for the minimum problem under uncertainty, since $k_{\#} = 0$ implies $k_h = 0$.
A complete proof of Theorem~\ref{thm_minimum:hop} is
in Appendix~\ref{app:minbeta}; here we only describe the algorithm and high level~arguments.

The algorithm follows the general framework of Section~\ref{sec:methods}.
	After preprocessing the instance by querying known mandatory intervals in Line~\ref{line_min_beta_kh_mandatory_first}, the algorithm implements the first framework phase in
	Lines~\ref{line_min_beta_kh_calc_predict_first} to~\ref{line_min_beta_kh_cond_loop}, while Lines~\ref{line_min_beta_kh_vc} to~\ref{line_min_beta_kh_after_vc} corresponds to the second phase.

\begin{algorithm}[tb]
  \KwIn{Intervals $I_1, \ldots, I_n$, prediction $\pred{w}_i$ for each $I_i$, and family of sets $\mathcal{S}$}
  \label{line_min_beta_start}
  \Repeat{no query was performed in this iteration \label{line_min_beta_kh_cond_loop}}{
    \lWhile{there is a known mandatory interval $I_i$}{query $I_i$ \label{line_min_beta_kh_mandatory_first}}
    $Q \leftarrow \emptyset$; \quad $P \leftarrow$ set of prediction mandatory intervals for current instance\; \label{line_min_beta_kh_calc_predict_first}
    \While{$P \neq \emptyset$ \KwAnd $|Q| < \gamma - 2$}{
      pick and query some $I_j \in P$; \quad $Q \leftarrow Q \cup \{I_j\}$\; \label{line_min_beta_kh_predict}
      \lWhile{there is a known mandatory interval $I_i$}{query $I_i$ \label{line_min_beta_kh_mandatory_loop}}
      $P \leftarrow$ set of prediction mandatory intervals for current instance\; \label{line_min_beta_kh_calc_predict_second}
    }
    \uIf{$\exists \mbox{ distinct } I_i,I_j,I_l$ such that $\pred{w}_j$ enforces $I_i$ and $\{I_j,I_l\}$ is a witness set \label{lin_min_beta_cond_trio}} {
      query $I_j, I_l$\; \label{line_min_beta_wit_trio}
      \lIf{$w_j \in I_i$}{query $I_i$ \label{line_min_beta_mand_trio}}
    } \lElseIf{$\exists I_i,I_j$ such that $\pred{w}_j$ enforces $I_i$ \label{line_min_beta_pair}} {
      query $I_i$ \label{line_min_beta_query_first}
    }
  }
  \lWhile{there is a known mandatory interval $I_i$}{query $I_i$ \label{line_min_beta_kh_before_vc}}
  compute and query a minimum vertex cover~$Q'$ on the current dependency graph\; \label{line_min_beta_kh_vc}
  \lWhile{there is a known mandatory interval $I_i$}{query $I_i$ \label{line_min_beta_kh_after_vc}}
  \caption{Algorithm for the minimum problem under uncertainty w.r.t.\ hop distance $k_h$}
  \label{ALG_min_beta}
\end{algorithm}

The if-statement in Lines~\ref{lin_min_beta_cond_trio} to~\ref{line_min_beta_query_first}, corresponds to the identification of three elements such that, assuming correct predictions, at least two of them are mandatory (cf. paragraph on the general framework in Section~\ref{sec:methods}).
In these lines, the algorithm first tries to identify three distinct intervals $I_i, I_j, I_l$, such that $\pred{w}_j$ enforces $I_i$ and $\{I_j, I_l\}$ is a witness set.
If there is such trio, we query $\{I_j, I_l\}$, and only query~$I_i$ if the prediction that $\pred{w}_j \in I_i$ is correct.
If the prediction is correct, then we have a set of size 3 with at least 2 mandatory intervals; otherwise we only query a witness pair, so we can enforce robustness, and if one interval in this pair is not in $\OPT$ then we can charge this error to the hop distance, since the prediction $\pred{w}_j$ is incorrect.
If there is no such trio, then we try to find a witness pair $\{I_i, I_j\}$ such that $\pred{w}_j$ enforces $I_i$, but initially we only query~$I_i$.
If the prediction $\pred{w}_j$ is correct, then we are querying a mandatory interval.
Otherwise, we show that $I_j$ is either queried in Line~\ref{line_min_beta_kh_mandatory_first} in the next iteration of the loop (so it is a mandatory interval), or is never queried by the algorithm; either way, the fact that this is a witness pair is enough to guarantee robustness, and if $I_i$ is not in $\OPT$ then we can charge this error to the hop distance.
Summing up, for each iteration of the loop we can identify a witness set of size at most~$3$, such that at least a $\frac{2}{3}$ fraction of its elements are prediction mandatory, and those that are not mandatory can be charged to the hop distance. 
For $\gamma = 2$ this concludes the description of the first framework phase.

In case of $\gamma > 2$, the algorithm follows the framework for achieving parameterized guarantees by additionally querying $\gamma -2$  prediction mandatory intervals in Lines~\ref{line_min_beta_kh_calc_predict_first} to~\ref{line_min_beta_kh_calc_predict_second}.
Doing this, each iteration of the loop identifies a witness set of size at most~$\gamma$, excluding the mandatory elements that are potentially queried in Line~\ref{line_min_beta_kh_mandatory_loop} or~\ref{line_min_beta_mand_trio}.
This ensures the robustness.
By adding the prediction mandatory elements, the local consistency (of the queries in a single iteration) improves to $\frac{\gamma+1}{\gamma}$.
All prediction mandatory elements that are not mandatory can be charged to the hop distance.

We may have an iteration of the loop in that we do not query any intervals in Lines~\ref{line_min_beta_wit_trio} and~\ref{line_min_beta_query_first}, but we show that this occurs at most once: If we cannot satisfy the conditions of Lines~\ref{lin_min_beta_cond_trio} and~\ref{line_min_beta_pair}, then there are no more prediction mandatory intervals.
After that, the algorithm will proceed to the second phase of the framework, querying a minimum vertex cover and intervals that become known mandatory.
Here we combine the at most $\gamma-2$ intervals queried in the iteration described with the intervals queried in the second phase, and it is not hard to prove $\gamma$-robustness and that every interval that is not in $\OPT$ can be charged to the hop distance.

\subsection{Algorithm regarding mandatory-query distance}

Now we consider the mandatory-query distance $k_M$ as measure for the prediction accuracy. We prove Theorem~\ref{thm:min-alpha} by presenting Algorithm~\ref{ALG_min_alpha} with a competitive ratio of  $\min\{ (1 + \frac{1}{\gamma-1})(1 + \frac{k_M}{\opt}), \gamma\}$,  for each integer parameter $\gamma \ge 2$.
This upper bound is tight for $k_M = 0$ and large~$k_M$ (Theorem~\ref{theo_lb_sym_diff}). The theorem can be generalized for arbitrary real $\gamma \geq 2$ with a marginally increased competitive ratio; see Appendix~\ref{app:min_alpha}. 
The full proof of Theorem~\ref{thm:min-alpha} appears also in Appendix~\ref{app:min_alpha}; here we outline algorithm and crucial arguments.

Algorithm~\ref{ALG_min_alpha} implements the first framework phase in Lines~\ref{line_min_param_cond} to~\ref{line_min_param_small} and afterwards executes the second phase.
To start the first phase, the algorithm computes the set~$P$ of initial prediction mandatory intervals (Lemma~\ref{lema_mandatory_min}).
Then it tries to find an interval $p \in P$ that is part of a witness set $\{p, b\}$.
If~$|P|\ge\gamma-1$, we query a set $P' \subseteq P$ of size $\gamma-1$ that includes~$p$, plus~$b$ (we allow $b \in P'$).
This is clearly a witness set of size at most $\gamma$, at least a $\frac{\gamma-1}{\gamma}$ fraction of the intervals are in~$P$, and every interval in $P \setminus \OPT$ is in $\mathcal{I}_P \setminus \mathcal{I}_R$.
We then repeatedly query known mandatory intervals, remove the queried intervals from~$P$ and repeat the process without recomputing~$P$, until~$P$ is empty or no interval in~$P$ is part of a witness set.

We may have one last iteration of the loop where $|P| < \gamma -1$.
After that, the algorithm will proceed to the second phase of the framework, querying a minimum vertex cover and intervals that become known mandatory.
Here we combine the at most $\gamma-2$ intervals in $P$ with the intervals queried in the second phase, and it is not hard to prove $\gamma$-robustness, and that the number of those queries can be bounded by the number of intervals in $\OPT$ for the current instance plus the number of intervals that are in $\mathcal{I}_P \setminus \mathcal{I}_R$ or in $\mathcal{I}_R \setminus \mathcal{I}_P$.

Note that this algorithm only uses the initial set of prediction mandatory intervals, and otherwise ignores the predicted values.
Predicting the set of prediction mandatory intervals is sufficient to execute the algorithm.

\begin{algorithm}[tb]
  \KwIn{Intervals $I_1,\ldots,I_n$, predicted value $\overline{w}_i$ for each $I_i$, family of sets $\mathcal{S}$, and parameter $\gamma$}

  $P \leftarrow$ set of initial prediction mandatory intervals (characterized in Lemma~\ref{lema_mandatory_min})\;
  \While{$\exists p \in P$ and an unqueried interval $b$ where $\{p, b\}$ is a witness set \label{line_min_param_cond}}{
    \uIf{$|P| \geq \gamma-1$ \label{line_min_param_size}}{
      pick $P' \subseteq P$ with $p \in P'$ and $|P'| = \gamma-1$\;
      query $P' \cup \{b\}$, $P \leftarrow P \setminus (P' \cup \{b\})$\; \label{line_min_param_big}
     \lWhile{there is a known mandatory interval~$I_i$}{query $I_i$, $P \leftarrow P \setminus \{I_i\}$ \label{line_min_param_mandatory}}
    } \lElse{query $P$, $P \leftarrow \emptyset$ \label{line_min_param_small}}
  }
  \lWhile{there is a known mandatory interval~$I_i$}{query $I_i$ \label{line_min_param_before_vc}}
  query a minimum vertex cover~$Q$ for the current instance \label{line_min_param_vc}\;
  \lWhile{there is a known mandatory interval~$I_i$}{query $I_i$ \label{line_min_param_mand_after_vc}}
  \caption{Algorithm for the minimum problem under uncertainty w.r.t.\ error measure $k_M$}
  \label{ALG_min_alpha}
\end{algorithm}

\section{The minimum spanning tree problem}
\label{sec:mst}
In this section, we show how the framework of Section~\ref{sec:methods} can be implemented to achieve an algorithm for the MST problem with a hop-distance dependent performance guarantee. We prove  Theorem~\ref{theorem_mst2}, which states a  competitive ratio of $\min\{ 1+ \frac{1}{\gamma} +  (5 + \frac{1}{\gamma}) \cdot \frac{k_h}{\opt}, \max\{3,\gamma + \frac{1}{\opt}\}\}$, for each integer $\gamma \ge 2$. Further, we introduce a second algorithm that achieves a better robustness at the cost of not providing an error-sensitive guarantee. More precisely, it is $1.5$-consistent and $2$-robust, which proves the MST related part of Theorem~\ref{thm:main1.5-2}. Both algorithms
use the same first phase to obtain a prediction mandatory free instance  and differ only in the second phase. We present our  implementations of the two phases in Sections~\ref{subsec_mst_phase_1} and~\ref{subsec_mst_phase_2}. 
Full proofs are given in Appendix~\ref{appx:mst}.

Our algorithms build on structural insights and characterizations of mandatory queries given in~\cite{megow17mst}, which we summarize here.
Let the \emph{lower limit tree} $T_L \subseteq E$ be an MST for
{values} $w^L$ with $w^L_e = L_e + \epsilon$ for an infinitesimally small $\epsilon > 0$. 
Analogously, let the \emph{upper limit tree} $T_U$ be an MST for {values}  $w^U$ with $w^U_e = U_e - \epsilon$. It has been shown in~\cite{megow17mst} that any non-trivial edge in $T_L \setminus T_U$ is part of any feasible query set, {i.e., it is mandatory}. {Thus, we may repeatedly query edges in
$T_L \setminus T_U$ until $T_L = T_U$ and this will not worsen the robustness or consistency. By this preprocessing, we may assume~$T_L=T_U$.
Further, we can extend the preprocessing to achieve uniqueness for $T_L$ and $T_U$.
\begin{restatable}{lemma}{LemMSTPreprocessing}
	\label{mst_preprocessing}
	By querying only mandatory elements we can obtain an instance with $T_L = T_U$ such that $T_L$ and $T_U$ are the unique lower limit tree and upper limit tree, respectively. 
\end{restatable}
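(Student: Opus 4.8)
The plan is to establish Lemma~\ref{mst_preprocessing} by showing two things: first, that repeatedly querying non-trivial edges in $T_L \setminus T_U$ only ever queries mandatory edges (and hence is free of charge in the competitive analysis), and second, that the process can be driven further so that the resulting lower and upper limit trees are \emph{unique}. The first part is essentially cited from~\cite{megow17mst}: any non-trivial edge in $T_L\setminus T_U$ is mandatory. After querying such an edge $e$, its interval becomes trivial ($I_e=\{w_e\}$), the lower and upper limit weights of $e$ coincide, and one recomputes $T_L$ and $T_U$. Since each query makes one more interval trivial, the process terminates, and at termination $T_L$ and $T_U$ can be chosen so that $T_L\setminus T_U=\emptyset$, i.e.\ $T_L=T_U$. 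One must argue that at termination \emph{every} choice of MST for $w^L$ equals some MST for $w^U$ in the relevant sense; the cleanest way is to note that if $T_L\setminus T_U\ne\emptyset$ for some valid choice, then a mandatory non-trivial edge still exists and the loop has not terminated, a contradiction.

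Next I would address uniqueness. The idea is standard perturbation/tie-breaking: non-uniqueness of $T_L$ (resp.\ $T_U$) arises only from ties among the $w^L_e$ (resp.\ $w^U_e$) values, and such ties correspond either to intervals sharing a common lower limit $L_e$ (resp.\ upper limit $U_e$) or to trivial intervals with equal value. I would argue that any such tie can be broken by querying a mandatory edge: if two edges $e,f$ lie on a common cycle and both are candidates for the maximum-weight edge on that cycle under $w^L$ (because of a tie), then standard witness-set reasoning from explorable uncertainty shows at least one of them is mandatory — more precisely, one can invoke the cycle-based characterization of mandatory MST edges from~\cite{megow17mst} to identify a mandatory edge whose query strictly reduces the number of ties. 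Repeating, after finitely many mandatory queries all relevant ties are resolved, so $T_L$ and $T_U$ become unique while still satisfying $T_L=T_U$. Throughout, every query performed is mandatory, hence it appears in $\OPT$, so it does not affect robustness or consistency — this is exactly the guarantee the lemma needs.

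The main obstacle I anticipate is the uniqueness argument: it is not a priori obvious that every tie in the limit-tree weights can be charged to a \emph{mandatory} edge rather than merely to a witness set (which would only give a $2$-approximation in disguise, not a free reduction). The key insight to exploit is that we are comparing $T_L$ and $T_U$ simultaneously and have already reduced to $T_L=T_U$; a tie that genuinely affects which edges lie in the limit tree, combined with the structure $L_e<U_e$ for non-trivial intervals, forces the existence of an edge that must be queried regardless of outcomes — essentially because the tie persists as an ambiguity about the relative order of $L_e$ and some other value, and resolving the MST requires knowing that order. I would formalize this via the characterization of~\cite{megow17mst}: an edge $f\in T_L$ that is the unique maximum-weight edge of a cycle $C$ under $w^U$ but not under $w^L$ (or vice versa due to a tie) yields a mandatory edge in $C$. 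Careful bookkeeping — each mandatory query makes an interval trivial and strictly decreases a potential counting the number of distinct valid limit trees — then shows the preprocessing terminates with the desired uniqueness, completing the proof.
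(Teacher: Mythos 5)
Your first part matches the paper: repeatedly querying non-trivial edges in $T_L\setminus T_U$ until $T_L=T_U$ is exactly the paper's ``first preprocessing step,'' and the argument that this terminates and only uses mandatory queries is sound. You also correctly isolate, and explicitly flag, the genuine difficulty: breaking ties must produce \emph{mandatory} queries, not merely witness sets, or the preprocessing would degrade the competitive ratio. Identifying that obstacle honestly is good. But you do not actually close it, and the sketch you offer for closing it does not work as stated.

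Your proposed formalization --- ``an edge $f\in T_L$ that is the unique maximum-weight edge of a cycle $C$ under $w^U$ but not under $w^L$ (or vice versa due to a tie) yields a mandatory edge in $C$'' --- is imprecise and, after the first step has already enforced $T_L=T_U$, close to vacuous: if $f\in T_L=T_U$ were the unique $w^U$-maximum on a cycle $C$ through a non-tree edge, $f$ could not lie in any upper limit tree, a contradiction. The tie you want to exploit lives elsewhere. The paper's mechanism, which you should supply, is an \emph{exchange argument that re-introduces $T_L\neq T_U$ so the first step applies again}. Concretely, suppose $f\in E\setminus T_U$ is not uniquely $w^U$-maximal on the cycle $C$ in $T_U\cup\{f\}$, say because $U_f=U_l$ for some $l\in C\cap T_U$; then $T_U'=T_U\setminus\{l\}\cup\{f\}$ is \emph{also} a valid upper limit tree, and since $T_L=T_U$ we have $T_L\setminus T_U'=\{l\}$, so $l$ is mandatory by the very fact you already cited from~\cite{megow17mst}. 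Querying $l$ is free and strictly reduces the ambiguity. The symmetric argument for $T_L$ runs over cuts $X$ rather than cycles $C$: a tie $L_l=L_f$ for $f\in X$ lets you swap to $T_L'=T_L\setminus\{l\}\cup\{f\}$, and then $T_L'\setminus T_U=\{f\}$ is mandatory. Note also that the paper first disposes of trivial intervals before reasoning about ties (a trivial $f\in E\setminus T_U$ can be deleted since $w_f$ is maximal on $C$; a trivial $l\in T_L$ can be contracted since $w_l$ is minimal in $X$); your proposal does not mention this, and without it the phrase ``non-trivial'' in the mandatory-edge fact does not line up with the tie you are trying to break. Your vague appeal to ``an ambiguity about the relative order'' and a ``potential counting the number of distinct valid limit trees'' gestures at termination but does not substitute for the explicit swap-and-requery argument that actually produces the mandatory query at each step.
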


Consider $T_L$ and
$f_1,\ldots,f_l$ in $E \setminus T_L$ ordered by increasing lower limits. 
For each~$i \in \{1,\ldots,l\}$, let $C_i$ be the unique cycle in $T_L \cup \{f_i\}$. 
For each $e \in T_L$ let $X_e$ be the set of edges in the cut of $G$ defined by the two connected components of $T_L \setminus \{e\}$.
We say an instance is \emph{prediction mandatory free} if it contains no prediction mandatory elements.
Otherwise, we say that the instance is \emph{non-prediction mandatory free.}
The following lemma further characterizes prediction mandatory free instances. Figure~\ref{Ex_mst_phase1_cases}(a) illustrates it.

\begin{restatable}{lemma}{LemMSTPredFreeIff}
	\label{mst_pred_free_characterization}
	{
		An instance $G$ is prediction mandatory free if and only if 
		$\pred{w}_{f_i} \ge U_{e}$ and $\pred{w}_e \le L_{f_i}$ holds for each $e \in C_i \setminus \{f_i\}$ and each cycle $C_i$ with ${i} \in \{1,\ldots,l\}$.}
\end{restatable}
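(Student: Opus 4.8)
The plan is to prove Lemma~\ref{mst_pred_free_characterization} by directly combining the characterization of mandatory edges for MST from~\cite{megow17mst} with the structure of the cycles $C_i$. Recall that after the preprocessing of Lemma~\ref{mst_preprocessing} we have $T_L = T_U$, so the edges $f_1,\dots,f_l \in E \setminus T_L$ together with the unique tree $T_L$ determine the cycles $C_i$. The key known fact is that an edge $e$ is (prediction) mandatory if and only if, assuming predicted values, there is a cycle in which $e$ is a candidate for the maximum-weight edge and some other edge of that cycle has a value lying in $I_e$ (equivalently, $e$ cannot be ``decided'' to be in or out of the MST without querying it). I would first restate this precisely in terms of the $C_i$: using the standard cycle/cut property for MSTs, every non-tree edge $f_i$ is the unique maximum-lower-limit edge of $C_i$, and every tree edge $e$ is the unique maximum-upper-limit edge of the cut $X_e$; hence the only potential ``uncertainty'' interactions happen between $f_i$ and the edges $e \in C_i \setminus \{f_i\}$.

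The forward direction: assume the instance is prediction mandatory free; I want to show $\pred w_{f_i} \ge U_e$ and $\pred w_e \le L_{f_i}$ for all $e \in C_i \setminus \{f_i\}$. I would argue contrapositively. Suppose for some cycle $C_i$ and some $e \in C_i\setminus\{f_i\}$ we have $\pred w_{f_i} < U_e$. Then with predicted weights, $f_i$ has weight $\pred w_{f_i}$, and $e$'s true weight could be as large as just below $U_e > \pred w_{f_i}$, so $e$ is a candidate maximum-weight edge of the cycle $C_i$ whose status depends on whether $w_e$ exceeds $\pred w_{f_i}$; since $\pred w_{f_i} < U_e$ and (because $T_L$ is the unique lower limit tree and $f_i \notin T_L$) also $L_{f_i} \ge L_e$, one checks that $e$ cannot be resolved without a query — making $e$ prediction mandatory, a contradiction. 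Symmetrically, if $\pred w_e > L_{f_i}$ for some such $e$, then in the cut $X_e$ the edge $f_i$ becomes a candidate minimum-weight crossing edge and $f_i$ is prediction mandatory. (These two sub-arguments are exactly the two cases of Figure~\ref{Ex_mst_phase1_cases}(a), and each is a short application of Lemma~\ref{lema_mandatory_min}'s MST analogue from~\cite{megow17mst}.)

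The reverse direction: assume $\pred w_{f_i} \ge U_e$ and $\pred w_e \le L_{f_i}$ for every $e \in C_i\setminus\{f_i\}$ and every $i$. I want to conclude no edge is prediction mandatory, i.e. exhibit, for each edge, a feasible query set (under predicted values) avoiding it. The natural certificate is that under these conditions the predicted MST is robustly determined: for each cycle $C_i$ the inequalities say $f_i$ is the heaviest edge of $C_i$ even in the worst case (its predicted value dominates every $U_e$), and each tree edge $e$ is the lightest crossing edge of its cut even in the worst case. Hence, using only the interval endpoints one can already certify $T_L = T_U$ is the MST for the predicted values without querying anything, so the empty set is prediction-feasible and no edge is prediction mandatory. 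I would phrase this via the standard ``no edge needs to be queried when all cycle/cut comparisons are decided by the endpoints'' observation, again citing the relevant characterization from~\cite{megow17mst}.

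The main obstacle is bookkeeping the interaction between the cycle view (the $C_i$) and the cut view (the $X_e$): a prediction mandatory tree edge is most naturally detected via its cut $X_e$, while a prediction mandatory non-tree edge is detected via its cycle, and one must make sure the conditions ``$\pred w_{f_i} \ge U_e$ for all $e\in C_i$'' and ``$\pred w_e \le L_{f_i}$ for all $e \in C_i$'' together capture \emph{all} relevant pairs — i.e. that no uncertainty interaction can arise between two non-tree edges, or between a tree edge and a non-tree edge $f_j$ with $e \notin C_j$. This follows because, after preprocessing, $T_L=T_U$ is unique, so the comparisons that matter are precisely the tree-vs-nontree comparisons along the fundamental cycles; I would isolate this as a short preliminary claim and then the two directions become routine case checks.
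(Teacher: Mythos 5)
Your proposal contains two genuine gaps, one in each direction.

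\paragraph{Reverse direction.} You claim that when $\pred{w}_{f_i}\ge U_e$ and $\pred{w}_e\le L_{f_i}$ hold for all $e\in C_i\setminus\{f_i\}$, ``one can already certify $T_L=T_U$ is the MST for the predicted values without querying anything, so the empty set is prediction-feasible.'' This is false. The hypotheses compare the \emph{predicted values} against the interval endpoints; they do not imply $U_e\le L_{f_i}$, so the intervals can still overlap. For instance $I_e=(0,10)$, $I_{f_i}=(5,15)$, $\pred{w}_e=3$, $\pred{w}_{f_i}=12$ satisfies both inequalities yet the intervals overlap, and without any query the maximum of $C_i$ is not determined. What is true, and what the paper proves, is weaker and sufficient: under the predicted values the offline (verification) problem reduces to vertex cover in the bipartite graph $\bar G$ between non-tree edges $f_i$ and their overlapping cycle-mates, and \emph{both} $T_L$ and $E\setminus T_L$ are vertex covers, hence both are prediction-feasible query sets. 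Since they are disjoint, no edge lies in every prediction-feasible query set, so no edge is prediction mandatory. ``Prediction mandatory free'' means every edge can be avoided by \emph{some} feasible query set, not that no query is needed.

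\paragraph{Forward direction.} You argue contrapositively that if $\pred{w}_{f_i}<U_e$ for some $e\in C_i\setminus\{f_i\}$ then ``$e$ cannot be resolved without a query --- making $e$ prediction mandatory,'' and appeal to Lemma~\ref{lema_mandatory_min}'s ``MST analogue from~\cite{megow17mst}.'' This is exactly where the paper does real work, and the step you are waving at is not available off the shelf. The known characterization (Lemma~\ref{lemma_mst_2}) certifies a witness on the cycle closed by $f_{i+1}$ in an \emph{already-verified MST of $G_i$}, not on the fundamental cycle $C_i$ of $T_L$ itself; the whole point of Observation~\ref{obs_mst_1} and the new Lemmas~\ref{lemma_mst_witness_set_1}--\ref{lemma_mst_witness_set_2} is to bridge this gap so that witness sets on arbitrary $C_i$ (not just the Kruskal-order ``current'' one) can be identified. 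Moreover the edge that is prediction mandatory is in general not your arbitrary $e$: the paper shows that if $\pred{w}_{f_i}\in I_e$ for some $e$ then $\pred{w}_{f_i}\in I_{l_i}$ for the edge $l_i$ with the \emph{largest} upper limit in $C_i\setminus\{f_i\}$, and Lemma~\ref{lemma_mst_witness_set_1} yields that $l_i$ (not $e$) is mandatory under predictions; symmetrically, if $\pred{w}_e\in I_{f_i}$, one picks the cycle $C_i$ of \emph{smallest index} with this property so that $e\notin C_j$ for $j<i$, which is exactly the hypothesis Lemma~\ref{lemma_mst_witness_set_2} needs to conclude $f_i$ is mandatory. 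You flagged the cycle/cut bookkeeping as ``the main obstacle'' and then proposed to ``isolate this as a short preliminary claim''; but that claim \emph{is} the hard content (Lemmas~\ref{lemma_mst_witness_set_1}--\ref{lemma_mst_witness_set_2}, via Observation~\ref{obs_mst_1}), and your sketch does not supply it.
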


\subsection{Identifying witness sets}
\label{subsec_mst_witness sets}

Before we describe our algorithms in the following sections, we introduce some preliminaries.
As already stated, the first phase of our algorithm handles non-prediction mandatory free instances until they become prediction mandatory free.
In order to achieve this goal while maintaining the desired performance guarantees, the algorithms rely on identifying witness sets on non-prediction mandatory free cycles $C_i$.

Existing algorithms for MST under uncertainty~\cite{erlebach08steiner_uncertainty,megow17mst} essentially follow the algorithms of Kruskal or Prim, and only identify witness sets in the cycle or cut that is currently under consideration.
For a given instance this corresponds to the cycle $C_1$ or the cut $X_{l_1}$, where $l_1$ is the edge with the largest upper limit in~$T_L$.
Regarding Phase $1$ of our algorithm, it might hold for the first non-prediction mandatory free cycle $C_i$ that $C_i \not= C_1$.
Since additionally $C_i \cap X_{l_1} = \emptyset$ might hold, existing methods for identifying witness sets are not sufficient for our purpose.
We show the following new structural insights that are crucial for our algorithms.

\begin{restatable}{lemma}{LemMstWitnessFir}
	\label{lemma_mst_witness_set_1}		
	Consider cycle $C_i$ with $i \in \{1,\ldots,l\}$. 
	Let $l_i \in C_i \setminus\{f_i\}$ such that $I_{l_i} \cap I_{f_i} \not= \emptyset$ and $l_i$ has the largest upper limit in $C_i \setminus \{f_i\}$, then $\{f_i,l_i\}$ is a witness set. 
	Further, if $w_{f_i} \in I_{l_i}$, then $\{l_i\}$ is a witness set.
\end{restatable}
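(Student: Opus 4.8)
The plan is to reason directly about the cycle $C_i$ in the lower limit tree $T_L$, using the standard cycle property of MSTs together with the characterization of mandatory edges from~\cite{megow17mst}. Fix the cycle $C_i$, the non-tree edge $f_i$, and let $l_i$ be the edge of $C_i \setminus \{f_i\}$ of largest upper limit among those intersecting $I_{f_i}$; recall that $f_i$ has the smallest lower limit among edges of $C_i$ not in $T_L$, and after the preprocessing of Lemma~\ref{mst_preprocessing} we may assume $T_L = T_U$ is the unique lower/upper limit tree. First I would recall the elementary fact: if no edge of $C_i$ is queried, then for any realization consistent with the current intervals, the relative order of the weights on $C_i$ could place the maximum-weight edge at $f_i$ or at $l_i$, and these two choices force different MSTs (one excludes $f_i$, the other excludes $l_i$ and includes $f_i$). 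Concretely, I would exhibit two realizations of the true values, agreeing on all queried information, in which the MST differs precisely in whether $f_i$ or $l_i$ is used; since an MST must be correctly identified, at least one of $f_i, l_i$ must be queried. That establishes $\{f_i, l_i\}$ is a witness set.

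For the two realizations I would argue as follows. Because $I_{l_i} \cap I_{f_i} \neq \emptyset$, there is a common value $v$ in both intervals. In realization~(1), set $w_{f_i} = v$ and choose $w_{l_i}$ slightly above $v$ (inside $I_{l_i}$, using that intervals are open), and pick all other weights on $C_i$ so that $l_i$ is the unique maximum on $C_i$ — this is possible because $l_i$ has the largest upper limit in $C_i \setminus \{f_i\}$ among the edges meeting $I_{f_i}$, and the edges of $C_i \setminus \{f_i\}$ not meeting $I_{f_i}$ lie entirely below $L_{f_i}$, hence can be pushed below $v$. In realization~(2), swap: set $w_{l_i} = v$ and $w_{f_i}$ slightly above $v$, so that $f_i$ is the unique maximum on $C_i$. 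In case~(1) the MST uses $f_i$ and not $l_i$; in case~(2) it uses $l_i$ and not $f_i$. An algorithm that has queried neither cannot distinguish these, so it cannot output a correct MST — giving the witness-set claim. For the second part, if additionally $w_{f_i} \in I_{l_i}$, I would show $\{l_i\}$ alone is a witness set: querying all of $E \setminus \{l_i\}$ still leaves the order of $w_{f_i}$ and $w_{l_i}$ undetermined (since $w_{f_i} \in I_{l_i}$ means $l_i$'s interval straddles the now-known value $w_{f_i}$), and by the same cycle argument the MST depends on which of $w_{l_i}, w_{f_i}$ is larger on $C_i$; hence $l_i$ is mandatory, i.e.\ a singleton witness set. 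Here I would invoke Lemma~\ref{mst_pred_free_characterization}-type reasoning or directly the mandatory characterization of~\cite{megow17mst} to confirm that, once $w_{f_i}$ is pinned inside $I_{l_i}$ and $l_i$ is a maximal-weight candidate on $C_i$, $l_i$ satisfies the mandatory criterion.

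The main obstacle I anticipate is the bookkeeping needed to guarantee that the constructed realizations are \emph{globally} consistent MSTs, not just locally on $C_i$: one must check that perturbing weights on $C_i$ does not disturb the rest of $T_L$, i.e.\ that $T_L$ restricted to $E \setminus C_i$ remains an MST and that no edge outside $C_i$ becomes a cheaper swap. This is where the assumption $T_L = T_U$ and the ordering of the $f_j$'s by lower limit become essential — I would use that edges $f_j$ with $j < i$ have smaller lower limits but, crucially, the cycles $C_j$ share structure with $C_i$ only through $T_L$ edges whose weights we are keeping at their lower-limit-based values, so the global MST structure is preserved. A careful but routine argument, analogous to the cycle-based arguments in~\cite{erlebach08steiner_uncertainty,megow17mst}, handles this; the genuinely new content is only the observation that $l_i$ can be chosen \emph{outside} the currently-processed cycle $C_1$ or cut $X_{l_1}$ and still yields a valid witness set, which is exactly what the two-realization construction above demonstrates.
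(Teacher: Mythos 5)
Your route — construct two realizations whose MSTs differ and argue indistinguishability — is genuinely different from the paper's. The paper instead restricts to the subgraph $G_{i-1}$ via Lemma~\ref{lemma_mst_1}, invokes Observation~\ref{obs_mst_1} to show that if $l_i$ is unqueried then it sits on the cycle closed by $f_i$ in whatever MST $T_{i-1}$ the query set has verified, still with maximal upper limit there, and then applies the cycle-witness Lemma~\ref{lemma_mst_2} from prior work. Your sketch replaces that structural chain with a direct adversary argument, but there are two real problems.

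First, a witness set for the verification problem is a property of the \emph{actual} realization $w$: you must exhibit two realizations $w'$, $w''$ that agree with $w$ on every edge outside $\{f_i,l_i\}$ and whose MSTs differ, because a hypothetical feasible $Q$ avoiding $\{f_i,l_i\}$ may have queried every other edge of $C_i$. Your construction instead ``picks all other weights on $C_i$ so that $l_i$ is the unique maximum,'' which is not available to you. This can be partially repaired by choosing $v$ arbitrarily close to $U_{l_i}$, since $w_e < U_e \le U_{l_i}$ for every non-trivial $e \in C_i \setminus \{f_i\}$, but you would then have to handle degenerate (trivial-interval) cases and, more importantly, you would still only control the picture on $C_i$.

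Second, and this is the crux, the step you call ``routine bookkeeping'' is exactly the non-trivial structural content. Once the weights of $C_i\setminus\{f_i,l_i\}$ are whatever $w$ says (not ``kept at lower-limit-based values'' as you write — they are true values, possibly already queried), there is no a priori reason that $l_i$ lies on the cycle that $f_i$ closes in the MST $T_{i-1}$ verified by $Q_{i-1}$; some other unqueried tree edge could have taken $l_i$'s place on that path. And even on $C_i$ itself, $f_i$ could be excluded from the MST by a different cycle in both of your realizations, in which case swapping the relative order of $w_{f_i}$ and $w_{l_i}$ does not change the MST and the argument collapses. Observation~\ref{obs_mst_1} is precisely the lemma that rules this out: an unqueried $T_L$-edge of maximal upper limit on a $T_L$-path persists, with maximal upper limit, on the corresponding path in every MST that a feasible query set can verify, and the restriction to $G_{i-1}$ via Lemma~\ref{lemma_mst_1} ensures no later $f_j$ interferes. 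Without proving this (or an equivalent), your two-realization argument does not establish that the MSTs differ, and the proof is incomplete. The same gap carries over to your treatment of the second claim (that $\{l_i\}$ is a singleton witness set when $w_{f_i} \in I_{l_i}$).
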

\begin{restatable}{lemma}{LemMstWitnessSec}
	\label{lemma_mst_witness_set_2}
	Let $l_i \in C_i \setminus \{f_i\}$ with $I_{l_i} \cap I_{f_i} \not= \emptyset$ such that $l_i \not\in C_j$ for all $j < i$, then $\{l_i,f_i\}$ is a witness set.
	Furthermore, if $w_{l_i} \in I_{f_i}$, then $\{f_i\}$ is a witness set.
\end{restatable}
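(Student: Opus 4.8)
The plan is to mimic the exchange-argument structure used for classical MST witness sets (as in~\cite{erlebach08steiner_uncertainty,megow17mst}), but to exploit the extra hypothesis $l_i \notin C_j$ for all $j<i$ in order to pin down that $l_i$ is a \emph{maximum}-candidate edge on the cycle $C_i$ even though $C_i$ need not be the first cycle $C_1$. First I would recall that, after the preprocessing of Lemma~\ref{mst_preprocessing}, we may assume $T_L=T_U$ is the unique lower and upper limit tree, so $f_i \notin T_L$ and $f_i$ is the (unique) maximum-lower-limit edge of $C_i$; in particular $L_{f_i} \ge U_e - \epsilon$ fails to be forced, but we do know $f_i$ sits above the $T_L$-edges of $C_i$ in the lower-limit order. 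The key structural claim I would isolate is: if $l_i \in C_i \setminus\{f_i\}$ has $I_{l_i}\cap I_{f_i}\neq\emptyset$ and $l_i$ does not lie on any earlier cycle $C_j$ ($j<i$), then no edge $e\in C_i\setminus\{f_i,l_i\}$ can ``dominate'' $l_i$ in the sense needed for an exchange — more precisely, in any realization there is an optimal MST decision in which the comparison between $f_i$ and $l_i$ must be resolved, because $l_i$ is the unique maximal-upper-limit edge among the $T_L$-edges of $C_i$ that overlap $I_{f_i}$. The condition $l_i\notin C_j$ for $j<i$ is exactly what rules out $l_i$ being ``used up'' by a heavier $f_j$ with $j<i$, which is the obstruction that forces one to restrict attention to $C_1$ in the classical analysis.

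Concretely, to show $\{l_i,f_i\}$ is a witness set I would argue by contradiction: suppose some feasible query set $Q$ avoids both $l_i$ and $f_i$. Then $Q$ must nonetheless determine, for the cycle $C_i$, whether $f_i$ is a maximum-weight edge of $C_i$ (equivalently whether $f_i\notin$ MST); this is the standard fact that the MST is determined iff every such cycle is ``resolved.'' Using $I_{l_i}\cap I_{f_i}\neq\emptyset$, I would exhibit two realizations consistent with all queries in $Q$ (none of which touch $l_i$ or $f_i$) in which the relative order of $w_{l_i}$ and $w_{f_i}$ differs — one where $w_{f_i}<w_{l_i}$ so $f_i$ stays out of the MST via the $C_i$-cycle, and one where $w_{l_i}<w_{f_i}$ so that $l_i$ (or some structure depending on it) changes the tree. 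The hypothesis $l_i\notin C_j$ for all $j<i$ is used to guarantee that these two realizations can be made globally consistent MST-wise — i.e.\ that no earlier cycle $C_j$ already forces the answer independently of $l_i$, so the ambiguity genuinely propagates to the final tree. That contradiction establishes the witness-set claim.

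For the second part, assume additionally $w_{l_i}\in I_{f_i}$. I would show directly that $f_i$ is mandatory: even after querying every edge except $f_i$, the cycle $C_i$ cannot be resolved, because $w_{l_i}\in I_{f_i}$ means $w_{f_i}$ could be either above or below $w_{l_i}$, and (again using $l_i\notin C_j$ for $j<i$, so $l_i$ is the relevant maximal overlapping edge) this ambiguity is not settled by any other queried value; hence every feasible query set contains $f_i$, i.e.\ $\{f_i\}$ is a witness set. This mirrors case~(b) of the mandatory characterization — $f_i$ is not necessarily the maximum of $C_i$ but its interval contains the value of a co-maximal candidate.

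The main obstacle I anticipate is the global consistency step: making precise that the two realizations used in the contradiction can be extended to the whole graph so that they induce genuinely different MSTs, which is where the ordering of cycles by increasing lower limit and the hypothesis $l_i\notin C_j$ ($j<i$) must be combined carefully. In the classical setting this is automatic because one only ever argues about $C_1$; here one needs a lemma saying that processing cycles $C_1,\dots,C_{i-1}$ first (whose edges, by hypothesis, are disjoint from $\{l_i\}$) does not pre-determine the $C_i$-comparison, and I expect this to require re-deriving a Kruskal-style invariant for the lower limit tree under partial information. I would prove this invariant as a preliminary sub-claim before the main exchange argument.
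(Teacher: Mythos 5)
Your plan points in roughly the right direction (reduce to a statement about the cycle closed by $f_i$ in the tree $T_{i-1}$ that $Q_{i-1}=Q\cap E_{i-1}$ verifies for $G_{i-1}$), but there is a concrete gap: you never identify the mechanism by which the hypothesis $l_i\notin C_j$ for all $j<i$ is actually used, and the mechanism you substitute is wrong. You write that, under this hypothesis, ``$l_i$ is the unique maximal-upper-limit edge among the $T_L$-edges of $C_i$ that overlap $I_{f_i}$'' --- but that is the hypothesis of Lemma~\ref{lemma_mst_witness_set_1}, not of this lemma, and it does not follow from $l_i\notin C_j$ ($j<i$); the present lemma is precisely the complementary tool for the case where $l_i$ is \emph{not} the top of the cycle but is instead ``fresh'' in the sense of not appearing on any earlier cycle. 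Conflating the two hypotheses leaves your exchange/adversary argument without a handle.

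The missing idea is a cut argument, not a cycle-maximality argument. Let $X_i$ be the set of edges of $G$ crossing the cut induced by the two components of $T_L\setminus\{l_i\}$. If some $f_j$ with $j<i$ were in $X_i$, then $l_i$ would lie on $C_j$, contradicting the hypothesis; hence $X_i\cap E_i=\{l_i,f_i\}$. Now let $Q$ be any feasible query set, $T_{i-1}$ the MST of $G_{i-1}$ verified by $Q_{i-1}$ (Lemma~\ref{lemma_mst_1}), and $C$ the cycle in $T_{i-1}\cup\{f_i\}$. Since $C$ is a cycle containing $f_i\in X_i$, it must cross $X_i$ a second time; but $C\subseteq E_i$, so the second crossing can only be $l_i$. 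Hence $l_i\in C$, and $f_i$ has the largest upper limit on $C$, so Lemma~\ref{lemma_mst_2} immediately gives that $\{f_i,l_i\}$ is a witness set; if additionally $w_{l_i}\in I_{f_i}$, the same lemma (together with the cycle containment) forces $f_i\in Q$. This delivers exactly the ``global consistency'' you flag as the main obstacle, but via a short structural step rather than the two-realization adversary construction and a bespoke Kruskal-style invariant you anticipate; it also avoids re-proving the witness property from scratch since Lemma~\ref{lemma_mst_2} already encapsulates it. Without the cut observation your proof cannot be completed as written.
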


\subsection{Handling non-prediction mandatory free instances}
\label{subsec_mst_phase_1}

\begin{algorithm}[t]
	\KwIn{Uncertainty graph $G=(V,E)$ and predictions $\pred{w}_e$ for each {$e \in E$}}
	{Sequentially query prediction mandatory elements while ensuring unique $T_L = T_U$ until either $\gamma - 2$ prediction mandatory elements are queried or the instance is prediction mandatory free\label{line_mst_one_fillup}}\;
	Let $T_L$ be the lower limit tree and $f_1,\ldots,f_l$ be the edges in $E \setminus T_L$ ordered by lower limit non-decreasingly\label{line_mst_one_order}\;
	\ForEach{{$C_i$ with $i=1$ to $l$\label{line_mst_one_foreach_f}\label{line_mst_ine_unique_cycle}}}{
		\lIf{$C_i$ is not prediction mandatory free\label{line_mst_one_if_pred_free}}{
			{Apply Lemma~\ref{mst_phase1_case_a},~\ref{mst_phase1_case_b} or~\ref{mst_phase1_case_c}\label{line_mst_one_lemma_queries} and restart\label{line_mst_one_restart}}
		}
	}	
	\caption{Phase $1$ of the algorithms for MST under uncertainty}
	\label{ALG_mst_part_1}
\end{algorithm}

Algorithm~\ref{ALG_mst_part_1} implements Phase $1$ of our algorithms. 
In each iteration the algorithm starts by querying elements that are prediction mandatory for the current instance.
The set of prediction mandatory elements can be computed using the verification algorithm~\cite{erlebach14mstverification}.
Our algorithm sequentially queries such elements until either $\gamma - 2$ prediction mandatory elements have been queried or no more exist. 
After each query, the algorithm ensures unique $T_L = T_U$ by using Lemma~\ref{mst_preprocessing}.
Note that the set of prediction mandatory elements with respect to the current instance can change when elements are queried, and therefore we query the elements sequentially.
We can prove that each of the {at most} $\gamma-2$ elements is either mandatory or contributes one to the hop distance $k_h$.

Then, the algorithm iterates through $i \in \{1,\ldots,l\}$ and stops if the current cycle $C_i$ is non-prediction mandatory free. 
If it finds such a cycle, it queries edges on the cycle and possibly future cycles and restarts. 
The algorithm terminates when all $C_i$ are prediction mandatory free 
that is {at the latest when} all edges in $E$ have been queried. 
When the algorithm finds a non-prediction mandatory free cycle $C_i$, it carefully selects edges to query such that the following statements hold: 

\begin{compactenum}
	\item The algorithm only queries witness sets of size {one or} two, and sets of size three such that at least two elements are part of any feasible query set.

	\item If the algorithm queries a witness set {$W=\{e_1,e_2\}$} of size two, then either {$W \subseteq Q$} for each feasible query set $Q$ or the hop distances of $e_1$ and $e_2$ satisfy $h_{e_1} + h_{e_2} \ge 1$.

	\item In an only exception, the algorithm queries single elements $e$ that form witness sets $\{e,f(e)\}$ with distinct elements $f(e)$, 
	and the algorithm \emph{guarantees} that $f(e)$ remains unqueried during the complete execution.
	Since $\OPT$ must query at least one element of $\{e,f(e)\}$ and we guarantee that the algorithm queries exactly one element, querying such elements does not hurt the robustness or consistency as long as each such queried $e$ can be matched with a distinct $f(e)$.
	
	Let $E$ be the set of such queried edges.
	For the sake of our analysis, we assume without loss of generality that $E \subseteq \OPT$ and treat each $e \in E$ as a witness set of size one.
	We can do this without loss of generality since if $e \not\in \OPT$ we know $f(e) \in \OPT$ and can charge $e$ against $f(e)$.
\end{compactenum}
Ignoring the last iteration where the instance becomes prediction mandatory free and
{possibly less than} $\gamma-2$ prediction mandatory elements are queried, the algorithm  queries in each iteration~$\gamma - 2$ prediction mandatory elements and a set {$W$} that satisfies 
the three statements. This implies the following lemma.

\begin{restatable}{lem}{MSTEndOfPhaseOne}
	\label{mst_end_of_phase_one}
	After executing Algorithm~\ref{ALG_mst_part_1} the instance is prediction mandatory free and, ignoring the last iteration {of Line~\ref{line_mst_one_fillup}}, $|\ALG| \le \min\{ (1 + \frac{1}{\gamma}) \cdot (|\ALG \cap \OPT| + k_h), \gamma \cdot |\ALG \cap \OPT|\}$ holds for the set of edges $\ALG$ queried by Algorithm~\ref{ALG_mst_part_1} and any optimal solution $\OPT$.
\end{restatable}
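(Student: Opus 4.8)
The plan is to prove Lemma~\ref{mst_end_of_phase_one} by a charging argument that handles the prediction mandatory elements and the witness sets separately, then combines the two bounds. First I would partition the edges queried by Algorithm~\ref{ALG_mst_part_1} (excluding the last, incomplete iteration of Line~\ref{line_mst_one_fillup}) according to the iteration of the main loop in which they were queried. In each iteration we query two kinds of edges: (i) the $\gamma-2$ prediction mandatory elements obtained in Line~\ref{line_mst_one_fillup} (together with whatever additional edges the $T_L=T_U$ uniqueness preprocessing of Lemma~\ref{mst_preprocessing} forces, all of which are genuinely mandatory and hence in $\OPT$), and (ii) the set $W$ produced by applying one of Lemma~\ref{mst_phase1_case_a}, \ref{mst_phase1_case_b}, or \ref{mst_phase1_case_c}, which by the three itemized guarantees is a witness set of size one, two, or three, and in the size-three case has at least two elements in every feasible query set.

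Next I would set up the two accounting streams. For the prediction mandatory elements in category (i): each such element $e$ is, by the argument sketched before the lemma, either genuinely mandatory (so $e \in \OPT$, and it contributes to $|\ALG \cap \OPT|$) or it contributes at least one unit to the hop distance $k_h$; since these elements are queried sequentially and each is charged to a \emph{distinct} hop (the hop of $e$ itself, counted with respect to $A_e$), no hop is double-counted. For category (ii), I would use the witness-set structure: if $|W|=1$ then, after the reduction in item~3 treating such $e$ as being in $\OPT$, it contributes to $|\ALG \cap \OPT|$; if $|W|=2$, item~2 says either $W \subseteq \OPT$ (so both count toward $|\ALG\cap\OPT|$, or at least one does, while the other is charged to $\OPT$'s obligation to hit the witness set) or $h_{e_1}+h_{e_2}\ge 1$, giving one fresh hop to charge the non-$\OPT$ element against; if $|W|=3$, at least two of the three lie in $\OPT$, so at most one is ``extra'' and can be charged, per item~1, against the two that are in $\OPT$ in that same iteration. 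Crucially, each query made by $\OPT$ inside the biconnected component under consideration is charged at most once across all iterations, because once a cycle $C_i$ is resolved the algorithm restarts and the structural lemmas guarantee the witness sets found in later iterations are edge-disjoint from earlier ones (this is exactly what Lemmas~\ref{lemma_mst_witness_set_1} and~\ref{lemma_mst_witness_set_2}, together with the ``$l_i \notin C_j$ for $j<i$'' condition, are engineered to give).

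With these charges in hand, the two-part $\min$ follows by combining per-iteration counts. For the robustness bound $|\ALG| \le \gamma \cdot |\ALG\cap\OPT|$: in each full iteration we query at most $\gamma-2$ prediction mandatory edges plus a witness set $W$ of size at most $3$ that forces at least $\max\{1, |W|-1\}$ elements into $\OPT$ for that iteration; in the worst case ($|W|=2$ with exactly one element in $\OPT$, and all $\gamma-2$ prediction mandatory elements turning out mandatory hence in $\OPT$) we have queried $\gamma$ edges while forcing $\gamma-1$ into $\OPT$ — one needs to be slightly careful here and use the standard witness-set packing argument (disjointness across iterations) to conclude $|\ALG|\le\gamma\cdot|\ALG\cap\OPT|$ overall rather than per-iteration, which is where the ``at most once'' charging of $\OPT$-queries does the work. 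For the consistency/error bound $|\ALG| \le (1+\tfrac1\gamma)(|\ALG\cap\OPT| + k_h)$: each iteration queries $\gamma-2$ prediction mandatory edges (each either in $\OPT$ or giving a distinct hop) plus $W$; when $W$ has size $3$ with $2$ elements in $\OPT$, a full iteration queries $\gamma+1$ edges against $\gamma$ elements counted in $|\ALG\cap\OPT|+k_h$, giving the local ratio $\frac{\gamma+1}{\gamma}=1+\frac1\gamma$, and the other cases are no worse; summing over iterations and using disjointness gives the global bound.

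The main obstacle — and the place where I expect the real work to sit — is establishing the disjointness/no-double-charging across iterations: that the witness sets $W$ found in successive restarts of the \texttt{foreach} loop do not share elements with each other, do not overlap the prediction mandatory elements, and in particular that an $\OPT$-query is never used to pay for ``extra'' algorithm queries in two different iterations. This is delicate because after a restart the lower limit tree $T_L$ and the ordering $f_1,\ldots,f_l$ are recomputed, so one must argue that querying the selected edges of cycle $C_i$ genuinely removes that obstruction and cannot resurrect an earlier-resolved cycle; the conditions ``$l_i$ has the largest upper limit in $C_i\setminus\{f_i\}$'' and ``$l_i\notin C_j$ for $j<i$'' in Lemmas~\ref{lemma_mst_witness_set_1}–\ref{lemma_mst_witness_set_2} are precisely what makes this go through, but verifying it requires the MST-specific exchange arguments from~\cite{megow17mst} and the characterization of prediction mandatory free instances in Lemma~\ref{mst_pred_free_characterization}. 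A secondary, more routine obstacle is the bookkeeping for the known-mandatory edges triggered by the $T_L=T_U$ uniqueness preprocessing interleaved with the main queries — one has to confirm these are always genuinely mandatory (hence free, being in $\OPT$) and so never hurt either bound.
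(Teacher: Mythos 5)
Your overall decomposition matches the paper's: partition the algorithm's queries by iteration into sets $Q_i = P_i \cup W_i$, show a per-iteration bound $|Q_i| \le \min\{(1+\tfrac1\gamma)(|Q_i \cap \OPT| + h'(Q_i)),\ \gamma\,|Q_i\cap\OPT|\}$, and sum over the disjoint $Q_i$'s. However, you misdiagnose where the real difficulty lies, and you gloss over the accounting subtlety that the paper spends most of its effort on.

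You identify ``disjointness/no-double-charging across iterations'' as the main obstacle and attribute its resolution to the structural Lemmas~\ref{lemma_mst_witness_set_1}--\ref{lemma_mst_witness_set_2}. That is a misreading. Disjointness of the $Q_i$'s across iterations is \emph{trivial}: the algorithm never queries an already-queried (hence trivial) edge, so the sets of edges queried in different iterations are automatically pairwise disjoint, and comparing each $|Q_i|$ against $|Q_i \cap \OPT|$ with $\sum_i |Q_i \cap \OPT| \le |\ALG \cap \OPT|$ requires no extra machinery. The structural lemmas serve a different purpose: they certify that each $W_i$ is a genuine (approximate) witness set against \emph{all} feasible query sets — even though the algorithm is jumping to a cycle $C_i$ without first verifying an MST for $G_{i-1}$ — which is why the ``$l_i \notin C_j$ for $j<i$'' condition is needed. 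Conflating these two roles suggests you would not prove the right thing at the crucial step.

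The genuinely delicate part, which you sweep under the rug, is the bookkeeping around \emph{which} hop distance is being charged. For a prediction-mandatory-but-not-mandatory edge $e$, what the proof of Theorem~\ref{Theo_hop_distance_mandatory_distance} gives you is that some \emph{other} edge's value passes over a boundary of $e$ — in the paper's notation, $h'(e) \ge 1$ where $h'(e)$ counts edges passing over $I_e$'s endpoints — not that $e$'s own value passes over something ($h_e \ge 1$, the quantity from Section~\ref{sec:prelim}). Your parenthetical ``the hop of $e$ itself, counted with respect to $A_e$'' points to $h_e$, which is the wrong dual: it would make the prediction-mandatory charging fail. Conversely, the witness-set Lemmas~\ref{mst_phase1_case_b} and~\ref{mst_phase1_case_c} naturally produce hops of the form $h_{e}$ where $e \in W_i$, but the edge \emph{being passed over} (an $f_j$ or similar) may lie \emph{outside} $W_i$, so $h'(W_i)$ does not directly capture that error. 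The paper resolves this by a ``conceptual error move'': it re-attributes each such hop from $h'(e')$ (for the non-queried $e'$ whose boundary is crossed) to $h'(e)$ (for the queried $e \in W_i$ doing the crossing), and justifies that this re-attribution never double-counts — precisely because the queried $e$ becomes trivial and is never revisited, and earlier-queried sets $P_j$ cannot re-emit such hops since only non-trivial boundaries are counted. Without this step, the per-iteration error term $h'(Q_i)$ does not cover the errors identified by the case lemmas, and you would either lose the $(1+\frac1\gamma)$ coefficient or pick up an extraneous factor of $2$ on $k_h$.

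Finally, the $|W_i|=1$ case (the ``third statement'' exception) requires a real argument, not just ``treat $e$ as in $\OPT$'': the WLOG swap of $e$ for $f(e)$ in $\OPT$ is only valid because the proofs of Lemmas~\ref{mst_phase1_case_b} and~\ref{mst_phase1_case_c} guarantee each such $f(e)$ can be permanently deleted or contracted and thus never appears in any later $Q_j$ or in the Phase~2 analysis, so the swapped-out element is never charged a second time. Your proposal mentions the reduction but does not flag that this uniqueness/never-revisited property must be established.
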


In the last iteration, a (possibly empty) set $P$ of at most $\gamma-2$ prediction mandatory elements is queried.
Lemma~\ref{mst_pred_free_characterization} implies that only the last execution of Line~\ref{line_mst_one_fillup} might query less than $\gamma-2$ prediction mandatory elements.
As {each $e \in P$ is} either mandatory or contributes one to the hop distance,
querying $P$ does not violate the consistency.
{For $\gamma > 2$, we ensure the $(\gamma+\frac{1}{\opt})$-robustness by charging $P$ against the queries of the $3$-robust Phase $2$ of the algorithm.
Charging the first $\gamma-3$ elements of $P$ against the $3$-robust Phase $2$ leads to $\gamma$-robustness while the final element of $P$ leads to the additive term $\frac{1}{\opt}$.

The following
lemmas give algorithmic actions with a guarantee that the three statements are fulfilled. Each of them considers a cycle $C_i$ such that all $C_j$ with $j < i$ are prediction mandatory free, $l_i$ is the edge with the highest upper limit in $C_i \setminus \{f_i\}$ and predictions are as indicated in Figure~\ref{Ex_mst_phase1_cases}~b)-d).
The proofs of the three lemmas rely on the structural insights of Subsection~\ref{subsec_mst_witness sets}.

\begin{restatable}{lem}{MstPhaseOneCaseA}
	\label{mst_phase1_case_a}
	If $\pred{w}_{f_i} \in I_{l_i}$ and $\pred{w}_{l_i} \in I_{{f_i}}$, then querying $\{f_i,l_i\}$ satisfies the three statements.
\end{restatable}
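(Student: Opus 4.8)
The statement to prove is Lemma~\ref{mst_phase1_case_a}: if $\pred{w}_{f_i} \in I_{l_i}$ and $\pred{w}_{l_i} \in I_{f_i}$, then querying $\{f_i, l_i\}$ satisfies the three statements from Section~\ref{subsec_mst_phase_1}. The plan is to verify each of the three statements in turn, relying on the witness-set characterizations of Lemma~\ref{lemma_mst_witness_set_1} and Lemma~\ref{lemma_mst_witness_set_2}, and on the hop-distance definition.

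\emph{Statement 1 (witness-set structure).} The set queried here has size two, so I only need to argue $\{f_i, l_i\}$ is a witness set. First I would recall why $l_i$ qualifies for the lemmas: by choice, $l_i$ is the edge of $C_i \setminus \{f_i\}$ with the largest upper limit, and the hypothesis $\pred{w}_{f_i} \in I_{l_i}$ together with $\pred{w}_{l_i} \in I_{f_i}$ forces $I_{l_i} \cap I_{f_i} \neq \emptyset$ (both predicted values lie in both intervals, so the intervals overlap). Hence Lemma~\ref{lemma_mst_witness_set_1} applies directly and $\{f_i, l_i\}$ is a witness set; this discharges Statement~1.

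\emph{Statement 2 (charging to hop distance).} Here I must show that either $\{f_i, l_i\} \subseteq Q$ for every feasible query set $Q$, or $h_{f_i} + h_{l_i} \geq 1$. Suppose the first alternative fails, so there is a feasible $Q$ missing $f_i$ or missing $l_i$. Since $\{f_i, l_i\}$ is a witness set, $Q$ contains at least one of them; without loss of generality say $f_i \in Q$ but $l_i \notin Q$ (the symmetric case is handled the same way using Lemma~\ref{lemma_mst_witness_set_2}'s second part). The key observation is that because the current instance is such that $C_j$ is prediction mandatory free for all $j<i$, Lemma~\ref{mst_pred_free_characterization} pins down the predicted values relative to the interval endpoints; combined with the hypothesis $\pred{w}_{f_i} \in I_{l_i}$ and $\pred{w}_{l_i}\in I_{f_i}$, I can compare the predicted values against $L_{f_i}$, $U_{f_i}$, $L_{l_i}$, $U_{l_i}$. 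The cleanest route is: if neither $f_i$ nor $l_i$ contributed a hop, then $w_{f_i}$ and $\pred{w}_{f_i}$ lie on the same side of each relevant endpoint of $l_i$ (and of the other cycle edges), and symmetrically for $l_i$; I would then argue that in that case the true-value configuration still makes $l_i$ (resp. $f_i$) mandatory — contradicting $l_i \notin Q$ — by invoking the "further" clauses of Lemmas~\ref{lemma_mst_witness_set_1} and~\ref{lemma_mst_witness_set_2} (which say $\{l_i\}$ is a witness set once $w_{f_i}\in I_{l_i}$, and $\{f_i\}$ once $w_{l_i}\in I_{f_i}$). So a feasible $Q$ omitting $l_i$ forces $w_{f_i} \notin I_{l_i}$ while $\pred{w}_{f_i}\in I_{l_i}$, i.e.\ the value of $f_i$ passes over an endpoint of $l_i$, giving $h_{f_i}\geq 1$. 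This is the main obstacle: getting the hop accounting exactly right requires care with the open-interval conventions ($\le L_j$ vs.\ $>L_j$, and $<U_j$ vs.\ $\ge U_j$ in the definition of "passes over"), and with checking that $l_i\in A_{f_i}$ (which holds since $f_i$ and $l_i$ both lie on the cycle $C_i$, hence in the same biconnected component).

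\emph{Statement 3 (the single-element exception).} This statement is vacuous for Lemma~\ref{mst_phase1_case_a}: the algorithmic action queries the pair $\{f_i, l_i\}$, not a single element, so the "only exception" clause imposes nothing here and there is nothing to check. I would state this explicitly for completeness. Assembling the three parts completes the proof; the bulk of the work, and the only genuinely delicate point, is the hop-distance argument in Statement~2, where I expect to need a short case analysis driven by which of $f_i, l_i$ a hypothetical feasible set omits and by the endpoint-side comparisons licensed by Lemma~\ref{mst_pred_free_characterization}.
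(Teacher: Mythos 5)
Your treatment of Statements 1 and 3 is fine: Lemma~\ref{lemma_mst_witness_set_1} does apply directly to give the witness set, since $l_i$ is by definition the edge with largest upper limit in $C_i\setminus\{f_i\}$ and $I_{l_i}\cap I_{f_i}\neq\emptyset$ follows immediately from $\pred{w}_{f_i}\in I_{l_i}$; and Statement~3 is indeed vacuous here. (The paper routes Statement~1 through Lemma~\ref{lemma_mst_witness_set_2} instead, but only because it has to establish that lemma's precondition anyway for Statement~2, which is exactly where your proposal runs into trouble.)

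The gap is in Statement~2. Your symmetric case (a feasible $Q$ omitting $f_i$) appeals to the ``furthermore'' clause of Lemma~\ref{lemma_mst_witness_set_2}, which says $\{f_i\}$ is a witness set once $w_{l_i}\in I_{f_i}$. But Lemma~\ref{lemma_mst_witness_set_2} is stated only under the hypothesis that $l_i\notin C_j$ for all $j<i$, and you never verify this. It is not automatic: the paper spends a dedicated sub-argument on it, showing by contradiction that if $l_i\in C_j$ for some $j<i$, then uniqueness of $T_L=T_U$ together with the ordering $L_{f_j}\le L_{f_i}$ forces $I_{l_i}\cap I_{f_i}\subseteq I_{l_i}\cap I_{f_j}$, so $\pred{w}_{l_i}\in I_{f_i}$ would give $\pred{w}_{l_i}\in I_{f_j}$ and contradict $C_j$ being prediction mandatory free via Lemma~\ref{mst_pred_free_characterization}. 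You gesture at the prediction-mandatory-free property of earlier cycles as ``pinning down'' endpoint comparisons but never connect it to this specific precondition, and without it the $f_i\notin Q$ half of your case analysis is unsupported. Once that claim is in hand, the paper's Statement~2 argument is also a touch cleaner: it skips the detour through an arbitrary feasible $Q$ and argues directly on the true values, noting that $w_{l_i}\in I_{f_i}$ and $w_{f_i}\in I_{l_i}$ make both edges mandatory (by the two ``further'' clauses), while $T_L=T_U$ determines which single endpoint is crossed in the remaining cases and yields $h_{l_i}\geq 1$ or $h_{f_i}\geq 1$.
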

\begin{restatable}{lem}{MstPhaseOneCaseB}
	\label{mst_phase1_case_b}
	
	Assume $\w_{f_i} \in I_{l_i}$ but $\w_{l_i} \not\in I_{f_i}$.
	Let $l_i'$ be the edge with the highest upper limit in $C_i \setminus \{f_i,l_i\}$ and  $I_{l_i'} \cap I_{f_i} \not= \emptyset$. 
	If no $l_i'$ exists, then querying $l_i$, and querying $f_i$ only if  $w_{l_i} \in I_{f_i}$, satisfies the three statements. 
	If $l_i'$ exists, then querying $\{f_i,l_i\}$, and querying $l_i'$  only if $w_{f_i} \in I_{l_i}$ and $w_{l_i} \not\in I_{f_j}$ for each $j$ with $l_i \in C_j$, satisfies the three statements.
\end{restatable}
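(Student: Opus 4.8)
The plan is to verify the three structural statements (only querying witness sets of size one or two, or size-three sets with at least two elements in every feasible query set; the hop-distance/$\OPT$ dichotomy for queried pairs; and the ``exception'' bookkeeping for singly-queried edges) separately for each action prescribed by the lemma, running the case distinction exactly along the phrasing of the statement: first the branch where no $l_i'$ exists, then the branch where $l_i'$ exists, and inside each of these the further split determined by the true values the algorithm reveals. The tools are the witness-set characterizations of Subsection~\ref{subsec_mst_witness sets} (Lemmas~\ref{lemma_mst_witness_set_1} and~\ref{lemma_mst_witness_set_2}), the characterization of prediction mandatory free instances (Lemma~\ref{mst_pred_free_characterization}), and the definition of the hop distance. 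As common setup, $\pred{w}_{f_i}\in I_{f_i}\cap I_{l_i}$, so $I_{l_i}\cap I_{f_i}\neq\emptyset$; since $l_i$ has the largest upper limit in $C_i\setminus\{f_i\}$, Lemma~\ref{lemma_mst_witness_set_1} gives that $\{f_i,l_i\}$ is a witness set and that $\{l_i\}$ is a witness set whenever $w_{f_i}\in I_{l_i}$. Moreover $C_i$ lies in a single maximal biconnected component, so $C_i\subseteq A_e$ for every $e\in C_i$; this is what lets the prediction of one $C_i$-edge leaving or entering the interval of another $C_i$-edge be charged to a hop-distance term. Finally, by uniqueness of the lower limit tree (Lemma~\ref{mst_preprocessing}) every $e\in C_i\setminus\{f_i\}$ has $L_e<L_{f_i}$, so if additionally $I_e\cap I_{f_i}=\emptyset$ then $U_e\le L_{f_i}$, i.e.\ such an edge lies entirely to the left of $I_{f_i}$.

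\emph{Branch: no $l_i'$.} Then every $e\in C_i\setminus\{f_i,l_i\}$ lies entirely left of $I_{f_i}$, and $w_{f_i}>L_{f_i}\ge U_e>w_e$, so after querying $l_i$ the heaviest edge of $C_i$ is decided: it is $f_i$ if $w_{l_i}\le L_{f_i}$ and $l_i$ if $w_{l_i}\ge U_{f_i}$. If $w_{l_i}\in I_{f_i}$, the algorithm queries the witness set $\{f_i,l_i\}$, and since $\pred{w}_{l_i}\notin I_{f_i}$ but $w_{l_i}\in I_{f_i}$ the value of $l_i$ passes over an endpoint of $I_{f_i}$, giving $h_{l_i}\ge 1$ and Statement~2. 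If $w_{l_i}\notin I_{f_i}$, the algorithm queries only $l_i$; one shows that either $l_i$ is mandatory (so $\{l_i\}$ is a witness set and Statement~1 holds directly) or $l_i$ is an exception edge with $f(l_i)=f_i$: the pair $\{f_i,l_i\}$ is a witness set, and $f_i$ provably remains unqueried throughout, using that the heaviest edge of $C_i$ is already determined and that $f_i$, being a non-tree edge, lies in no cycle other than $C_i$, hence is never prediction mandatory later and never queried.

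\emph{Branch: $l_i'$ exists.} Here the algorithm queries $\{f_i,l_i\}$ and, conditionally, $l_i'$. If $w_{f_i}\notin I_{l_i}$, then $\{f_i,l_i\}$ is a size-two witness set and, since $\pred{w}_{f_i}\in I_{l_i}$, the value of $f_i$ passes over an endpoint of $I_{l_i}$, so $h_{f_i}\ge 1$ and Statement~2 holds. If $w_{f_i}\in I_{l_i}$, then $l_i$ is mandatory (Lemma~\ref{lemma_mst_witness_set_1}); if moreover $w_{l_i}\in I_{f_j}$ for some $j$ with $l_i\in C_j$, take the smallest such $j$: for $j=i$, Lemma~\ref{lemma_mst_witness_set_2} makes $f_i$ mandatory as well, so $\{f_i,l_i\}$ is a size-two witness set contained in every feasible query set; for $j<i$, prediction-mandatory-freeness of $C_j$ (Lemma~\ref{mst_pred_free_characterization}) gives $\pred{w}_{l_i}\le L_{f_j}<w_{l_i}$, so $l_i$ passes over $L_{f_j}$ and $h_{l_i}\ge 1$ — in both situations $l_i'$ is not queried and Statement~2 holds. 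Otherwise ($w_{l_i}\notin I_{f_j}$ for all such $j$) the algorithm also queries $l_i'$, and the remaining task is to show that $\{f_i,l_i,l_i'\}$ contains two elements of every feasible query set: $l_i$ is one, and a structural argument on $C_i$ — the heaviest edge of $C_i$ remains undecided only because $l_i'$, the next-heaviest interval meeting $I_{f_i}$, could exceed both $w_{f_i}$ and $w_{l_i}$, and resolving this forces a second mandatory edge among $\{f_i,l_i'\}$ via Lemmas~\ref{lemma_mst_witness_set_1}/\ref{lemma_mst_witness_set_2} — supplies the other.

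I expect two places to carry the weight. The first is the global ``$f(e)$ remains unqueried'' guarantee of Statement~3 in the no-$l_i'$ branch: this is not a property of $C_i$ in isolation, and the argument hinges on $f_i$ lying in no cycle besides $C_i$, so that once the heaviest edge of $C_i$ is pinned down, no later step — neither a prediction-mandatory query nor the processing of another non-prediction-mandatory-free cycle — can touch $f_i$. The second is verifying that the size-three set $\{f_i,l_i,l_i'\}$ always has at least two elements in every feasible query set; this is the one branch where the plain witness-set lemmas do not apply verbatim and one must argue directly about which edges are needed to identify the maximum of $C_i$ after $l_i$ (and $f_i$) have been revealed. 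The remaining work — absorbing the caveat ``$l_i\notin C_j$ for $j<i$'' of Lemma~\ref{lemma_mst_witness_set_2} into a hop-distance charge exactly as above, and checking that each prediction leaving or entering an interval of another $C_i$-edge indeed increments some $h_e$ — is routine given the biconnectivity remark.
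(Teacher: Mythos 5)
Your plan matches the paper's: split on whether $l_i'$ exists, then on the revealed true values, and discharge each branch via Lemmas~\ref{lemma_mst_witness_set_1}/\ref{lemma_mst_witness_set_2}, the hop-distance definition, and (for the singly-queried case) the fact that $f_i$ becomes deletable. Two places need tightening.

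First, in the $l_i'$-exists branch, your subcase ``smallest $j$ with $w_{l_i}\in I_{f_j}$ and $l_i\in C_j$ equals $i$'' invokes the second sentence of Lemma~\ref{lemma_mst_witness_set_2} (``if $w_{l_i}\in I_{f_i}$ then $\{f_i\}$ is a witness set''), but that lemma's precondition is $l_i\notin C_{j'}$ for \emph{all} $j'<i$, not merely $w_{l_i}\notin I_{f_{j'}}$ for those $j'$. In Lemma~\ref{mst_phase1_case_a} that precondition can be derived because $\pred{w}_{l_i}\in I_{f_i}$ is assumed; here the hypothesis is the opposite ($\pred{w}_{l_i}\notin I_{f_i}$), so there may well exist $j'<i$ with $l_i\in C_{j'}$, and the lemma does not apply. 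The paper's proof sidesteps this entirely: it handles every $j$ with a single hop-distance charge, observing that $\pred{w}_{l_i}\le L_{f_j}<w_{l_i}$ (for $j\ge i$ from $\pred{w}_{l_i}\notin I_{f_i}$ plus $L_{f_j}\ge L_{f_i}$; for $j<i$ from prediction-mandatory-freeness of $C_j$), so $h_{l_i}\ge 1$ whenever some such $j$ exists. Your $j=i$ subcase should be replaced by this.

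Second, your justification that $\{f_i,l_i,l_i'\}$ hits every feasible query set at least twice is stated only at a high level. The paper makes the key technical point explicit and you should too: the condition ``$w_{l_i}\notin I_{f_j}$ for every $j$ with $l_i\in C_j$'' is precisely what guarantees that querying $l_i$ leaves it minimal in its cut $X_{l_i}$, so $T_L$ is unchanged; only then does $l_i'$ become the edge with highest upper limit in $C_i\setminus\{f_i\}$ meeting $I_{f_i}$, so Lemma~\ref{lemma_mst_witness_set_1} applied in the post-query instance yields that $\{f_i,l_i'\}$ is a witness set, which together with mandatoriness of $l_i$ gives the two-out-of-three bound.

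Your handling of the no-$l_i'$ branch is correct in substance, though the clean justification is the one the paper uses — once $w_{l_i}\notin I_{f_i}$, $f_i$ is uniquely maximal on $C_i$ and hence not in any MST, so it can be deleted from the instance, which is what guarantees $f(l_i)=f_i$ is never touched again.
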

\begin{restatable}{lem}{MstPhaseOneCaseC}
	\label{mst_phase1_case_c}
	Assume $\w_{l_i'} \in I_{f_i}$ for some $l_i' \in C_i\setminus\{f_i\}$ but $\w_{f_i} \not\in I_{l_i}$. Let $f_j$ be the edge with the smallest lower limit in $X_{l_i'} \setminus \{l_i',f_i\}$ and $I_{f_j}\cap I_{l_i'}\not=\emptyset$.
	If $f_j$ does not exist, then querying $f_i$, and querying $l_i'$ only if $w_{f_i} \in I_{l_i'}$, satisfies the three statements. 
	If $f_j$ exists, querying $\{f_i,l_i'\}$, and also querying $f_j$ only if $w_{l_i'}\in I_{f_i}$ and $w_{f_i} \not\in I_{e}$ for each $e \in C_i$, satisfies the three statements.
	
\end{restatable}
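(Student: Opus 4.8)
The plan is to verify the three statements --- on the sizes of the queried witness sets, on charging size-two sets to the hop distance, and on the single-element exception --- separately in the two subcases of the lemma, and within each subcase to branch on which of the conditional queries the true values trigger. The tools are the two new witness-set lemmas, Lemmas~\ref{lemma_mst_witness_set_1} and~\ref{lemma_mst_witness_set_2}; the characterization of prediction-mandatory-free cycles, Lemma~\ref{mst_pred_free_characterization}, applied to the earlier cycles $C_j$ with $j<i$ (which are prediction mandatory free, since $C_i$ is the first cycle that is not); the preprocessing that maintains a unique $T_L=T_U$ (Lemma~\ref{mst_preprocessing}); and the definition of the hop distance, together with the fact that $e'\in A_e$ whenever $e$ and $e'$ lie on a common cycle. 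I would first record a standing claim used throughout: $l_i'\notin C_j$ for every $j<i$. Indeed, if $l_i'\in C_j$ then Lemma~\ref{mst_pred_free_characterization} gives $\pred{w}_{l_i'}\le L_{f_j}\le L_{f_i}$, where the last inequality uses that the $f$'s are ordered by non-decreasing lower limit; this contradicts $\pred{w}_{l_i'}>L_{f_i}$ (recall $I_{f_i}$ is open). Since $\pred{w}_{l_i'}$ also lies in $I_{l_i'}$, we have $I_{l_i'}\cap I_{f_i}\neq\emptyset$, so Lemma~\ref{lemma_mst_witness_set_2} yields that $\{f_i,l_i'\}$ is a witness set and that $\{f_i\}$ is a witness set whenever $w_{l_i'}\in I_{f_i}$.

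\textbf{Subcase: $f_j$ does not exist.} The algorithm always queries $f_i$ and queries $l_i'$ exactly when $w_{f_i}\in I_{l_i'}$. If $l_i'$ is not queried, only the size-one set $\{f_i\}$ is queried: this is covered by statement~1 when $w_{l_i'}\in I_{f_i}$ (then $\{f_i\}$ is a witness set by the standing claim), and otherwise by statement~3 with partner $f(f_i)=l_i'$, provided $l_i'$ is never queried again. If $l_i'$ is queried, the size-two set $\{f_i,l_i'\}$ must satisfy statement~2: when $w_{l_i'}\in I_{f_i}$ both edges are mandatory ($f_i$ by the standing claim, $l_i'$ by the argument behind Lemma~\ref{lemma_mst_witness_set_1} applied on $C_i$, using $w_{f_i}\in I_{l_i'}$); when $w_{l_i'}\notin I_{f_i}$, the value $\pred{w}_{l_i'}\in I_{f_i}$ passes over $L_{f_i}$ or $U_{f_i}$ and $f_i\in A_{l_i'}$, so $h_{l_i'}\ge1$. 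The remaining obligation is the ``never queried again'' guarantee of statement~3: the non-existence of $f_j$ forces $I_{f_k}\cap I_{l_i'}=\emptyset$ for every later cycle $C_k$ through $l_i'$, and from this one deduces (again by Lemma~\ref{mst_pred_free_characterization}-style reasoning on $C_k$) that $l_i'$ is never prediction mandatory at a later stage and is never picked into a witness set by Phase~1.

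\textbf{Subcase: $f_j$ exists.} The algorithm always queries $\{f_i,l_i'\}$ and additionally queries $f_j$ precisely when $w_{l_i'}\in I_{f_i}$ and $w_{f_i}\notin I_e$ for all $e\in C_i$. Again $\{f_i,l_i'\}$ is a witness set by the standing claim. The next step is to show that $\{l_i',f_j\}$ is also a witness set: $l_i'\in T_L$, both $f_i$ and $f_j$ lie in the cut $X_{l_i'}$, and $f_j$ has the smallest lower limit among the edges of $X_{l_i'}\setminus\{l_i',f_i\}$ intersecting $I_{l_i'}$; this is a cut-side analogue of the witness-set lemmas, which I would establish by relating $X_{l_i'}$ to the cycles $C_k$ with $k>i$ that pass through $l_i'$. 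Granting this, when all three edges are queried we have $w_{l_i'}\in I_{f_i}$, so $f_i$ is mandatory, and every feasible query set also meets $\{l_i',f_j\}$; hence at least two of the three queried edges lie in every feasible query set, which is statement~1 for a size-three set (the hypothesis $w_{f_i}\notin I_e$ for all $e\in C_i$ is exactly what rules out a cheaper two-edge resolution and keeps the accounting tight). When $f_j$ is not queried, we queried the size-two witness set $\{f_i,l_i'\}$ and verify statement~2: if $w_{l_i'}\notin I_{f_i}$ then $h_{l_i'}\ge1$ as above; if $w_{l_i'}\in I_{f_i}$ but $w_{f_i}\in I_e$ for some $e\in C_i$, then $f_i$ is mandatory, and one shows that either $l_i'$ is mandatory too or $h_{f_i}\ge1$, using $\pred{w}_{f_i}\notin I_{l_i}$ and that $l_i$ has the largest upper limit in $C_i\setminus\{f_i\}$ (so $w_{f_i}\in I_e$ forces $w_{f_i}<U_{l_i}$, and then $\pred{w}_{f_i}\ge U_{l_i}$ makes $\pred{w}_{f_i}$ pass over $U_{l_i}$).

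\textbf{Main obstacle.} Two points carry the real weight. First, the cut-side witness-set claim that $\{l_i',f_j\}$ is a witness set is not literally an instance of Lemmas~\ref{lemma_mst_witness_set_1} and~\ref{lemma_mst_witness_set_2}, which are stated for cycles, so it needs a separate argument on the cut $X_{l_i'}$. Second, the statement~3 guarantee that the unqueried partner $l_i'$ is never touched again requires reasoning about every subsequent iteration of Phase~1, not just the current cycle. Everything else is careful bookkeeping: in each leaf of the case tree, confirm that a conditional extra query is either provably mandatory (hence absorbed by the preprocessing maintaining $T_L=T_U$) or is paid for by a value passing over an endpoint of an interval that lies in the relevant set $A_e$, and check that these hop charges land on pairwise distinct intervals.
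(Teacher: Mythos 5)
Your structure and the standing claim that $l_i'\notin C_j$ for all $j<i$ match the paper, and the hop-distance accounting for the size-two sets is essentially correct. But there are genuine gaps in the three places that you flagged as carrying weight, and your proposed routes around them are either unjustified or different from (and weaker than) what the paper does.

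First, in the subcase where $f_j$ does not exist and $l_i'$ is queried (i.e.\ $w_{f_i}\in I_{l_i'}$), you branch on $w_{l_i'}\in I_{f_i}$ and claim in that branch that $l_i'$ is mandatory ``by the argument behind Lemma~\ref{lemma_mst_witness_set_1}.'' That lemma and the observation it rests on require $l_i'$ to be the edge with the \emph{largest} upper limit in $C_i\setminus\{f_i\}$, i.e.\ $l_i'=l_i$, which is not part of this lemma's hypotheses ($l_i'$ is an arbitrary edge with $\w_{l_i'}\in I_{f_i}$). The paper avoids this entirely: whenever $l_i'$ is queried, $w_{f_i}<U_{l_i'}\le\w_{f_i}$ gives $h_{f_i}\ge1$ directly, with no need to split on $w_{l_i'}$ or establish that $l_i'$ is mandatory.

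Second, the ``never queried again'' guarantee for $l_i'$ (the part of statement~3 when only $f_i$ is queried) is not argued soundly in your proposal. You suggest showing that $l_i'$ is never again prediction mandatory nor chosen into a witness set, which would require tracking every later iteration of Phase~1. The paper's route is much stronger and local: the non-existence of $f_j$ means $f_i$ is the only edge of $X_{l_i'}\setminus\{l_i'\}$ whose interval meets $I_{l_i'}$, and the condition $w_{f_i}\notin I_{l_i'}$ then makes $l_i'$ uniquely minimal in $X_{l_i'}$; hence $l_i'$ lies in every MST and can be contracted out of the instance, so the algorithm never encounters it again by construction.

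Third, for the subcase where $f_j$ exists, you correctly identify that establishing $\{l_i',f_j\}$ as a witness set is the crux, but you conjecture a ``cut-side analogue'' of Lemmas~\ref{lemma_mst_witness_set_1}/\ref{lemma_mst_witness_set_2} that you do not supply. The paper needs no new lemma: since $w_{l_i'}\in I_{f_i}$ forces $f_i$ to be mandatory (Lemma~\ref{lemma_mst_witness_set_2}), it passes to the relaxed instance in which $f_i$ has already been queried; the condition $w_{f_i}\notin I_e$ for all $e\in C_i$ ensures the lower limit tree is unchanged, and in this relaxed instance $f_j$ becomes the smallest-index non-tree edge with $l_i'\in C_j$, so the \emph{cycle-based} Lemma~\ref{lemma_mst_witness_set_2} applies verbatim to give $\{l_i',f_j\}$ as a witness set for the relaxed, and hence the original, instance. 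That relaxation step is the missing idea in your plan.
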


 \begin{figure}[th]
		\begin{minipage}{0.24\textwidth}
		\centering
        \begin{tikzpicture}[line width = 0.3mm, scale = 0.8, transform shape]
        \intervalp{$I_{f_i}$}{-3}{-0.5}{0}{-0.7}
        \intervalp{$I_{l_i}$}{-3.5}{-1}{0.5}{-3.3}
        \intervalp{}{-4}{-1.5}{1}{-3.2}
		\path (-2.75, 1.6) -- (-2.75, 1.6) node[font=\LARGE, midway, sloped]{$\dots$};
		\node[] at (-4.75,1.8){$(a)$};
		\end{tikzpicture}
	\end{minipage}
	\begin{minipage}{0.24\textwidth}
		\centering
		\begin{tikzpicture}[line width = 0.3mm, scale = 0.8, transform shape]
		\intervalp{$I_{f_i}$}{-3}{-0.5}{0}{-2.5}
		\intervalp{$I_{l_i}$}{-3.5}{-1}{0.5}{-2.8}
		\interval{}{-4}{-1.5}{1}
		\path (-2.75, 1.6) -- (-2.75, 1.6) node[font=\LARGE, midway, sloped]{$\dots$};
		\node[] at (-4.75,1.8){$(b)$};
		\end{tikzpicture}
	\end{minipage}
	\begin{minipage}{0.24\textwidth}
		\centering
		\begin{tikzpicture}[line width = 0.3mm, scale = 0.8, transform shape]
		\intervalp{$I_{f_i}$}{-3}{-0.5}{0}{-2.5}
		\intervalp{$I_{l_i}$}{-3.5}{-1}{0.5}{-3.3}
		\interval{}{-4}{-1.5}{1}
		\path (-2.75, 1.6) -- (-2.75, 1.6) node[font=\LARGE, midway, sloped]{$\dots$};
		\node[] at (-4.75,1.8){$(c)$};
		\end{tikzpicture}
	\end{minipage}
	\begin{minipage}{0.24\textwidth}
		\centering
		\begin{tikzpicture}[line width = 0.3mm, scale = 0.8, transform shape]
		\intervalp{$I_{f_i}$}{-3}{-0.5}{0}{-0.7}
		\interval{}{-3.5}{-1}{0.5}
		\intervalp{$I_{l_i'}$}{-4}{-1.5}{1}{-1.75}
		\path (-2.75, 1.6) -- (-2.75, 1.6) node[font=\LARGE, midway, sloped]{$\dots$};
		\node[] at (-4.75,1.8){$(d)$};
		\end{tikzpicture}
	\end{minipage}
	\caption{Intervals with predictions indicated as red crosses. {$a)$ Prediction mandatory free cycle. Illustration of the situations in the Lemmas~\ref{mst_phase1_case_a} $(b)$, \ref{mst_phase1_case_b} $(c)$ and~\ref{mst_phase1_case_c} $(d)$.}} 
	\label{Ex_mst_phase1_cases}
\end{figure}
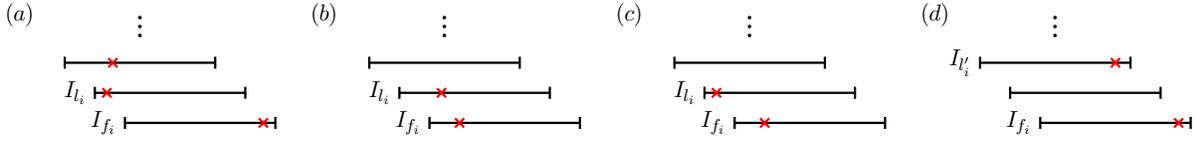

\subsection{Handling prediction mandatory free instances}
\label{subsec_mst_phase_2}

This section describes Phase 2 of our algorithms. For prediction mandatory free instances, we present Algorithm~\ref{ALG_mst_part_2} with \emph{recovery strategies} A and B (in Line~\ref{line_mst_two_recovery}) that lead to the guarantees of Lemma~\ref{mst_end_of_phase_2}.  
Our full algorithms execute Phase $1$ followed by Phase $2$ and differ only in the recovery strategy. 
Using the introduced ideas and lemmas, we can {prove}
the MST part of Theorem~\ref{thm:main1.5-2} (recovery A) and Theorem~\ref{theorem_mst2}~(recovery~B).

\begin{restatable}{lem}{MSTEndOfPhaseTwo}
	\label{mst_end_of_phase_2}
	If Algorithm~\ref{ALG_mst_part_2} is executed on a prediction mandatory free instance, then in each iteration the instance remains prediction mandatory free. Furthermore, recovery strategy A guarantees {$1$-consistency} and $2$-robustness and recovery strategy B guarantees $|\ALG| \le \min\{\opt + {5}\cdot k_h, 3\cdot \opt\}$.
\end{restatable}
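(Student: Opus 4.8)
The proof naturally splits into the three assertions of the lemma. Throughout I would work with the bipartite \emph{witness graph} $H$ underlying Phase~2: one side consists of the non-tree edges $f_i$ that still have an intersecting edge on their cycle $C_i$, the other side of the tree edges, and $\{f_i,e\}$ is an edge of $H$ whenever $e\in C_i\setminus\{f_i\}$ is a witness partner of $f_i$ in the sense of Lemmas~\ref{lemma_mst_witness_set_1}--\ref{lemma_mst_witness_set_2}. By Lemma~\ref{mst_pred_free_characterization}, on a prediction mandatory free instance every cycle satisfies $\pred{w}_{f_i}\ge U_e$ and $\pred{w}_e\le L_{f_i}$ for the relevant edges, so $H$ encodes exactly the obstructions that Phase~2 must resolve, and $Q'$ is a minimum vertex cover of $H$. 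The key external tool is K\"onig--Egerv\'ary: $|Q'|$ equals the size of a maximum matching $M$ of $H$, and since every edge of $M$ is a witness set, $M$ exhibits $|M|=|Q'|$ pairwise-disjoint witness sets, hence $\opt\ge|Q'|$.

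\textbf{Invariant: the instance stays prediction mandatory free.} A query reveals a value; after re-establishing a unique $T_L=T_U$ via Lemma~\ref{mst_preprocessing}, I would argue by a cycle-by-cycle case distinction—analogous to, but simpler than, Lemmas~\ref{mst_phase1_case_a}--\ref{mst_phase1_case_c}, since in Phase~2 no non-tree edge lies ``below'' its cycle—that the only way a prediction mandatory edge can reappear is that some revealed value violates the orientation condition $w_e\le L_{f_i}$ (or $w_{f_i}\ge U_e$) of Lemma~\ref{mst_pred_free_characterization} on some cycle $C_i$. The recovery strategy is invoked precisely in this situation, and I would show that the extra queries it makes (completing the offending witness pair and querying any edge that thereby becomes known mandatory) restore the characterization on $C_i$ and create no new violation on any other cycle, because the affected edges all lie on $C_i$ or in the cut $X_{l_i}$. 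Hence every iteration ends prediction mandatory free.

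\textbf{Recovery A: $1$-consistency and $2$-robustness.} If the predictions are correct, every revealed value equals its prediction, no violation ever occurs, and the algorithm queries exactly $Q'$. Using Lemma~\ref{mst_pred_free_characterization} and the openness of the intervals one checks that $Q'$ resolves every cycle (if $f_i\in Q'$ then $w_{f_i}=\pred{w}_{f_i}\ge U_e>w_e$ for all $e\in C_i$; otherwise each intersecting $e\in Q'$ has $w_e=\pred{w}_e\le L_{f_i}<w_{f_i}$, while non-intersecting $e$ satisfy $U_e\le L_{f_i}<w_{f_i}$ anyway), so $|\ALG|=|Q'|$; combined with $\opt\ge|Q'|$ and feasibility of $Q'$ this gives $|\ALG|=\opt$, i.e.\ $1$-consistency. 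For robustness I would keep $|Q'|\le\opt$ from the matching bound and show that recovery~A adds at most one further query per edge of $Q'$—it completes the violated witness pair and re-queries newly mandatory edges, each of which can be charged to the matching edge that triggered it—so $|\ALG|\le 2|Q'|\le 2\,\opt$.

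\textbf{Recovery B: $|\ALG|\le\min\{\opt+5k_h,\ 3\,\opt\}$.} Recovery~B is the ``lazy'' variant that does not eagerly complete pairs but only queries what is forced and re-optimizes the vertex cover as the instance changes. For the $3$-robustness I would again use $|Q'|\le\opt$ and argue that each subsequent recovery round behaves like the classical $2$-competitive witness-pair algorithm on the residual instance, whose optimum is still at most $\opt$, giving $|\ALG|\le 3\,\opt$. For the error-sensitive bound I would follow the paper's charging recipe: a query in $Q'$ that turns out not to lie in $\OPT$ forces an edge whose true value passes over an endpoint of some interval in its set $A_e$, contributing at least one to $k_h$; every corrective query that this misprediction triggers—completing the pair on the affected cycle, a bounded number of newly mandatory edges, and the bounded correction from re-computing the vertex cover—can be attributed to that same hop, and a careful accounting shows no hop is charged more than five times, so $|\ALG|\le\opt+5k_h$. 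Taking the minimum proves the claim. The main obstacle is exactly this last charging argument: one must show that a single misprediction together with the re-optimization of the dynamically changing vertex-cover instance cannot cascade, i.e.\ its effect is confined to a constant number of extra queries localized to one cycle/cut, and pinning the constant down to $5$ (rather than a larger value) is where the structural Lemmas~\ref{lemma_mst_witness_set_1}--\ref{lemma_mst_witness_set_2} and K\"onig--Egerv\'ary must be combined tightly.
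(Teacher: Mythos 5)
Your proposal captures the broad shape of the argument but has several gaps that would need to be filled, and at one point it asserts something that is simply false as stated.

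The central problem is in your setup: you claim that because $|Q'|$ equals the size of a maximum matching $M$ of the vertex-cover graph and ``every edge of $M$ is a witness set,'' you get $|M|$ pairwise-disjoint witness sets and hence $\opt\ge|Q'|$. But the edges $\{e,h(e)\}$ of that bipartite graph are \emph{not} witness sets in general. The paper is explicit about this: the pairs $\{e,h(e)\}$ only become witness sets conditionally, for elements $e$ queried in a carefully chosen order (non-tree edges by increasing lower limit, tree edges by decreasing upper limit) and only up to the point where the vertex-cover instance first changes. Establishing this conditional witness property is precisely the content of Lemma~\ref{mst_phase2_1}, which you cannot bypass. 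The inequality $|VC|\le\opt$ does hold, but it comes from the known characterization of the offline optimum for prediction-mandatory-free instances, not from a blanket ``matching edges are witness sets'' claim; and the subsequent robustness accounting needs the conditional witness-set structure, not the unconditional one.

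The invariance argument is also misdirected. You propose to show that recovery restores the orientation condition on the affected cycle and does not create new violations elsewhere, tying the invariant to the behaviour of the recovery strategy. The paper's proof is cleaner and goes the other way: it shows that no prediction-mandatory edge can ever reappear, \emph{regardless} of which queries are made, as long as the algorithm keeps $T_L=T_U$ unique after each query. This requires a technical argument (via the tree-change lemma) showing that any would-be prediction-mandatory edge in a new cycle $C_i'$ would force a specific already-queried edge to have contradictory properties — a contradiction derived purely from the structure of lower/upper limit trees, not from localizing effects to one cycle or cut. Your localization claim (``the affected edges all lie on $C_i$ or in the cut $X_{l_i}$'') is not obviously true: a single query can change $T_L$ globally, and the argument must account for that.

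Finally, your description of recovery~B as the ``lazy'' variant that just re-optimizes the vertex cover does not match the actual mechanism, which maintains a set $W$ of previously used charging partners and queries $h(e)$ when it would otherwise be used to charge for a second element, thereby creating a size-$3$ witness set $\{h(e),e,e'\}$. Your charging argument (``no hop is charged more than five times'') is a plausible target, but the paper gets to $5$ by a specific decomposition: $|S|\le\opt$ for the matched part of the vertex cover, at most $k_h$ elements of $T_L\setminus T_U$ queried to restore tree uniqueness (mandatory but not prediction-mandatory, hence charged to $k_h$), and at most $4k_h$ elements from re-matched pairs, bounded by a careful analysis of how many matching vertices can get re-matched after restarts. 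Pinning the constant requires knowing how $h$ is recomputed after a restart (via augmenting paths extending the surviving partial matching) and bounding the number of newly-matched vertices per restart — none of which appears in your sketch. So while your high-level intuition is on track, the proposal as written does not supply the key structural lemmas that make the argument go through.
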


Assume again unique $T_L = T_U$. In a prediction mandatory free instance $G=(V,E)$, each $f_i \in E \setminus T_L$ is predicted to be maximal on cycle $C_i$, and each $l \in T_L$ is predicted to be minimal in $X_l$. 
{If these predictions are correct, then the optimal query set is a minimum vertex cover in a bipartite graph $\bar{G} = (\bar{V},\bar{E})$ with $\bar{V} = E$ (excluding trivial edges) and $\bar{E} = \{ \{f_i,e\} \mid i \in \{1,\ldots,l\}, e \in C_i \setminus \{f_i\} \text{ and } I_e \cap I_{f_i} \not= \emptyset\}$ \cite{erlebach14mstverification,megow17mst}. We refer to $\bar{G}$ as the {\em vertex cover instance}.}
Note that {if a query reveals that} an $f_i$ is not maximal on $C_i$ or an $l_i$ is not minimal in $X_{l_i}$, then the vertex cover instance changes.
Because of this and in contrast to the minimum problem, non-adaptively querying $VC$ can lead to a competitiveness worse than $2$.

Let $VC$ be a minimum vertex cover of $\bar{G}$.
The idea of the algorithm is to sequentially query each $e \in VC$ and charge for querying $e$ by a distinct non-queried element $h(e)$ such that $\{e,h(e)\}$ is a witness set.
Querying exactly one element per
distinct witness set implies optimality.
To identify~$h(e)$ for each element $e \in VC$, we use the fact that
K\"onig's Theorem (e.g,~\cite{Biggs1986}) and the duality between minimum vertex covers and maximum matchings in bipartite graphs imply that there is a matching $h$ that maps each $e \in VC$ to a distinct $e' \not\in VC$. 
While the sets $\{e,h(e)\}$ with $e \in VC$ in general are not witness sets, querying $VC$ in a specific order until the vertex cover instance changes guarantees that $\{e,h(e)\}$ is a witness set for each already queried~$e$.  

\begin{restatable}{lem}{MstPhaseTwoOne}
	\label{mst_phase2_1}
	Let $f'_1,\ldots,f'_g$ be the edges in $VC \setminus T_L$ ordered by lower limit non-decreasingly and let $l'_1,\ldots,l'_k$ be the edges in $VC \cap T_L$ ordered by upper limit non-increasingly. 
	Let $b$ be such that each $f'_{i}$ with $i < b$ is maximal in cycle $C_{f'_i}$, then $\{f'_{i},h(f'_{i})\}$ is a witness set for each $i \le b$.
	Let $d$ be such that each $l'_{i}$ with $i < d$ is minimal in cut $X_{l'_i}$, then $\{l'_{i},h(l'_{i})\}$ is a witness set for each $i \le d$.
\end{restatable}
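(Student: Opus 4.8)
The plan is to reduce the claim to the metric witness-set criteria of Lemma~\ref{lemma_mst_witness_set_1} and Lemma~\ref{lemma_mst_witness_set_2} (together with their analogues for cuts), using K\"onig's theorem to pin down the shape of the matching~$h$. Recall that $\bar G$ is bipartite with sides $E\setminus T_L$ and $T_L$, since every $\bar E$-edge joins a non-tree edge~$f_j$ to a cycle edge $e\in C_j\setminus\{f_j\}$, and that the duality between minimum vertex covers and maximum matchings supplies an $h$ that saturates $VC$. Hence, for $f'_i\in VC\setminus T_L$ the partner $h(f'_i)$ is a tree edge, and the only $\bar E$-edges incident to the non-tree edge $f'_i$ have the form $\{f'_i,e\}$ with $e\in C_{f'_i}\setminus\{f'_i\}$ and $I_e\cap I_{f'_i}\ne\emptyset$; so $h(f'_i)\in C_{f'_i}\setminus\{f'_i\}$ intersects $I_{f'_i}$. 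Dually, for $l'_i\in VC\cap T_L$ the partner $h(l'_i)$ is a non-tree edge $f_j$ with $l'_i\in C_j$ (equivalently $f_j\in X_{l'_i}$) and $I_{f_j}\cap I_{l'_i}\ne\emptyset$. Thus each candidate pair already has the shape of a cycle pair $\{f_i,l_i\}$ as in Lemma~\ref{lemma_mst_witness_set_1} (resp.\ a cut pair $\{l,f_j\}$), and it remains only to check the extremality hypothesis those lemmas need.

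Next I would exploit the hypotheses on~$b$ (and symmetrically on~$d$). As long as $f'_1,\dots,f'_{b-1}$ have all turned out maximal on their cycles, no query so far has contradicted a prediction, so $T_L$, every cycle~$C_i$, the instance~$\bar G$, the cover~$VC$, and the matching~$h$ are still those fixed at the start of Phase~2. Moreover, since $C_{f'_i}\setminus\{f'_i\}\subseteq T_L$ and Algorithm~\ref{ALG_mst_part_2} processes $VC\setminus T_L$ before $VC\cap T_L$, no edge of $C_{f'_i}\setminus\{f'_i\}$ has been queried yet, so this cycle is still in its original, prediction-mandatory-free state; and knowing that $f'_1,\dots,f'_{i-1}$ are maximal in their cycles only shrinks the set of possible cycle maxima inside~$C_{f'_i}$. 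This is what lets us apply the static witness criteria to the residual instance for every $i\le b$ (and the symmetric statement for cuts for every $i\le d$).

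It then remains to establish the extremality required by Lemma~\ref{lemma_mst_witness_set_1}: that $h(f'_i)$ can be taken to be the edge of largest upper limit in $C_{f'_i}\setminus\{f'_i\}$ (and, simultaneously, $h(l'_i)$ the non-tree cut edge of smallest lower limit in $X_{l'_i}$, as the cut analogue needs). Here prediction-mandatory-freeness helps: by Lemma~\ref{mst_pred_free_characterization}, $\pred{w}_{f'_i}\ge U_e$ for every $e\in C_{f'_i}\setminus\{f'_i\}$, and since $f'_i$ lies in a \emph{minimum} cover it has positive degree in $\bar G$, so the upper-limit-maximal edge $e^\ast$ of $C_{f'_i}\setminus\{f'_i\}$ satisfies $U_{e^\ast}>L_{f'_i}$ and is a $\bar G$-neighbour of~$f'_i$ (symmetrically on the cut side). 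Starting from an arbitrary maximum matching, alternating-path exchanges that process $VC\setminus T_L$ in increasing order of lower limit and $VC\cap T_L$ in decreasing order of upper limit should rotate~$h$ into one in which each $f'_i$ is matched to this $e^\ast$ and each $l'_i$ to the corresponding extreme cut edge, after which the witness lemmas finish the proof. I expect this last step to be the main obstacle: exhibiting a single maximum matching that simultaneously realizes the maximum-upper-limit partner for every $f'_i$ and the minimum-lower-limit partner for every $l'_i$, since a naive greedy choice on the cycle side can block the desired choice on the cut side, and reconciling the two is exactly where the ordered processing and the cut-off indices $b$ and $d$ in the statement become essential.
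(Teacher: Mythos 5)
Your setup is sound: $\bar G$ is bipartite between $E\setminus T_L$ and $T_L$, so $h(f'_i)\in T_L\cap C_{f'_i}$ and $h(l'_i)\in(E\setminus T_L)\cap X_{l'_i}$ intersect their partners, and the hypotheses on $b,d$ mean the instance has not yet been disturbed while these edges are processed. The gap is exactly where you flag it. You want to finish by feeding $\{f'_i,h(f'_i)\}$ to Lemma~\ref{lemma_mst_witness_set_1}, which requires $h(f'_i)$ to be \emph{the} edge of largest upper limit in $C_{f'_i}\setminus\{f'_i\}$, and you propose rotating the maximum matching via alternating paths so that this holds for every $i$ simultaneously (and dually for the $l'_i$). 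This cannot be made to work: the matching $h$ is computed once in Line~\ref{line_mst_two_compute_vc} and then reused in the charging scheme and both recovery strategies, so the lemma must hold for \emph{whatever} maximum matching the algorithm happens to hold, not for a conveniently rebuilt one; and even setting that aside, no matching in general can pair each $f'_i$ with its extremal cycle edge, since two cycles $C_{f'_i}$ and $C_{f'_j}$ may share the same heaviest tree edge, a conflict an augmenting-path exchange cannot dissolve. So the extremality route is a dead end.

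The paper avoids extremality entirely and instead uses that $h(f'_i)\notin VC$. Look at the cut $X_{h(f'_i)}$ defined by $T_L\setminus\{h(f'_i)\}$. If some non-tree edge $f_j\notin VC$ lay in this cut and intersected $I_{h(f'_i)}$, the edge $\{f_j,h(f'_i)\}\in\bar E$ would be left uncovered by $VC$; hence every relevant non-tree edge in $X_{h(f'_i)}$ belongs to $\{f'_1,\ldots,f'_g\}$. Let $i'$ be the rank of $f'_i$ among $f_1,\ldots,f_l$ and fix any feasible $Q$; the restricted set $Q_{i'-1}$ verifies an MST $T_{i'-1}$ of $G_{i'-1}$ (Lemma~\ref{lemma_mst_1}). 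Those $f'_j$ with $j<i$ are, by hypothesis, maximal on their cycles and hence excluded from $T_{i'-1}$, while $f'_j$ with $j\ge i$ are not yet in $G_{i'-1}$; so $h(f'_i)$ is the only edge of $X_{h(f'_i)}$ that can lie in $T_{i'-1}$, forcing $h(f'_i)\in T_{i'-1}$. Since $f'_i$ and $h(f'_i)$ are then the only cut edges in $T_{i'-1}\cup\{f'_i\}$, the cycle closed by $f'_i$ in $T_{i'-1}$ contains $h(f'_i)$, and Lemma~\ref{lemma_mst_2} gives the witness pair; no choice of extremal partner is needed. The $l'_i$ case is symmetric: $h(l'_i)\notin VC$ forces all relevant tree edges of $C_{h(l'_i)}$ into $VC\cap T_L$, and after contracting the $l'_j$, $j<d$, $j\neq i$, which are already known minimal in their cuts, the edge $l'_i$ \emph{does} become the largest-upper-limit edge on the reduced cycle, so Lemma~\ref{lemma_mst_witness_set_1} applies there. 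The ingredient you were missing is this covering structure of $X_{h(f'_i)}$ and $C_{h(l'_i)}$ forced by $h(f'_i),h(l'_i)\notin VC$, not a rearranged matching.
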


\begin{algorithm}[tb]
	\KwIn{Prediction mandatory free graph $G=(V,E)$ and predictions $\pred{w}_e$ for each {$e \in E$}}
	Compute maximum matching $h$ and minimum vertex cover $VC$ for $\bar{G}$ and initialize $W = \emptyset$\label{line_mst_two_compute_vc}\;
	Let $f'_1,\ldots,f'_g$ and $l'_1,\ldots,l'_k$ be as described in Lemma~\ref{mst_phase2_1}\label{line_mst_two_l_edges}\;
	\For{$e$ chosen sequentially from the {ordered list} $f'_1,\ldots,f'_g,l'_1,\ldots,l'_k$\label{line_mst_two_main_for}}{
		Query $e$, and ensure unique $T_L = T_U$\label{line_mst_two_query_vc}\; 
		If $\{e,h(e)\} \cap W = \emptyset$, add $h(e)$ to $W$. Otherwise query $h(e)$ {and ensure unique $T_L=T_U$}\label{line_mst_two_ensure_three_robust}\;
		If the vertex cover instance changed, execute a recovery strategy and restart at Line $2$\label{line_mst_two_recovery}\label{line_mst_two_restart_if}\;
	}
	\caption{Phase $2$ of the algorithms for MST under uncertainty}
	\label{ALG_mst_part_2}
\end{algorithm}

Algorithm~\ref{ALG_mst_part_2} queries $VC$ in the order $f'_1,\ldots,f'_g,l'_1,\ldots,l'_k$ (Lemma~\ref{mst_phase2_1}) until the vertex cover instance {$\bar{G}$} changes. 
If it does not change, $|VC|$ is a lower bound on $\opt$ and  the algorithm only queries $VC$ and a set $M$ of mandatory elements that were queried as elements of $T_L\setminus T_U$.
Since we can show that each $e \in M$ contributes one to $k_h$, it follows $|\ALG| \le \min\{\opt +k_h,2\cdot \opt\}$.

If the vertex cover instance changes, Line~\ref{line_mst_two_recovery} executes a recovery strategy and restarts.
The challenge here is that, after restarting, an element $h(e)$ that was used to charge for an already queried $e$ might be used again to charge for a different element $e'$.
This can happen if $h(e)$ is not queried and matched to element $e'$ after the restart. To handle this problem the algorithm uses the set $W$ to keep track of the elements $h(e)$ for already queried elements $e$ and uses them in charging schemes when executing the recovery strategies.

\medskip 
\noindent {\bf Recovery strategy A} The first recovery strategy can be used to achieve $2$-robustness and show the MST part of Theorem~\ref{thm:main1.5-2}.
It queries before a restart the set $W$ of non-queried elements that were used to charge for already queried elements, ensures unique $T_L=T_U$ and restarts with re-computed $VC$ and $h$. Since each $h(e) \in W$ forms a witness set with a distinct $e$ (Lemma~\ref{mst_phase2_1}), this strategy ensures~$2$-robustness.

\medskip
\noindent {\bf Recovery strategy B} This strategy can be used to achieve the error-sensitive guarantee $|\ALG| \le \min\{\opt + {5} \cdot k_h, 3 \cdot \opt\}$.
In this case querying $W$ to ensure $2$-robustness might violate $|\ALG| \le \opt + {5} \cdot k_h$. 
Instead of preventing that an element $h(e)$ is used to charge for a second element, the algorithm prevents $h(e)$ from being used to charge for three elements.
To achieve this, Line~\ref{line_mst_two_ensure_three_robust} queries $h(e)$ when it is used to charge for a second element $e'$. 
As $U = \{h(e),e,e'\}$ is a witness set, this ensures $3$-robustness. Furthermore, two elements of $U$ might not be part of an optimal solution. However, executing the Otherwise-part of Line~\ref{line_mst_two_ensure_three_robust} at most $2 \cdot k_h$ times ensures $|\ALG| \le \opt + {5} \cdot k_h$.
Note that the factor is $5$ instead of $4$ because the errors used to bound the number of executions of the Otherwise-part of Line~\ref{line_mst_two_ensure_three_robust} and the errors used to charge for the queried elements of $T_L \setminus T_U$ in Lines~\ref{line_mst_two_query_vc} and~\ref{line_mst_two_ensure_three_robust} are not necessarily disjoint.
This uses Lemma~\ref{mst_phase2_2} and a restart with a specific matching $h'$ that does not contain too many non-queried elements of $W$. 

\begin{restatable}{lem}{MstPhaseTwoTwo}
	\label{mst_phase2_2}
	Let $\bar{G}'= (\bar{V}',\bar{E}')$ be the changed vertex cover instance of Line~\ref{line_mst_two_recovery}, then $\overline{h}=\{\{e,e'\} \in h \mid \{e,e'\} \in \bar{E}'\}$ defines a partial matching for $\bar{G'}$. Let $h'$ be the maximum matching for $\bar{G}'$ computed by completing $\bar{h}$ using
	{a standard augmenting path algorithm~\cite{AhujaMO1993book}} and let $VC'$ be the vertex cover defined by $h'$. Then restarting Algorithm~\ref{ALG_mst_part_2} with $VC=VC'$ and $h=h'$ implies that Line~\ref{line_mst_two_ensure_three_robust} queries at most $2 \cdot k_h$~times.
\end{restatable}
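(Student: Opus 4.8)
The statement splits into two claims. The first — that $\bar h$ is a partial matching of $\bar G'$ — is immediate: $\bar h \subseteq h$ and $h$ is a matching, so $\bar h$ is a matching, and every edge of $\bar h$ lies in $\bar E'$ by definition, hence all its endpoints lie in $\bar V'$. Consequently $\bar h$ can be completed to a maximum matching $h'$ of $\bar G'$ by repeatedly augmenting along alternating paths, and K\"onig's theorem turns $h'$ into a minimum vertex cover $VC'$ together with an injection $e \mapsto h'(e)$ from $VC'$ into $\bar V' \setminus VC'$; so the restart in Line~\ref{line_mst_two_recovery} is well-defined and preserves the invariants of Algorithm~\ref{ALG_mst_part_2}. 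For the quantitative claim, the plan is a charging scheme mapping every execution of the Otherwise-branch of Line~\ref{line_mst_two_ensure_three_robust} to hop-distance contributions, with a blow-up of at most $2$.

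The Otherwise-branch fires, while querying some $e \in VC$, exactly when $\{e, h(e)\} \cap W \neq \emptyset$. Consider the typical case $v := h(e) \in W$ (the case $e \in W$ is symmetric). Then $v$ was inserted into $W$ at an earlier point, when some edge $e'$ was queried for which, at that time, $\{e', v\}$ was a witness set; here $e' \neq e$, since $e'$ is already queried and hence absent from $\bar V'$, whereas $e \in VC \subseteq \bar V'$. As $v$ has at most one partner in any single matching, the matching under which $v$ was assigned to $e'$ differs from the current one, so at least one restart took place between the queries of $e'$ and of $e$. First I would observe that each restart is triggered by a query whose revealed true value makes the vertex cover instance change; by the characterization of prediction-mandatory-free instances (Lemma~\ref{mst_pred_free_characterization}) this means one of the inequalities $\pred{w}_{f_i} \ge U_x$, $\pred{w}_x \le L_{f_i}$ that held for the predicted values is violated by the true values, i.e.\ the queried edge's true value passes over an interval endpoint inside its biconnected component and so contributes at least one to $k_h$. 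Hence restarts — and therefore re-matchings of $W$-elements — are caused only by hop events. Moreover, once $v$ is queried in the Otherwise-branch it leaves $\bar V'$ permanently, so distinct Otherwise-executions correspond to distinct re-matched $W$-elements.

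The delicate step is to bound \emph{how many} $W$-elements a single restart can re-match, and this is precisely why we extend $\bar h$ rather than recompute a fresh maximum matching (which could re-match all of $W$). The key point is that right after a restart $\bar h$ matches \emph{no} element of $W$: each $v \in W$ is still unqueried, but its former partner has been queried and is absent from $\bar V'$, so $v$ is free in $\bar h$. Therefore $v$ becomes matched in $h'$ only when one of the augmenting paths used to complete $\bar h$ has $v$ as an endpoint, and I would bound the number of such paths — equivalently the growth $|h'| - |\bar h|$ on the \enquote{newly affected} part of the instance — by the amount by which $\bar G'$ differs from $\bar G$, which in turn is at most a constant times the hop contribution of the triggering query. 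This uses the structural description, deferred to Appendix~\ref{appx:mst}, of exactly how $T_L$, the cycles $C_i$ and the cuts $X_e$ change when a query contradicts the prediction, together with \cite{megow17mst} and Lemmas~\ref{lemma_mst_witness_set_1} and~\ref{lemma_mst_witness_set_2}, which pin down which parts of the old matching survive as genuine witness pairs. Summing over all restarts and including the factor $2$ — which also absorbs the overlap with the hop units used to charge the mandatory $T_L \setminus T_U$ edges queried in Lines~\ref{line_mst_two_query_vc} and~\ref{line_mst_two_ensure_three_robust}, the reason the bound in Lemma~\ref{mst_end_of_phase_2} is $5k_h$ rather than $4k_h$ — gives that the Otherwise-branch runs at most $2k_h$ times.

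I expect the main obstacle to be exactly this last combinatorial estimate: pinning down precisely which vertices and edges of the bipartite vertex cover instance can change when a query reveals a wrong prediction in the MST setting, and arguing that the augmenting-path completion of $\bar h$ cannot chain together many previously established witness pairs through the few newly created features, so that the number of re-matched $W$-elements is genuinely linear in the local hop contribution, uniformly over the whole sequence of restarts. A secondary technical point is to keep the charging collision-free across different restarts, which should follow since each restart's hop units are counted against intervals in the biconnected component of its own triggering edge.
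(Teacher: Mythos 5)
Your high-level architecture matches the paper's: show $\bar h$ is a partial matching, reduce the count of Otherwise-executions to the number of $W$-elements re-matched by the augmenting-path completion, and then bound that by hop errors summed over restarts with pairwise-disjoint charge sets. Two points, however, are genuine gaps rather than deferrals.

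First, the factor $2$ does not come from the overlap that inflates the bound in Lemma~\ref{mst_end_of_phase_2} from $4k_h$ to $5k_h$; that overlap is a separate matter handled later. The factor $2$ here comes from the elementary fact that each augmenting-path iteration increases the matching size by one and therefore newly matches \emph{at most two} previously unmatched vertices of $\bar G'$. So the chain the paper actually uses is: Otherwise-executions $\le$ number of re-matched $W$-elements $\le$ total number of vertices newly matched by augmenting paths $\le 2 \cdot (\text{number of augmenting iterations across all restarts}) \le 2 \cdot \sum_i \sum_{e \in H_i} h_e \le 2 k_h$. Your explanation of where the $2$ comes from is off, and this matters because it is the only source of the constant.

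Second, and more seriously, the proposal does not establish the quantitative heart of the claim, namely that per restart the number of augmenting-path iterations (equivalently $|VC'| - |\bar h|$) is at most $\sum_{e \in H} h_e$ where $H$ is the set of edges queried since the previous restart. This is the content of the paper's Lemma~\ref{lemma_restart_rec_B}, proved by starting from the partial cover $\overline{VC}$ of size $|\bar h|$ and showing one can extend it to a vertex cover of $\bar G'$ by adding at most $\sum_{e\in H} h_e$ edges; the case analysis (one case for edges in $\bar E' \setminus \bar E$, two subcases for uncovered edges of $\bar E \cap \bar E'$ depending on whether the $VC$-endpoint is the non-tree or tree edge) and the mutual-exclusion table ensuring each hop unit is charged once are precisely the work you flag as ``the main obstacle'' but leave open. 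Relatedly, your auxiliary claim that ``right after a restart $\bar h$ matches no element of $W$'' is not true in general and is not what the paper uses: a $W$-element re-matched by an augmenting path in an earlier restart can be matched by $\bar h$ in a later restart through a persisting edge. The paper only needs the weaker, local statement that a specific $h(e)$ is unmatched by $\bar h$ in the restart immediately following $e$'s query, which is what forces each re-matching to go through an augmenting iteration.
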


\section{The sorting problem}
\label{sec:sorting}
There is a simple reduction from the sorting problem under uncertainty to the minimum problem.
For each intersecting pair of intervals in the same set in the sorting problem, we construct a corresponding set in the minimum problem, consisting of only this pair.
If we have a solution for the sorting problem, then we clearly can determine the minimum interval in each pair.
Conversely, if we know the minimum interval in each pair, then we can sort the intervals accordingly.
Thus the sorting problem is a particular case of the minimum problem with sets of size 2, and all our algorithmic results for the minimum problem transfer to the sorting problem.
However, the sorting problem has a simpler characterization for mandatory intervals: any interval that contains the true value of another interval in the same set is mandatory~\cite{halldorsson19sortingqueries}.
Still, 
we cannot hope for better guarantees for sorting due to the lower bounds in Theorems~\ref{theo_minimum_combined_lb} and~\ref{theo_lb_sym_diff}.
Moreover, the NP-hardness of the verification problem of the minimum holds for sets of size 2 and, thus, it holds  for sorting overlapping~sets. 

Nevertheless, if we are sorting a single set or disjoint sets, then there is a better algorithm than the ones presented for the minimum problem.
The algorithm  performs at most $\min\{\opt + k, 2 \cdot \opt\}$ queries, for any $k \in \{k_{\#}, k_M, k_h\}$, and thus satisfies Theorem~\ref{thm:sorting:singleset} and is optimal due to Theorem~\ref{thm_lb_error_measure}.
The main fact that makes the sorting problem easier for a single set is that the intersection graph is an interval graph~\cite{lekkeikerker62interval}.
Moreover, witness sets are not influenced by other intervals, because any two intersecting intervals constitute a witness set~\cite{halldorsson19sortingqueries}.
We give a full proof of Theorem~\ref{thm:sorting:singleset} in Appendix~\ref{app:sorting}; in the remainder of the section we present the algorithm and high level arguments on why it obtains the desired result.

To obtain a guarantee of $\opt + k$ for any measure~$k$, the algorithm must trust the predictions as much as possible.
In particular, our algorithm queries all prediction mandatory intervals}.
We can do this and still guarantee 2-robustness because intervals that contain the same predicted value form a clique in the initial interval graph, and in any clique at most one query can be avoided~\cite{halldorsson19sortingqueries}.
However, in the second phase of the framework, when there are no more prediction mandatory intervals and we query a minimum vertex cover and intervals that become known mandatory, we may have different minimum vertex covers, so in order to maintain 2-robustness we must be more careful.
We show that each component of the intersection graph at this point must be a path.
The more intricate case is when we have an even path~$P$, because we have two minimum vertex covers; we devise a charging scheme to decide which of them we query.
This scheme is based on a forest of arborescences that is built according to the relation between prediction mandatory intervals and predicted values, which we define more precisely in the next paragraph.
This forest is then used to partition the prediction mandatory intervals into sets that correspond to cliques in the initial interval graph, in such a way that only roots or children of the roots can be isolated in this clique partition.
(When a root is isolated, we show that we can build a different clique partition for that component without isolated intervals.)
We then show that only the endpoints of~$P$ can be part of this forest of arborescences.
Thus~$P$ along with the isolated vertices in the clique partition that are children of the endpoints of~$P$ constitute an induced subgraph of size at most $|P|+2$.
If the size is $|P| + 1$, then we have to ensure that we query the vertex cover that contains the endpoint that is a parent of an isolated vertex in the clique partition; otherwise we can choose an arbitrary vertex cover.
On the other hand, if we have an odd path, then it has a single minimum vertex cover, and we show that this is always a good choice.

\begin{algorithm}[tb]
\KwIn{Ground set of intervals~$\mathcal{I} = \{I_1, \ldots, I_n\}$, and predictions $\pred{w}_1, \ldots, \pred{w}_n$}
  $\mathcal{E} \leftarrow \emptyset$; \quad $C_1, C_2, \ldots, C_n \leftarrow \emptyset$\;
  \KwLet $\mathcal{I}_P$ be the set of prediction mandatory intervals\label{line:computepm}\;
  \lForEach{$I_i \in \mathcal{I}_P$}{$\pi(i) \leftarrow j$ for some $j \neq i$ with $\pred{w}_j \in I_i$\label{line:findparent}}
  \lWhile{$\exists i \neq j$ with $I_j \subseteq I_i$, or~$I_j$ was queried and $w_j \in I_i$}{query $I_i$\label{line:mandatoryproper}}
  \KwLet $\mathcal{S}$ be the set of intervals in $\mathcal{I}_P$ that were not queried yet\label{line:fixS}\;
  \ForEach{$I_i \in \mathcal{S}$}{
    query $I_i$\label{line:containpredicted}\;
    \lIf{$(I_{\pi(i)}, I_i)$ does not create a cycle in $(\mathcal{I}, \mathcal{E})$}{$\mathcal{E} \leftarrow \mathcal{E} \cup \{(I_{\pi(i)}, I_i)\}$\label{line:addedge}}
  }
  \lWhile{$\exists i \neq j$ where~$I_j$ was queried and $w_j \in I_i$}{query $I_i$\label{line:mandatory}}
  \While{$\mathcal{S} \neq \emptyset$\label{line:cliquepartitionbegin}}{
    \KwLet $I_i$ be a deepest vertex in the forest of arborescences $(\mathcal{I}, \mathcal{E})$ among those in~$\mathcal{S}$\;
    \If{$(I_{\pi(i)}, I_i) \in \mathcal{E}$}{
      $C_{\pi(i)} \leftarrow \{I_{i'} \in \mathcal{S} : (I_{\pi(i)}, I_{i'}) \in \mathcal{E}\}$\;
      \lIf{$I_{\pi(i)} \in \mathcal{S}$}{$C_{\pi(i)} \leftarrow C_{\pi(i)} \cup \{I_{\pi(i)}\}$}
      $\mathcal{S} \leftarrow \mathcal{S} \setminus C_{\pi(i)}$\label{line:cliquepartitionend}\;
    }
    \lElse{$C_i \leftarrow \{I_i\}$; \quad $\mathcal{S} \leftarrow \mathcal{S} \setminus C_i$\label{line:rootisolated}}
  }  
  \While{the problem is unsolved\label{line:vcbeginsort}}{
    \KwLet $P = x_1 x_2 \cdots x_p$ be a component of the current intersection graph which is a path with $p \geq 2$\; \label{line:looppaths}
    \lIf{$p$ is odd}{query $I_{x_2}, I_{x_4}, \ldots, I_{x_{p-1}}$\label{line:pathodd}}
    \Else{
      \lIf{$|C_{x_1}| = 1$}{query $I_{x_1}, I_{x_3}, \ldots, I_{x_{p-1}}$\label{line:patheven1}}
      \lElse{query $I_{x_2}, I_{x_4}, \ldots, I_{x_p}$\label{line:patheven2}}
    }
    \lWhile{$\exists i \neq j$ where~$I_j$ was queried and $w_j \in I_i$}{query $I_i$\label{line:mandatory2}}
  }
\caption{A nicely degrading algorithm for sorting with predictions.}
\label{fig:sorting1cons2rob}
\end{algorithm}

A pseudocode is given in Algorithm~\ref{fig:sorting1cons2rob}.
The first phase of the algorithm consists of Lines~\ref{line:computepm}--\ref{line:mandatory}, in which we query known mandatory and prediction mandatory intervals.
We fix the set $\mathcal{I}_P$ of initial prediction mandatory intervals, and for each interval $I_i \in \mathcal{I}_P$ we assign a {\em parent} $\pi(i)$, meaning that $\pred{w}_{\pi(i)} \in I_i$.
Next we query known mandatory intervals to ensure that the intersection graph becomes a proper interval graph, and let $\mathcal{S}$ be the remaining intervals in $\mathcal{I}_P$.
We then query every $I_i \in \mathcal{S}$ and include in set~$\mathcal{E}$ a directed edge $(I_{\pi(i)}, I_i)$ if that does not create a cycle in the graph $(\mathcal{I}, \mathcal{E})$; this graph $(\mathcal{I}, \mathcal{E})$ is the forest of arborescences previously mentioned.
In Lines~\ref{line:cliquepartitionbegin}--\ref{line:rootisolated}, we partition~$\mathcal{S}$ into cliques of the initial interval graph.
We traverse each component of the forest from the deepest leaves towards the root, so we guarantee that only roots or children of the roots can be isolated in the final partition.
The second phase of the framework consists of Lines~\ref{line:vcbeginsort}--\ref{line:mandatory2}: We use the size of the sets in the clique partition to decide between different minimum vertex covers, and then we query intervals that become known mandatory.

It is clear that this algorithm can be implemented in polynomial time.

\section{Experimental results}
\label{sec:exp}
{We tested the practical performance of our algorithms in simulations and highlight here the results for the minimum problem.} Further results on the MST as well as details on the generation of instances and predictions are  provided in the appendix.
Our instances {were generated} by randomly drawing interval sets from interval graphs,  {obtained from re-interpreted SAT instances from the rich SATLIB library~\cite{hoos2000satlib}.}
Our instances have between~$48$ and $287$ intervals and a variable number of overlapping sets.
For each instance we generated $125$ different predictions while ensuring that the predictions cover a wide range \mbox{of relative errors $k_M/\opt$}.

Figure~\ref{fig_experiments} shows the results of over $230,000$ simulations (instance and predictions pairs).
The figure compares the results of our prediction-based {Algorithms~\ref{ALG_min_beta} and~\ref{ALG_min_alpha} for different choices of the parameter $\gamma$ with the standard {\em witness set algorithm}. The latter sequentially queries witness sets of size two and achieves the best possible competitive ratio of $2$ without predictions~\cite{kahan91queries}.}
The Algorithms~\ref{ALG_min_beta} and~\ref{ALG_min_alpha}, for every selected choice of $\gamma$, outperform the witness set algorithm up to a relative error of approximately $2.8$ and $1.5$, respectively.
For small values of $\gamma$, Algorithm~\ref{ALG_min_beta} outperforms the witness set algorithm even for \emph{every} relative error.
{Further, the parameter $\gamma$ reflects well the robustness-performance tradeoff for both algorithms: } a high value $\gamma$ is beneficial for accurate predictions while a low value for $\gamma$ gives robustness against very inaccurate predictions.
{In the extreme case, $\gamma = |\mathcal{I}|$,  Algorithm~\ref{ALG_min_alpha} directly follows the predictions; while it it is superior for small errors, it gets outperformed by algorithms with smaller $\gamma$ when the relative error is growing.}
The results indicate that Algorithm~\ref{ALG_min_beta} performs better than Algorithm~\ref{ALG_min_alpha}.
For both algorithms the performance gap between the different values for $\gamma$ appears less significant for small relative errors, 
which suggests that selecting $\gamma$ not too close to the maximum value $|\mathcal{I}|$ might be beneficial.
For Algorithm~\ref{ALG_min_beta} the results even suggest that selecting $\gamma = 2$ might be the most beneficial choice.
\begin{figure}[tbh]
	\centering
	
	\begin{tikzpicture}
	    \begin{groupplot}[group style={group size= 2 by 1,horizontal sep = 1.25cm}]
	    	\nextgroupplot[
					,title style={font=\tiny}
					,xlabel={$k_M / \opt$}
					,ylabel={Competitive ratio (mean)}
					,grid=minor
					,xmin = 0
					,ymin = 1
					,label style={font=\tiny},
					,tick label style={font=\tiny}  
					,legend entries={
									Alg.~\ref{ALG_min_beta} ($\gamma = 2$),
									Alg.~\ref{ALG_min_beta} ($\gamma = 3$),
									Alg.~\ref{ALG_min_beta} ($\gamma = 4$),
									Alg.~\ref{ALG_min_beta} ($\gamma = 8$),
									Alg.~\ref{ALG_min_beta} ($\gamma = \mathcal{I}$),
									Alg.~\ref{ALG_min_alpha} ($\gamma = 2$),
									Alg.~\ref{ALG_min_alpha} ($\gamma = 3$),									
									Alg.~\ref{ALG_min_alpha} ($\gamma = 4$),										
									Alg.~\ref{ALG_min_alpha} ($\gamma = 8$),					 					
									Alg.~\ref{ALG_min_alpha} ($\gamma = 12$),									
									Alg.~\ref{ALG_min_alpha} ($\gamma = |\mathcal{I}|$),	
									Witness Set Alg.}
					,legend style={at={(axis cs:4.9,2.1)},anchor= east, font = \tiny}
					,table/col sep=comma,
					width=10cm,
					height=5.5cm]
				\addplot[blue!60!black,mark color=white,mark=halfsquare*] table [x = Error_bin, y = PredictionOptimalAlgorithm_002_cr] {exp_results/min_error_sum.csv};
				\addplot[cyan!60!black,mark color=white,mark=halfsquare*,mark options={rotate=180}] table [x = Error_bin, y = PredictionOptimalAlgorithm_003_cr] {exp_results/min_error_sum.csv};
				\addplot[yellow!60!black,mark color=white,mark=halfsquare left*] table [x = Error_bin, y = PredictionOptimalAlgorithm_004_cr] {exp_results/min_error_sum.csv};
				\addplot[orange!60!black,mark color=white,mark=halfsquare right*] table [x = Error_bin, y = PredictionOptimalAlgorithm_008_cr] {exp_results/min_error_sum.csv};
				\addplot[red!60!black,mark color=red!60!black,mark=halfsquare right*] table [x = Error_bin, y = PredictionOptimalAlgorithm_292_cr] {exp_results/min_error_sum.csv};
					
				\addplot[blue,mark color=white,mark=halfcircle*,mark options={rotate=180}] table [x = Error_bin, y = ParameterizedAlgorithm_002_cr] {exp_results//min_error_sum.csv};	
				\addplot[cyan,mark color=white,mark=halfcircle*] table [x = Error_bin, y = ParameterizedAlgorithm_003_cr] {exp_results//min_error_sum.csv};	
				\addplot[yellow!75!black,mark color=white,mark=halfcircle*,mark options={rotate=90}] table [x = Error_bin, y = ParameterizedAlgorithm_004_cr] {exp_results/min_error_sum.csv};	
				\addplot[orange,mark color=white,mark=halfcircle*,mark options={rotate=270}] table [x = Error_bin, y = ParameterizedAlgorithm_008_cr] {exp_results/min_error_sum.csv};	
				\addplot[red,mark color=red,mark=*] table [x = Error_bin, y = ParameterizedAlgorithm_012_cr] {exp_results/min_error_sum.csv};	
	 			\addplot[purple!60!black,mark color=purple!60!black,mark=triangle*] table [x = Error_bin, y = ParameterizedAlgorithm_292_cr] {exp_results/min_error_sum.csv};	
	 			\addplot[darkgray,mark color=darkgray,mark=diamond*] table [x = Error_bin, y = WitnessSetAlgorithm_cr] {exp_results/min_error_sum.csv};
	    \end{groupplot}
		
	\end{tikzpicture}
	\caption{Experimental results for the minimum problem under uncertainty. Instances and predictions were grouped into equal size bins ($0.2$) according to their relative error $k_M/\opt$.}
	\label{fig_experiments}
\end{figure}
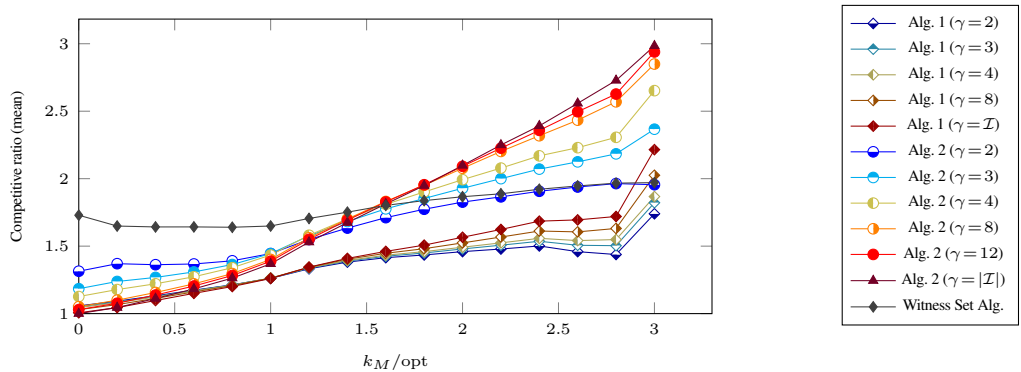

\section*{Final remarks}
In this paper we propose to use (possibly machine-learned) predictions for improving query-based algorithms when coping with explorable uncertainty. Our methods prove that untrusted predictions allow for rigorous worst-case guarantees that {overcome} known lower bounds. We also discuss different measures for the inaccuracy of predictions, which might be of independent interest, and for which we show the learnability of predictions with small error.
By providing trustable guarantees, we contribute to the challenge of building trustable AI systems and the applicability also for safety-critical applications where such guarantees are obligatory. 
It would be interesting to study the power of predictions for other (theoretically interesting and practically relevant) problems and, possibly, for other predictor models. {We also hope to foster research on explorable uncertainty with untrusted predictions for other natural (optimization) problems.} 

While we ask, in our work, for the minimum number of queries to solve a problem {\em exactly}, it would be natural to ask for approximate solutions. The bad news is that for all problems considered here, there is no improvement over the robustness guarantee of $2$ possible even when allowing an arbitrarily large approximation of the exact solution. This follows directly from a lower bound example with two uncertain elements used in the context of finding MSTs in~\cite[Section 10]{megow17mst}. However, it remains open whether an improved consistency resp. error-dependent competitive ratio is possible.

In contrast to our adversarial model, one may assume that the realization of an uncertain value is drawn randomly according to some known distribution. Improved results are possible {\em in expectation}, as has been shown for sorting~\cite{ChaplickHLT20} and the minimum problem~\cite{BampisDEdLMS21},
as well as a related scheduling problem~\cite{LeviMS19}. Still, it would be interesting whether learning-augmented algorithms can improve upon such probabilistic results and can add a robustness guarantee that holds for any realization. 

\medskip \noindent {\bf Acknowledgement.} We thank Alexander Lindermayr for his great support in conducting the experiments, and Michael Hoffmann for
very helpful discussions in the early phase of this research.

\newpage 
\normalsize
\appendix

\section{Appendix {for Distance Measures} and Lower Bounds (Section \ref{sec:prelim})}
\subsection{Lower bound on the consistency-robustness tradeoff}

\ThmLBTradeoffWithoutError*

\begin{proof}
We state the proof for the minimum problem first.
Assume, for the sake of contradiction, that there is a deterministic $\beta$-robust algorithm that is $\alpha$-consistent with $\alpha=1 + \frac{1}{\beta}-\eps$, for some $\eps>0$.
Consider the instance in Figure~\ref{fig_combined_lb_min_single_set} with $\beta+1$ intervals and a single set. 
The algorithm must query the intervals $\{I_{1}, \ldots, I_{\beta}\}$ first as otherwise, it would query $\beta+1$ intervals in case all predictions are correct, while there is an optimal query set of size $\beta$.  
Suppose w.l.o.g.\ that the algorithm queries the intervals $\{I_{1}, \ldots, I_{\beta}\}$ in order of increasing indices. Consider the adversarial choice $w_i = \pred{w}_i$, for $i = 1, \ldots, \beta - 1$, and then $w_{\beta} \in I_0$ and $w_0 \notin I_1 \cup \ldots \cup I_{\beta}$.
This forces the algorithm to query also~$I_0$, while an optimal solution only queries~$I_0$.
Thus any such algorithm has robustness at least~$\beta+1$, a contradiction.

The second part of the theorem directly follows from the first part and the known general lower bound of $2$ on the competitive ratio~\cite{erlebach08steiner_uncertainty,kahan91queries}. Assume there is an $\alpha$-consistent deterministic algorithm with some $\alpha=1 + \frac{1}{\beta'}$, for some $\beta'\in [1,\infty)$. Consider the instance above with $\beta=\beta'-1$. Then the algorithm has to query intervals $\{I_{1}, \ldots, I_{\beta}\}$ first to ensure $\alpha$-consistency as otherwise it would have a competitive ratio of $\frac{\beta+1}{\beta}>1+\frac{1}{\beta'}=\alpha$ in case that all predictions are correct. By the argumentation above, the robustness factor of the algorithm is at least $\beta+1=\beta'=\frac{1}{\alpha-1}$.

The same arguments can be used for the sorting problem, the only difference is that we take the input sets
$\{I_0,I_i\}$ for $1\le i\le \beta$.

For the MST problem, we first translate the above construction for the minimum problem to the maximum problem
(defined in the obvious way) and then consider the MST instance consisting of a single cycle whose edges are
associated with the weight intervals $I_0,I_1,\ldots,I_\beta$.
\end{proof}

\begin{figure}[b]
  \centering
  \subfigure[]{\label{fig_combined_lb_min_single_set}
  \begin{tikzpicture}[line width = 0.3mm, scale=0.8]
    \intervalpr{$I_1$}{1}{3}{0}{2.5}{2.5}
    \intervalpr{$I_2$}{1}{3}{0.5}{2.5}{2.5}
    \intervalpr{$I_\beta$}{1}{3}{1.5}{2.5}{1.5}
    \path (2, 0.5) -- (2, 1.6) node[font=\normalsize, midway, sloped]{$\dots$};

    \intervalpr{$I_0$}{0}{2}{-0.5}{1.5}{0.5}
  \end{tikzpicture}}\quad
  \subfigure[]{\label{fig_lb_wrong_predictions}
  \begin{tikzpicture}[line width = 0.3mm, scale=0.8]
    \intervalpr{$I_1$}{0}{2}{0}{0.5}{0.5}
    \intervalpr{$I_2$}{0}{2}{0.5}{0.5}{0.5}
    \intervalpr{$I_n$}{0}{2}{1.5}{0.5}{1.5}
    \path (1, 0.5) -- (1, 1.6) node[font=\normalsize, midway, sloped]{$\dots$};

    \intervalpr{$I_{n+1}$}{1}{3}{2}{2.5}{2.5}
    \intervalpr{$I_{n+2}$}{1}{3}{2.5}{2.5}{2.5}
    \intervalpr{$I_{2n}$}{1}{3}{3.5}{2.5}{2.5}
    \path (2, 2.5) -- (2, 3.6) node[font=\normalsize, midway, sloped]{$\dots$};
  \end{tikzpicture}}\quad
  \subfigure[]{\label{fig_lb_error_measure}
  \begin{tikzpicture}[line width = 0.3mm, scale=0.8]
    \intervalpr{$I_1$}{0}{2}{0}{1.5}{1.5}
    \intervalpr{$I_2$}{1}{3}{0.5}{1.5}{2.5}
  \end{tikzpicture}}
   \caption{Instances for lower bounds.
  Red crosses indicate predicted values, and green circles show correct values.
  \subref{fig_combined_lb_min_single_set}~Lower bound on robustness-consistency tradeoff.
  \subref{fig_lb_wrong_predictions} Lower bound based on the number of inaccurate predictions.
  \subref{fig_lb_error_measure}~Lower bound based on error measures.
  \label{fig_lowerbounds}
  }
\end{figure}
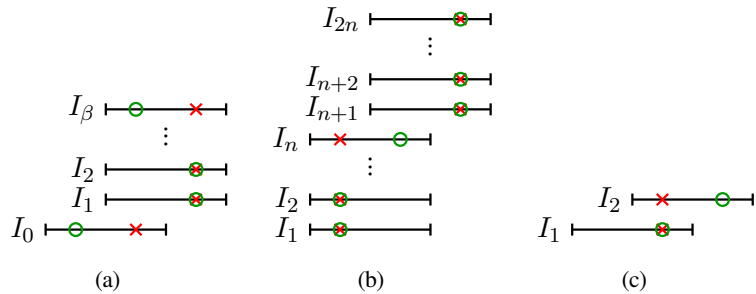

\subsection{Number of inaccurate predictions}

Let $k_{\#}$ denote the \emph{number of inaccurate predictions}, i.e., the number of intervals $I_i \in \mathcal{I}$ with $w_i \not= \w_i$.

This is a very natural but impractical prediction measure as the following theorem shows. It rules out using predictions to improve the competitive ratio on the known lower bound of $2$.

\begin{theorem}\label{theo_lb_wrong_predictions}
  If $k_{\#} \geq 1$, then any deterministic algorithm for the minimum, sorting or MST problem under uncertainty has competitive ratio $\rho\geq 2$.
\end{theorem}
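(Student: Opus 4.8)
The plan is to adapt the classical two-interval lower bound (the one already recalled in the paragraph on competitive analysis) and show that a single inaccurate prediction suffices to kill it, so the predictions cannot help an adversary-resistant algorithm beat the ratio $2$. First I would set up the instance shown in Figure~\ref{fig_lb_wrong_predictions}: take $2n$ intervals where $I_1,\dots,I_n$ are identical copies of $(0,2)$ and $I_{n+1},\dots,I_{2n}$ are identical copies of $(1,3)$, with all predicted values chosen in the overlap region, say $\pred{w}_i = 0.5$ for $i\le n$ and $\pred{w}_i = 2.5$ for $i>n$; for the minimum problem the sets are the obvious pairs (or one big set), and for sorting the sets are pairs, and for MST one replaces the pair structure by a cycle/parallel-edge gadget exactly as in the proof of Theorem~\ref{theo_minimum_combined_lb}. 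The key observation is that the $n$ left intervals form a clique of mutually overlapping intervals with identical predictions, as do the $n$ right intervals, and the adversary will keep $2n-1$ predictions correct, changing only one value.

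The core argument is an adversary strategy against any deterministic algorithm. Since all intervals within each clique are indistinguishable a priori, consider the first interval the algorithm queries, and by symmetry assume it is in the left clique, say $I_j$ with $j\le n$. The adversary reveals $w_j\in I_j\cap I_{j'}$ for every other left interval $I_{j'}$ (e.g.\ $w_j = 0.5$, i.e.\ the predicted value) — so this query is "wasted" in the sense that within the left clique the algorithm has learned nothing that lets it avoid further queries, and in fact $I_j$ turns out to be mandatory. The algorithm is now in exactly the same position with respect to the remaining $n-1$ left intervals; the adversary repeats this, forcing the algorithm to query all $n$ left intervals, each returning its predicted value. Only after all left intervals are queried does the adversary "cheat" on a single right interval: it sets the true value of exactly one right interval $I_{n+t}$ (the last one the algorithm is about to query, or any one whose query the optimal solution would avoid) to lie outside the overlap, while every other right interval truly contains the relevant minimum. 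A careful bookkeeping shows that, in the resulting realization, the optimal query set queries only the $n$ left intervals (or $n$ queries total), while the algorithm is forced into roughly $2n-1$ queries — and the only prediction that was wrong is the one value of $I_{n+t}$, so $k_\#=1$. Letting $n\to\infty$ drives the ratio to $2$.

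More precisely, I would argue the clique fact cleanly as a lemma-free inline step: in a clique of pairwise-overlapping intervals sharing a common predicted value, if the algorithm has queried a strict subset and every queried value fell in the pairwise intersection of the unqueried ones, then the algorithm cannot certify the minimum of that clique and must query at least all-but-one of them; combined with the adversary always returning the predicted (intersection) value, this yields $n-1$ forced queries among the left clique beyond the "free" one, and the asymmetric treatment of the right clique (all-correct except one) makes the optimum pay only once for the right side. Summing, $|\ALG|\ge 2n-1$ while $\opt = n$, hence $\rho \ge (2n-1)/n \to 2$. For MST the same counting goes through on the standard cycle gadget from \cite{megow17mst}, where each "interval" is an edge weight and "querying all but one in a clique" corresponds to identifying the max-weight edge on a cycle.

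The main obstacle I expect is the asymmetry needed to get the ratio up to $2$ rather than only $3/2$: a naive symmetric instance (force the algorithm through both cliques, cheat on one) gives only $|\ALG|\approx 2n$ vs.\ $\opt\approx n+1$, which already tends to $2$, so in fact the genuinely delicate point is merely ensuring that a \emph{single} wrong prediction is enough — i.e.\ that the adversary can keep $2n-1$ values equal to their predictions while still forcing the full $2n-1$ (or so) queries. The resolution is that the left clique is handled entirely with correct predictions (the algorithm's own suboptimal query order is what costs it, not any prediction error), and only the right clique needs one perturbation to make the optimum asymmetric. I would also double-check the open-interval convention and the "never query trivial/previously-queried intervals" assumption do not let the algorithm escape, and verify the reduction to sorting (pairs $\{I_i,I_{n+j}\}$) and to MST preserves $k_\#=1$.
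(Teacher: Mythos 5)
Your adversary strategy does not actually force the algorithm to query the right side, so the claimed bound $|\ALG| \ge 2n-1$ is unsupported. In the configuration of Figure~\ref{fig_lb_wrong_predictions}, if the adversary keeps \emph{all} left predictions correct and reveals $w_j = \pred{w}_j$ ($\approx 0.5$) for every $j \le n$, then each queried left interval $I_j$ is discovered to have value below $L_{n+k}$ for every right interval $I_{n+k}$. Under the set structure the paper actually uses, $S_i = \{I_i, I_{n+1},\ldots,I_{2n}\}$, this certifies $I_i$ as the minimum of $S_i$ and resolves that set outright (and likewise for the sorting pairs $\{I_i, I_{n+k}\}$). So a left-first algorithm terminates after exactly $n$ queries, never looks at the right side, and your ``cheat on a right interval'' is never revealed: $|\ALG| = n = \opt$, ratio $1$. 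Your claim that $I_j$ ``turns out to be mandatory'' and the query is ``wasted'' is false under this set structure, because with a correct left prediction $w_j \notin I_{n+k}$ for all $k$, so $I_j$ is not mandatory by Lemma~\ref{lema_mandatory_min}. And if you add left-left sets to make your clique argument bite, then $\opt$ is forced to query essentially the whole left clique as well (each $w_i$ lies in every other $I_{i'}$), which destroys the bound $\opt = n$.

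The repair, which is what the paper does, is to place the single wrong prediction on a \emph{left} interval, namely the last one queried on whichever side the algorithm finishes first. Because $S_n$ contains both $I_n$ and $I_{2n}$, which overlap and form a witness pair, the algorithm cannot resolve $S_n$ without querying one of them. The adversary reveals correct values until that moment; if $I_n$ is queried before $I_{2n}$, it then sets $w_n$ inside the overlap $(1,2)$ --- this is the one wrong prediction. Now all $n$ left queries are useless for $S_n$, all right intervals must still be queried to resolve it, so $|\ALG| = 2n$, while $\opt = n$: the right values are $\ge U_i$ for every left $I_i$, so querying $\{I_{n+1},\ldots,I_{2n}\}$ alone certifies each $I_i$ as the minimum of $S_i$ (the symmetric case is handled symmetrically). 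The point you missed is that the cheat must land on the interval that \emph{closes} the algorithm's first side, not on the untouched other side.
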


\begin{proof}
First, we discuss the proof for the minimum problem.
Consider $2n$ intervals as depicted in Figure~\ref{fig_lb_wrong_predictions} and sets 
$S_i=\{I_i, I_{n+1}, I_{n+2}, \ldots, I_{2n}\}$, for $i = 1, \ldots, n$.
Assume w.l.o.g.\ that the algorithm queries the left-side intervals in the order $I_1, I_2, \ldots, I_n$ and the right side in the order $I_{n+1}, I_{n+2}, \ldots, I_{2n}$.
Before the algorithm queries~$I_n$ or~$I_{2n}$, the adversary sets all predictions as correct, so the algorithm will eventually query~$I_n$ or~$I_{2n}$.
If the algorithm queries~$I_n$ before~$I_{2n}$, then the adversary chooses a value for $I_n$ that forces a query in $I_{n+1}, \ldots, I_{2n}$, and the {predicted} values for the remaining right-side intervals as correct, so the optimum solution only queries $I_{n+1}, \ldots, I_{2n}$.
A symmetric argument holds if the algorithm queries~$I_{2n}$ before~$I_n$.

For the sorting problem, use the same intervals but take the $n^2$ sets
$\{I_i,I_j\}$ for $1\le i\le n, n+1\le j\le 2n$.

For the MST problem, first translate the above construction to the maximum problem.
Call the resulting intervals $I_1',\ldots,I_{2n}'$, and note that both the true and
the predicted values of $I_i'$ for $1\le i\le n$ are larger than both the real
and the predicted values of $I_j'$ for $n+1\le j\le 2n$.
Consider a graph that consists of a path with $n$ edges with weight intervals
$I_{n+1}',\ldots,I_{2n}'$, and let $s$ and $t$ denote the two end vertices
of that path. Then add $n$ parallel edges between $s$ and $t$ with weight
intervals $I_1',\ldots,I_n'$. The adversary then proceeds as in the construction
for the minimum problem.
\end{proof}

\subsection{Mandatory query distance}

\HopDistanceMandatoryDistance*

\begin{proof}
We first discuss the minimum problem. Consider an instance with uncertainty intervals~$\mathcal{I}$, 
true values $w$ and predicted values $\pred{w}$. 
Recall that $\mathcal{I}_P$ and
$\mathcal{I}_R$ are the sets of prediction mandatory elements and real mandatory elements, respectively.
 Observe that $k_M$ counts the intervals that
are in $\mathcal{I}_P\setminus \mathcal{I}_R$ and those that are in $\mathcal{I}_R\setminus \mathcal{I}_P$.
We will show the claim that, for
every interval $I_i$ in those sets, there is an interval $I_j$ such that
the value of $I_j$ passes over $L_i$ or $U_i$ (or both) when going
from $\w_j$ to $w_j$.
This means
that each interval $I_i\in \mathcal{I}_P \sym \mathcal{I}_R$ is mapped to a unique
pair $(j,i)$ such that the value of $I_j$ passes over at least one endpoint of~$I_i$, and
hence each such pair contributes at least one to~$k_h$. This implies $k_M \le k_h$.

It remains to prove the claim. Consider an $I_i\in \mathcal{I}_P\setminus \mathcal{I}_R$. (The argumentation
for intervals in $\mathcal{I}_R\setminus \mathcal{I}_P$ is symmetric, with the roles of $w$ and $\w$ exchanged.)
As $I_i$ is not in $\mathcal{I}_R$, replacing all intervals
in $\mathcal{I}\setminus\{I_i\}$ by their true values yields an instance
that is solved.
This means that in every set $S\in\mathcal{S}$ that contains $I_i$, one of the following
cases holds:
\begin{itemize}
\item[(a)] $I_i$ is known not to be the minimum of $S$ {w.r.t.\ true values $w$}. It follows that there
is an interval $I_j$ in $S$ with $w_j\le L_i$.
\item[(b)] $I_i$ is known to be the minimum of $S$ {w.r.t.\ true values $w$}. It follows that all
intervals $I_j\in S\setminus\{I_i\}$ satisfy $w_j\ge U_i$.
\end{itemize}
As $I_i$ is in $\mathcal{I}_P$, replacing all intervals
in $\mathcal{I}\setminus\{I_i\}$ by their predicted values yields an instance
that is not solved.
This means that there exists at least one set $S'\in\mathcal{S}$ that contains $I_i$ and satisfies
the following:
\begin{itemize}
\item[(c)] All intervals $I_j$ in $S'\setminus\{I_i\}$ satisfy $\w_j>L_i$, and there
is at least one such $I_j$ with $L_i<\w_j<U_i$.
\end{itemize}
If $S'$ falls into case (a) above, then by (a) there is an interval
$I_j$ in $S'$ with $w_j\le L_i$, and by (c) we have $\w_j>L_i$. This
means that the value of $I_j$ passes over~$L_i$.
If $S'$ falls into case (b) above, then by (c) there exists an
interval $I_j$ in $S'$ with $\w_j < U_i$, and by (b) we have $w_j\ge U_i$.
Thus, the value of $I_j$ passes over~$U_i$. This establishes the claim,
and hence we have shown that $k_M\le k_h$ for the minimum problem. 

We show in Appendix~\ref{app:sorting} that every instance
of the sorting problem can be transformed into an equivalent instance
of the minimum problem on the same uncertainty intervals. The transformation
ensures that, for any two intervals $I_i$ and $I_j$, there exists a set
in the sorting instance that contains both $I_i$ and $I_j$ if and only if
there exists a set in the minimum instance that contains both $I_i$ and $I_j$.
Therefore, the error measures for both instances are the same, and the
above result for the minimum problem implies that $k_M\le k_h$ also holds
for the sorting problem.

Finally, we consider the MST problem. Consider an instance $G=(V,E)$ with
uncertainty interval $I_e=(L_e,U_e)$, true value $w_e$ and predicted
value $\w_e$ for $e\in E$.
Let $E_P$ and $E_R$ be the mandatory queries with respect to
the predicted and true values, respectively. Again, $k_M$ counts the edges
in $E_P\sym E_R$.
We claim that, for
every interval $I_e$ of an edge $e$ in this set, there is an interval $I_g$
of an edge $g$ such that
the value of $I_g$ passes over $L_e$ or $U_e$ (or both) when going
from $\w_e$ to $w_e$ and $g$ lies on a cycle with~$e$.
This means
that each edge $e\in E_P \sym E_R$ is mapped to a unique
pair $(e,g)$ such that~$g$ lies in the same biconnected component
as $e$ and the value of $I_g$ passes over at least one endpoint of~$I_e$, and
hence each such pair contributes at least one to~$k_h$. This implies $k_M \le k_h$.

It remains to prove the claim. Consider an edge $e\in E_P\setminus E_R$. (The argumentation
for edges in $E_R\setminus E_P$ is symmetric, with the roles of $w$ and $\w$ exchanged.)
As $e$ is not in $E_R$, replacing all intervals $I_g$
for $g\in E\setminus\{e\}$ by their true values yields an instance
that is solved.
This means that for edge $e$ one of the following
cases applies:
\begin{itemize}
\item[(a)] $e$ is known to be in the MST. Then there is a cut $X_e$
containing edge $e$ (namely, the cut between the two vertex sets
obtained from the MST by removing the edge~$e$) such that $e$ is known to
be a minimum weight edge in the cut, i.e., every other edge $g$ in the cut
satisfies $w_g\ge U_e$.
\item[(b)] $e$ is known not to be in the MST. Then there is a cycle $C_e$
in $G$ (namely, the cycle that is closed when $e$ is added to the MST)
such that $e$ is a maximum weight edge in $C_e$, i.e., every other
edge $g$ in the cycle satisfies $w_g\le L_e$.
\end{itemize}
As $e$ is in $E_P$, replacing all intervals $I_g$
for $g\in E\setminus\{e\}$ by their predicted values yields an instance~$\Pi$
that is not solved. Let $T'$ be the minimum spanning tree of
$G'=(V,E\setminus\{e\})$ for~$\Pi$. Let $C'$ be the cycle
closed in $T'$ by adding $e$, and let $f$ be an edge with the largest predicted
value in $C'\setminus\{e\}$. Then there are only
two possibilities for the minimum spanning tree of $G$ for $\Pi$: Either
$T'$ is also a minimum spanning tree of $G$ (if $\w_e\ge \w_f$),
or the minimum spanning tree is $T'\cup\{e\}\setminus\{f\}$.
As knowing whether $e$ is in the minimum spanning tree would
allow us to determine which of the two cases applies, it
must be the case that we cannot determine whether $e$
is in the minimum spanning tree or not without querying~$e$.
If $e$ satisfied case (a) with cut $X_e$ above, then there must be an edge
$g$ in $X_e\setminus\{e\}$ with $\w_g<U_e$, because otherwise
$e$ would also have to be in the MST of $G$ for $\Pi$, a contradiction.
Thus, the value of $I_g$ passes over $U_e$.
If $e$ satisfied case (b) with cycle $C_e$ above, then there must be an edge
$g$ in $C_e\setminus\{e\}$ with $\w_g>L_e$, because otherwise
$e$ would also be excluded from the MST of $G$ for $\Pi$, a contradiction.
Thus, the value of $I_g$ passes over $L_e$.
This establishes the claim,
and hence we have shown that $k_M\le k_h$ for the MST problem.
\end{proof}  

\ThmLBMandQueryDist*
\begin{proof}
We first establish the following auxiliary claim, which is slightly weaker than
the statement of the theorem:

\begin{claim}
\label{claim:LBM}%
Let $\gamma'\ge 2$ be a fixed rational number. Every deterministic algorithm for the minimum,
sorting or MST problem has competitive ratio at least $\gamma'$ for $k_M=0$
or has competitive ratio at least $1+\frac{1}{\gamma'-1}$ for arbitrary $k_M$.
\end{claim}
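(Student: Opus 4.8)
This is a lower bound, so the proof will construct an adversarial instance and reason about the behaviour of an arbitrary deterministic algorithm on it. I would give the full argument for the minimum problem (working with the single set from Figure~\ref{fig_combined_lb_min_single_set}, suitably enlarged, or a small family of sets), and then deduce the sorting and MST cases exactly as in the proof of Theorem~\ref{theo_minimum_combined_lb}: for sorting, replace the set by all the two-element subsets it induces; for MST, realise the minimum (equivalently maximum-edge) problem on a single cycle whose edges carry the weight intervals.

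For the minimum problem I would write $\gamma'=\tfrac ab$ in lowest terms, so that $a\ge 2b$ (since $\gamma'\ge 2$), $1+\tfrac1{\gamma'-1}=\tfrac{a}{a-b}$, and then build an instance that generalises Figure~\ref{fig_combined_lb_min_single_set}. It has a distinguished interval $I_0$ that is the predicted minimum of the set but is \emph{not} prediction mandatory (its right endpoint $U_0$ separates it from all other predicted values), a group of intervals overlapping $I_0$ whose predicted values lie above $U_0$ --- so that each of them is prediction mandatory by Lemma~\ref{lema_mandatory_min}(b) --- and a controlled number of further intervals used only to calibrate the two target ratios. The group sizes are chosen as functions of $a,b$ so that the algorithm can always be driven to query the whole set (about $a$ intervals) while the offline optimum of the eventual realisation can be made either about $b$ or about $a-b$, giving competitive ratios $\tfrac ab=\gamma'$ and $\tfrac{a}{a-b}=1+\tfrac1{\gamma'-1}$ respectively. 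The adversary answers queries adaptively, keeping the following invariant: until the algorithm has resolved the set, it can still complete the realisation in one of two ways --- (i) set the remaining true values so that the realised mandatory set equals the prediction mandatory set (hence $k_M=0$), in which case the order in which the algorithm has queried so far already forces it to make at least $\gamma'\cdot\opt$ queries; or (ii) set them so that the prediction mandatory intervals turn out not to be mandatory while $I_0$ (or a calibration interval) becomes mandatory, which makes $k_M$ large but forces at least $(1+\tfrac1{\gamma'-1})\cdot\opt$ queries. Branch~(i) is used when the algorithm ``trusts the predictions'', i.e.\ queries the prediction mandatory intervals before probing $I_0$; branch~(ii) otherwise. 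A short case analysis on the first step at which the algorithm deviates shows that one branch is always available, which yields the claimed dichotomy. For $\gamma'=2$ this collapses to the two-interval argument already used for Theorem~\ref{theo_minimum_combined_lb}.

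\textbf{Main obstacle.} The delicate point is the gadget design for a general rational $\gamma'>2$: the plain two-interval construction only reaches ratio $2$, so one needs enough intervals, with carefully chosen endpoints and predicted values, so that (a) exactly the intended intervals are prediction mandatory, (b) the adversary's deliberately uninformative answers genuinely force the algorithm to exhaust the whole set no matter in which order it queries, and (c) the two completions realise offline optima whose sizes are precisely the fractions $\tfrac ba$ and $\tfrac{a-b}{a}$ of the number of queries the algorithm is forced to make. Proving that the adversary can maintain its invariant against \emph{every} adaptive deterministic algorithm is the technical heart; once that is in place, both ratio computations and the reductions to sorting and MST are routine.
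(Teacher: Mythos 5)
Your two adversary branches have the $k_M$ conditions reversed relative to the paper's proof, and the reversal is not cosmetic: branch~(i) as you state it cannot deliver the claimed ratio. You assert that when the algorithm ``trusts the predictions'' and queries the prediction-mandatory intervals first, the adversary can realize $\mathcal{I}_R = \mathcal{I}_P$ (so $k_M = 0$) and still force $\gamma'\cdot\opt$ queries. But under $\mathcal{I}_R = \mathcal{I}_P$, every prediction-mandatory interval the algorithm has queried is mandatory and hence contained in \emph{every} feasible query set, so those queries are free against $\OPT$; the only place an algorithm can overpay after that is the residual vertex-cover phase, and there any sensible algorithm pays at most a factor of two (disjoint witness pairs). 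For $\gamma' > 2$ this is strictly below $\gamma'$, so branch~(i) proves something false --- indeed Algorithm~\ref{ALG_min_alpha} already guarantees ratio $1 + \frac{1}{\gamma-1} < \gamma$ whenever $k_M = 0$.

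The correct pairing is the mirror image, and this is what the paper's proof of Claim~\ref{claim:LBM} actually establishes on the instance of Figure~\ref{fig_lb_sym_diff_input}, where $\mathcal{I}_P = \{I_{b+1},\ldots,I_a\}$. If the algorithm finishes the prediction-mandatory side (queries $I_a$) before probing $I_b \notin \mathcal{I}_P$, the adversary \emph{breaks} the prediction ($w_a \in I_b$, $w_b \notin I_a$), which flips $\mathcal{I}_R$ to the left side, makes $k_M$ large, and forces $a$ queries against $\opt = b$, i.e.\ ratio $\gamma'$. If instead the algorithm probes $I_b$ before finishing the right side, the adversary \emph{keeps} $w_a = \pred{w}_a$ so that $\mathcal{I}_R = \mathcal{I}_P$ ($k_M = 0$), the left-side probes are wasted, and the ratio is $a/(a-b) = 1+\frac{1}{\gamma'-1}$. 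You were likely led astray by the text of Claim~\ref{claim:LBM} itself, which appears to carry this same $k_M$ swap; the version that is actually proved, and that is needed to contradict the hypotheses in the proof of Theorem~\ref{theo_lb_sym_diff}, reads ``ratio $\geq \gamma'$ for arbitrary $k_M$, or ratio $\geq 1+\frac{1}{\gamma'-1}$ for $k_M = 0$''.

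With the labels swapped back, your plan tracks the paper closely: $\gamma'=a/b$, a gadget generalising Figure~\ref{fig_combined_lb_min_single_set} with $b$ left and $a-b$ right intervals, and the standard reductions to sorting (pairwise sets) and MST (translate to the maximum problem, path plus parallel edges). The paper also avoids your maintained-invariant framing: fixing a w.l.o.g.\ query order within each side collapses the whole adversarial argument to a single case split on whether $I_a$ or $I_b$ is queried first.
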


Let $\gamma' = \frac{a}{b}$, with integers $a \geq 2b > 0$.
We first give the proof of the claim for the minimum problem.
Consider an instance with $a$ intervals as depicted in Figure~\ref{fig_lb_sym_diff_input}, with sets $S_i = \{I_i, I_{b+1}, I_{b+2}, \ldots, I_{a}\}$ for $i = 1, \ldots, b$.
Suppose, w.l.o.g., that the algorithm queries the left-side intervals in the order $I_1, I_2, \ldots, I_b$, and the right side in the order $I_{b+1}, I_{b+2}, \ldots, I_{a}$.
Let the predictions be correct for $I_{b+1}, \ldots, I_{a-1}$, and $w_{1}, \ldots, w_{b-1} \notin I_1 \cup \ldots \cup I_a$.

If the algorithm queries~$I_a$ before~$I_{b}$, then the adversary sets $w_a \in I_{b}$ and $w_{b} \notin I_a$.
(See Figure~\ref{fig_lb_sym_diff_afirst}.)
This forces a query in all left-side intervals, so the algorithm queries all $a$ intervals, while the optimum solution queries only the~$b$ left-side intervals.
Thus the competitive ratio is at least $\frac{a}{b} = \gamma$ for arbitrary~$k_M$.

If the algorithm queries~$I_{b}$ before~$I_a$, then the adversary sets $w_a = \pred{w}_a$ and $w_{b} \in I_a$; see Figure~\ref{fig_lb_sym_diff_bfirst}.
This forces the algorithm to query all remaining right-side intervals, i.e., $a$ queries in total, while the optimum queries only the $a-b$ right-side intervals. 
Note, however, that $k_M = 0$, since the right-side intervals
are mandatory for both predicted and correct values, while $I_{1}, \ldots, I_{b}$ are not mandatory in either of the solutions.
Thus, the competitive ratio is at least $\frac{a}{a-b} = 1 + \frac{1}{\gamma-1}$ for $k_M = 0$.

For the sorting problem, the proof is the same except that we take the $b(a-b)$ input sets $\{I_i,I_j\}$
for $1\le i\le b$, $b+1\le j\le a$.

For the MST problem, we again translate the above construction from the minimum problem to the
maximum problem and then consider an MST instance consisting of a path with edge weight intervals
$I_j$ for $b+1\le j\le a$, and parallel edges with weight intervals $I_i$ for $1\le i\le b$
(where $I_i$ and $I_j$, with slight abuse of notation, refer to the intervals of the maximum
instance) between the endpoints of the path, see Figure~\ref{fig_lb_mst}.
This completes the proof of Claim~\ref{claim:LBM}.

Now we are ready to prove the theorem. The argument is the same for all three problems.
Let $\gamma\ge 2$ be a fixed rational. Assume that there is a deterministic algorithm $A$
that is $\gamma$-robust and has competitive ratio strictly smaller than
$1 + \frac{1}{\gamma-1}$, say $1+\frac{1}{\gamma+\varepsilon-1}$ with $\varepsilon>0$,
for $k_M = 0$. Let $\gamma'$ be a rational number with $\gamma < \gamma' <\gamma+\varepsilon$.
Then $A$ has competitive ratio strictly smaller than $\gamma'$ for arbitrary $k_M$
and competitive ratio strictly smaller than $1+\frac{1}{\gamma'-1}$ for $k_M=0$,
a contradiction to Claim~\ref{claim:LBM}. This shows the first statement of the theorem.

Let $\gamma\ge 2$ again be a fixed rational. Assumume that there is a deterministic algorithm $A$
that has competitive ratio $1 + \frac{1}{\gamma-1}$ for $k_M=0$ and is
$(\gamma-\varepsilon)$-robust, where $\varepsilon>0$.
As there is a lower bound of $2$ on the robustness of any deterministic algorithm
for all three problems, no such algorithm can exist for $\gamma=2$. So we only
need to consider the case $\gamma>2$ and $\gamma-\varepsilon\ge 2$. Let $\gamma'$ be a rational number
with $\gamma-\varepsilon<\gamma'<\gamma$. Then $A$ has competitive ratio strictly smaller than
$1 + \frac{1}{\gamma'-1}$ for $k_M=0$ and competitive ratio strictly smaller than
$\gamma'$ for arbitrary $k_M$, a contradiction to Claim~\ref{claim:LBM}. This shows
the second statement of the theorem.
\end{proof}

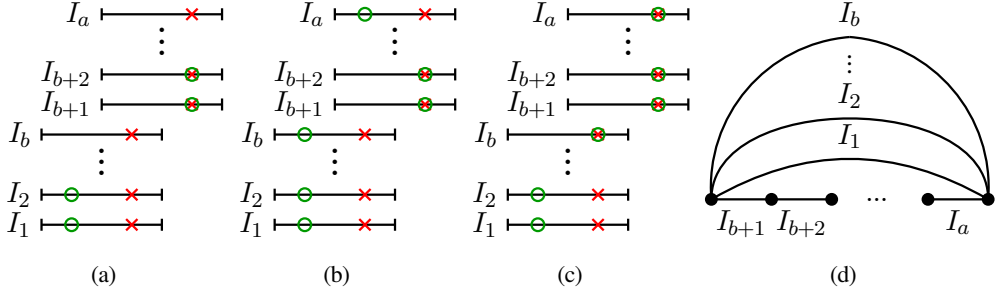
\begin{figure}[tb]
  \centering
  \subfigure[]{\label{fig_lb_sym_diff_input}
  \begin{tikzpicture}[line width = 0.3mm, scale=0.8]
    \intervalpr{$I_1$}{0}{2}{0}{1.5}{0.5}
    \intervalpr{$I_2$}{0}{2}{0.5}{1.5}{0.5}
    \intervalp{$I_b$}{0}{2}{1.5}{1.5}
    \path (1, 0.5) -- (1, 1.6) node[font=\LARGE, midway, sloped]{$\dots$};

    \intervalpr{$I_{b+1}$}{1}{3}{2}{2.5}{2.5}
    \intervalpr{$I_{b+2}$}{1}{3}{2.5}{2.5}{2.5}
    \intervalp{$I_{a}$}{1}{3}{3.5}{2.5}
    \path (2, 2.5) -- (2, 3.6) node[font=\LARGE, midway, sloped]{$\dots$};
  \end{tikzpicture}}%
  \subfigure[]{\label{fig_lb_sym_diff_afirst}
  \begin{tikzpicture}[line width = 0.3mm, scale=0.8]
    \intervalpr{$I_1$}{0}{2}{0}{1.5}{0.5}
    \intervalpr{$I_2$}{0}{2}{0.5}{1.5}{0.5}
    \intervalpr{$I_b$}{0}{2}{1.5}{1.5}{0.5}
    \path (1, 0.5) -- (1, 1.6) node[font=\LARGE, midway, sloped]{$\dots$};

    \intervalpr{$I_{b+1}$}{1}{3}{2}{2.5}{2.5}
    \intervalpr{$I_{b+2}$}{1}{3}{2.5}{2.5}{2.5}
    \intervalpr{$I_{a}$}{1}{3}{3.5}{2.5}{1.5}
    \path (2, 2.5) -- (2, 3.6) node[font=\LARGE, midway, sloped]{$\dots$};
  \end{tikzpicture}}%
  \subfigure[]{\label{fig_lb_sym_diff_bfirst}
  \begin{tikzpicture}[line width = 0.3mm, scale=0.8]
    \intervalpr{$I_1$}{0}{2}{0}{1.5}{0.5}
    \intervalpr{$I_2$}{0}{2}{0.5}{1.5}{0.5}
    \intervalpr{$I_b$}{0}{2}{1.5}{1.5}{1.5}
    \path (1, 0.5) -- (1, 1.6) node[font=\LARGE, midway, sloped]{$\dots$};

    \intervalpr{$I_{b+1}$}{1}{3}{2}{2.5}{2.5}
    \intervalpr{$I_{b+2}$}{1}{3}{2.5}{2.5}{2.5}
    \intervalpr{$I_{a}$}{1}{3}{3.5}{2.5}{2.5}
    \path (2, 2.5) -- (2, 3.6) node[font=\LARGE, midway, sloped]{$\dots$};
  \end{tikzpicture}}  
  \subfigure[]{\label{fig_lb_mst}
  \centering
   \begin{tikzpicture}[thick, scale=0.8, line width = 0.3mm]
    \draw (0, 0) node[vertex]{} -- node[midway, below]{$I_{b+1}$} (1, 0);
    \draw (1, 0) node[vertex]{} -- node[midway, below]{$I_{b+2}$} (2, 0) node[vertex]{};
    \path (2, 0) -- (3.5, 0) node[font=\normalsize, midway]{$\dots$};
    \draw (3.6, 0) node[vertex]{} -- node[midway, below]{$I_{a}$} (4.6, 0) node[vertex]{};
    
    \draw plot (0, 0) to [bend left=30] node[midway, above]{$I_{1}$} (4.6, 0);
    \draw plot (0, 0) to [bend left=90] node[midway, above]{$I_{2}$} (4.6, 0);
    \draw plot (0, 0) to [bend left=45] (2.3, 2.7) node[above]{$I_{b}$};
    \draw plot (2.3, 2.7) to [bend left=45] (4.6, 0);
    \path (2.3, 1.8) -- (2.3, 2.7) node[font=\large, midway, sloped]{$\dots$};
   \end{tikzpicture}}
  \caption{Instance for lower bound based on the mandatory query distance and MST instance.}
  \label{fig_lb_sym_difference}
\end{figure}
  
\subsection{Lower bound for all error measures}

\ThmLBAllErrorMeasures*

\begin{proof}
	Consider the input instance of the minimum or sorting problem consisting of a single set $\{I_1,I_2\}$ as
	shown in Figure~\ref{fig_lb_error_measure}.
If the algorithm starts querying~$I_1$, then the adversary sets $w_1 = \pred{w}_1$ and the algorithm is forced to query~$I_2$. Then $w_2 \in I_2 \setminus I_1$, so the optimum queries only~$I_2$.
It is easy to see that $k_{\#} = k_M = k_h = 1$.
A symmetric argument holds if the algorithm starts querying~$I_2$. In that case, $w_2 = \pred{w}_2$ which enforces to query $I_1$ with $w_1 \in I_1 \setminus I_2$. Taking multiple copies of this instance gives the  result for any $k \leq \opt$.
For the MST problem, we can place copies of the instance (translated to the maximum
problem as usual) in disjoint cycles that are connected by a tree structure.
\end{proof}
\subsection{Learnability of predictions}
\label{sec:learnability}
In this section, we argue about the learnability of our predictions with regard to the different error measures for a given instance of one of the considered problems with the set of uncertainty intervals $\mathcal{I}$.

We assume that the realization $w$ of true values for $\mathcal{I}$ is i.i.d.~drawn from an unknown distribution $\ud$, and
that we can i.i.d.~sample realizations from $\ud$
to obtain a training set.
Let $\hs$ denote the set of all possible
prediction vectors $\pred{w}$, 
with $\pred{w_i} \in I_i$ for each $I_i \in \mathcal{I}$.
Let $k_h(w,\pred{w})$ denote the hop distance of the prediction $\pred{w}$ for the realization with the real values $w$.
Since $w$ is drawn from $\ud$, the value $k_h(w,\pred{w})$ is a random variable. 
Analogously, we consider $k_M(w,\pred{w})$ with regard to the mandatory query distance.
Our goal is to learn predictions $\pred{w}$ that (approximately) minimize the expected error $\EX_{w \sim \ud}[k_h(w,\pred{w})]$ respectively $\EX_{w \sim \ud}[k_M(w,\pred{w})]$.
In the following, we argue separately about the learnability with respect to $k_h$ and $k_M$.

\subsubsection{Learning with respect to the hop distance}

As a main result of this section, we show the case $k=k_h$  of the following theorem.

\TheoLearningHop*

Since each $I_i$ is an open interval, there are infinitely many predictions $\pred{w}$, and, thus, the set $\hs$ is also infinite.
In order to reduce the size of $\hs$, we discretize each $I_i$ by fixing a finite number of potentially predicted values $\pred{w}_i$ of $I_i$.
We define the set $\hs_i$  of
predicted values for $I_i$ as follows.
Let $\{B_1,\ldots,B_l\}$ be the set of lower and upper limits of intervals in $\mathcal{I} \setminus I_i$ that are contained in $I_i$.
Assume that $B_1,\ldots,B_l$ are indexed by increasing value. 
Let $B_0 = L_i$ and $B_{l+1} = U_i$ and, for each $j \in \{0,\ldots,l\}$, let $h_j$ be an arbitrary value of $(B_j,B_{j+1})$.
We define $\hs_i = \{B_1,\ldots,B_l,h_0,\ldots,h_l\}$. 
Since two values $\pred{w}_i,\pred{w}_i' \in (B_j,B_{j+1})$ always lead to the same hop distance for interval $I_i$, there will always be an element of $\hs_i$ that minimizes the expected hop distance for $I_i$.
As $k_h(w,\pred{w})$ is just the sum of the hop distances over all $I_i$, and the hop distances of two intervals $I_i$ and $I_j$ with $i \not= j$ are independent, restricting $\hs$ to the set $\hs_1 \times \hs_2 \times \ldots \times \hs_n$ does not affect the accuracy of our predictions.
Each $\hs_i$ contains at most $\mathcal{O}(n)$ values, and, thus, the discretization reduces the size of $\hs$ to at most $\mathcal{O}(n^n)$.
In particular, $\hs$ is now finite.

To efficiently learn predictions that satisfy Theorem~\ref{theo_learnability_hop}, we again exploit that the hop distances of two intervals $I_i$ and $I_j$ with $i \not= j$ are independent. This is, because the hop distance of $I_i$ only depends on the predicted value $\pred{w}_i$ and the true value $w_i$, but is independent of all $\pred{w}_j$ and $w_j$ with $j\not= i$.
Let $h_i(w_i,\pred{w}_i)$ denote the hop distance of interval $I_i$ for the predicted value $\pred{w}_i$ and the true value $w_i$, and,
for each $i \in \{1,\ldots,n\}$, let $\pred{w}^*_i$ denote the predicted value that minimizes $\EX_{w\sim \ud}[h_i(w_i,\pred{w}_i)]$.
Since the hop distances of the single intervals are independent, the vector $\pred{w}^*$ then minimizes the expected hop distance of the complete instance.
Thus, if we can approximate the individual $\pred{w}^*_i$, then we can show Theorem~\ref{theo_learnability_hop}.

\begin{lemma}
	\label{lemma_learnability_hop}
	For any~$\eps, \delta \in (0,1)$, and any $i \in \{1,\ldots,n\}$, there exists a learning algorithm that, using a training set of size~$$m  \in \mathcal{O}\left(\frac{(\log(n) - \log(\delta))\cdot (2n)^2}{\eps^2}\right),$$ returns a predicted value $\pred{w}_i \in \hs_i$ in time polynomial in~$n$ and~$m$, such that
	$\EX_{w \sim \ud}[h_i(w_i,\pred{w}_i)] \le \EX_{w \sim \ud}[h_i(w_i,\pred{w}_i^*)] + \eps$ holds with probability at least~$(1-\delta)$, where~$\pred{w}_i^* = \arg\min_{\pred{w}_i \in \hs} \EX_{w \sim \ud}[h_i(w_i,\pred{w}_i)]$.
\end{lemma}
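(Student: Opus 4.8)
The plan is to recognize this as a standard uniform-convergence argument once the hypothesis class has been discretized. Fix an index $i$. By the discussion preceding the lemma, it suffices to search over the finite set $\hs_i$, which has $|\hs_i| = \mathcal{O}(n)$ elements, because any two values in the same sub-interval $(B_j,B_{j+1})$ induce the same hop contribution $h_i(w_i,\cdot)$ for every realization $w_i$, so $\hs_i$ contains a value attaining $\min_{\pred{w}_i'\in I_i}\EX_{w\sim\ud}[h_i(w_i,\pred{w}_i')]$. The learning algorithm is empirical risk minimization: sample $m$ i.i.d.\ realizations $w^1,\dots,w^m$ from $\ud$, and output $\pred{w}_i = \arg\min_{\pred{w}_i'\in\hs_i}\frac1m\sum_{t=1}^m h_i(w_i^t,\pred{w}_i')$. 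This is clearly computable in time polynomial in $n$ and $m$: for each of the $\mathcal{O}(n)$ candidates and each of the $m$ samples, computing $h_i$ just means counting how many interval endpoints in $A_i$ the value passes over when going from $\pred{w}_i'$ to $w_i^t$, which is $\mathcal{O}(n)$ work.

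The key step is the concentration bound. The random variable $h_i(w_i,\pred{w}_i')$ takes values in $\{0,1,\dots,2n\}$ (at most $n$ intervals in $A_i$, each contributing at most two to the count for $I_i$), so it is bounded in $[0,2n]$. For a \emph{fixed} candidate $\pred{w}_i'$, Hoeffding's inequality gives
\[
\Pr\!\left[\Big|\tfrac1m\textstyle\sum_{t=1}^m h_i(w_i^t,\pred{w}_i') - \EX_{w\sim\ud}[h_i(w_i,\pred{w}_i')]\Big| > \tfrac{\eps}{2}\right] \le 2\exp\!\left(-\tfrac{2m(\eps/2)^2}{(2n)^2}\right).
\]
Taking a union bound over all $|\hs_i| = \mathcal{O}(n)$ candidates, the empirical error is within $\eps/2$ of the true error \emph{simultaneously} for every candidate with probability at least $1-\delta$, provided
\[
m \;\ge\; \frac{2(2n)^2}{\eps^2}\,\ln\!\frac{2|\hs_i|}{\delta} \;=\; \mathcal{O}\!\left(\frac{(\log n - \log\delta)\cdot(2n)^2}{\eps^2}\right),
\]
which matches the claimed sample complexity. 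On this good event, the standard ERM argument closes the gap: if $\pred{w}_i$ is the ERM output and $\pred{w}_i^*$ is the true minimizer, then
\[
\EX[h_i(w_i,\pred{w}_i)] \le \widehat{h}_i(\pred{w}_i) + \tfrac{\eps}{2} \le \widehat{h}_i(\pred{w}_i^*) + \tfrac{\eps}{2} \le \EX[h_i(w_i,\pred{w}_i^*)] + \eps,
\]
where $\widehat{h}_i$ denotes empirical error; the first and third inequalities use uniform convergence and the middle one uses optimality of the ERM choice on the sample.

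The only genuinely non-routine point — and the main thing to get right — is the boundedness constant: one must argue carefully that $h_i$ ranges over $[0,2n]$ rather than something larger, since the Hoeffding range enters quadratically into $m$. This follows because $A_i$ (the union of sets containing $I_i$, or the biconnected component of $I_i$ for MST) has at most $n$ intervals, and each non-trivial $I_j\in A_i$ contributes at most $1$ for passing over $L_j$ and at most $1$ for passing over $U_j$, while each trivial $I_j$ contributes at most $1$; hence $h_i \le 2n$. Everything else is a textbook finite-hypothesis-class PAC bound. Finally, Theorem~\ref{theo_learnability_hop} for $k=k_h$ follows by applying Lemma~\ref{lemma_learnability_hop} with error parameter $\eps/n$ and confidence parameter $\delta/n$ to each coordinate independently and summing: since $k_h(w,\pred{w}) = \sum_{i=1}^n h_i(w_i,\pred{w}_i)$ and the coordinate-wise minimizers $\pred{w}_i^*$ assemble into the global minimizer $\pred{w}^*$, a union bound over the $n$ coordinates yields the stated guarantee with the sample complexity $m\in\mathcal{O}\big(\frac{(\log n - \log(\delta/n))\cdot(2n)^2}{(\eps/n)^2}\big)$.
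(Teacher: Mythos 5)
Your proposal is correct and follows essentially the same approach as the paper's own proof: both run ERM over the discretized, $\mathcal{O}(n)$-size hypothesis class $\hs_i$, invoke a finite-class uniform-convergence bound for a loss bounded in $[0,2n]$, and arrive at the sample complexity $m=\lceil 2(2n)^2\ln(2|\hs_i|/\delta)/\eps^2\rceil$. The only cosmetic difference is that you spell out the Hoeffding plus union-bound argument explicitly, whereas the paper cites the uniform-convergence property for finite hypothesis classes.
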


\begin{proof}
	We show that the basic \emph{empirical risk minimization (ERM)} algorithm already satisfies the lemma. ERM first i.i.d.~samples a trainingset~$S=\{w^1,\ldots,w^m\}$ of~$m$ true value vectors from~$\ud$.
	Then, it returns the~$\pred{w}_i \in \hs_i$ that minimizes the \emph{empirical error}~$h_S(\pred{w}_i) = \frac{1}{m} \sum_{j=1}^{m} h_i(w^j_i,\pred{w}_i)$.
	
	Recall that, as a consequence of the discretization,~$\hs_i$ contains at most $\mathcal{O}(n)$ values.
	Since~$\hs_i$ is finite, and the error function $h_i$ is bounded by the interval $[0,2n]$, it satisfies the \emph{uniform convergence property}; cf.~\cite{Shalev2014}. (This follows also from the fact that~$\hs_i$ is finite and, thus, has finite VC-dimension; cf.~\cite{Vapnik1992}.) 
	This implies that, for~$$m = \left\lceil \frac{2\log(2|\hs_i|/\delta)(2n)^2}{\eps^2} \right\rceil \in \mathcal{O}\left(\frac{(\log(n) - \log(\delta))\cdot (2n)^2}{\eps^2}\right),$$ it holds~$\EX_{w \sim \ud}[h_i(w_i,\pred{w}_i)] \le \EX_{w \sim \ud}[h_i(w_i,\pred{w}_i^*)] + \eps$ with probability at least~$(1-\delta)$, where~$\pred{w}_i$ is the predicted value learned by ERM (cf.~\cite{Shalev2014,vapnik1999}). 
	As $|\hs_i| \in \mathcal{O}(n)$, ERM also satisfies the running time requirements of the lemma.
\end{proof}

\begin{proof}[Proof of Theorem~\ref{theo_learnability_hop} (Case $k=k_h$)]
	Let $\eps' = \frac{\eps}{n}$ and $\delta' = \frac{\delta}{n}$.
	Furthermore, let $\hs_{\max} = \arg\max_{\hs' \in \{\hs_1,\ldots,\hs_n\}} |\hs'|$.
	To learn predictions that satisfy the theorem, we first sample a training set $S=\{w^1,\ldots,w^m\}$ with $m = \left\lceil \frac{2\log(2|\hs_{\max}|/\delta')(2n)^2}{\eps'^2} \right\rceil$.
	Next, we apply Lemma~\ref{lemma_learnability_hop} to each $\hs_i$ to learn a predicted value $\pred{w}_i$ that satisfies the guarantees of the lemma for $\eps',\delta'$.
	In each application of the lemma, we use the \emph{same} training set~$S$ that was previously sampled.
	
	For each $\pred{w}_i$ learned by applying the lemma, the probability that the guarantee of the lemma is \emph{not} satisfied is less than $\delta'$.
	By the union bound this implies that the probability that at least one $\pred{w}_i$ with $i \in \{1,\ldots,n\}$ does not satisfy the guarantee is upper bounded by $\sum_{i=1}^n \delta' \le n \cdot \delta' = \delta$. 
	Thus, with probability at least $(1-\delta)$, all $\pred{w}_i$ satisfy $\EX_{w \sim \ud}[h_i(w_i,\pred{w}_i)] \le \EX_{w \sim \ud}[h_i(w_i,\pred{w}^*_i)] + \eps'$.
	Since by linearity of expectations $\EX_{w\sim \ud}[k_h(w,\pred{w}^*)] = \sum_{i=1}^n \EX_{w \sim \ud}[h_i(w_i,w^*_i)]$, we can conclude that the following inequality, where $\pred{w}$ is the vector of the learned predicted values, holds with probability at least $(1-\delta)$, which implies the theorem:
	\begin{align*}
	\EX_{w\sim \ud}[k_h(w,\pred{w})] &= \sum_{i=1}^n \EX_{w \sim \ud}[h_i(w_i,\pred{w}_i)]\\
	&\le \sum_{i=1}^n \EX_{w \sim \ud}[h_i(w_i,\pred{w}^*_i)] + \eps'\\
	&\le \left(\sum_{i=1}^n \EX_{w \sim \ud}[h_i(w_i,\pred{w}	^*_i)]\right) + n \cdot \eps'\\
	&\le \EX_{w\sim \ud}[k_h(w,\pred{w}^*)] + \eps.
	\end{align*}
\end{proof}

\subsubsection{Learning with respect to the mandatory query distance}

In this section we discuss the learnability of predictions with respect to $k_M$.
Recall that $k_M$ is defined as $k_M = |\mathcal{I}_P \sym \mathcal{I}_R|$, where $\mathcal{I}_P$ is the set of predictions mandatory elements and $\mathcal{I}_R$ is the set of mandatory elements.
In contrast to learning predictions $\pred{w}$ w.r.t.~$k_h$, an interval $I_i \in \mathcal{I}$ being part of $\mathcal{I}_P \sym \mathcal{I}_R$ depends not only on $\pred{w}_i$ and $w_i$, but on the predicted and true values of $\mathcal{I} \setminus \{I_i\}$. 
Thus, the events of $I_i$ and $I_j$ with $i \not= j$ being part of $\mathcal{I}_P \sym \mathcal{I}_R$ are not necessarily independent.
Therefore, we cannot separately learn the predicted values $\pred{w}_i$ for each $I_i \in \mathcal{I}$.

However, we can still use the discretized $\hs$ as described in the previous section without losing any precision, as shown in the following lemma.
Here, for any vector $\pred{w}$ of predicted values, $\mathcal{I}_{\pred{w}}$ denotes the set of prediction mandatory intervals.
Since we only have algorithms with a guarantee depending on $k_M$ for the sorting and minimum problems, we show the following lemma only for those problems but remark that a similar proof is possible for the MST problem.

\begin{lemma}
	\label{lem_discretization}
	For a given instance of the sorting or minimum problem with intervals $\mathcal{I}$, let $\pred{w}$ be a vector of predicted values that is not contained in the discretized $\hs$.
	Then, there is a $\pred{w}' \in \hs$ such that $\mathcal{I}_{\pred{w}} = \mathcal{I}_{\pred{w}'}$.
\end{lemma}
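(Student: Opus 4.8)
The plan is to define an explicit rounding $\pred{w}\mapsto\pred{w}'$ with $\pred{w}'\in\hs$ and to show that $\pred{w}$ and $\pred{w}'$ are ``combinatorially equivalent'' exactly in the sense that the characterization of prediction mandatory intervals can see. For each $i$, if $\pred{w}_i\in\hs_i$ leave it unchanged; otherwise $\pred{w}_i$ lies in the interior of a unique cell $(B_j,B_{j+1})$ of the partition of $I_i$ used to build $\hs_i$, and set $\pred{w}'_i:=h_j$. Then $\pred{w}'\in\hs$, and for every $i$ the points $\pred{w}_i$ and $\pred{w}'_i$ lie in the same cell $c_i$ of $I_i$'s partition (a singleton if $\pred{w}_i$ is a breakpoint). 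If $\pred{w}$ already lies in $\hs$ there is nothing to prove, so assume it does not.

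I would first record two preservation properties, both consequences of the observation that for any two intervals $I_i,I_j$ every endpoint of $I_i$ lying in $(L_j,U_j)$ is a breakpoint of $I_j$'s partition, so the cell $c_j$ never straddles $L_i$ or $U_i$. \textbf{(R1)} For all $i,j$, $\pred{w}_j\in I_i\iff\pred{w}'_j\in I_i$: since $c_j$ lies entirely on one side of $L_i$ and entirely on one side of $U_i$, the two points $\pred{w}_j,\pred{w}'_j\in c_j$ lie on the same side of each. \textbf{(R2)} For any pair $I_i,I_j$, either the relative order of $\pred{w}_i,\pred{w}_j$ equals that of $\pred{w}'_i,\pred{w}'_j$, or $\pred{w}_j\in I_i$ and $\pred{w}_i\in I_j$ (hence, by (R1), also $\pred{w}'_j\in I_i$ and $\pred{w}'_i\in I_j$): the order can change only if $c_i\cap c_j\neq\emptyset$, and since neither cell straddles an endpoint of the other interval, an intersection forces $c_i\subseteq I_j$ and $c_j\subseteq I_i$, so all four points lie in $I_i\cap I_j$. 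The relation on $(\pred{w},\pred{w}')$ described by (R1) and (R2) is symmetric.

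Next I would invoke the characterizations of prediction mandatory intervals. For sorting, $I_i$ is prediction mandatory iff it contains the predicted value of another interval sharing a set with it, a condition depending only on the containment relations, so (R1) immediately gives $\mathcal{I}_{\pred{w}}=\mathcal{I}_{\pred{w}'}$. For the minimum problem I would first rewrite Lemma~\ref{lema_mandatory_min} (applied to predicted values) as: $I_i$ is prediction mandatory iff there are a set $S\ni I_i$ and an interval $I_j\in S\setminus\{I_i\}$ with $\pred{w}_j\in I_i$ and $\min(\pred{w}_i,\pred{w}_j)=\min_{I_k\in S}\pred{w}_k$ (a short case check shows equivalence with cases (a)--(b) of the lemma). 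By the symmetry just noted, it suffices to show $\mathcal{I}_{\pred{w}}\subseteq\mathcal{I}_{\pred{w}'}$. Let $I_i\in\mathcal{I}_{\pred{w}}$ with witness $S,I_j$; since $\pred{w}_i,\pred{w}_j\in I_i$ we have $v^\ast:=\min(\pred{w}_i,\pred{w}_j)=\min_S\pred{w}\in I_i$. If $\pred{w}'_i=\min_S\pred{w}'$, the same $S,I_j$ still witness $I_i\in\mathcal{I}_{\pred{w}'}$ by (R1). Otherwise take a minimizer $I_{j'}$ of $S$ under $\pred{w}'$; then $\pred{w}'_{j'}<\pred{w}'_i$, so $I_{j'}\neq I_i$ and $\min(\pred{w}'_i,\pred{w}'_{j'})=\min_S\pred{w}'$, and it remains to check $\pred{w}'_{j'}\in I_i$. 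If the relative order of $\pred{w}_i,\pred{w}_{j'}$ matches that of $\pred{w}'_i,\pred{w}'_{j'}$, then $\pred{w}_{j'}<\pred{w}_i$, hence $L_i<v^\ast\le\pred{w}_{j'}<\pred{w}_i<U_i$ and so $\pred{w}_{j'}\in I_i$; if not, (R2) gives $\pred{w}_{j'}\in I_i$ directly. Either way (R1) yields $\pred{w}'_{j'}\in I_i$, so $S,I_{j'}$ witness $I_i\in\mathcal{I}_{\pred{w}'}$.

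The main obstacle is property (R2): coordinate-wise rounding does not in general preserve the relative order of the predicted values, so one cannot simply argue that the predicted minimum of each set is unchanged. The point is that an order inversion can occur only between two intervals that already (mutually) contain each other's predicted value --- a situation in which the mandatory characterization does not care which of them is named the minimum --- and pinning down the cell geometry that makes this precise, including the degenerate case where a predicted value coincides with a breakpoint, is the delicate step. The claimed extension to the MST problem follows the same template, using the characterization of (prediction) mandatory edges via cycles and cuts (Lemmas~\ref{lemma_mst_witness_set_1} and~\ref{lemma_mst_witness_set_2} and the surrounding structure) in place of Lemma~\ref{lema_mandatory_min}.
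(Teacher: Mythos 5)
Your proof is correct and follows essentially the same approach as the paper's: the same coordinate-wise rounding into the discretized $\hs$, the same containment-preservation property (your (R1), which the paper states explicitly right after defining $\pred{w}'$), and the same key observation that an order inversion between the predicted minimizer under $\pred{w}$ and the one under $\pred{w}'$ forces the two intervals to mutually contain each other's predicted values. The paper asserts that last step only inline by an appeal to the construction, whereas you isolate it as (R2) and justify it via the cell geometry; you also fold cases (a) and (b) of Lemma~\ref{lema_mandatory_min} into a single condition, which compresses the paper's two parallel case analyses into one. These are improvements in exposition, but the underlying mechanism is identical to the paper's proof.
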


The lemma implies that, for each $\pred{w}$, there is an $\pred{w}' \in \hs$ that has the same error w.r.t.~$k_M$ as $\pred{w}$.
Thus, there always exists an element $\pred{w}$ of the discretized $\hs$ such that $\pred{w}$ minimizes the expected error over all possible vectors of predicted values. 

\begin{proof}[Proof of Lemma~\ref{lem_discretization}]
	Since the sorting problem can be reduced to the minimum problem (cf. Section~\ref{sec:sorting}), it suffices to show the statement for the minimum problem.
	
	Given the vector of predicted values $\pred{w}$, we construct a vector $\pred{w}' \in \hs$.
	Recall that $\hs = \hs_1 \times \ldots \times \hs_n$ with $\hs_i = \{B_1,\ldots,B_l,h_0,\ldots,h_l\}$ where $B_1,\ldots,B_l$ are the interval borders that are contained in $I_i$ of intervals in $\mathcal{I}\setminus\{I_i\}$ ordered by non-decreasing value, and $h_j$ is an arbitrary value in $(B_j,B_{h+1})$ with $B_0=L_i$ and $B_{l+1}=U_i$. 
	For each $\pred{w}_i$, we construct $\pred{w}'_i$ as follows:
	If $\pred{w}_i = B_j$ for some $j \in \{1,\ldots,l\}$, then we set $\pred{w}_i' = B_j$.
	Otherwise it must hold $\pred{w}_i \in (B_j,B_{j+1})$ for some $j \in \{1,\ldots,l\}$, and we set $\pred{w}'_i = h_j$.
	For each $I_j \in \mathcal{I}$ it holds that $\pred{w}_i \in I_j$ if and only if $\pred{w}_i' \in I_j$.
	
	We show that each $I_i \in \mathcal{I}_{\pred{w}}$ is also contained in $\mathcal{I}_{\pred{w}'}$.
	Since $I_i$ is mandatory assuming true values $\pred{w}$, Lemma~\ref{lema_mandatory_min} implies that there is a set $S$ such that either (i) $\pred{w}_i$ is a true minimum of $S$ and $\pred{w}_j \in I_i$ for some $S_j \in S\setminus\{I_i\}$ or (ii) $\pred{w}_i$ is not a true minimum of $S$ but contains the value $\pred{w}_j$ of a true minimum $I_j$ of set $S$.
	
	Assume $I_i$ satisfies case (i) for the predicted values $\pred{w}$.
	By construction of $\pred{w}'$ it then also holds $\pred{w}'_j \in I_i$.
	Thus, if $I_i$ is the true minimum for the values $\pred{w}'$, then $I_i \in \mathcal{I}_{\pred{w}'}$.
	Otherwise, some $I_j' \in S\setminus\{I_i\}$ must be the true minimum in $S$ for values $\pred{w}'$.
	Since $I_i$ is a true minimum for values $\pred{w}$, it must hold $\pred{w}'_{j'} < \pred{w}'_i$ but $\pred{w}_{j'} \ge \pred{w}_i$.
	By construction, this can only be the case if $\pred{w}'_j,\pred{w}_j \in I_i$.
	This implies that $I_i$ satisfies case (ii) for the values $\pred{w}'$ and, therefore $I_i \in \mathcal{I}_{\pred{w}'}$.

	Assume $I_i$ satisfies case (ii) for the predicted values $\pred{w}$.
	By construction of $\pred{w}'$ it then also holds $\pred{w}'_j \in I_i$.
	Thus, if $I_j$ is the true minimum for the values $\pred{w}'$, then $I_i \in \mathcal{I}_{\pred{w}'}$.
	Otherwise, some $I_{j'} \in S\setminus\{I_j\}$ must be the true minimum in $S$ for values $\pred{w}'$.
	If $j' = i$, then $I_i$ satisfies case (i) for the values $\pred{w}'$ and, therefore, $I_i \in \mathcal{I}_{\pred{w}'}$.
	Otherwise, as $I_j$ is a true minimum for values $\pred{w}$, it must hold $\pred{w}'_{j'} < \pred{w}'_j$ but $\pred{w}_{j'} \ge \pred{w}_j$.
	By construction and since $\pred{w}_j \in I_i$, this can only be the case if $\pred{w}'_{j'},\pred{w}_{j'} \in I_i$.
	This implies that $I_i$ satisfies case (ii) for the values $\pred{w}'$ and, therefore $I_i \in \mathcal{I}_{\pred{w}'}$.

	Symmetrically, we can show that each $I_i \in \mathcal{I}_{\pred{w}'}$ is also contained in $\mathcal{I}_{\pred{w}}$.
	This implies $\mathcal{I}_{\pred{w}} = \mathcal{I}_{\pred{w}'}$.
\end{proof}

Since the discretized $\hs$ is still finite and $k_M$ is bounded by $[0,n]$, we still can apply ERM to achieve guarantees similar to the ones of Theorem~\ref{theo_learnability_hop}.
However, because we cannot learn the $\pred{w}_i$ separately, we would have to find the element of $\hs$ that minimizes the empirical error. 
As $\hs$ is of exponential size, a straightforward implementation of ERM requires exponential running time.
We use this straightforward implementation to proof the case $k=k_M$ for Theorem~\ref{theo_learnability_hop}.

\begin{proof}[Proof of Theorem~\ref{theo_learnability_hop} (Case $k=k_M$)]
	Since $|\hs| \in \mathcal{O}(n^n)$ and $k_M$ is bounded by $[0,n]$, ERM achieves the guarantee of Theorem~\ref{theo_learnability_hop} with a sample complexity of $m \in \mathcal{O}\left(\frac{(n \cdot \log(n) - \log(\delta))\cdot (n)^2}{(\eps)^2}\right)$.
	The prediction $\overline{w} \in \hs$ that minimizes the empirical error $k_S(\pred{w}_i) = \frac{1}{m} \sum_{j=1}^{m} k_M(w^j_i,\pred{w}_i)$, where $S=\{w^1,\ldots,w^m\}$ is the training set, can be computed by iterating through all elements of $S \times H$.
	Thus, the running time of ERM is polynomial in $m$ but exponential in $n$.
\end{proof}

To circumvent the exponential running time, we present an alternative approach.
In contrast to $k_h$, for a fixed realization, the value $k_M$ only depends on $\mathcal{I}_P$.
Instead of showing the learnability of the predicted values, we prove that the set $\mathcal{I}_P$ that leads to the smallest expected error can be (approximately) learned.
To be more specific, let $\mathcal{P}$ be the power set of $\mathcal{I}$, let $\mathcal{I}_w$ denote the set of mandatory elements for the realization with true values $w$, and let $k_M(\mathcal{I}_w, P)$ with $P \in \mathcal{P}$ denote the mandatory query distance under the assumption that $\mathcal{I}_P = P$ and $\mathcal{I}_R = \mathcal{I}_w$.
Since $w$ is drawn from $\ud$, the value $\EX_{w \sim \ud}[k_M(\mathcal{I}_w,P)]$ is a random variable.
As main result of this section, we show the following theorem.

\TheoLearningMan*

Note that this theorem only allows us to learn a set of prediction mandatory intervals $P$ that (approximately) minimizes the expected mandatory query distance. 
It does not, however, allow us to learn the predicted values $\pred{w}$ that lead to the set of prediction mandatory intervals $P$.
In particular, it can be the case, that no such predicted values exist. 
Thus, there may not be a realization with true values $w$ and $k_M(\mathcal{I}_w,P)=0$.
On the other hand, learning $P$ already allows us to execute Algorithm~\ref{ALG_min_alpha} for the minimum problem.
Applying Algorithm~\ref{fig:sorting1cons2rob} for the sorting problem would require knowing the corresponding predicted values $\pred{w}$.

\begin{proof}[Proof of Theorem~\ref{theo_learnability_mandatory}]
	We again show that the basic \emph{empirical risk minimization (ERM)} algorithm already satisfies the lemma. 
	ERM first i.i.d.~samples a trainingset~$S=\{w^1,\ldots,w^m\}$ of~$m$ true value vectors from~$\ud$.
	Then, it returns the~$P \in \mathcal{P}$ that minimizes the \emph{empirical error}~$k_S(P) = \frac{1}{m} \sum_{j=1}^{m} k_M(\mathcal{I}_{w^j},P)$.
	In contrast to the proof of Lemma~\ref{lemma_learnability_hop}, since $\mathcal{P}$ is of exponential size, i.e., $|\mathcal{P}| \in \mathcal{O}(2^n)$, we cannot afford to naively iterate through $\mathcal{P}$ in the second stage of ERM, but have to be more careful.
	
	By definition, $\mathcal{P}$ contains $\mathcal{O}(2^n)$ elements and, thus, is finite.
	Since~$\mathcal{P}$ is finite, and the error function $k_M$ is bounded by the interval $[0,n]$, it satisfies the \emph{uniform convergence property} (cf.~\cite{Shalev2014}).
	This implies that, for~$$m = \left\lceil \frac{2\log(2|\mathcal{P}|/\delta)n^2}{\eps^2} \right\rceil \in \mathcal{O}\left(\frac{(n - \log(\delta))\cdot n^2}{\eps^2}\right),$$ it holds~$\EX_{w \sim \ud}[k_M(\mathcal{I}_{w},P)] \le \EX_{w \sim \ud}[k_M(\mathcal{I}_w,P^*)] + \eps$ with probability at least~$(1-\delta)$, where~$P$ is the set $P\in\mathcal{P}$ learned by ERM (cf.~\cite{Shalev2014,vapnik1999}).

	It remains to show that we can compute the set $P \in \mathcal{P}$ that minimizes the empirical error~$k_S(P) = \frac{1}{m} \sum_{j=1}^{m} k_M(\mathcal{I}_{w^j},P)$ in time polynomial in $n$ and $m$.
	For each $I_i$, let $p_i = |\{\mathcal{I}_{w^j} \mid 1 \le j \le m \land I_i \in  \mathcal{I}_{w^j}\}|$ and let $q_i = m - p_i$.
  	For an arbitrary $P \in \mathcal{P}$, we can rewrite $k_S(P)$ as follows:
  	\begin{align*}
  		k_S(P) &= \frac{1}{m} \sum_{j=1}^{m} k_M(\mathcal{I}_{w^j},P)
  		= \frac{1}{m} \sum_{j=1}^{m} |\mathcal{I}_{w^j} \sym P|\\
  		&= \frac{1}{m} \sum_{j=1}^m |P\setminus \mathcal{I}_{w^j}| + |\mathcal{I}_{w^j} \setminus P|\\
  		&= \frac{1}{m} \left(\sum_{I_i \in P} q_i + \sum_{I_i\not\in P} p_i \right).
  	\end{align*}
  	A set $P \in \mathcal{P}$ minimizes the term $k_S(P) = \frac{1}{m} (\sum_{I_i \in P} q_i + \sum_{I_i\not\in P} p_i)$, if and only if, $q_i \le p_i$ holds for each $I_i \in P$.
  	Thus, we can compute the $P \in \mathcal{P}$ that minimizes $k_S(P)$ as follows:
  	\begin{enumerate}
  		\item Compute $q_i$ and $p_i$ for each $I_i \in \mathcal{I}$.
  		\item Return $P = \{I_i \in \mathcal{I} \mid q_i \le p_i\}$.
  	\end{enumerate}
  	Since this algorithm can be executed in time polynomial in $n$ and $m$, the theorem follows.
\end{proof}

To conclude the section, we show the following lemma that compares the quality of the hypothesis sets $\hs$ and $\mathcal{P}$ w.r.t.~$k_M$.
It shows that, with respect to the expected $k_M$, the best prediction of $\mathcal{P}$ dominates the best prediction of $\hs$.

\begin{lemma}
	For any instance of one of our problems and any distribution $D$, it holds $\EX_{w \sim \ud}[k_M(w,\pred{w}^*) \ge \EX_{w \sim \ud}[k_M(\mathcal{I}_w,P^*)]$, where
	$\pred{w}^* = \arg\min_{\pred{w}' \in \hs} \EX_{w \sim \ud}[k_M(w,\pred{w}')]$ and let $P^* =  \arg\min_{P' \in \mathcal{P}} \EX_{w \sim \ud}[k_M(\mathcal{I}_w,P')]$.
	Furthermore, there exists an instance and a distribution $D$ such that $\EX_{w \sim \ud}[k_M(w,\pred{w}^*) > \EX_{w \sim \ud}[k_M(\mathcal{I}_w,P^*)]$.
\end{lemma}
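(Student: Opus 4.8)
I would treat the two statements separately.

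For the inequality $\EX_{w\sim\ud}[k_M(w,\pred{w}^*)] \ge \EX_{w\sim\ud}[k_M(\mathcal{I}_w,P^*)]$, the key observation is that every prediction vector already is, in disguise, a candidate in $\mathcal{P}$: each $\pred{w}\in\hs$ induces the set $\mathcal{I}_{\pred{w}}\in\mathcal{P}$ of prediction‑mandatory intervals, and by the definitions of the two error measures $k_M(w,\pred{w}) = |\mathcal{I}_{\pred{w}}\sym\mathcal{I}_w| = k_M(\mathcal{I}_w,\mathcal{I}_{\pred{w}})$ for \emph{every} realization $w$. Taking expectations and using $\mathcal{I}_{\pred{w}^*}\in\mathcal{P}$ yields $\EX_{w\sim\ud}[k_M(w,\pred{w}^*)] = \EX_{w\sim\ud}[k_M(\mathcal{I}_w,\mathcal{I}_{\pred{w}^*})] \ge \min_{P\in\mathcal{P}}\EX_{w\sim\ud}[k_M(\mathcal{I}_w,P)] = \EX_{w\sim\ud}[k_M(\mathcal{I}_w,P^*)]$, which is the claim.

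For the strict separation, let $\mathcal{R} = \{\mathcal{I}_{\pred{w}}\mid\pred{w}\in\hs\}\subseteq\mathcal{P}$ be the family of sets that occur as prediction‑mandatory sets. Since $\pred{w}$ and $w$ range over the same domain and, by Lemma~\ref{lema_mandatory_min}, the characterizations of mandatory and prediction‑mandatory intervals are identical up to which values are plugged in, $\mathcal{R}$ is also exactly the family $\{\mathcal{I}_w\mid w_i\in I_i\text{ for all }i\}$ of realizable mandatory sets. By the first part, the left‑hand side of the strict inequality equals $\min_{P\in\mathcal{R}}\EX_{\ud}[|P\sym\mathcal{I}_w|]$, so it suffices to exhibit an instance and a distribution $\ud$ with $\min_{P\in\mathcal{P}}\EX_{\ud}[|P\sym\mathcal{I}_w|]<\min_{P\in\mathcal{R}}\EX_{\ud}[|P\sym\mathcal{I}_w|]$. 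The plan is to pick an instance whose family $\mathcal{R}$ is \emph{not closed under coordinate‑wise majority}, witnessed by three realizations $w^{(1)},w^{(2)},w^{(3)}$ with mandatory sets $S_1,S_2,S_3\in\mathcal{R}$ whose bit‑wise majority $M=\mathrm{maj}(S_1,S_2,S_3)$ is not in $\mathcal{R}$, and to take $\ud$ uniform on $\{w^{(1)},w^{(2)},w^{(3)}\}$. Writing $t_i=|\{j\mid I_i\in S_j\}|$, one has $3\cdot\EX_{\ud}[|P\sym\mathcal{I}_w|]=\sum_i n_i(P)$ with $n_i(P)=t_i$ if $I_i\notin P$ and $n_i(P)=3-t_i$ if $I_i\in P$; coordinate‑wise this is minimized precisely by the majority choice $I_i\in P\iff t_i\ge 2$, i.e.\ by $P=M$, and any other $P$ incurs a strictly larger value on every coordinate on which it deviates from $M$ (an extra $|2t_i-3|\in\{1,3\}$). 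Hence $M$ is the \emph{unique} minimizer over $\mathcal{P}$, and since $M\notin\mathcal{R}$ every $P\in\mathcal{R}$ gives $\EX_{\ud}[|P\sym\mathcal{I}_w|]>\EX_{\ud}[|M\sym\mathcal{I}_w|]=\min_{P\in\mathcal{P}}\EX_{\ud}[|P\sym\mathcal{I}_w|]$, proving the strict inequality.

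The hard part is constructing such an instance with a non‑median‑closed $\mathcal{R}$. For the simplest gadgets — a single set, or a pair of overlapping intervals — the Helly property of intervals forces $\mathcal{R}$ to be definable by $2$‑clauses, hence median‑closed, so these do not work. I would instead build an instance of the minimum problem with several pairwise‑overlapping sets of size at least three, whose interacting mandatory conditions realize, on three distinguished intervals $I_1,I_2,I_3$, a pattern that is provably not $2$‑SAT‑definable; one natural target is ``at least one of $I_1,I_2,I_3$ is always mandatory'' together with each singleton $\{I_i\}$ being a realizable mandatory set, which immediately gives $\mathrm{maj}(\{I_1\},\{I_2\},\{I_3\})=\emptyset\notin\mathcal{R}$. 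Once the instance is fixed, checking the three required realizations via Lemma~\ref{lema_mandatory_min} is routine, and, since the lemma only asks for some instance of one of the problems, the minimum problem alone suffices.
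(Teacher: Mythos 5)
The first part of your proof is correct and is essentially the paper's own argument: every $\pred{w}$ induces the set $\mathcal{I}_{\pred{w}}\in\mathcal{P}$, $k_M(w,\pred{w})=k_M(\mathcal{I}_w,\mathcal{I}_{\pred{w}})$ for every realization, and the claim follows since $P^*$ minimizes over the larger family $\mathcal{P}\supseteq\{\mathcal{I}_{\pred{w}}:\pred{w}\in\hs\}$.

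For the strict separation you propose a valid abstraction — reduce to exhibiting a realizable family $\mathcal{R}=\{\mathcal{I}_{\pred{w}}:\pred{w}\in\hs\}$ that is not closed under coordinate-wise majority, then put a uniform distribution on three witnessing realizations. The coordinate-wise majority calculation showing that the bit-wise median is the unique minimizer over $\mathcal{P}$ is correct. The problem is that you stop there: you explicitly defer the actual construction (``the hard part is constructing such an instance'') and only sketch what you would look for. That leaves the second half of the lemma unproven. Moreover, your preliminary claim that for ``a single set, or a pair of overlapping intervals'' the Helly property forces $\mathcal{R}$ to be 2-SAT-definable and hence median-closed is false, and it steers you toward an unnecessarily complicated construction. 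The paper's counterexample is precisely a single-set sorting instance (equivalently, the minimum problem with sets $\{I_i,I_{i+1}\}$) on a chain of five overlapping intervals, where $I_2$ and $I_4$ each take one of two values with probabilities near $1/2$. The resulting $P^*=\{I_1,I_3,I_5\}$ is not realizable, because $I_1$ prediction-mandatory needs $\pred{w}_2\in I_1$, $I_3$ needs $\pred{w}_2\in I_3$ or $\pred{w}_4\in I_3$, and $I_5$ needs $\pred{w}_4\in I_5$, while $\pred{w}_2$ and $\pred{w}_4$ can each lie in only one of the competing windows; indeed $\{I_1,I_3\},\{I_1,I_5\},\{I_3,I_5\}\in\mathcal{R}$ but their majority $\{I_1,I_3,I_5\}\notin\mathcal{R}$. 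So the non-median-closed family you are after already arises from the simplest gadget you dismissed, and you should just write it down and verify it via Lemma~\ref{lema_mandatory_min}; as submitted, your argument has a genuine gap because the existence of such an instance is asserted but not established.
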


\begin{proof}
	Let $\pred{w}^*$ and $P^*$ be as described in the lemma, and let $I_{\pred{w}^*}$ denote the set of intervals that are mandatory in the realization with true values $\pred{w}^*$.
	By definition, $\EX_{w \sim \ud}[k_M(w,\pred{w}^*) = \EX_{w \sim \ud}[k_M(\mathcal{I}_w,\mathcal{I}_{\pred{w}^*})$.
	Since $\mathcal{I}_{\pred{w}^*} \in \mathcal{P}$ and $P^* =  \arg\min_{P' \in \mathcal{P}} \EX_{w \sim \ud}[k_M(\mathcal{I}_w,P')]$, it follows $\EX_{w \sim \ud}[k_M(w,\pred{w}^*) = \EX_{w \sim \ud}[k_M(\mathcal{I}_w,\mathcal{I}_{\pred{w}^*})  \ge \EX_{w \sim \ud}[k_M(\mathcal{I}_w,P^*)]$.

	To show the second part of the lemma, consider the example of Figure~\ref{ex_domination}.
	We assume that the true values for the different intervals are drawn independently.
	In this example, we have $P^* = \{I_1,I_3,I_5\}$.
	Since there is no combination of true values such that all of $P^*$ are mandatory, the in expectation best prediction of the true values $\pred{w}^*$ has either $\mathcal{I}_{\pred{w}^*} = \{I_1,I_3\}$ or $\mathcal{I}_{\pred{w}^*} = \{I_3,I_5\}$.
	Thus, $\EX_{w \sim \ud}[k_M(w,\pred{w}^*) = 1.2601  > 1.2401 = \EX_{w \sim \ud}[k_M(\mathcal{I}_w,P^*)]$.
	\end{proof}
	
	\begin{figure}[t]
			\centering
			\begin{tikzpicture}[line width = 0.3mm, scale = 0.9, transform shape]
			\interval{$I_1$}{0}{3}{0}{1}{1}		
			\interval{$I_2$}{2}{5}{1}{3.25}{5.5}		
			\interval{$I_3$}{4}{7}{0}{3.25}{5.5}					
			\interval{$I_4$}{6}{9}{1}{3.25}{5.5}	
			\interval{$I_5$}{8}{11}{0}{3.25}{5.5}	
			
			 \drawreal{1}{0}
			 \node[black!40!green] (1) at (1,0.5) {$1$};
			 
			 \drawreal{5.5}{0}
			 \node[black!40!green] (1) at (5.5,0.5) {$1$};
			 
			 \drawreal{10}{0}
			 \node[black!40!green] (1) at (10,0.5) {$1$};

			\drawreal{2.5}{1}
			\node[black!40!green] (1) at (2.5,1.5) {$0.51$};
			
			\drawreal{4.5}{1}
			\node[black!40!green] (1) at (4.5,1.5) {$0.49$};

			\drawreal{6.5}{1}
			\node[black!40!green] (1) at (6.5,1.5) {$0.49$};

			\drawreal{8.5}{1}
			\node[black!40!green] (1) at (8.5,1.5) {$0.51$};			
		\end{tikzpicture}
		\caption{Instance of the sorting problem for a single set with a given distribution $D$ for the true values. 
		Can also be interpreted as an instance for the minimum problem with sets $\{I_i,I_{i+1}\}$ for $i\in\{1,\ldots,4\}$.
		The circles illustrate the  possible true values for the different intervals according to $D$. The associated numbers denote the probability that the interval has the corresponding true value.}
	\label{ex_domination}
	\end{figure}
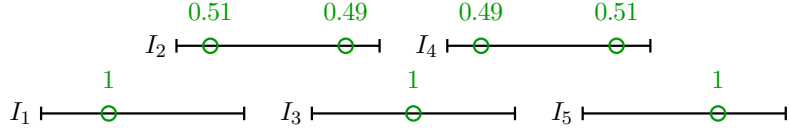

\section{NP-hardness of the Verification Problem for Minimum and Sorting}
\label{sec:nphard}
\begin{theorem}
\label{thm_min_nphard}
It is NP-hard to solve the verification problem for finding the minimum in overlapping sets with uncertainty intervals and given values.
\end{theorem}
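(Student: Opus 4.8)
The plan is a polynomial reduction from Vertex Cover. The key fact to exploit is the structural description given just above the statement: on an instance with \emph{no} mandatory queries in which every set is ``unsolved'' with a well-defined leftmost interval, the verification optimum equals the size of a minimum vertex cover of the graph that has one vertex per interval and, for each set $S$, an edge between the leftmost interval of $S$ and each interval of $S$ intersecting it. So it suffices to turn a graph $H=(V,E)$ into a verification instance with no mandatory intervals, all sets of size two and unsolved, whose associated ``vertex cover graph'' is a graph $H'$ from which the vertex cover number $\tau(H)$ can be recovered in polynomial time.

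First I would obtain $H'$ from $H$ by subdividing every edge $2|V|$ times, i.e.\ replacing each $e\in E$ by a path with an odd number $2|V|+1$ of edges. Splitting a minimum vertex cover of $H'$ along these paths, and using that every edge of $H$ must be covered regardless of how many of its two endpoints are taken, yields $\tau(H')=\tau(H)+|V|\cdot|E|$; hence $\tau(H)\le k$ iff $\tau(H')\le k+|V||E|$. The heavy subdivision also serves a geometric purpose: it makes $H'$ triangle-free and ``locally path-like''. A naive encoding of $H$ itself would fail precisely because a triangle forces three intervals to pairwise intersect while the true-value consistency conditions below force one of those three pairs to be disjoint.

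Next I would realize $H'$ on the line: place $I_v$ for $v\in V$ as tiny pairwise disjoint intervals, and realize the subdivided path of each edge $\{u,v\}\in E$ as a monotone chain of intervals of fixed radius $r$ advancing in steps of length $\tfrac32 r$ from a small neighbourhood of $I_u$ to a small neighbourhood of $I_v$; then consecutive chain intervals overlap, non-consecutive intervals of the same chain are disjoint, and each chain meets exactly the two incident vertex intervals at its ends. Chains of different edges may overlap one another or cross vertex intervals they are not $H'$-adjacent to — this is harmless, since no set of the instance will contain two intervals non-adjacent in $H'$. Around each $I_v$ one arranges the finitely many incident chain ends so that all those lying to its left end strictly before all those lying to its right begin. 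The instance then has ground set $\{I_x : x\in V(H')\}$, one set $S=\{I_x,I_y\}$ per edge $\{x,y\}\in E(H')$, and arbitrary predictions inside the intervals (predictions play no role in verification). For each edge $\{x,y\}$ with leftmost interval $I_x$, choose $w_x$ in the window given by $L_x<w_x<U_x$, by $w_x\le L_y$ for every $H'$-neighbour $y$ to the right of $x$, and by $w_x\ge U_y$ for every $H'$-neighbour $y$ to its left; the layout guarantees this window is nonempty.

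With this choice, in every set $\{I_x,I_y\}$ the leftmost interval is the true minimum, its value lies outside the other interval, and the other interval's value lies outside it; by Lemma~\ref{lema_mandatory_min} no interval is mandatory, so preprocessing does nothing, every set is unsolved, and the ``star'' contributed by $\{I_x,I_y\}$ is exactly the single edge $\{I_x,I_y\}$. Hence the verification optimum equals $\tau(H')=\tau(H)+|V||E|$, so solving it decides Vertex Cover. The only step that needs genuine care — and the natural place an argument could break — is realizing the intersection pattern of $H'$ simultaneously with the disjointness constraints forced by true-value consistency; the subdivision is precisely what makes this possible, at the price of the known, polynomial-time-computable shift $|V||E|$ in the cover number.
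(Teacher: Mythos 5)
Your proposal is correct and rests on the same core idea as the paper's proof: subdivide edges so that the graph can be realized as an intersection pattern of intervals, choose true values to avoid all neighbor intervals so that no interval is mandatory, and conclude that the verification optimum equals the minimum vertex cover of the dependency graph. The main difference is bookkeeping. The paper invokes the known fact that vertex cover is NP-hard already on 2-subdivision graphs (Poljak 1974), so it subdivides each edge only twice (two new vertices per edge) and does not need to track any shift in the cover number; the reduction is then literally a drawing of the 2-subdivision as intervals. You instead reduce from general Vertex Cover, subdivide each edge $2|V|$ times, and compute the shift $\tau(H')=\tau(H)+|V||E|$ explicitly. Both are valid; your version is more self-contained (it does not cite Poljak) but considerably heavier than needed — a 2-subdivision already gives triangle-freeness and an easy interval layout, and if you do cite the 2-subdivision hardness you need no shift calculation at all. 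One small point to tighten in your write-up: you speak of a single ``fixed radius $r$'' for all chains while also insisting every chain has exactly $2|V|$ links and must span the (varying) gap between $I_u$ and $I_v$; these constraints are incompatible unless you let $r$ (or the step length) depend on the edge, which is harmless but should be said. Modulo that detail, the geometric realization, the choice of true values, and the identification of the verification optimum with the minimum vertex cover all go through as you describe.
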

\begin{proof}
The proof uses a reduction from the vertex cover problem for 2-subdivision graphs, which is NP-hard~\cite{poljak74subdiv}.
A 2-subdivision is a graph~$H$ which can be obtained from an arbitrary graph~$G$ by replacing each edge by a path of length four (with three edges and two new vertices).
The graph in Figure~\ref{fig:2_subdiv} is a 2-subdivision of the graph in Figure~\ref{fig:base_graph}.

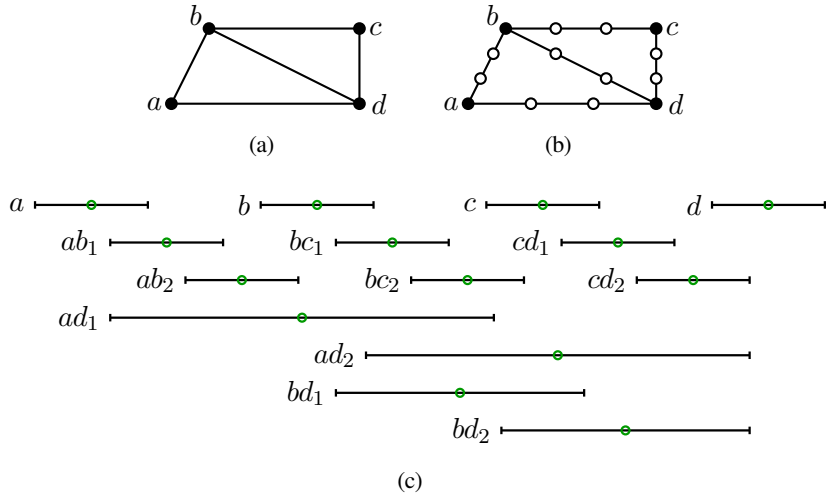
\begin{figure}[!ht]
 \centering
  \subfigure[]{\label{fig:base_graph}
   \tikzstyle{every node}=[circle, draw, fill=black, inner sep=0pt, minimum width=4pt]
   \begin{tikzpicture}[thick, scale=0.5]
    \draw (0, 1) node[label=west:$a$]{} -- (1, 3) node[label=north west:$b$]{};
    \draw (1, 3) -- (5, 3) node[label=east:$c$]{};
    \draw (5, 3) -- (5, 1) node[label=east:$d$]{};
    \draw (1, 3) -- (5, 1);
    \draw (0, 1) -- (5, 1);
   \end{tikzpicture}
  }\quad
  \subfigure[]{\label{fig:2_subdiv}
   \tikzstyle{every node}=[circle, draw, inner sep=0pt, minimum width=4pt]
   \begin{tikzpicture}[thick, scale=0.5]
    \draw (0, 1) node[fill=black, label=west:$a$]{} -- (1, 3) node[fill=black, label=north west:$b$]{};
    \draw (0.33, 1.67) node[fill=white]{};
    \draw (0.67, 2.33) node[fill=white]{};
    \draw (1, 3) -- (5, 3) node[fill=black, label=east:$c$]{};
    \draw (2.33, 3) node[fill=white]{};
    \draw (3.67, 3) node[fill=white]{};
    \draw (5, 3) -- (5, 1) node[fill=black, label=east:$d$]{};
    \draw (5, 1.67) node[fill=white]{};
    \draw (5, 2.33) node[fill=white]{};
    \draw (1, 3) -- (5, 1);
    \draw (2.33, 2.33) node[fill=white]{};
    \draw (3.67, 1.67) node[fill=white]{};
    \draw (0, 1) -- (5, 1);
    \draw (1.67, 1) node[fill=white]{};
    \draw (3.33, 1) node[fill=white]{};
   \end{tikzpicture}
  }\quad
  \subfigure[]{\label{fig:minimum_reduction}
   \begin{tikzpicture}[line width = 0.3mm, scale=0.5]
     \intervalr{$a$}{0}{3}{0}{1.5}
     \intervalr{$b$}{6}{9}{0}{7.5}
     \intervalr{$c$}{12}{15}{0}{13.5}
     \intervalr{$d$}{18}{21}{0}{19.5}
     
     \intervalr{$ab_1$}{2}{5}{-1}{3.5}
     \intervalr{$ab_2$}{4}{7}{-2}{5.5}
     
     \intervalr{$bc_1$}{8}{11}{-1}{9.5}
     \intervalr{$bc_2$}{10}{13}{-2}{11.5}
     
     \intervalr{$cd_1$}{14}{17}{-1}{15.5}
     \intervalr{$cd_2$}{16}{19}{-2}{17.5}

     \intervalr{$ad_1$}{2}{12.2}{-3}{7.1}
     \intervalr{$ad_2$}{8.8}{19}{-4}{13.9}

     \intervalr{$bd_1$}{8}{14.6}{-5}{11.3}
     \intervalr{$bd_2$}{12.4}{19}{-6}{15.7}
   \end{tikzpicture}
  }
  \caption{NP-hardness reduction for the minimum problem, from the vertex cover problem on 2-subdivision graphs. 
  \subref{fig:base_graph} A graph and \subref{fig:2_subdiv} its 2-subdivision.
  \subref{fig:minimum_reduction} The corresponding instance for the minimum problem.}
  \label{fig_min_np_hard}
\end{figure}

Given a graph~$H$ which is a 2-subdivision of a graph~$G$, we build an instance of the minimum problem in the following way.
If~$n$ is the number of vertices in~$G$, then we start by creating $n$ intervals that do not intersect each other.
For each edge~$uv$ of~$G$, we create two new intervals~$uv_1$ and~$uv_2$, in such a way that we have intersection pairs $u(uv_1), (uv_1)(uv_2), (uv_2)v$.
We then create a set of size~2 for each edge in~$H$, consisting of the corresponding intervals.
Finally, we can clearly assign true values such that, for any set $xy$, neither $w_x \in I_y$ nor $w_y \in I_x$.
Therefore, to solve each set, it is enough to query one of the intervals, so clearly any solution to the problem corresponds to a vertex cover of~$H$, and vice-versa.
See Figure~\ref{fig:minimum_reduction} for an example.
\end{proof}

This reduction constructs an instance of the minimum problem where
all sets have size~2. For instances with sets of size~2, the minimum problem and the sorting problem are
equivalent. Hence, Theorem~\ref{thm_min_nphard} implies the following.

\begin{coro}
It is NP-hard to solve the verification problem for sorting overlapping sets with uncertainty intervals and given values.
\end{coro}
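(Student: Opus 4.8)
The plan is to obtain this as an immediate consequence of Theorem~\ref{thm_min_nphard} via the observation that the hardness reduction there produces an instance of a very restricted form. First I would note that, in the proof of Theorem~\ref{thm_min_nphard}, every set in the constructed minimum-problem instance has size exactly $2$: each set corresponds to an edge of the $2$-subdivision graph $H$ and consists of the two intervals associated with that edge. So the NP-hardness already holds when we restrict attention to instances in which $|S|=2$ for all $S\in\mathcal{S}$.

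Next I would argue that, for instances with all sets of size $2$, the verification problem for the minimum coincides with the verification problem for sorting. For a single set $S=\{I_x,I_y\}$, identifying the minimum of $S$ and sorting $S$ are literally the same task: knowing which of $I_x,I_y$ has the smaller true value is exactly knowing the sorted order of the pair. Hence a set of queries $Q$ solves the minimum instance (i.e., determines the minimum of every set) if and only if it solves the corresponding sorting instance (i.e., determines the sorted order within every set). Since this equivalence is pointwise over $Q$, the minimum number of queries needed to verify a solution is the same for both problems on such instances, and any polynomial-time algorithm for the sorting verification problem would yield one for the minimum verification problem on size-$2$ instances. Composing with the reduction of Theorem~\ref{thm_min_nphard} from vertex cover on $2$-subdivision graphs gives NP-hardness of the sorting verification problem.

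Since there is essentially no technical content beyond this correspondence, there is no real obstacle; the only point requiring a moment of care is to confirm that the equivalence is at the level of feasible query sets (so that optimal costs agree), not merely that both problems are ``about'' the same pairs. Given that the instances built in Theorem~\ref{thm_min_nphard} moreover have true values chosen so that $w_x\notin I_y$ and $w_y\notin I_x$ for every set $\{I_x,I_y\}$, a feasible query set need only hit one interval per set, so an optimal solution is precisely a minimum vertex cover of $H$ in both formulations, which makes the reduction transparent.
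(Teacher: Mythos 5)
Your argument matches the paper's: the reduction of Theorem~\ref{thm_min_nphard} produces only sets of size~$2$, and for such instances the minimum and sorting verification problems coincide at the level of feasible query sets, so NP-hardness transfers directly. The extra care you take to confirm the equivalence is pointwise over query sets is exactly the right observation, though the paper treats it as immediate.
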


\section{Appendix for the Minimum Problem (Section \ref{sec:minimum})}
\label{app:minimum}

\begin{restatable}{lem}{LemmaMinFirstImplMand}
\label{lemma_min_first_implies_mandatory}
Let~$I_l$ be the leftmost interval in a set~$S$, and assume that~$I_l$ does not contain another interval in~$S$.
If~$I_l$ is queried, then~$S$ can be solved by querying only intervals that become known mandatory.
\end{restatable}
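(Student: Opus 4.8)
The plan is to show that, right after $I_l$ is queried, the set $S$ falls into a configuration in which a sequence of \emph{known-mandatory} queries inside $S$ suffices to finish it. Let $w_l$ be the value revealed by querying $I_l$. First I would record a structural fact that uses both hypotheses: every other interval $I_j\in S$ with $L_j<w_l$ must actually contain $w_l$. Indeed, $I_l$ leftmost gives $L_l\le L_j$, and if $I_j$ lay entirely to the left of $w_l$ (that is, $U_j\le w_l$) then $U_j\le w_l<U_l$ together with $L_l\le L_j$ would force $I_j=(L_j,U_j)\subseteq(L_l,U_l)=I_l$, contradicting the assumption that $I_l$ contains no other interval of $S$. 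So after the query the only ``dangerous'' intervals are those that contain $w_l$; all remaining ones lie to the right, with lower limit at least $w_l$.

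Next I would describe the procedure. Maintain $v$, the smallest true value revealed so far (initially $v\le w_l$, non-increasing, and always equal to $w_k$ for some already queried $I_k\in S$). While $S$ is not solved, query the unqueried interval $I_{j^*}\in S$ of smallest lower limit. Two claims need to be verified. (1) If $S$ is unsolved, such a ``bad'' interval exists: otherwise every unqueried interval of $S$ would have lower limit at least $v$, hence true value strictly greater than $v$, so the minimum of $S$ would equal $v$ and be attained at an already queried interval — i.e.\ $S$ would be solved. Hence there is an unqueried $I_j$ with $L_j<v\le w_l$, and the minimal-lower-limit candidate $I_{j^*}$ satisfies $L_{j^*}\le L_j<v\le w_l$, so by the structural fact $w_l\in I_{j^*}$, i.e.\ $L_{j^*}<w_l<U_{j^*}$.

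(2) $I_{j^*}$ is known mandatory. For any completion of the still-unknown values that is consistent with the intervals and the queries made so far, set $m=\min\{w_I : I\in S\setminus\{I_{j^*}\}\}$. Since $I_l\in S\setminus\{I_{j^*}\}$ we get $m\le w_l<U_{j^*}$; and $m>L_{j^*}$ because every already queried $I_k\in S\setminus\{I_{j^*}\}$ has $w_k\ge v>L_{j^*}$ while every still-unqueried such $I_k$ has $L_k\ge L_{j^*}$ (minimality of $I_{j^*}$) and hence $w_k>L_k\ge L_{j^*}$. Thus $m\in(L_{j^*},U_{j^*})=I_{j^*}$, so knowing all values of $S\setminus\{I_{j^*}\}$ does not reveal whether $w_{j^*}$ is the minimum of $S$; therefore $I_{j^*}$ lies in every feasible query set. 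As this holds for every consistent completion, $I_{j^*}$ is known mandatory, and being mandatory for $S$ it is mandatory for the whole instance. Termination is immediate since each step queries a fresh interval of the finite set $S$, and by construction the loop exits exactly when $S$ is solved.

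The step I expect to be the main obstacle is claim (2): producing a genuine \emph{known}-mandatory certificate rather than merely ``mandatory for the realized instance''. The delicate point is that the certificate must follow from the information revealed so far, which is precisely why we query the unqueried interval of smallest lower limit (so that $L_{j^*}$ lower-bounds all remaining true values) and why we need $w_l<U_{j^*}$, which comes from the no-containment hypothesis. Beyond that, the only care required is tracking open versus half-open endpoints (strict vs.\ non-strict inequalities) consistently; no deeper difficulty is expected.
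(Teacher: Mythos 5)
Your proof is correct and takes essentially the same route as the paper's: repeatedly query the unqueried interval in $S$ of smallest lower limit, arguing that it is known mandatory because $w_l$ remains inside it. The paper's version is considerably more terse (it observes that $w_l$ lands in the next leftmost interval, invokes Corollary~\ref{cor_min_left_mandatory}, and closes with ``the claim follows by induction''), whereas you make the invariant $v$ and the structural no-containment fact explicit and re-derive the mandatory certificate directly rather than citing the corollary, filling in details the paper leaves implicit.
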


\begin{proof}
Since~$I_l$ does not contain another interval in~$S$, either~$w_l$ is found to be the minimum in~$S$, or $w_l$ is contained in the next leftmost interval in~$S$, which becomes a known mandatory interval due to Corollary~\ref{cor_min_left_mandatory}, and then the claim follows by induction.
\end{proof}

The following lemma guarantees that Algorithms~\ref{ALG_min_beta} and~\ref{ALG_min_alpha} indeed solve the problem.
In Algorithm~\ref{ALG_min_beta}, the instance in Line~\ref{line_min_beta_kh_vc} has no known mandatory intervals because of Line~\ref{line_min_beta_kh_before_vc}.
In Algorithm~\ref{ALG_min_alpha}, the instance in Line~\ref{line_min_param_vc} has no known mandatory intervals because of Line~\ref{line_min_param_before_vc}.

\begin{restatable}{lem}{LemmaMinQueryVCEnough}
\label{lemma_min_query_vc_enough}
Suppose an instance without known mandatory intervals.
After querying a vertex cover, the minimum problem can be solved by querying only intervals that become known mandatory.
\end{restatable}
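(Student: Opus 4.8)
The plan is to argue set by set, combining the structure forced by the absence of known mandatory intervals with the defining property of a vertex cover. Recall that the dependency graph has one node per unqueried non-trivial interval and, for every still-unsolved set $S$ with a leftmost interval $I_l$, an edge $\{I_l,I_j\}$ for every unqueried $I_j\in S\setminus\{I_l\}$ with $I_j\cap I_l\neq\emptyset$. First I would extract three consequences of the hypothesis that nothing is currently known mandatory, applied to the leftmost interval $I_l$ of any unsolved set $S$: (i) no $I_j\in S\setminus\{I_l\}$ satisfies $I_j\subseteq I_l$, and no already-queried $I_j\in S$ has $w_j\in I_l$ (Corollary~\ref{cor_min_left_mandatory}); (ii) $I_l$ is itself unqueried, since otherwise Lemma~\ref{lemma_min_first_implies_mandatory} would imply $S$ is solved or contains a known mandatory interval; (iii) consequently $I_l$ is a node of the dependency graph, every already-queried $I_j\in S$ intersecting $I_l$ has $w_j\notin I_l$ and hence $w_j\ge U_l$ (because $L_l\le L_j$ forces $w_j>L_l$), and every $I_j\in S$ disjoint from $I_l$ lies entirely to the right of it, so $w_j>U_l$; note also that $w_l<U_l$ since $I_l$ is open.

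Let $Q'$ be the queried vertex cover and fix an unsolved set $S$ with leftmost interval $I_l$. If $I_l\in Q'$, then $I_l$ has been queried and Lemma~\ref{lemma_min_first_implies_mandatory} finishes $S$ using only queries that become known mandatory. If $I_l\notin Q'$, then every edge $\{I_l,I_j\}$ of the dependency graph must be covered by $I_j$, so every unqueried $I_j\in S$ intersecting $I_l$ lies in $Q'$ and is now queried; combined with (iii), after querying $Q'$ every interval of $S$ other than $I_l$ is either queried or certified to have value exceeding $U_l$.

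The delicate step — and where I expect to spend the most care — is closing this second case. I would split once more on the values revealed by $Q'$. If some queried $I_j\in S$ now has $w_j\in I_l$, then $I_l$ is known mandatory by Corollary~\ref{cor_min_left_mandatory}; after querying it, the value $w_l$ and all values of intervals intersecting $I_l$ are known, while the remaining intervals of $S$ have value $>U_l>w_l$, so the minimum of $S$ is determined. Otherwise no queried interval of $S$ has its value in $I_l$, so every $I_j\in S\setminus\{I_l\}$ has $w_j\ge U_l>w_l$, which already certifies $w_l$ as the minimum of $S$ without any further query. Thus in every case $S$ becomes solved once $Q'$ and the subsequent cascade of known mandatory queries have been performed, and since this holds for all unsolved sets, the whole instance is solved. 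The only thing to keep straight is that querying a known mandatory leftmost interval of one set may render further intervals known mandatory; this cascade is exactly what the statement permits, so it causes no difficulty.
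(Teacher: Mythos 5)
Your proof is correct and follows essentially the same route as the paper's: you split on whether the leftmost interval $I_l$ of an unsolved set is in the vertex cover, closing the first case with Lemma~\ref{lemma_min_first_implies_mandatory} and the second with Corollary~\ref{cor_min_left_mandatory} or the direct observation that $I_l$ is the certified minimum. You supply more intermediate detail (the preliminary facts (i)--(iii) and the explicit subcase analysis when $I_l\notin Q'$), but the underlying argument coincides with the paper's.
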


\begin{proof}
The vertex cover is computed in the dependency graph, a graph with a vertex for each interval
and, for each unsolved set $S$, edges between the leftmost interval~$I_l$ in $S$ and all other
intervals in $S$ that intersect $I_l$.
Consider an unsolved set~$S$ with leftmost interval~$I_l$.
The vertex cover must either (a) contain~$I_l$ (and maybe other intervals in $S$), or (b) contain all intervals in $S \setminus \{I_l\}$ that intersect $I_l$.
If case~(b) holds, then either~$I_l$ becomes known to be the minimum interval because no true value is in~$I_l$, or some other true value is in~$I_l$ and the claim follows from Corollary~\ref{cor_min_left_mandatory}.
If case~(a) holds, then the claim follows from Lemma~\ref{lemma_min_first_implies_mandatory}.
\end{proof}

\subsection{Analysis of Algorithm~\ref{ALG_min_beta} (Theorem \ref{thm_minimum:hop})}
\label{app:minbeta}

Recall the following lemma for identifying witness sets.

\begin{lem}[\cite{kahan91queries}]\label{lemma_witness_min}
A set $\{I_i, I_j\} \subseteq S$ with $I_i \cap I_j \neq \emptyset$, and $I_i$ or~$I_j$ leftmost in~$S$, is always a witness set.
\end{lem}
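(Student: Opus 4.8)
The plan is a contradiction argument based on swapping realizations. Assume without loss of generality that $I_i$ is leftmost in $S$ (relabel otherwise), and suppose for contradiction that some feasible query set $Q$ contains neither $I_i$ nor $I_j$. Since $Q$ queries neither interval, the answers returned by $Q$ do not constrain $w_i$ or $w_j$ beyond $w_i\in I_i$ and $w_j\in I_j$; feasibility of $Q$ therefore means that, keeping all other true values fixed, the identity of the minimum of $S$ must be the same for every choice of $w_i\in I_i$ and $w_j\in I_j$. I will refute this by exhibiting two admissible choices that change the minimum of~$S$.

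For the first choice I make $I_i$ the unique minimum of $S$. Let $M$ be the smallest true value among the intervals in $S\setminus\{I_i,I_j\}$ (set $M=+\infty$ if this set is empty). The point where the leftmost hypothesis enters — and the only genuinely non-routine step — is the observation that $M>L_i$: every interval $I_k\in S$ has $L_k\ge L_i$, hence $w_k>L_k\ge L_i$. Together with $L_i<U_i$ (nontriviality) and $L_i<U_j$ (which follows from $I_i\cap I_j\neq\emptyset$), this makes the open interval $(L_i,\min\{U_i,M,U_j\})$ nonempty; I pick $w_i$ there and then $w_j\in(\max\{w_i,L_j\},U_j)$. A short check shows $w_i\in I_i$, $w_j\in I_j$, $w_i<w_j$, and $w_i<M$, so $I_i$ is the unique minimum of $S$ in this realization.

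For the second choice I make $I_i$ not the minimum of $S$. From $I_i\cap I_j\neq\emptyset$ we get $L_j<U_i$, so $(L_j,\min\{U_i,U_j\})$ is nonempty; picking $w_j$ there and then $w_i\in(\max\{L_i,w_j\},U_i)$ yields $w_j\in I_j$ and $w_i\in I_i$ with $w_j<w_i$. Then the minimum of $S$ has value at most $w_j<w_i$, so it is not $I_i$. The two realizations agree with the instance on all intervals of $Q$ but give different minima for $S$, contradicting feasibility of $Q$; hence every feasible query set contains $I_i$ or $I_j$, i.e.\ $\{I_i,I_j\}$ is a witness set. I would also remark that the argument only uses that $I_i$ and $I_j$ are open intervals, which is the relevant case here since trivial intervals are never queried.
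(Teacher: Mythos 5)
The paper does not reproduce a proof of this lemma (it is imported from Kahan~\cite{kahan91queries}), so there is no paper-internal argument to compare against; your reconstruction is the standard realization-swapping indistinguishability argument and is essentially the natural one. You correctly identify where the leftmost hypothesis does real work: it forces $M > L_i$, which is what lets the first realization make $I_i$ the unique minimum; the intersection hypothesis supplies $L_i < U_j$ and $L_j < U_i$, which keep both sampling windows nonempty.

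One step is not airtight as written. To obtain $M > L_i$ you write ``every interval $I_k \in S$ has $L_k \geq L_i$, hence $w_k > L_k \geq L_i$'', but the strict inequality $w_k > L_k$ holds only for \emph{open} $I_k$. A trivial interval $I_k = \{w_k\}$ has $w_k = L_k$, and if such an interval shares the leftmost lower limit ($L_k = L_i$), then $M = L_i$ and the window $(L_i, \min\{U_i, M, U_j\})$ is empty, so your first realization cannot be built. This is not merely a gap in the write-up: in that configuration the lemma's conclusion genuinely fails, because the trivial interval is forced to be the minimum of $S$ and the empty query set already solves $S$, missing $\{I_i,I_j\}$. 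So the lemma carries a hidden non-degeneracy assumption (no trivial interval of $S$ sitting at $L_i$, equivalently $S$ is not already solved), which the paper never spells out because it only ever applies the lemma to unsolved sets. Your closing remark about openness covers $I_i$ and $I_j$ but not the remaining members of $S$; making that extra hypothesis explicit would close the argument. With that caveat stated, the construction of the two realizations is clean and correct.
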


An important lemma for proving upper bounds for this algorithm is the following.

\begin{lem} \label{lemma_min_kh_enforce_mandatory}
If $\pred{w}_j$ enforces~$I_i$, then $\{I_i, I_j\}$ is a witness set.
Also, if $w_j \in I_i$, then $I_i$ is mandatory.
\end{lem}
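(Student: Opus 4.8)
The plan is to unfold the definition of ``enforces'' into its two defining cases and treat each separately, using the Kahan witness-set lemma (Lemma~\ref{lemma_witness_min}) for the witness-set claim and the mandatory characterizations (Lemma~\ref{lema_mandatory_min} and Corollary~\ref{cor_min_left_mandatory}) for the mandatory claim. Recall that $\pred{w}_j$ enforcing $I_i$ means there is a set $S\ni I_i,I_j$ with $\pred{w}_j\in I_i$ such that either (i) $I_i$ is leftmost in $S$, or (ii) $I_j$ is leftmost in $S$ and $I_i$ is leftmost in $S\setminus\{I_j\}$. Since also $\pred{w}_j\in I_j$, in both cases $\pred{w}_j\in I_i\cap I_j$, so $I_i\cap I_j\neq\emptyset$.

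For the witness-set part: in case (i) the pair $\{I_i,I_j\}\subseteq S$ intersects and $I_i$ is leftmost in $S$; in case (ii) the pair intersects and $I_j$ is leftmost in $S$. In either case Lemma~\ref{lemma_witness_min} immediately yields that $\{I_i,I_j\}$ is a witness set.

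For the mandatory part, assume $w_j\in I_i$. In case (i), $I_i$ is the leftmost interval of $S$ and it contains the true value $w_j$ of another interval $I_j\in S$, so Corollary~\ref{cor_min_left_mandatory} directly gives that $I_i$ is mandatory. Case (ii) is the delicate one, since $I_i$ is only second-from-left in $S$ and Corollary~\ref{cor_min_left_mandatory} does not apply; here I would argue via Lemma~\ref{lema_mandatory_min} with the set $S$, using that $w_j\in I_i$ means $L_i<w_j<U_i$. If $I_i$ is the true minimum of $S$, then $I_i$ is a true minimum of $S$ that contains the value $w_j$ of another interval $I_j\in S\setminus\{I_i\}$, which is case~(a) of Lemma~\ref{lema_mandatory_min}. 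If $I_i$ is not the true minimum of $S$, let $I^\star\in S\setminus\{I_i\}$ realize the minimum true value of $S$; then $w_{I^\star}<w_i<U_i$. If $I^\star=I_j$ then the minimum value $w_j$ lies in $I_i$. If $I^\star\neq I_j$, then $I^\star\in S\setminus\{I_i,I_j\}$, so $L_{I^\star}\ge L_i$ because $I_i$ is leftmost in $S\setminus\{I_j\}$, and hence $w_{I^\star}>L_{I^\star}\ge L_i$, giving $w_{I^\star}\in I_i$. In both subcases $I_i$ is not the true minimum of $S$ but contains the value of the true minimum, i.e.\ case~(b) of Lemma~\ref{lema_mandatory_min}. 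Thus $I_i$ is mandatory.

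The main obstacle is exactly this case~(ii) of the mandatory claim: one cannot invoke Corollary~\ref{cor_min_left_mandatory} and must instead track which interval of $S$ attains the minimum and verify its true value lies in $I_i$, relying on the ``leftmost in $S\setminus\{I_j\}$'' property. If the model permits ties among true values, one should also note that $w_{I^\star}=w_i$ forces $I_i$ itself to be a true minimum, so that situation is already covered by case~(a) and causes no additional difficulty.
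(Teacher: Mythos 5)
Your proof is correct and follows essentially the same route as the paper's: it invokes Lemma~\ref{lemma_witness_min} for the witness-set claim, Corollary~\ref{cor_min_left_mandatory} when $I_i$ is leftmost, and the characterization in Lemma~\ref{lema_mandatory_min} for the remaining case by arguing that the true minimum value of $S$ lands in $I_i$; you spell out the tracking of the minimizing interval $I^\star$ in more detail than the paper's terse "$I_i$ contains the minimum true value", but the underlying reasoning is the same.
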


\begin{proof}
Since $\pred{w}_j$ enforces~$I_i$, there must be a set $S$ with $I_i,I_j \in S$ such that $\pred{w}_j \in I_i$ and either $I_i$ is leftmost in $S$ or $I_j$ is leftmost in $S$ and $I_i$ is leftmost in $S \setminus \{I_j\}$.
The first claim follows from Lemma~\ref{lemma_witness_min}.
If~$I_i$ is a true minimum of~$S$ or is leftmost in~$S$, then the second claim follows from Lemma~\ref{lema_mandatory_min} and Corollary~\ref{cor_min_left_mandatory}.
Otherwise, the fact that $w_j \in I_i$ and that~$I_i$ is leftmost in $S \setminus \{I_j\}$ implies that~$I_i$ contains the minimum true value, so the claim follows from Lemma~\ref{lema_mandatory_min}.
\end{proof}

\begin{lem}
\label{lemma_min_beta_kh_enforce_predmand}
Consider a point of execution of the algorithm in which~$\pred{w}_j$ enforces~$I_i$.
It holds that~$I_i$ is prediction mandatory for the current instance.
\end{lem}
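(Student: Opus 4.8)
The plan is to unfold the definition of ``enforces'' and then invoke the prediction-mandatory characterization from Lemma~\ref{lema_mandatory_min} (in its ``leftmost'' corollary form, Corollary~\ref{cor_min_left_mandatory}), applied to the predicted values rather than the true values. Recall that $\pred{w}_j$ \emph{enforces} $I_i$ means there is a set $S$ with $I_i, I_j \in S$, $\pred{w}_j \in I_i$, and either (i) $I_i$ is leftmost in $S$, or (ii) $I_j$ is leftmost in $S$ and $I_i$ is leftmost in $S \setminus \{I_j\}$. Since Corollary~\ref{cor_min_left_mandatory} and Lemma~\ref{lema_mandatory_min} characterize mandatory intervals purely in terms of the value configuration, the ``prediction mandatory'' analogue is obtained by replacing true values with predicted values throughout; so it suffices to exhibit, assuming all predictions are accurate, a witness that $I_i$ lies in every feasible query set of~$S$.

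First I would handle case (i): $I_i$ is leftmost in $S$ and $\pred{w}_j \in I_i$ with $I_j \in S \setminus \{I_i\}$. Then $I_i$ is the leftmost interval of~$S$ and contains the predicted value of another interval of~$S$, so by the prediction-analogue of Corollary~\ref{cor_min_left_mandatory}, $I_i$ is prediction mandatory. Next I would handle case (ii): $I_j$ is leftmost in $S$, $I_i$ is leftmost in $S \setminus \{I_j\}$, and $\pred{w}_j \in I_i$. Here I distinguish whether $I_i$ is the predicted minimum of~$S$. If $\pred{w}_i \le \pred{w}_e$ for all $I_e \in S$, then $I_i$ is the predicted minimum of~$S$ and it contains $\pred{w}_j$ of another interval $I_j \in S \setminus \{I_i\}$, so part~(a) of the prediction-analogue of Lemma~\ref{lema_mandatory_min} applies and $I_i$ is prediction mandatory. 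Otherwise $I_i$ is not the predicted minimum of~$S$; but since $I_i$ is leftmost in $S \setminus \{I_j\}$, every interval $I_e \in S \setminus \{I_i, I_j\}$ has lower limit at least $L_i$, hence $\pred{w}_e > L_i$; combined with $\pred{w}_j \in I_i$ (so $\pred{w}_j > L_i$), the predicted minimum value of~$S$ is attained at some interval with predicted value strictly above $L_i$, hence inside $I_i$ or to its right. Here one must be slightly careful: what we need for part~(b) of Lemma~\ref{lema_mandatory_min} is that $I_i$ \emph{contains} the predicted minimum value of~$S$. The only way the predicted minimum value of~$S$ could fail to lie in~$I_i$ is if it were $\ge U_i$; but then $I_i$ (with $\pred{w}_i < U_i$) would be the predicted minimum of~$S$, contradicting the case assumption. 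Thus the predicted minimum value lies in $I_i$, $I_i$ is not itself the predicted minimum, and part~(b) of the prediction analogue of Lemma~\ref{lema_mandatory_min} gives that $I_i$ is prediction mandatory.

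The main obstacle is the bookkeeping in case (ii): one has to argue carefully that ``$I_i$ leftmost in $S \setminus \{I_j\}$ together with $\pred{w}_j \in I_i$'' forces the predicted minimum value of the whole set~$S$ to fall inside $I_i$, ruling out the degenerate possibility that some other interval's predicted value sits below $L_i$ or exactly at $U_i$ in a way that escapes~$I_i$. This is exactly the place where openness of the intervals and the precise ``passes over''/containment conventions matter, so I would state the endpoint comparisons explicitly. Everything else is a direct translation of Corollary~\ref{cor_min_left_mandatory} and Lemma~\ref{lema_mandatory_min} from true values to predicted values, which is legitimate since those statements are about value configurations only, as noted in the last sentence of Lemma~\ref{lema_mandatory_min}.
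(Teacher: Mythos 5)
Your proposal is correct and takes essentially the same approach as the paper: unfold the two clauses of ``enforces'', handle the case where $I_i$ is leftmost in $S$ via the prediction analogue of Corollary~\ref{cor_min_left_mandatory}, and for the other case invoke the prediction analogue of Lemma~\ref{lema_mandatory_min} by checking which predicted value is the minimum of $S$. The paper splits the second case into three subcases according to whether $\pred{w}_j$, $\pred{w}_i$, or some other $\pred{w}_l$ is the predicted minimum; you merge the first and third into a single argument that the predicted minimum value must lie in $I_i$, which is a harmless reorganization of the same reasoning.
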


\begin{proof}
Consider the set~$S$ as in the definition of~$\pred{w}_j$ enforcing~$I_i$.
If~$I_i$ is leftmost in~$S$, then the claim holds from a similar argument as in the proof of Corollary~\ref{cor_min_left_mandatory}.
If~$I_j$ is leftmost in~$S$ and~$I_i$ is leftmost in $S \setminus \{I_j\}$, then we have three cases, in all of which the claim follows from Lemma~\ref{lema_mandatory_min}: (a)~$\pred{w}_j$~is the predicted minimum value in~$S$, and $\pred{w}_j \in I_i$ holds by definition; (b)~$\pred{w}_i$~is the predicted minimum value in~$S$, and $\pred{w}_j \in I_i$ holds by definition; 
(c) some $\pred{w}_l$ with $l \neq i, j$ is the predicted minimum value in~$S$, and in this case we claim that $\pred{w}_l \in I_i$.
To see that the claim for the last case holds, note that $L_i \leq L_l$ because $I_j$ is leftmost in~$S$ and $I_i$ is leftmost in $S \setminus \{I_j\}$, and $\pred{w}_l < U_i$ because $\pred{w}_l$ is the predicted minimum.
\end{proof}

\begin{lem}
\label{lemma_min_beta_kh_unique_loop}
There is at most one execution of the loop consisting of Lines~1--\ref{line_min_beta_kh_cond_loop} in which Line~\ref{line_min_beta_kh_predict} is executed but no query is performed in Lines~\ref{line_min_beta_wit_trio}--\ref{line_min_beta_query_first}, and that is the last execution of this loop in which any query is performed.
After this point, the instance has no prediction mandatory intervals.
\end{lem}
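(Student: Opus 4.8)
The plan is to route everything through one structural fact about the current (partially queried) instance: it contains a prediction mandatory interval \emph{if and only if} there exist intervals $I_i\neq I_j$ with $\pred{w}_j$ enforcing $I_i$. The ``if'' direction is exactly Lemma~\ref{lemma_min_beta_kh_enforce_predmand} (an enforcing pair $(I_i,I_j)$ certifies that $I_i$ is prediction mandatory), so the new work is the ``only if'' direction. For that, take a prediction mandatory interval $I_a$ and, via the prediction analogue of Lemma~\ref{lema_mandatory_min}, a witnessing set $S\ni I_a$; note $|S|\ge 2$, since for $|S|=1$ neither condition (a) nor (b) of that lemma can hold. So either (a) $\pred{w}_a$ is the smallest predicted value in $S$ and $\pred{w}_j\in I_a$ for some $I_j\in S\setminus\{I_a\}$, or (b) $\pred{w}_a$ is not the smallest predicted value in $S$ but $I_a$ contains the smallest predicted value of $S$. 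Let $I_l$ be leftmost in $S$ and $I_{l'}$ leftmost in $S\setminus\{I_l\}$. If $\pred{w}_j\in I_l$ for some $I_j\in S\setminus\{I_l\}$ then $\pred{w}_j$ enforces $I_l$, and if $\pred{w}_l\in I_{l'}$ then $\pred{w}_l$ enforces $I_{l'}$; in either case we have an enforcing pair.

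The remaining case --- $\pred{w}_j\notin I_l$ for all $I_j\in S\setminus\{I_l\}$, and $\pred{w}_l\notin I_{l'}$ --- must be shown impossible, and this is the step I expect to be the main obstacle. Because $I_l$ is leftmost, $\pred{w}_j>L_j\ge L_l$ for all $I_j\in S\setminus\{I_l\}$, so $\pred{w}_j\notin(L_l,U_l)$ forces $\pred{w}_j\ge U_l>\pred{w}_l$; hence $\pred{w}_l$ is the strictly smallest predicted value in $S$, attained only by $I_l$. Likewise $\pred{w}_l\notin(L_{l'},U_{l'})$, combined with $\pred{w}_l\le\pred{w}_{l'}\le U_{l'}$, gives $\pred{w}_l\le L_{l'}\le L_b$ for every $I_b\in S\setminus\{I_l\}$. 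Now case (a) for $I_a$ forces $I_a=I_l$ (the unique interval attaining the minimum predicted value) and the existence of $I_j\in S\setminus\{I_l\}$ with $\pred{w}_j\in I_a=(L_l,U_l)$, contradicting $\pred{w}_j\ge U_l$; and case (b) forces $I_a\neq I_l$ with $L_a<\pred{w}_l$, contradicting $\pred{w}_l\le L_a$. Either way we reach a contradiction, so an enforcing pair exists and the structural fact is proved. (Already-queried trivial intervals are handled by observing that if the leftmost interval of $S$ were trivial then $S$ would already be solved under the predictions, so $I_l$ may be taken non-trivial; the inequalities involving $I_{l'}$ remain valid even when $I_{l'}$ is trivial.)

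With the structural fact in hand the lemma becomes bookkeeping on the outer loop. Fix an iteration. After the inner \emph{while} loop (the one querying prediction mandatory intervals) it reaches Line~\ref{lin_min_beta_cond_trio}; if the current instance then has a prediction mandatory interval, the structural fact gives an enforcing pair, so the condition of Line~\ref{lin_min_beta_cond_trio} holds (and Line~\ref{line_min_beta_wit_trio} performs a query) or it fails but the condition of Line~\ref{line_min_beta_pair} holds (and Line~\ref{line_min_beta_query_first} performs a query) --- either way a query is made in Lines~\ref{line_min_beta_wit_trio}--\ref{line_min_beta_query_first}. Consequently, in a ``bad'' iteration (Line~\ref{line_min_beta_kh_predict} executed, no query in Lines~\ref{line_min_beta_wit_trio}--\ref{line_min_beta_query_first}) the instance at Line~\ref{lin_min_beta_cond_trio} has no prediction mandatory interval, so the inner loop must have exited with $P=\emptyset$ rather than via $|Q|=\gamma-2$ (which would leave $P\neq\emptyset$); also, since Line~\ref{line_min_beta_kh_predict} ran, the inner loop executed at least once and its last pass of Line~\ref{line_min_beta_kh_mandatory_loop} cleared all known mandatory intervals, and nothing more is queried in this iteration. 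Hence the instance is unchanged through the end of this iteration, and the next iteration sees no known mandatory interval, computes $P=\emptyset$ (so Line~\ref{line_min_beta_kh_predict} is not executed), and, again by the structural fact, has no enforcing pair, so it performs no query and the outer loop terminates. This yields the last two claims: a bad iteration is the last one in which any query is performed, and the instance is prediction mandatory free from its end onward. Uniqueness is then immediate --- if $t_1<t_2$ were both bad, $t_1$ would be ``the last iteration performing a query'' while $t_2>t_1$ still performs the query of Line~\ref{line_min_beta_kh_predict}, a contradiction.
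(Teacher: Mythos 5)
Your proof is correct and takes essentially the same approach as the paper's: both reduce the lemma to the structural fact that the current instance has a prediction mandatory interval if and only if some $\pred{w}_j$ enforces some $I_i$, and both establish the nontrivial direction by examining the leftmost interval $I_l$ in a witnessing set $S$ and the leftmost interval $I_{l'}$ of $S\setminus\{I_l\}$. The only difference is presentational — you argue the contrapositive with a case split on conditions (a)/(b) of Lemma~\ref{lema_mandatory_min}, while the paper cases directly on whether $\pred{w}_l$ is the minimum predicted value in $S$ — and the subsequent bookkeeping on the outer loop is the same.
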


\begin{proof}
Suppose that there is an iteration in which Line~\ref{line_min_beta_kh_predict} is executed but no query is performed in Lines~\ref{line_min_beta_wit_trio}--\ref{line_min_beta_query_first}.
We claim that, before the test in Line~\ref{lin_min_beta_cond_trio} is executed, the current instance has no prediction mandatory intervals.
Suppose, for the sake of contradiction, that a set~$S$ satisfies some condition in Lemma~\ref{lema_mandatory_min} to contain a prediction mandatory interval.
Let~$I_i$ be leftmost in~$S$, and $I_j$ be leftmost in $S \setminus \{I_i\}$.
We claim that~$\pred{w}_i$ enforces~$I_j$ or some $\pred{w}_l$ enforces~$I_i$ with $I_l \in S$, which contradicts the fact that no query is performed in Lines~\ref{line_min_beta_wit_trio}--\ref{line_min_beta_query_first}.
Let $\pred{w}^*$ be the predicted minimum value in~$S$.
If~$\pred{w}_i \neq w^*$, then $\pred{w}^* \in I_i$ because~$I_i$ is leftmost, so~$\pred{w}^*$ enforces~$I_i$.
Otherwise $\pred{w}_i = w^*$ and we have two cases: (a) if $\pred{w}_i > L_j$, then~$\pred{w}_i \in I_j$ because Line~\ref{line_min_beta_kh_mandatory_loop} prevents that $I_j \subseteq I_i$, so~$\pred{w}_i$ enforces~$I_j$; (b) if $\pred{w}_i \leq L_j$, then some interval $I_l \in S$ must have $\pred{w}_l \in I_i$, otherwise there would be no prediction mandatory intervals in~$S$, so $\pred{w}_l$ enforces $I_i$.

This claim implies that no query is performed in the next iteration because:
\begin{enumerate}
 \item No query can be performed in Line~\ref{line_min_beta_kh_mandatory_first} because known mandatory intervals are queried in the last execution of Line~\ref{line_min_beta_kh_mandatory_loop}.
 \item From the previous item, the instance does not change and remains without prediction manda\-to\-ry intervals, so no query is performed in Lines~\ref{line_min_beta_kh_predict} and~\ref{line_min_beta_kh_mandatory_loop}.
 \item No query can be made in Lines~\ref{line_min_beta_wit_trio}--\ref{line_min_beta_query_first} because, if $\pred{w}_j$ enforces~$I_i$, then Lemma~\ref{lemma_min_beta_kh_enforce_predmand} implies that~$I_i$ is prediction mandatory, which contradicts the previous item.
\end{enumerate}

Therefore the instance after this point has no prediction mandatory intervals and the lemma holds.
\end{proof}

\begin{lem}
\label{lemma_min_no_witness_implies_solved}
Let $I_i, I_j$ be a pair that satisfies the condition in Line~\ref{line_min_beta_pair} leading to a query of~$I_i$.
Every set~$S$ containing~$I_j$ will be solved after querying~$I_i$, or will be solved using only known mandatory queries in Line~\ref{line_min_beta_kh_mandatory_first} in the next consecutive iterations of the loop, or~$I_j$ is certainly not the minimum in~$S$.
\end{lem}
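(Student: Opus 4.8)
The plan is to analyze the structural configuration of a set $S$ containing $I_j$ at the moment the pair $(I_i, I_j)$ satisfies the condition in Line~\ref{line_min_beta_pair}, i.e., $\pred{w}_j$ enforces $I_i$. By the definition of ``enforces,'' there is a set $S'$ witnessing this, but we must argue about \emph{every} set $S$ that contains $I_j$, not just $S'$. So first I would fix an arbitrary such $S$ and let $I_l$ be the leftmost interval in $S$. The key case split is whether $I_j$ is leftmost in $S$, or $I_i$ is leftmost in $S$, or neither.

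First I would handle the case where $I_i$ is leftmost in $S$ (or $I_i = I_l$). After querying $I_i$, Lemma~\ref{lemma_min_first_implies_mandatory} applies directly (using that $I_i$ cannot properly contain another interval of $S$, since otherwise Line~\ref{line_min_beta_kh_mandatory_first} would already have queried it as known mandatory): either $w_i$ is revealed as the minimum of $S$, or $w_i$ lies in the next leftmost interval, which becomes known mandatory, and we iterate. Thus $S$ is solved via known mandatory queries in the following iterations. Next, the case where $I_j$ is leftmost in $S$: here I would argue that either $I_j$ is known not to be the minimum in $S$ (the third alternative in the statement), or after querying $I_i$ the value $w_i$ is revealed; if $w_i \notin I_j$ then since $I_j$ is leftmost, $I_j$ must be the minimum unless some other true value lies in $I_j$ — and any interval whose true value lies in $I_j$ makes $I_j$ known mandatory by Corollary~\ref{cor_min_left_mandatory}, so again $S$ is solved by subsequent known mandatory queries. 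The delicate point is that $w_i$ itself might land in $I_j$; but by the enforcing relation $\pred{w}_j \in I_i$, and the geometry forces $L_i \le L_j$ or $L_j \le L_i$ depending on which is leftmost, so I would carefully check that if $w_i \in I_j$ then $I_j$ is itself known mandatory (it contains a revealed true value), which puts us back in the good case.

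The remaining case is where neither $I_i$ nor $I_j$ is leftmost in $S$; let $I_l$ be the actual leftmost interval. Here the condition of Line~\ref{lin_min_beta_cond_trio} was \emph{not} met, so in particular $\{I_j, I_l\}$ is not flagged as a witness set together with the enforcing relation — I would use Lemma~\ref{lemma_min_beta_kh_unique_loop}'s internal reasoning (the claim about predicted minima) to deduce that either $\pred{w}_l \in I_j$, which would make the enforcing pair $(I_j, I_l)$ eligible for Line~\ref{lin_min_beta_cond_trio} (contradiction, since $I_i \ne I_l$ gives the required third interval), or the predicted minimum of $S$ is such that $I_j$ is already predicted not to be minimal; in the latter case, since the algorithm is in Phase~1 and $S$ contributes no new prediction mandatory interval beyond what is handled, I would conclude $I_j$ is certainly not the minimum in $S$ once the relevant values stabilize.

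The main obstacle I expect is the last case: ruling out, or correctly absorbing, the scenario where $S$ genuinely needs a query to $I_j$ that the algorithm has deferred. The resolution should be that whenever such a need persists, either $I_j$ becomes known mandatory (and is queried in Line~\ref{line_min_beta_kh_mandatory_first} of the next iteration — which the lemma statement deliberately does not exclude), or the configuration forces $I_j$ to be non-minimal in $S$. Getting the trichotomy airtight — especially distinguishing ``$I_j$ known mandatory next iteration'' from ``$S$ solved by other known mandatory queries'' from ``$I_j$ not the minimum'' — is where the careful bookkeeping with leftmost intervals, Corollary~\ref{cor_min_left_mandatory}, and the negation of the Line~\ref{lin_min_beta_cond_trio} condition all have to be combined.
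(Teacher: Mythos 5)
Your case split on which interval is leftmost in $S$ is the right skeleton, and your treatment of the $I_i$-leftmost case via Lemma~\ref{lemma_min_first_implies_mandatory} matches the paper. But you fail to extract the key leverage of the hypothesis: the \emph{failure} of Line~\ref{lin_min_beta_cond_trio} gives very strong structural constraints on every set $S$ containing $I_j$, via the witness-pair lemma. Concretely, $\{I_j, I_l\}$ is a witness set whenever $I_j$ and $I_l$ intersect and one of them is leftmost in some set. Combined with the fact that $\pred{w}_j$ already enforces $I_i$, any interval $I_l \ne I_i$ that forms such a witness pair with $I_j$ would have satisfied Line~\ref{lin_min_beta_cond_trio}. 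This single observation resolves both of your problematic cases cleanly, and you don't use it.

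Specifically, in your $I_j$-leftmost case you argue that if $w_i \notin I_j$ then $S$ is solved unless some other true value lands in $I_j$, and then rely on a cascade of later mandatory queries. But you have no control over other intervals of $S$ that merely \emph{intersect} $I_j$: after querying $I_i$ alone, $S$ is not solved if such an interval exists. What you need (and the paper derives) is that the failure of Line~\ref{lin_min_beta_cond_trio} forces $I_i$ to be the \emph{only} interval of $S$ intersecting $I_j$ (otherwise any other intersecting $I_l$ gives a witness pair $\{I_j, I_l\}$ and the triple $(I_i, I_j, I_l)$ would trigger Line~\ref{lin_min_beta_cond_trio}). With that fact, $w_i \notin I_j$ immediately solves $S$, and $w_i \in I_j$ makes $I_j$ known mandatory by Corollary~\ref{cor_min_left_mandatory}. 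Similarly, your ``neither leftmost'' case should not go through ``predicted minima'' or the internal reasoning of Lemma~\ref{lemma_min_beta_kh_unique_loop} at all: if the leftmost interval $I_l$ of $S$ is distinct from both $I_i$ and $I_j$ and $I_j \cap I_l \ne \emptyset$, then $\{I_j, I_l\}$ is again a witness pair and Line~\ref{lin_min_beta_cond_trio} would fire, contradiction; so $I_j$ does not intersect the leftmost interval and is certainly not the minimum. Your proposed route in that case is both speculative and aimed in the wrong direction, and the hand-waving you yourself flag (``once the relevant values stabilize'') is exactly where the argument as stated does not close.
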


\begin{proof}
Consider the instance before~$I_i$ is queried.
Due to the test in Line~\ref{lin_min_beta_cond_trio}, for every set~$S$ containing~$I_j$, the following facts hold:
\begin{enumerate}[(1)]
 \item If~$I_j$ is leftmost in~$S$, then $S$ is already solved, or $I_i \in S$ and $I_i$ is the only interval in~$S$ that intersects~$I_j$.
 \item If~$I_j$ is not leftmost in~$S$ but intersects the leftmost interval~$I_{i'}$ in~$S$, then $I_{i'} = I_i$.
 \item If~$I_j$ is not leftmost in~$S$ and does not intersect the leftmost interval in~$S$, then~$I_j$ is certainly not the minimum in~$S$.
\end{enumerate}
If condition~(1) holds and~$S$ is not solved then, after querying~$I_i$, either $w_i \notin I_j$ and~$S$ becomes solved, or $w_i \in I_j$ and $I_j$ will be queried in Line~\ref{line_min_beta_kh_mandatory_first} due to Corollary~\ref{cor_min_left_mandatory}, and then~$S$ becomes solved.
If condition~(2) holds, then the result follows from Lemma~\ref{lemma_min_first_implies_mandatory} because~$I_i$ is leftmost in~$S$.
The result follows trivially if condition~(3) holds.
\end{proof}

This lemma clearly implies the following corollary.

\begin{coro}
\label{cor_min_beta_no_witness_anymore}
Let $I_i, I_j$ as in Lemma~\ref{lemma_min_no_witness_implies_solved}.
After~$I_i$ is queried, $I_j$~will no longer be a prediction mandatory interval identified in Line~\ref{line_min_beta_kh_calc_predict_first} or \ref{line_min_beta_kh_calc_predict_second}, or be part of a triple satisfying the condition in Line~\ref{lin_min_beta_cond_trio}, or a pair satisfying the condition in Line~\ref{line_min_beta_pair}, or be part of the vertex cover~$Q$ queried in Line~\ref{line_min_beta_kh_vc}, or be a known mandatory interval queried in Line~\ref{line_min_beta_kh_mandatory_loop},~\ref{line_min_beta_kh_before_vc} or~\ref{line_min_beta_kh_after_vc}.
\end{coro}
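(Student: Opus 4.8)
The plan is to read off the corollary directly from Lemma~\ref{lemma_min_no_witness_implies_solved} via a short case analysis over the sets containing $I_j$. The observation to set up first is that \emph{every} role in the statement --- being a prediction-mandatory interval found in Line~\ref{line_min_beta_kh_calc_predict_first}/\ref{line_min_beta_kh_calc_predict_second}, being the enforced or the enforcing interval of a witness pair or triple in Line~\ref{lin_min_beta_cond_trio}/\ref{line_min_beta_pair}, being an endpoint of a dependency-graph edge and hence a candidate for the vertex cover of Line~\ref{line_min_beta_kh_vc}, or being a known-mandatory interval queried in Lines~\ref{line_min_beta_kh_mandatory_loop},~\ref{line_min_beta_kh_before_vc} or~\ref{line_min_beta_kh_after_vc} --- requires some \emph{not-yet-solved} set $S$ with $I_j\in S$ in which $I_j$ is leftmost, or leftmost in $S$ minus its leftmost interval, or could still be the minimum. (For the enforce-based roles this is immediate from the definition of ``enforces'' together with Lemma~\ref{lemma_min_beta_kh_enforce_predmand}; for the (known-)mandatory roles it is Lemma~\ref{lema_mandatory_min} and Corollary~\ref{cor_min_left_mandatory}; for the vertex cover it is the definition of the dependency graph.) Hence it suffices to show that, after $I_i$ is queried, no set $S\ni I_j$ has this shape.

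Fix any such $S$ and apply Lemma~\ref{lemma_min_no_witness_implies_solved}. If $S$ is solved right after $I_i$ is queried, or becomes solved once the known-mandatory queries of Line~\ref{line_min_beta_kh_mandatory_first} in the following iterations are carried out, then $S$ never imposes any of the above structures: a solved set has no (prediction-)mandatory interval, induces no dependency-graph edge, and supports no enforce relation, and a set once solved stays solved because queries only reveal more. The only point that needs care here is the window before those Line~\ref{line_min_beta_kh_mandatory_first} queries complete; but these queries are performed at the very start of an iteration, before $P$ is recomputed and before Lines~\ref{lin_min_beta_cond_trio}--\ref{line_min_beta_query_first} and~\ref{line_min_beta_kh_vc} are reached, so $S$ is already solved by the time any listed role could be assigned on account of $S$. (If one of those queries happens to be a query of $I_j$ itself, it lies in Line~\ref{line_min_beta_kh_mandatory_first}, which the statement explicitly excludes.)

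The remaining case of Lemma~\ref{lemma_min_no_witness_implies_solved} is that $I_j$ is certainly not the minimum of $S$; by the proof of that lemma this is the situation where $I_j$ is not leftmost in $S$ and does not intersect the leftmost interval $I_l$ of $S$, so $U_l\le L_j$. Then $w_l\le U_l\le L_j$ and $\pred{w}_l\le U_l\le L_j$, so both the true and the predicted minimum of $S$ are at most $L_j$, strictly below every value of the open interval $I_j$; these are facts about the fixed true and predicted values, hence permanent. It follows that $I_j$ is never a (predicted) minimum of $S$ nor contains the (predicted) minimum value of $S$, so by Lemma~\ref{lema_mandatory_min} it is neither mandatory nor prediction-mandatory for $S$, and --- via Lemma~\ref{lemma_min_beta_kh_enforce_predmand} --- is never the enforced interval of a pair or triple coming from $S$; since the lower limit of $I_l$ stays below $L_j$, $I_j$ is never leftmost in $S$, and since $\pred{w}_j>L_j\ge U_l$ we have $\pred{w}_j\notin I_l$, which rules out $I_j$ being the enforcing interval relative to $S$; and as $I_j$ neither is leftmost in $S$ nor intersects $I_l$, $S$ contributes no dependency-graph edge incident to $I_j$, so $I_j$ is isolated in the dependency graph of Line~\ref{line_min_beta_kh_vc} and not in any minimum vertex cover. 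Taking the union over all $S\ni I_j$ yields the corollary.

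I expect the only real difficulty to be bookkeeping, not mathematics: one must match each of the listed roles to the exact line of Algorithm~\ref{ALG_min_beta} where it can be assigned and verify that the certificate supplied by Lemma~\ref{lemma_min_no_witness_implies_solved} (``$S$ solved'' or ``$I_j$ lies to the right of both minima of $S$'') is stable under the queries the algorithm performs in the meantime. Once that correspondence is made precise, the three cases above close the argument immediately.
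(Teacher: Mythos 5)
Your plan follows the paper's route: the paper treats the corollary as an immediate consequence of Lemma~\ref{lemma_min_no_witness_implies_solved}, and you unfold that implication into a set-by-set case analysis over the three alternatives the lemma provides, which is the right decomposition. Your handling of the ``solved'' cases (1) and (2), including the observation that any residual query of $I_j$ there lands in Line~\ref{line_min_beta_kh_mandatory_first}, which the corollary explicitly permits, is fine.

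There is, however, a genuine gap in your case (3). You prove that $I_j$ is never leftmost in $S$, that $\pred{w}_j\notin I_l$, and that $I_j\cap I_l=\emptyset$, and you conclude that $S$ can never give $I_j$ an enforcing role, a dependency-graph edge, or a place in the vertex cover. Each of those conclusions silently assumes that $I_l$ remains the leftmost interval of $S$, which need not be the case: once $I_l$ is queried, the leftmost interval of $S$ can change to some non-trivial $I_m$ with $L_m\le w_l$, and nothing you have written prevents $U_m>L_j$, in which case $I_m$ intersects $I_j$ and can contain $\pred{w}_j$. Your permanent fact (the true and predicted minima of $S$ lie below $L_j$) does rule out $I_j$ being (prediction-)mandatory, but on its own it does not rule out $I_j$ being an endpoint of a dependency-graph edge or the enforcing interval of a pair. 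To close this you need the additional invariant that at the points where the algorithm actually tests these conditions (Lines~\ref{line_min_beta_kh_calc_predict_first},~\ref{line_min_beta_kh_calc_predict_second},~\ref{lin_min_beta_cond_trio},~\ref{line_min_beta_pair},~\ref{line_min_beta_kh_vc}) the preceding mandatory sweeps have left no known mandatory intervals. Combined with $w_l<L_j$ this is decisive: if the current leftmost $I_m\neq I_l$ were non-trivial with $L_m<w_l<U_m$, then $w_l\in I_m$ and Corollary~\ref{cor_min_left_mandatory} would make $I_m$ known mandatory, a contradiction; so either $U_m\le w_l<L_j$ (hence $I_m$ is disjoint from $I_j$ and cannot contain $\pred{w}_j>L_j$), or the leftmost interval is a singleton of value at most $w_l$ and $S$ is already solved. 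Only with this step do the enforcing-interval, dependency-graph, and --- a slot you do not list explicitly --- witness-pair-partner sub-cases of the triple in Line~\ref{lin_min_beta_cond_trio} actually close.
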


\thmminimumhop*

\begin{proof}
{\bf $\gamma$-robustness.}
Intervals queried in Line~\ref{line_min_beta_mand_trio} are in any feasible solution.

Fix an optimum solution $\OPT$.
Let $\mathcal{I}'$ be the set of unqueried intervals in Line~\ref{line_min_beta_kh_calc_predict_first} at the iteration of the loop consisting of Lines~1--\ref{line_min_beta_kh_cond_loop} in which Line~\ref{line_min_beta_kh_predict} is executed but no query is performed in Lines~\ref{line_min_beta_wit_trio}--\ref{line_min_beta_pair} (or before Line~\ref{line_min_beta_kh_before_vc} if no such iteration exists).
Recall that Lemma~\ref{lemma_min_beta_kh_unique_loop} states that there is at most one such iteration, and it has to be the last iteration in which some interval is queried.
If the problem is undecided at this point, then $|\OPT \cap \mathcal{I}'| \geq 1$, and $|Q| \leq \gamma - 2$ implies $|Q| \leq (\gamma - 2) \cdot |\OPT \cap \mathcal{I}'|$.
Also, since~$Q'$ is a minimum vertex cover, then $|Q'| \leq |\OPT \cap \mathcal{I}'|$.
Let~$M$ be the set of intervals in~$\mathcal{I}'$ that are queried in Lines~\ref{line_min_beta_kh_mandatory_loop},~\ref{line_min_beta_kh_before_vc} and~\ref{line_min_beta_kh_after_vc}; clearly $M \subseteq \OPT \cap \mathcal{I}'$.
Thus $|Q| + |Q'| + |M| \leq \gamma \cdot |\OPT \cap \mathcal{I}'|$.

Now consider an iteration of the loop in which some query is performed in Lines~\ref{line_min_beta_wit_trio}--\ref{line_min_beta_pair}.
Let~$P'$ be the set of intervals queried in Lines~\ref{line_min_beta_kh_predict},~\ref{line_min_beta_wit_trio} and~\ref{line_min_beta_query_first}.
If Line~\ref{line_min_beta_wit_trio} is executed, then note that $\{I_j, I_l\}$ is a witness set.
If a query is performed in Line~\ref{line_min_beta_query_first}, then note that $\{I_i, I_j\}$ is a witness set.
Due to Lemma~\ref{lemma_min_no_witness_implies_solved}, if~$I_j$ is queried, then it is in Line~\ref{line_min_beta_kh_mandatory_first} at the next iteration. 
Independent of $I_j$ being queried, if a query is performed in Line~\ref{line_min_beta_query_first}, we include~$I_j$ in~$P'$ for the sake of this analysis.
Due to Corollary~\ref{cor_min_beta_no_witness_anymore}, $I_j$~is not considered more than once in this case, and is not considered in any of the previous cases.
Either way, it holds that $P'$ is a witness set of size at most~$\gamma$.

The remaining intervals queried in Lines~\ref{line_min_beta_kh_mandatory_first} and~\ref{line_min_beta_kh_mandatory_loop} are in any feasible solution.

{\bf Bound of $(1 + \frac{1}{\gamma})(1 + \frac{k_h}{\opt})$.}
Fix an optimum solution $\OPT$.
Let $h'(I_j)$ be the number of intervals $I_i$ such that $I_i, I_j \in S$ for some $S \in \mathcal{S}$, and the value of $I_i$ passes over an endpoint of $I_j$.
From the arguments in the proof of Theorem~\ref{Theo_hop_distance_mandatory_distance}, it can be seen that, for each interval~$I_j$ that is prediction mandatory at some point and is not in $\OPT$, we have that $h'(I_j) \geq 1$.
For a subset $\mathcal{J} \subseteq \mathcal{I}$, let $h'(\mathcal{J}) = \sum_{I_j \in \mathcal{J}} h'(I_j)$.
Note that $k_h = h'(\mathcal{I})$ holds by reordering summations.

In the following, we will show for various disjoint subsets $\mathcal{J}\subseteq \mathcal{I}$ that $|\mathcal{J} \cap \ALG| \le (1+\frac{1}{\gamma})\cdot (|\OPT \cap \mathcal{J}| + h'(\mathcal{J}))$.
The subsets $\mathcal{J}$ will form a partition of $\mathcal{I}$, so it is clear that the bound of $(1+\frac{1}{\gamma}) \cdot (1 + \frac{k_h}{\opt})$ on the competitive ratio of the algorithm follows.
Furthermore, if $\gamma = 2$, then we will show for every $\mathcal{J}$ that $|\mathcal{J} \cap \ALG| \le 1.5 \cdot |\OPT \cap \mathcal{J}| + h'(\mathcal{J})$, so we have a bound of $1.5 + k_h / \opt$ on the competitive ratio.

Intervals queried in Lines~\ref{line_min_beta_kh_mandatory_loop} and~\ref{line_min_beta_kh_before_vc} are in any feasible solution, so the set~$P_0$ of these intervals satisfies $|P_0| \leq |\OPT \cap P_0|$.

If there is an execution of the loop consisting of Lines~1--\ref{line_min_beta_kh_cond_loop} that does not perform queries in Lines~\ref{line_min_beta_wit_trio}--\ref{line_min_beta_pair}, then let~$P_1$ be the set of intervals queried in Line~\ref{line_min_beta_kh_predict}.
Every interval~$I_j \in P_1$ is prediction mandatory, so if $I_j \notin \OPT$ then $h'(I_j) \geq 1$.
Thus we have that $|P_1| \leq |P_1 \cap \OPT| + h'(P_1)$.

Let $\mathcal{I}'$ be the set of unqueried intervals before Line~\ref{line_min_beta_kh_vc} is executed.
Since $Q'$ is a minimum vertex cover, we have that $|Q'| \leq |\OPT \cap \mathcal{I}'|$.
Let~$M$ be the set of intervals queried in Line~\ref{line_min_beta_kh_after_vc}.
Due to Lemma~\ref{lema_mandatory_min}, each interval $I_i \in M$ is known mandatory because it contains the value~$w_j$ of an interval $I_j \in Q$.
But Lemma~\ref{lemma_min_beta_kh_unique_loop} implies that $\pred{w}_j \notin I_i$ when Line~\ref{line_min_beta_kh_vc} was executed, so $h'(I_i) \geq 1$.
Thus we have that $|\mathcal{I}' \cap \ALG| = |Q' \cup M| \leq |\mathcal{I}' \cap \OPT| + h'(M) \leq |\mathcal{I}' \cap \OPT| + h'(\mathcal{I}')$.

Consider an execution of the loop in which some query is performed in Lines~\ref{line_min_beta_wit_trio}--\ref{line_min_beta_pair}.
Let~$Q$ be the set of intervals queried in Line~\ref{line_min_beta_kh_predict}, and let~$W$ be the set of intervals queried in Lines~\ref{line_min_beta_wit_trio}--\ref{line_min_beta_query_first}.
If a query is performed in Line~\ref{line_min_beta_query_first} and $I_j$ is queried in Line~\ref{line_min_beta_kh_mandatory_first} at the next iteration, then include~$I_j$ in~$W$ as well.
Note that $|Q| = \gamma - 2$. If~$\pred{w}_j$ enforces~$I_i$ in Line~\ref{lin_min_beta_cond_trio} or~\ref{line_min_beta_pair}, then~$I_i$ is prediction mandatory due to Lemma~\ref{lemma_min_beta_kh_enforce_predmand}.
Also, note that $h'(Q) \geq |Q \setminus \OPT|$, since every interval in~$Q$ is prediction mandatory at some point.
We divide the proof in three cases.
For a pair $\{I_i, I_j\}$ as in Line~\ref{line_min_beta_pair}, note that, due to Corollary~\ref{cor_min_beta_no_witness_anymore}, $I_j$~is not considered more than once, and is not considered in any of the previous cases.

\begin{enumerate}[(a)]
 \item If $|W| = 1$, then some interval~$I_i$ was queried in Line~\ref{line_min_beta_query_first} because $\pred{w}_j$ enforces $I_i$, and $I_j$ is not queried by the algorithm due to Lemma~\ref{lemma_min_no_witness_implies_solved}.
 Then it suffices to note that $\{I_i, I_j\}$ is a witness set to see that $|Q \cup W| \leq |\OPT \cap (Q \cup \{I_i, I_j\})| + h'(Q)$.
 \item Consider $|W| = 2$.
 If $W$ is a pair of the form $\{I_j, I_l\}$ queried in Line~\ref{line_min_beta_wit_trio}, then $h'(I_i) \geq 1$ because~$\pred{w}_j$ enforces~$I_i$ but $w_j \notin I_i$.
 We can conceptually move this contribution in the hop distance to~$I_j$, making $h'(I_i) := h'(I_i) - 1$ and $h'(I_j) := h'(I_j) + 1$.
 (If $I_i$ is considered another time in Line~\ref{lin_min_beta_cond_trio} or in another point of the analysis because it is enforced by some predicted value, then it has to be the predicted value of an interval $I_{j'} \neq I_j$, so we are not counting the contribution to the hop distance more than once.)
 If $W$ is a pair of the form $\{I_i, I_j\}$ queried in Line~\ref{line_min_beta_query_first} and in Line~\ref{line_min_beta_kh_mandatory_first} at the next iteration, then either $W \subseteq \OPT$ or $h'(I_i) = 1$: It holds that~$I_j$ is mandatory, so if~$I_i$ is not in $\OPT$ then it suffices to see that $\pred{w}_j$ enforces $I_i$.
 Either way, the fact that~$W$ is a witness set is enough to see that $|Q \cup W| \leq |\OPT \cap (Q \cup W)| + h'(Q) + h'(W)$.
 \item If $|W| = 3$, then $W = \{I_i, I_j, I_l\}$ as in Line~\ref{lin_min_beta_cond_trio}, and $|Q \cup W| = \gamma + 1$.
 Also, it holds that~$I_i$ and at least one of $\{I_j, I_l\}$ are in any feasible solution.
 This implies that at least $\frac{\gamma}{\gamma+1} \cdot |Q \cup W| - h'(Q)$ of the intervals in $Q \cup W$ are in $\OPT$, so $|Q \cup W| \leq (1 + \frac{1}{\gamma})(|\OPT \cap (Q \cup W)| + h'(Q))$.
 If $\gamma = 2$, then $Q = \emptyset$ and we have that $|W| \leq 1.5 \cdot |\OPT \cap W|$.
\end{enumerate}

The remaining intervals queried in Line~\ref{line_min_beta_kh_mandatory_first} are in any feasible solution.
\end{proof}

\subsection{Analysis of Algorithm~\ref{ALG_min_alpha} (Theorem \ref{thm:min-alpha})}
\label{app:min_alpha}

\begin{restatable}{lem}{LemmaMinNoPredMandAfterVC}
\label{lemma_min_no_pred_mand_after_vc}
Every interval queried in Line~\ref{line_min_param_mand_after_vc} of Algorithm~\ref{ALG_min_alpha} is in $\mathcal{I}_R \setminus \mathcal{I}_P$.
\end{restatable}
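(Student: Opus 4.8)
My plan is to prove the two containments separately: every interval $I_i$ queried in Line~\ref{line_min_param_mand_after_vc} lies in $\mathcal{I}_R$, and it does not lie in $\mathcal{I}_P$. The first is immediate: $I_i$ is queried there only because it is \emph{known mandatory} for the current (partially revealed) instance, and by the characterization of known mandatory intervals derived from Lemma~\ref{lema_mandatory_min} (see Corollary~\ref{cor_min_left_mandatory} and Lemma~\ref{lemma_min_first_implies_mandatory}) a known mandatory interval lies in every feasible query set of the verification problem, i.e.\ in $\mathcal{I}_R$.

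For $I_i\notin\mathcal{I}_P$ the plan is to show that \emph{every} interval of $\mathcal{I}_P$ is queried before Line~\ref{line_min_param_mand_after_vc} is reached; since no interval is queried twice, this gives the claim. I use the invariant that throughout Algorithm~\ref{ALG_min_alpha} the variable $P$ equals the set of still-unqueried intervals of $\mathcal{I}_P$: it starts as $\mathcal{I}_P$, and an interval leaves $P$ only in Lines~\ref{line_min_param_big}, \ref{line_min_param_mandatory}, \ref{line_min_param_small}, always together with being queried. Hence it suffices to prove that when the while loop of Lines~\ref{line_min_param_cond}--\ref{line_min_param_small} terminates, every $p$ still in $P$ is either queried in Line~\ref{line_min_param_before_vc} or is not queried by the algorithm at all, and in particular never in Line~\ref{line_min_param_mand_after_vc}.

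At loop termination, for each remaining $p\in P$ there is no unqueried $b$ with $\{p,b\}$ a witness set. I would argue by a case analysis over the sets $S$ containing $p$, using Kahan's witness lemma: if $p$ is leftmost in $S$, then every interval of $S$ meeting $p$ is already queried, and a short inspection of the revealed values then shows that $S$ is already solved unless some revealed value lies in $I_p$, in which case $p$ is known mandatory by Corollary~\ref{cor_min_left_mandatory}; if the leftmost interval $I_l\ne p$ of $S$ is still unqueried, it cannot meet $p$ (else $\{I_l,p\}$ would be a witness set), so $I_l$ lies entirely to the left of $I_p$, and then $p$ is not mandatory for $S$ and has no incident edge from $S$ in the dependency graph. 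Aggregating over all $S\ni p$: if some $S$ makes $p$ known mandatory, $p$ is queried in Line~\ref{line_min_param_before_vc}; otherwise $p$ is an isolated vertex in the dependency graph of Line~\ref{line_min_param_vc}, hence not in the computed minimum vertex cover, and since $p$ stays non-mandatory for every set containing it no matter which further intervals are queried (Lemma~\ref{lema_mandatory_min}), it is never queried afterwards. This establishes the reduction and hence $I_i\notin\mathcal{I}_P$.

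The crux is exactly this last case analysis — turning ``$p$ has no unqueried witness partner'' into ``$p$ is known mandatory before the vertex cover, or permanently irrelevant.'' Making it rigorous requires care with how the dependency graph is defined on a partially revealed instance (in particular that the leftmost interval of an unsolved set may be taken unqueried, via Lemma~\ref{lemma_min_first_implies_mandatory} and Lemma~\ref{lemma_min_query_vc_enough}), with bookkeeping of which revealed values already solve a set, and with checking that querying the vertex cover in Line~\ref{line_min_param_vc} cannot revive a prediction mandatory interval as a really mandatory one. Corollary~\ref{cor_min_left_mandatory}, Lemma~\ref{lemma_min_first_implies_mandatory}, and Lemma~\ref{lemma_min_query_vc_enough} supply the structural facts; the remaining effort is organizing them into the case distinction above.
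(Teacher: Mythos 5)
Your proposal takes essentially the same approach as the paper: after the immediate observation that a known-mandatory query is in $\mathcal{I}_R$, the heart of both arguments is that the while-loop termination condition (no unqueried witness partner for any $p \in P$) forces every still-unqueried prediction-mandatory interval to be neither the leftmost interval of an unsolved set nor an intersecting neighbor of the leftmost, hence to be irrelevant to the dependency graph and never queried in Line~\ref{line_min_param_mand_after_vc}. Your version is more explicit about the invariant $P = \mathcal{I}_P \setminus (\text{queried intervals})$ and correctly flags the remaining bookkeeping (leftmost of an unsolved set being unqueried, stability of non-mandatoriness under further queries), but the underlying reasoning matches the paper's.
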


\begin{proof}
Clearly every such interval is in~$\mathcal{I}_R$ because it is known {to be} mandatory, so it remains to prove that it is not in $\mathcal{I}_P$.
Consider a set~$S$.
If an interval $I_j \in S \cap \mathcal{I}_P$ is unqueried in Line~\ref{line_min_param_vc}, then the condition for identifying a witness set in Line~\ref{line_min_param_cond} implies that, before Line~\ref{line_min_param_vc} is executed, $I_j$~is not leftmost in~$S$ and does not intersect the leftmost interval in~$S$.
Thus,
{it is not necessary
to query~$I_j$ to solve~$S$, and we can conclude that $I_j$ is not queried in Line~\ref{line_min_param_mand_after_vc}.}
\end{proof}

\ThmMinAlpha*

\begin{proof}
{\bf $\gamma$-robustness.}
Given $P' \cup \{b\}$ queried in Line~\ref{line_min_param_big}, at least one interval is in any feasible solution since $\{b, p\}$ is a witness set, thus $P' \cup \{b\}$ is a witness set of size $\gamma$.

Line~\ref{line_min_param_small} is executed at most once, since the size of~$P$ never increases.
Fix an optimum solution $\OPT$, and let~$\mathcal{I}'$ be the set of unqueried intervals before Line~\ref{line_min_param_small} is executed (or before Line~\ref{line_min_param_before_vc} if Line~\ref{line_min_param_small} is never executed).
Let~$P$ be the set of intervals queried in Line~\ref{line_min_param_small}.
If the problem is undecided at this point, then $|\OPT \cap \mathcal{I}'| \geq 1$, so $|P| \leq \gamma - 2$ implies $|P| \leq (\gamma - 2) \cdot |\OPT \cap \mathcal{I}'|$.
Also, since~$Q$ is a minimum vertex cover, then $|Q| \leq |\OPT \cap \mathcal{I}'|$.
Let~$M$ be the set of intervals in~$\mathcal{I}'$ that are queried in Lines~\ref{line_min_param_before_vc} or~\ref{line_min_param_mand_after_vc}; clearly $M \subseteq \OPT \cap \mathcal{I}'$.
Thus $|P| + |Q| + |M| \leq \gamma \cdot |\OPT \cap \mathcal{I}'|$.

The intervals queried in Line~\ref{line_min_param_mandatory} are in any feasible solution, and the claim follows.

\smallskip 
\noindent \textbf{Bound of $(1+\frac{1}{\gamma-1}) \cdot (1 + \frac{k_M}{\opt})$.}
Fix an optimum solution $\OPT$.
In the following, we will show for various disjoint subsets $\mathcal{J}\subseteq \mathcal{I}$ that $|\mathcal{J} \cap \ALG| \le (1+\frac{1}{\gamma-1})\cdot (|\OPT \cap \mathcal{J}| + k_{\mathcal{J}})$, where $k_{\mathcal{J}} \leq |\mathcal{J} \cap (\mathcal{I}_P \sym \mathcal{I}_R)|$.
The subsets $\mathcal{J}$ will form a partition of $\mathcal{I}$, so it is clear that the bound of $(1+\frac{1}{\gamma-1}) \cdot (1 + \frac{k_M}{\opt})$ on the competitive ratio of the algorithm follows.

Intervals queried in Lines~\ref{line_min_param_mandatory} and~\ref{line_min_param_before_vc} are part of any feasible solution, hence the set $P_0$ of these intervals satisfies $|P_0|\le | \OPT \cap P_0|$.

Given $P' \cup \{b\}$ queried in Line~\ref{line_min_param_big}, at least $\frac{\gamma-1}{\gamma}$ of the intervals in $P_{\gamma} \cup \{b\}$ are prediction mandatory for the initial instance.
Among those, let~$k' \leq k_M$ be the number of intervals in $\mathcal{I}_P \setminus \mathcal{I}_R$.
Thus $|\OPT \cap (P' \cup \{b\})| \geq \frac{\gamma-1}{\gamma} \cdot |P' \cup \{b\}| - k'$, which gives the desired bound, i.e., $|P' \cup \{b\}| \le (1+\frac{1}{\gamma-1})\cdot (|\OPT \cap (P' \cup \{b\})| + k')$.

Every interval queried in Line~\ref{line_min_param_small} that is not in $\OPT$ is in $\mathcal{I}_P \setminus \mathcal{I}_R$.
Hence, if there are $k''$ such intervals, then the set $P$ of intervals queried in Line~\ref{line_min_param_small} satisfies $|P| \le |\OPT \cap P| + k'' < (1+\frac{1}{\gamma})\cdot (|\OPT \cap P| + k'')$.

Let~$\mathcal{I}'$ be the set of unqueried intervals before Line~\ref{line_min_param_vc} is executed.
Then $|Q| \leq |\OPT \cap \mathcal{I}'|$ because~$Q$ is a minimum vertex cover.
Let~$M$ be the set of intervals that are queried in Line~\ref{line_min_param_mand_after_vc}.
It holds that $|Q \cup M| \leq |\OPT \cap \mathcal{I}'| + |M|$, so the claimed bound follows from Lemma~\ref{lemma_min_no_pred_mand_after_vc}.
\end{proof}

The parameter $\gamma$ in Theorem \ref{thm:min-alpha} is restricted to integral values since it determines sizes of query sets. Nevertheless, a generalization to arbitrary $\gamma \in \RR_{+}$ is possible at a small loss in the guarantee. We give the following rigorous upper bound on the achievable tradeoff of robustness and error-dependent competitive ratio.

\begin{restatable}{theorem}{ThmMinFractionalGamma}
	\label{thm:min-problem-arbitrary-gamma}
	For any real number $\gamma \geq 2$, there is a randomized algorithm for the minimum and sorting problem under uncertainty that achieves a competitive ratio of $\min\{(1+\frac{1}{\gamma-1}+\xi)\cdot(1+\frac{k_M}{\opt}), \gamma\}$, for $\xi \leq \frac{\gamma - \lfloor \gamma \rfloor}{(\gamma-1)^2} \leq1.$
\end{restatable}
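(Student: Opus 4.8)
The plan is to reduce the case of fractional $\gamma$ to the integral case already handled in Theorem~\ref{thm:min-alpha} by randomly rounding $\gamma$ to one of its two neighbouring integers, and then to bound the resulting expected competitive ratio. Write $\gamma = \lfloor\gamma\rfloor + \phi$ with $\phi = \gamma - \lfloor\gamma\rfloor \in [0,1)$; if $\phi = 0$ there is nothing to prove, so assume $\phi > 0$, which also forces $\lfloor\gamma\rfloor \ge 2$ since $\gamma \ge 2$. The algorithm first samples an integer $\Gamma$ with $\Gamma = \lfloor\gamma\rfloor$ with probability $1-p$ and $\Gamma = \lceil\gamma\rceil = \lfloor\gamma\rfloor + 1$ with probability $p$, for a value $p$ to be fixed below, and then runs Algorithm~\ref{ALG_min_alpha} with parameter $\Gamma$. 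Since Algorithm~\ref{ALG_min_alpha} with integer parameter $\Gamma$ always uses at most $\Gamma\cdot\opt$ queries (the $\Gamma$-robustness part of Theorem~\ref{thm:min-alpha} holds deterministically), and $\Gamma \le \lceil\gamma\rceil$, we would like to argue robustness $\gamma$; but $\lceil\gamma\rceil$ may exceed $\gamma$. The resolution is that the $\Gamma$-robustness bound is in fact attained only in a last, ``small'', iteration, and a more careful look (as in the proof of Theorem~\ref{thm:min-alpha}) gives $|\ALG| \le (\Gamma-2)\cdot|\OPT\cap\mathcal I'| + |Q| + |M|$ where $|Q|+|M| \le 2\,|\OPT\cap\mathcal I'|$ only in that final phase — so in expectation over $\Gamma$ the leading coefficient is $\EX[\Gamma] = \lfloor\gamma\rfloor + p$, and choosing $p = \phi$ yields expected robustness exactly $\gamma$.

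For the error-dependent bound, recall from the proof of Theorem~\ref{thm:min-alpha} that for each disjoint block $\mathcal J$ of queried intervals one has $|\mathcal J\cap\ALG| \le (1 + \tfrac{1}{\Gamma-1})(|\OPT\cap\mathcal J| + k_{\mathcal J})$, where $\sum_{\mathcal J} k_{\mathcal J} \le k_M$; the only block whose coefficient actually equals $1 + \tfrac1{\Gamma-1}$ is the one coming from the ``big'' iterations (Line~\ref{line_min_param_big}), and the remaining blocks have coefficient $1$. Summing and taking expectation over $\Gamma$, the competitive ratio is at most $1 + \EX\!\big[\tfrac{1}{\Gamma-1}\big]\cdot(1 + \tfrac{k_M}{\opt})$. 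Now
\[
  \EX\!\Big[\frac{1}{\Gamma-1}\Big] \;=\; \frac{1-\phi}{\lfloor\gamma\rfloor-1} + \frac{\phi}{\lfloor\gamma\rfloor}
  \;=\; \frac{\lfloor\gamma\rfloor-\phi}{(\lfloor\gamma\rfloor-1)\lfloor\gamma\rfloor}.
\]
One then checks, by a short computation comparing this with $\tfrac{1}{\gamma-1} = \tfrac{1}{\lfloor\gamma\rfloor-1+\phi}$, that
\[
  \EX\!\Big[\frac{1}{\Gamma-1}\Big] - \frac{1}{\gamma-1} \;=\; \frac{\phi(1-\phi)}{(\lfloor\gamma\rfloor-1)\lfloor\gamma\rfloor(\lfloor\gamma\rfloor-1+\phi)} \;=\; \xi,
\]
and since $\phi(1-\phi) \le \phi \le 1$, $\lfloor\gamma\rfloor \ge \gamma-1 \ge 1$ and $\lfloor\gamma\rfloor-1+\phi = \gamma-1$, this is bounded above by $\tfrac{\phi}{(\gamma-1)^2} = \tfrac{\gamma-\lfloor\gamma\rfloor}{(\gamma-1)^2}$, which in turn is at most $1$ because $\gamma-1 \ge 1$ and $\gamma - \lfloor\gamma\rfloor < 1$. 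This establishes the claimed $\xi$.

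The main obstacle I anticipate is not the probabilistic rounding itself but making the robustness argument tight: one must verify that the expected number of queries is genuinely $(\lfloor\gamma\rfloor + p)\cdot\opt$ and not $\lceil\gamma\rceil\cdot\opt$ with probability $p$, i.e.\ that the charging scheme of Theorem~\ref{thm:min-alpha} degrades linearly in the parameter block by block rather than only globally. Concretely, one needs the per-iteration witness sets produced in Line~\ref{line_min_param_big} to have size at most $\Gamma$ (which they do by construction), and the residual ``$\le \Gamma - 2$'' intervals of the final iteration to be charged against the $2\opt$ of the vertex-cover phase, exactly as in the integral proof; linearity of expectation then transfers the block-wise bounds. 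A secondary subtlety is that $k_M$ itself is a property of the instance and the (deterministic) predictions, independent of the random choice of $\Gamma$, so the $\sum_{\mathcal J}k_{\mathcal J} \le k_M$ inequality holds for every realization of $\Gamma$ and survives the expectation unchanged; this should be stated explicitly to avoid confusion. Everything else is the routine arithmetic displayed above, which I would relegate to a brief computation.
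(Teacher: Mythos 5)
Your proposal is correct and uses the same approach as the paper: randomly round $\gamma$ to $\lfloor\gamma\rfloor$ or $\lceil\gamma\rceil$ with probabilities $1-\{\gamma\}$ and $\{\gamma\}$, run Algorithm~\ref{ALG_min_alpha} with the realized integer, and combine via linearity of expectation; your computation of $\EX[1/(\Gamma-1)]$ and the resulting expression for $\xi$ match the paper's. The ``main obstacle'' you anticipate in the robustness argument is, however, a non-issue: the deterministic bound $|\ALG|\le\Gamma\cdot\opt$ from Theorem~\ref{thm:min-alpha} holds for every realization of $\Gamma$, so $\EX[|\ALG|]\le\EX[\Gamma]\cdot\opt=\gamma\cdot\opt$ follows immediately from linearity of expectation, with no need to re-examine the block-by-block charging scheme --- which is exactly how the paper argues.
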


\begin{proof}
	\newcommand{\E}[1]{\mathbb{E}\left[\,#1\,\right]}
For $\gamma \in \ZZ$, we run Algorithm \ref{ALG_min_alpha} and achieve the performance guarantee from Theorem \ref{thm:min-alpha}. 
Assume $\gamma \notin\ZZ$, and let $\{\gamma\}:=\gamma - \lfloor \gamma \rfloor = \gamma - \lceil \gamma \rceil +1$ denote its fractional part. We run the following randomized variant of Algorithm \ref{ALG_min_alpha}. We randomly chose $\gamma'$ as $\lceil \gamma \rceil$  with probability $\{\gamma\}$ and as $\lfloor \gamma \rfloor$ with probability $1-\{\gamma\}$, and then we run the algorithm with $\gamma'$ instead of $\gamma$. We show that the guarantee from Theorem \ref{thm:min-alpha} holds in expectation with an additive term less than $\{\gamma\}$, more precisely, we show the competitive ratio
\[
\min\left\{\left( 1+\frac{1}{\gamma-1} +\xi \right) \cdot \left(1+\frac{k_M}{\opt} \right), \gamma\right\}, \text{ for } \xi = \frac{ \{\gamma\}(1-\{\gamma\}) }{(\gamma -1) \lfloor \gamma \rfloor (\lfloor \gamma \rfloor - 1)} \leq \frac{\{\gamma\}}{(\gamma-1)^2} .
\]

Following the arguments in the proof of Theorem \ref{thm:min-alpha} on the robustness, the ratio of the algorithm's number of queries $|\ALG|$ and $|\OPT|$ is bounded by $\gamma'$. In expectation the robustness is
\begin{align*}
	\E{\gamma'} &= (1-\{\gamma\}) \cdot \lfloor \gamma \rfloor + \{\gamma\} \cdot \lceil \gamma \rceil \\
	&= (1-\{\gamma\}) \cdot (\gamma - \{\gamma\}) + \{\gamma\} \cdot (\gamma - \{\gamma\} +1) \\
	&= \gamma.
\end{align*}

The error-depending bound on the competitive ratio is in expectation (with $\opt$ and $k_M$ being independent of~$\gamma$)
\begin{align*}
	\E{ \left(1+\frac{1}{\gamma'-1}\right) \cdot \left( 1+ \frac{k_M}{\opt} \right) } = \left(1+\E{\frac{1}{\gamma'-1}}\right)\cdot \left( 1+ \frac{k_M}{\opt} \right).
\end{align*}

Applying simple algebraic transformations, we obtain
\begin{align*}
	\E{\frac{1}{\gamma'-1}} &= \frac{1-\{\gamma\}}{\lfloor \gamma \rfloor-1} + \frac{\{\gamma\}}{\lceil \gamma \rceil-1}\ 
	=\ \frac{1-\{\gamma\}}{\gamma - \{\gamma\}-1} + \frac{\{\gamma\}}{\gamma - \{\gamma\}}\\
	&= \frac{(1-\{\gamma\})(\gamma - \{\gamma\}) + \{\gamma\}(\gamma - \{\gamma\}-1)}{(\gamma - \{\gamma\}-1)(\gamma - \{\gamma\})} \\
	&= \frac{ \gamma -2 \{\gamma\} }{(\gamma - \{\gamma\}-1)(\gamma - \{\gamma\})} \ 
	=\ \frac{1}{\gamma-1} - \frac{1}{\gamma-1} + \frac{ \gamma -2\{\gamma\}}{(\gamma - \{\gamma\}-1)(\gamma - \{\gamma\})} \\
	&=\ \frac{1}{\gamma-1} + \frac{ \{\gamma\}(1-\{\gamma\}) }{(\gamma-1) (\gamma - \{\gamma\}-1)(\gamma - \{\gamma\})} \ = \ \frac{1}{\gamma-1} + \frac{ \{\gamma\}(1-\{\gamma\}) }{(\gamma-1) \lfloor \gamma \rfloor (\lfloor \gamma \rfloor - 1)}. 
\end{align*}

Hence, the competitive ratio is in expectation
\[
\left( 1+\frac{1}{\gamma-1} +\xi \right) \cdot \left(1+\frac{k_M}{\opt} \right) \text{ with } \xi = \frac{ \{\gamma\}(1-\{\gamma\}) }{(\gamma -1) \lfloor \gamma \rfloor (\lfloor \gamma \rfloor -1)} \leq \frac{\{\gamma\}}{(\gamma-1)^2},
\]
which concludes the proof.
\end{proof}

\section{Appendix for the MST Problem (Section \ref{sec:mst})}
\label{appx:mst}
In this section we prove the main theorems of Section~\ref{sec:mst}. 
We first show the MST part of Theorem~\ref{thm:main1.5-2}
and Theorem~\ref{theorem_mst2} assuming that the  lemmas of Section~\ref{sec:mst} hold, and then we prove that these lemmas are indeed true. We first restate the the MST part of Theorem~\ref{thm:main1.5-2}.

\begin{restatable}{theorem}{MstTheoremOne}[partial restatement of Theorem~\ref{thm:main1.5-2}]
	\label{theorem_mst1}
	There is a $1.5$-consistent and $2$-robust algorithm for MST under uncertainty.
\end{restatable}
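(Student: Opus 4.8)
The plan is to prove Theorem~\ref{theorem_mst1} by running Phase~1 (Algorithm~\ref{ALG_mst_part_1}) with parameter $\gamma = 2$ followed by Phase~2 (Algorithm~\ref{ALG_mst_part_2}) with \emph{recovery strategy~A}, and to invoke the structural guarantees already established in Lemmas~\ref{mst_end_of_phase_one}, \ref{mst_phase1_case_a}--\ref{mst_phase1_case_c}, \ref{mst_end_of_phase_2}, \ref{mst_phase2_1} and~\ref{mst_phase2_2}. The two claimed bounds — $1.5$-consistency and $2$-robustness — are handled separately, and in both cases the argument is an accounting argument over the (at most size~$3$) witness sets produced in Phase~1 and the witness pairs produced in Phase~2.

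First I would argue $2$-robustness. In Phase~1 with $\gamma = 2$, Line~\ref{line_mst_one_fillup} queries \emph{no} prediction mandatory fill-up elements (since $\gamma - 2 = 0$), so every iteration of the main loop queries only a set $W$ of size at most $3$ satisfying the three statements of Section~\ref{subsec_mst_phase_1}: either a witness set of size one or two, or a set of three edges at least two of which lie in every feasible query set; moreover single-edge queries $e$ are matched to distinct permanently-unqueried partners $f(e)$, so by the reduction in statement~3 we may treat them as witness sets of size one inside $\OPT$. Hence each Phase~1 iteration charges at most $2$ against at least $1$ element of $\OPT$ (a set of three with two mandatory charges $3$ against $\geq 2$, which is even better than $2$), giving $|\ALG_1| \le 2\,|\ALG_1 \cap \OPT|$. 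In Phase~2 with recovery~A, Lemma~\ref{mst_phase2_1} guarantees that every queried vertex-cover element $e$ has a distinct non-queried witness partner $h(e)$, and recovery~A restores $2$-robustness across restarts by querying the bookkeeping set $W$ of those partners before recomputing; since each $h(e)\in W$ forms a witness pair with a distinct already-queried $e$, Phase~2 also charges at most $2$ per element of $\OPT$. One must check the two phases do not double-charge the same $\OPT$-element — this follows because Phase~1 terminates at a prediction mandatory free instance and the elements $f(e)$ kept unqueried from Phase~1 are precisely tracked, while mandatory elements queried during $T_L=T_U$ preprocessing are in every feasible set. Summing, $|\ALG| \le 2\,\opt$.

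Next I would argue $1.5$-consistency, i.e.\ assume all predictions are correct. Then ``prediction mandatory'' coincides with ``mandatory'', Phase~1's case lemmas always query genuine witness sets of size~$3$ in which the two conditionally-queried elements (queried ``only if $w\in I$'') \emph{are} queried and \emph{are} mandatory, so each Phase~1 iteration queries $3$ elements of which $\ge 2$ are in $\OPT$ — a $3/2$ ratio. In Phase~2, correctness of predictions means the vertex cover instance $\bar G$ never changes (no $f_i$ turns out non-maximal, no $l_i$ non-minimal), so no restart is triggered: the algorithm queries exactly $VC$ plus the mandatory elements revealed during $T_L=T_U$ maintenance; by K\"onig's theorem $|VC|$ is a lower bound on the number of Phase-2 queries needed, so this part is actually $1$-competitive, and combining with the $3/2$ from Phase~1 gives overall $\le \tfrac32\opt$. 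Here I would lean on Lemma~\ref{mst_end_of_phase_2}, which already states that recovery~A yields $1$-consistency and $2$-robustness for Phase~2 in isolation; the new content is only the clean composition with Phase~1, plus checking that with correct predictions the bad ``last iteration'' of Line~\ref{line_mst_one_fillup} (which could query up to $\gamma-2$ extra elements) is empty when $\gamma=2$.

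The main obstacle I anticipate is not the two ratios individually but the \emph{bookkeeping of disjointness} when chaining the phases: ensuring that an element used to witness a Phase~1 charge is never reused to witness a Phase~2 charge, and that the permanently-unqueried partners $f(e)$ promised in Phase~1 statement~3 genuinely stay unqueried once Phase~2 starts modifying $T_L$ via the $T_L = T_U$ preprocessing after each query. This is exactly what the set $W$ in Algorithm~\ref{ALG_mst_part_2} and the ``$E\subseteq\OPT$ w.l.o.g.'' convention of statement~3 are designed to handle, so the proof is largely a matter of carefully invoking Lemmas~\ref{mst_end_of_phase_one} and~\ref{mst_end_of_phase_2} with $\gamma=2$ and verifying the charging sets are disjoint; I would spell this out by fixing one optimal solution $\OPT$ and partitioning $\ALG$ into the Phase~1 witness sets, the Phase~2 witness pairs, and the mandatory preprocessing queries, then bounding each part's charge against its disjoint slice of $\OPT$.
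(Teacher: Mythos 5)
Your proposal is correct and follows essentially the same route as the paper's proof: run Phase~1 (Algorithm~\ref{ALG_mst_part_1}) with $\gamma=2$ (so the fill-up line is vacuous), then Phase~2 (Algorithm~\ref{ALG_mst_part_2}) with recovery~A, fix one $\OPT$, split it into $\OPT_1=\OPT\cap\ALG_1$ and $\OPT_2=\OPT\setminus\ALG_1$, and combine Lemma~\ref{mst_end_of_phase_one} (which gives $|\ALG_1|\le\min\{1.5(|\OPT_1|+k_h),\,2|\OPT_1|\}$) with Lemma~\ref{mst_end_of_phase_2} for recovery~A (which gives $1$-consistency and $2$-robustness for Phase~2 on its own). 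The only inessential imprecision is the remark that with correct predictions Phase~2 queries ``$VC$ plus the mandatory elements revealed during $T_L=T_U$ maintenance'' — with correct predictions the instance stays prediction mandatory free and no such elements arise, which is exactly why Phase~2 is $1$-consistent; this does not affect the argument.
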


Consider the algorithm that first executes Algorithm~\ref{ALG_mst_part_1} with $\gamma = 2$ and then Algorithm~\ref{ALG_mst_part_2} using {recovery strategy~A}.

\begin{proof}	
	Let $\ALG = \ALG_1 \cup \ALG_2$ be the query set queried by the algorithm, where $\ALG_1$ and $\ALG_2$ are the queries of Algorithm~\ref{ALG_mst_part_1} and Algorithm~\ref{ALG_mst_part_2}, respectively. Let $\OPT = \OPT_1 \cup \OPT_2$ be an optimal query set with $\OPT_1 = \OPT \cap \ALG_1$ and $\OPT_2 = \OPT \setminus \ALG_1$. 
	Since
	$\gamma = 2$, Line~\ref{line_mst_one_fillup} {of Algorithm~\ref{ALG_mst_part_1}} does not query any elements.
	Therefore Lemma~\ref{mst_end_of_phase_one} implies $|\ALG_1| \le \min\{(1+\frac{1}{2}) \cdot (|\OPT_1| + k_h),2 \cdot |\OPT_1|\}$.
	
	We continue by analyzing Algorithm~\ref{ALG_mst_part_2} using recovery strategy A. 
	{According to Lemma~\ref{mst_end_of_phase_one}, the input instance of Algorithm~\ref{ALG_mst_part_2} is prediction mandatory free and we can apply Lemma~\ref{mst_end_of_phase_2}.
	The lemma implies $|\ALG_2| \le |\OPT_2|$ if all predictions are correct and $|\ALG_2| \le 2 \cdot |\OPT_2|$ otherwise.} 

	Summing up,	this implies $1.5$-consistency and $2$-robustness.
\end{proof}

\MstTheoremTwo*

Consider the algorithm that first executes Algorithm~\ref{ALG_mst_part_1} with some $\gamma\in \ZZ, \gamma \ge 2$ and then Algorithm~\ref{ALG_mst_part_2} using recovery strategy B.

\begin{proof}
	Let $\ALG = \ALG_1 \cup P \cup \ALG_2$ be the query set queried by the algorithm, where $\ALG_1$ is the set of elements queried by Algorithm~\ref{ALG_mst_part_1} without the elements queried in the last iteration of Line~\ref{line_mst_one_fillup}, $P$ is the set of elements queried in the last iteration {of Line~\ref{line_mst_one_fillup}} and $\ALG_2$ is the set {of} elements queried by Algorithm~\ref{ALG_mst_part_2}. 
	Let $\OPT = \OPT_1 \cup \OPT_2$ be an optimal query set with $\OPT_1 = \OPT \cap \ALG_1$ and $\OPT_2 = \OPT \setminus \ALG_1$. 
	Lemma~\ref{mst_end_of_phase_one} implies $\ALG_1 \le \min\{(1+\frac{1}{\gamma}) \cdot (|\OPT_1| + k_h),\gamma \cdot |\OPT_1|\}$. 
	
	We continue by analyzing $\ALG_2$ and $P$ and first show $|\ALG_2\cup P| \le \max\{3 \cdot |\OPT_2|,\gamma \cdot |\OPT_2| + 1\}$. 
	{According to Lemma~\ref{mst_end_of_phase_one}, the input instance of Algorithm~\ref{ALG_mst_part_2} is prediction mandatory free and we can apply Lemma~\ref{mst_end_of_phase_2}.
	The lemma implies $|\ALG_2| \le 3 \cdot |\OPT_2 \setminus P|$.}
		
	For $\gamma = 2$, it holds $P=\emptyset$ by definition of the algorithm. 
	Thus, $|\ALG_2| \le 3 \cdot |\OPT_2 \setminus P| = 3 \cdot |\OPT_2|$.
	Since $|\ALG_1| \le \gamma \cdot |\OPT_1| = 2 \cdot |\OPT_1|$, it follows $|\ALG_1 \cup P \cup \ALG_2| \le 3  \cdot |\OPT_1 \cup \OPT_2| = 3 \cdot |\OPT|$.
	For $\gamma \ge 3$, observe that, if $P \not= \emptyset$, then $|\OPT_2| \ge 1$.
	This is because if the optimal query set was empty, the instance is solved at this point and
	no prediction mandatory elements can exist after querying $\ALG_1$, which contradicts $P \not= \emptyset$.
	This implies $|P| \le \gamma-2 \le (\gamma - 3) \cdot |\OPT_2| + 1$ and $|\ALG_2 \cup P| \le \gamma \cdot |\OPT_2| + 1$.
	Combining the arguments for $\gamma = 2$ and $\gamma \ge 3$, it follows $|\ALG_2 \cup P| \le \max\{3\cdot|\OPT_2|, \gamma \cdot |\OPT_2| + 1\}$.
	Using $|\ALG_1| \le \gamma \cdot |\OPT_1|$ we can conclude $|\ALG| = |\ALG_1 \cup P \cup \ALG_2| \le \max\{3\cdot |\OPT|,\gamma  \cdot |\OPT| + 1\}$.

	We continue by showing the consistency. 
	Since each element of $P$ is prediction mandatory, {the proof of Theorem~\ref{Theo_hop_distance_mandatory_distance}} implies that each such element is either mandatory or contributes {at least} one to the hop distance.
	It follows that querying $P$ only improves the consistency and at most $k_h$ elements of $P$ are not part of $\OPT_2$.

	Lemma~\ref{mst_end_of_phase_2} implies $|\ALG_2| \le |\OPT_2\setminus P| + {5}\cdot k_h$.
	Summing up the guarantee for $P$ with the guarantee for $\ALG_2$ directly gives us $|\ALG_2 \cup P| \le |\OPT_2| + 6 \cdot k_h$.
	By combining this guarantee with $|\ALG_1| \le (1+\frac{1}{\gamma}) \cdot (|\OPT_1| + k_h)$, we directly obtain $|\ALG| = |\ALG_1 \cup P \cup \ALG_2| \le (1+\frac{1}{\gamma}) \cdot \opt + (7+\frac{1}{\gamma}) \cdot k_h$.
	However, in Corollary~\ref{mst_error_sensitiv_guarantee} we observe that we can exploit disjointness between the errors that we charge against to achieve the guarantees for $\ALG_1$, $\ALG_2$ and $P$, to improve the guarantee for $\ALG$ to $|\ALG| = |\ALG_1 \cup P \cup \ALG_2| \le (1+\frac{1}{\gamma}) \cdot \opt + (5+\frac{1}{\gamma}) \cdot k_h$.
\end{proof}

The proofs of Theorems~\ref{theorem_mst1} and~\ref{theorem_mst2} show that the introduced lemmas imply the theorem. 
The remainder of this section proves that those lemmas indeed hold.

\subsection{Preliminaries}

Before the lemmas of Subsections~\ref{subsec_mst_phase_1} and~\ref{subsec_mst_phase_2} are shown, we introduce some preliminaries that are necessary for the proofs.
We start by showing that we can assume uniqueness of $T_L$ and $T_U$ as well as $T_L = T_U$.

\LemMSTPreprocessing*

\begin{proof}
	Let $T_L$ be a lower limit tree for a given instance $G$ and let $T_U$ be an upper limit tree.
	According to~\cite{megow17mst}, all elements of $T_L \setminus T_U$ are mandatory and we can repeatedly query them for (the adapting) $T_L$ and $T_U$ until $T_L = T_U$. 
	We refer to this process as the \emph{first preprocessing step}.
	
	Consider an $f \in E \setminus T_U$ and the cycle $C$ in $T_U \cup \{f\}$.
	If $f$ is trivial, then the true value $w_f$ is maximal in $C$ and we may delete $f$ without loss of generality.
	Assume otherwise. 
	If the upper limit of $f$ is uniquely maximal in $C$, then $f$ is not part of any upper limit tree.
	If there is an $l \in C$ with $U_f = U_l$, then $T_U' = T_U \setminus \{l\} \cup \{f\}$ is also an upper limit tree.
	Since $T_L \setminus T_U' = \{l\}$, we may execute the first preprocessing step for $T_L$ and $T_U'$.
	We repeatedly do this until each $f \in E \setminus T_U$ is uniquely maximal in the cycle $C$ in $T_U \cup \{f\}$.
	Then, $T_U$ is unique.
	
	To achieve uniqueness for $T_L$, consider some $l \in T_L$ and the cut $X$ of $G$ between the two connected components of $T_L \setminus \{l\}$.
	If $l$ is trivial, then the true value $w_l$ is minimal in $X$ and we may contract $l$ without loss of generality.
	Assume otherwise. 
	If $L_l$ is uniquely minimal in $X$, then $l$ is part of every lower limit tree.
	If there is an $f \in X$ with $L_l = L_f$, then $T_L' = T_L \setminus \{l\} \cup \{f\}$ is also a lower limit tree.
	Since $T_L' \setminus T_U = \{f\}$, we may execute the first preprocessing step for $T_L'$ and $T_U$.
	We repeatedly do this until each $l \in T_L$ is uniquely minimal in the cut $X$ of $G$ between the two components of $T_L \setminus \{l\}$.
	Then, $T_L$ is unique.
\end{proof}

Consider $T_L$ and the edges $f_1,\ldots,f_l$ in $E \setminus T_L$ ordered by lower limit non-decreasingly. 
For each $i \in \{1,\ldots,l\}$, define $C_i$ to be the unique cycle in $T_L \cup \{f_i\}$ and $G_i=(V,E_i)$ to be the sub graph with $E_i = T_L \cup \{f_1,\ldots,f_i\}$. 
Additionally, we define $G_0 = (V,T_L)$. 
During the course of this section, we will make use of the following two lemmas that were shown in~\cite{megow17mst}. 
According to Lemma~\ref{mst_preprocessing}, we can assume $T_L = T_U$ and that $T_L$ and $T_U$ are unique.

\begin{lem}[{\cite[Lemma~5]{megow17mst}}]
	\label{lemma_mst_1}
	Let $i \in \{1,\ldots,l\}$. 
	Given a feasible query set $Q$ for the uncertainty graph $G=(V,E)$, then the set $Q_i := Q \cap E_i$ is a feasible query set for $G_i = (V,E_i)$.
\end{lem}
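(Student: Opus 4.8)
The plan is to work with the standard cut/cycle characterization of feasibility and then peel off the non-tree edges $f_l, f_{l-1},\dots,f_{i+1}$ one at a time. First I would restate the verification characterization of \cite{erlebach14mstverification,megow17mst}: a query set $Q$ is feasible for an uncertainty graph $H$ if and only if there is a spanning tree $T$ of $H$ such that for every non-tree edge $f$ of $T$ and every edge $e$ on the fundamental cycle $C_f^T$ we have $\{e,f\}\cap Q\neq\emptyset$ or $U_e^Q\le L_f^Q$, where $U_e^Q$ (resp.\ $L_f^Q$) denotes $w_e$ if $e\in Q$ and $U_e$ (resp.\ $L_f$) otherwise. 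The ``only if'' direction uses that if this cycle condition is violated at an unqueried pair $(f,e)$ then, because intervals are open, one can pick a consistent realization with $w'_e>w'_f$, so $T-e+f$ beats $T$ and $T$ fails to be an MST; the ``if'' direction is the classical cycle-optimality criterion applied to every consistent realization. Dually, if $T$ witnesses feasibility of $Q$ then, since $T$ is an MST for every consistent realization, each tree edge $e$ is a minimum-weight edge of its fundamental cut $X_e^T$ under every such realization, i.e.\ for every $f\in X_e^T\setminus\{e\}$ we have $\{e,f\}\cap Q\neq\emptyset$ or $U_e^Q\le L_f^Q$.

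By induction it suffices to prove: if $Q$ is feasible for $G$ and $f:=f_l$ is the non-tree edge of $T_L$ with the largest lower limit, then $Q\setminus\{f\}$ is feasible for $G\setminus\{f\}$. After deleting $f$, the tree $T_L$ remains the unique lower and upper limit tree of $G\setminus\{f\}$ (deleting a non-tree edge only removes cycle-optimality constraints), and its non-tree edges are $f_1,\dots,f_{l-1}$ in the same order, so the setup of the lemma is preserved and the induction applies. Fix a spanning tree $T$ witnessing feasibility of $Q$. If $f\notin T$, then $T$ is already a spanning tree of $G\setminus\{f\}$; for every non-tree edge $f'$ of $T$ in $G\setminus\{f\}$ we have $C_{f'}^T\subseteq E\setminus\{f\}$, so the cycle condition for $(T,Q)$ restricts verbatim to a cycle condition for $(T,Q\setminus\{f\})$ in $G\setminus\{f\}$, and this case is finished.

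If $f\in T$, then $f\in T\setminus T_L$, and by the symmetric basis exchange property of the graphic matroid there is an edge $g\in T_L\setminus T$ lying both on the fundamental cycle $C_f^{T_L}$ and on the fundamental cut $X_f^T$; put $T':=T-f+g$, a spanning tree of $G\setminus\{f\}$, which will be the witness for $Q\setminus\{f\}$. Two facts drive the verification: (a) since $T$ witnesses feasibility and $f$ is a tree edge of $T$, the dual (cut) condition gives $U_f^Q\le L_e^Q$ for every $e\in X_f^T\setminus\{f\}$, in particular $U_f^Q\le L_g^Q$; (b) because $f=f_l$ has the largest lower limit among non-tree edges of $T_L$ and $T_L=T_U$, the edge $f$ is maximal on $C_f^{T_L}$ for both lower and upper limits, hence $L_g\le L_f$ and $U_g\le U_f$. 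Now take any non-tree edge $f'$ of $T'$ in $G\setminus\{f\}$ and any $e\in C_{f'}^{T'}\setminus\{f'\}$. If $f\notin C_{f'}^T$ then $C_{f'}^{T'}=C_{f'}^T$ and the condition for $(f',e)$ transfers directly from $T$. Otherwise one checks that, as edge sets, $C_{f'}^{T'}=C_{f'}^T\,\triangle\,C_g^T$, so $e$ lies on $C_{f'}^T$ or on $C_g^T$; in the former case the condition for $(f',e)$ comes directly from the cycle condition of $T$, and in the latter case it is re-derived from the cycle condition of $T$ for $(g,e)$ together with (a) and (b).

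The main obstacle is exactly this last bookkeeping step in the swap case: one must show that ``$e$ is guaranteed no heavier than $g$ (or $\{e,g\}\cap Q\neq\emptyset$)'' can be upgraded to ``$e$ is guaranteed no heavier than $f'$ (or $\{e,f'\}\cap Q\neq\emptyset$)''. This amounts to comparing the queried/limit values of $g$ and $f'$ — both cross the cut $X_f^T$ — and carrying out a case analysis according to which of $e,f,f',g$ lie in $Q$ (and whether any interval is trivial). It is precisely here that the choice of $g$ on $C_f^{T_L}$ and the ``largest lower limit'' property of $f_l$ are used, and I expect to reproduce the detailed argument of \cite[Lemma~5]{megow17mst} to close it.
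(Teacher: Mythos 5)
The paper does not prove this lemma itself --- it is quoted verbatim from \cite[Lemma~5]{megow17mst} --- so there is no internal proof to compare your attempt against. Evaluating your argument on its own terms, it is genuinely incomplete, and not merely in a ``details omitted'' sense: you explicitly stop at the crux and write that you ``expect to reproduce the detailed argument of \cite[Lemma~5]{megow17mst} to close it.'' That last case is where all the content lives.

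More concretely, there are two gaps in the $f\in T$ case that your outline does not address. First, you propose the witness tree $T'=T-f+g$ for $G\setminus\{f\}$, with $g\in C_f^{T_L}\cap X_f^T$ chosen by symmetric basis exchange, but you never verify the cut condition for the \emph{new} tree edge $g$. Since $X_g^{T'}=X_f^T\setminus\{f\}$, you need $g$ to be provably minimum in $X_f^T\setminus\{f\}$ under $Q\setminus\{f\}$; the facts you list (that $f$ is provably minimum in $X_f^T$, and that $L_g\le L_f$, $U_g\le U_f$) give $U_f^Q\le L_g^Q$ and bounds on $g$ versus $f$, but say nothing about $g$ versus the \emph{other} crossing edges $f'\in X_f^T$, and there is no reason a priori that $g$ dominates them. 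Second, for the cycle condition transfer at $e\in C_g^T\setminus C_{f'}^T$ you need to upgrade $U_e^Q\le L_g^Q$ to $U_e^Q\le L_{f'}^Q$; what you know is $U_f^Q\le L_g^Q$ and $U_f^Q\le L_{f'}^Q$, which gives no ordering between $L_g^Q$ and $L_{f'}^Q$, so the implication does not follow from (a) and (b) as stated. You flag this as ``bookkeeping,'' but it is exactly the nontrivial part, and the way you have set it up it is not clear it closes at all with the fixed exchange edge $g$ you chose. There is also a small slip in your stated verification criterion: the disjunct ``$\{e,f\}\cap Q\neq\emptyset$ or $U_e^Q\le L_f^Q$'' is too weak, since if both $e$ and $f$ are queried with $w_e>w_f$ the first disjunct holds but $T$ is not an MST; the condition should simply be $U_e^Q\le L_f^Q$, which already absorbs the queried cases.

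In short, the case $f\notin T$ and the peeling/induction framing are fine, but the $f\in T$ case --- which is where the hypotheses on $f_l$ and on $T_L=T_U$ must actually be used --- is not proven, and the specific reduction you sketch (fixed exchange tree $T'=T-f+g$ plus transferring cycle constraints through $C_g^T$) is missing a cut-condition argument for $g$ and does not obviously yield the needed inequalities.
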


\begin{lem}[{\cite[Lemma~6]{megow17mst}}]
	\label{lemma_mst_2}
	For some realization of edge weights, let $T_i$ be a verified MST for graph $G_i$ and let $C$ be the cycle closed by adding $f_{i+1}$ to $T_i$.
	Furthermore let $h$ be some edge with the largest upper limit in $C$ and $g \in C \setminus \{h\}$ be an edge with $U_g > L_h$. 
	Then any feasible query set for $G_{i+1}$ contains $h$ or $g$. 
	Moreover, if $I_g$ is contained in $I_h$, any feasible query set contains edge $h$.
\end{lem}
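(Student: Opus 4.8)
The lemma restates \cite[Lemma~6]{megow17mst}; the plan is to prove that $\{h,g\}$ is a witness set for $G_{i+1}$, i.e.\ that every feasible query set $Q$ for $G_{i+1}$ contains $h$ or $g$. I would argue by contradiction: assuming $h,g\notin Q$, I would exhibit two realizations of the edge weights that agree with the values $Q$ reveals but admit different MSTs of $G_{i+1}$, contradicting feasibility of $Q$. As is standard for this model, I would first invoke the preprocessing of Lemma~\ref{mst_preprocessing} so that all edges are non-trivial and $T_L=T_U$ is unique, and I would break any residual ties in edge weights by a fixed infinitesimal perturbation, so that ``feasible'' means that all realizations consistent with the revealed values share the same MST.

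Two facts drive the argument. First, by Lemma~\ref{lemma_mst_1} (applied with $G_{i+1}$ in the role of $G$), the set $Q\cap E_i$ is feasible for $G_i$; hence in any realization consistent with what $Q$ reveals on $E_i$, the tree $T_i$ is the MST of $G_i$, so the MST of $G_{i+1}=G_i\cup\{f_{i+1}\}$ is $(T_i\cup\{f_{i+1}\})\setminus\{e^{\ast}\}$, where $e^{\ast}$ is the (unique) maximum-weight edge of the cycle $C$. Second, since $h$ has the largest upper limit in $C$ and every edge is non-trivial, every $e\in C\setminus\{h\}$ has weight strictly below $U_e\le U_h$ in every realization, so any value of $h$ sufficiently close to $U_h$ is strictly larger than the weight of every other edge of $C$.

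Now I would construct the realizations. Realization~A keeps the values revealed by $Q$, assigns to every non-queried edge of $C\setminus\{h\}$ a value near its lower limit, and then picks $w_h\in I_h$ close enough to $U_h$ that $h$ is the unique maximum of $C$; then $e^{\ast}=h$ and the MST of $G_{i+1}$ excludes $h$ (when $h=f_{i+1}$ this MST is $T_i$, which also excludes $h$). Realization~B keeps the values revealed by $Q$, assigns to every non-queried edge of $C\setminus\{h,g\}$ a value near its lower limit, and, using $L_h<U_g$, picks $w_h\in I_h$ close to $L_h$ and $w_g\in I_g$ close to $U_g$ with $w_g>w_h$; then $e^{\ast}\neq h$ and the MST of $G_{i+1}$ contains $h$. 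Both realizations are consistent with $Q$ (indeed $h,g\notin Q$ and $f_{i+1}\notin E_i$, so the only edges whose weights we alter are outside $Q$), yet they give different spanning trees — the desired contradiction, so $h\in Q$ or $g\in Q$. For the ``moreover'' part, assume $I_g\subseteq I_h$ (so $L_h\le L_g$ and $U_g\le U_h$) and suppose $h\notin Q$, now without assuming $g\notin Q$. Realization~A goes through unchanged and excludes $h$. For realization~B: if $g\notin Q$ it works as above; if $g\in Q$ its true value $w_g$ is fixed, but $w_g>L_g\ge L_h$ lets me still choose $w_h\in(L_h,\min\{w_g,U_h\})\subseteq I_h$, keeping $w_g>w_h$ and hence $e^{\ast}\neq h$, so the MST of $G_{i+1}$ contains $h$ — again a contradiction. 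Thus $h$ lies in every feasible query set.

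The combinatorial core — building realizations A and B — is routine. The one delicate point, and the main obstacle, is making the MST ``swap'' argument rigorous in the presence of equal edge weights: this requires a careful definition of feasibility and of ``the MST''. I would handle it exactly as in \cite{megow17mst}, relying on the uniqueness produced by the preprocessing of Lemma~\ref{mst_preprocessing} together with a globally fixed tie-breaking perturbation.
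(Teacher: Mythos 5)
This lemma is stated in the paper only as a direct citation of Lemma~6 in \cite{megow17mst}; the paper does not supply a proof of it, so there is no in-paper proof for your argument to be compared against — only the external reference.

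Your reconstruction via an indistinguishability/adversary argument is correct and matches the standard approach for witness-set statements in explorable uncertainty. The two realizations are well defined: since intervals are open and $U_e \le U_h$ for every $e \in C\setminus\{h\}$, the queried weights on $C\setminus\{h\}$ are all strictly below $U_h$, so a choice of $w_h$ close to $U_h$ exceeds them all (realization~A), while $L_h < U_g$ leaves room to place $w_h$ below $w_g$ (realization~B). Both agree with $Q$'s revealed values, yet one excludes $h$ from the MST of $G_{i+1}$ and the other includes it, which contradicts feasibility, so $Q$ must contain $h$ or $g$. Your handling of the ``moreover'' clause with $g \in Q$ is also right: $I_g \subseteq I_h$ gives $w_g \in (L_h,U_h)$, so $w_h$ can be placed on either side of the fixed $w_g$. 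The one step worth making fully explicit rather than leaving implicit is the following: by Lemma~\ref{lemma_mst_1}, $Q \cap E_i$ is feasible for $G_i$, and because of the uniqueness guaranteed by the preprocessing and the fact that $T_i$ is the MST of $G_i$ under the true realization, $Q \cap E_i$ verifies exactly $T_i$ (not just ``some'' MST of $G_i$). This is what guarantees that in both constructed realizations the MST of $G_{i+1}$ arises by deleting the maximum-weight edge from the same cycle $C$ closed by $f_{i+1}$ in $T_i$, which is the hinge of your contradiction.
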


While Lemma~\ref{lemma_mst_2} shows how to identify a witness set on the cycle closed by $f_{i+1}$ after an MST for graph $G_{i}$ is already verified, our algorithms rely on identifying witness sets involving edges $f_{i+1}$ without first verifying an MST for $G_{i}$. 
The remainder of the section derives properties that allow us to identify such witness sets.

\begin{obs}\label{obs_mst_1}
	Let $Q$ be a feasible query set that verifies an MST $T^*$. 
	Consider any path $P \subseteq T_L$ between two endpoints $a$ and $b$, and let $e \in P$ be the edge with the highest upper limit in $P$. 
	If $e \not\in Q$, then the path $\hat{P} \subseteq T^*$ from $a$ to $b$ is such that $e \in \hat{P}$ and $e$ has the highest upper limit in $\hat{P}$ after $Q$ has been queried. 
\end{obs}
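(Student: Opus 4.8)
The plan is to establish three things in sequence: that $e \in T^*$; that $e$ lies on the $a$–$b$ path $\hat P$ of $T^*$; and that no edge of $\hat P$ has a larger upper limit than $e$ once $Q$ has been queried. Throughout I use two ingredients. First, since $Q$ verifies $T^*$, for every edge $f \in T^*$ and every other edge $f'$ of the fundamental cut of $f$ in $T^*$, the \emph{current} upper limit of $f$ (its true value if $f$ was queried, else $U_f$) is at most the current lower limit of $f'$; dually, for every non-tree edge $f$ of $T^*$, every edge of its fundamental cycle in $T^*\cup\{f\}$ has current upper limit at most the current lower limit of $f$. Second, by the preprocessing of Lemma~\ref{mst_preprocessing} the lower limit tree $T_L$ is unique, so $e$ is the \emph{strict} minimum-$L$ edge of its fundamental cut $X_e$ in $T_L$; note also that $e \notin Q$, so its current interval is still $(L_e, U_e)$, and that trivial tree edges have been contracted away, so edges of $T_L$ are non-trivial.

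For $e\in T^*$, suppose not and let $C$ be the fundamental cycle of $e$ in $T^*\cup\{e\}$; then every edge of $C\setminus\{e\}$ has current upper limit at most $L_e$. Since $C$ contains $e$, which belongs to the $T_L$-cut $X_e$, it must cross $X_e$ again at some edge $g\neq e$; as $g\in X_e\setminus\{e\}$ its lower limit, hence its current lower limit, strictly exceeds $L_e$, contradicting that its current upper limit is at most $L_e$. For the ``largest upper limit'' claim, take any $g\in\hat P\setminus\{e\}$. If $g$ already lies on $P$, then $U_g\le U_e$ by the choice of $e$, and the current upper limit of $g$ is at most $U_g$, so we are done. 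Otherwise, removing $g$ from $T^*$ separates $a$ from $b$ (since $g\in\hat P$), so the $a$–$b$ path $P$ crosses the fundamental cut of $g$ in $T^*$ at some edge $f\neq g$; the verification inequality applied to $g$ and $f$, followed by $U_f\le U_e$ since $f\in P$, gives that the current upper limit of $g$ is at most $U_e$, which is the current upper limit of $e$.

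The main work is showing $e\in\hat P$. Both $P$ and $\hat P$ are $a$–$b$ paths, so $P\triangle\hat P$ decomposes into edge-disjoint cycles; if $e\notin\hat P$ then $e$ lies on one of them, say $Z$, whose edges split into a part contained in $P\subseteq T_L$ and a part contained in $\hat P\subseteq T^*$. Deleting $e$ from $Z$ leaves a path between the endpoints of $e$; since $e\in T^*$, this path must re-cross the fundamental cut of $e$ in $T^*$, and because that cut meets $T^*$ only in $e$, the crossing edge $g'$ must come from the $P$-part of $Z$, so $g'\in P$ and $g'\neq e$. Now the verification inequality for $e$ (minimal in its $T^*$-cut) forces the current lower limit of $g'$ to be at least $U_e$, while $g'\in P$ forces $U_{g'}\le U_e$; these squeeze the current interval of $g'$ to the single point $U_e$, impossible for a non-trivial edge of $T_L$ (unqueried, it has a genuinely open interval; queried, its true value is strictly below $U_{g'}$). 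Hence $e\in\hat P$. The delicate point throughout — and where uniqueness of $T_L$ and openness of the intervals are genuinely needed — is the interaction between the fundamental cut of $e$ inside $T_L$ (where $e$ is strictly cheapest by lower limit) and inside $T^*$ (where $e$ is cheapest by current value); this is exactly what converts ``$e$ is a bottleneck on $P$'' into ``$e$ is a bottleneck on $\hat P$''.
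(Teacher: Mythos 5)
Your proof is correct and takes a genuinely different route from the paper's. The paper proves the observation by induction on the filtration $G_0=(V,T_L),\ G_1,\ \dots,\ G_l$ obtained by adding the non-tree edges $f_1,\dots,f_l$ one at a time in lower-limit order, maintaining the invariant that $e$ lies on some path $P_i^*$ in the intermediate verified MST $T_i^*$ with maximum current upper limit there, and passing the invariant from $T_i^*$ to $T_{i+1}^*$ by analysing the fundamental cycle $\hat C_{i+1}$. You instead argue directly about the global tree $T^*$ in three static steps (membership $e\in T^*$, membership $e\in\hat P$, maximality of $e$'s upper limit in $\hat P$), each time playing the $T^*$-verification inequalities off against the $T_L$-uniqueness constraint via a single cut/cycle exchange. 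The symmetric-difference argument in your second step --- decomposing $P\triangle\hat P$ into edge-disjoint cycles, locating $e$ on one cycle $Z$, and finding a crossing edge $g'$ of $e$'s $T^*$-cut inside the $T_L$-part of $Z$ --- is the key structural idea that replaces the paper's induction. One point worth making explicit in your write-up: the squeeze at the end of your second step needs both that $T_L$ contains no trivial edges (so $L_{g'}<U_{g'}$ strictly) and that nontrivial intervals are open (so a queried value lies strictly inside its original interval); both hold after the preprocessing of Lemma~\ref{mst_preprocessing} and the paper's standing assumption on intervals, but they are genuinely load-bearing. Both proofs are sound and of comparable length; yours is self-contained and avoids the $G_i$-filtration machinery, while the paper's induction fits more seamlessly into the surrounding development that repeatedly invokes the $G_i$ decomposition.
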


\begin{proof}
	For each $i \in \{0,\ldots,l\}$ let $T^*_i$ be the MST for $G_i$ as verified by $Q_i$ and let $\hat{C_i}$ be the unique cycle in $T^*_{i-1} \cup \{f_i\}$. 
	Then $T^*_i = T^*_{i-1} \cup \{f_i\} \setminus \{h_i\}$ holds where $h_i$ is the maximal edge on $\hat{C_i}$.
	Assume $e \not\in Q$. 
	We claim that there cannot be any $\hat{C_i}$ with $e \in \hat{C_i}$ such that $Q_i$ verifies that an edge $e' \in \hat{C_i}$ with $U_{e'} \le U_e$ is maximal in $\hat{C_i}$. 
	Assume otherwise.
	If $e'\not= e$, $Q_i$ would need to verify that $w_{e} \le w_{e'}$ holds. 
	Since $U_{e'} \le U_{e}$, this can only be done by querying $e$, which is a contradiction to $e \not\in Q$.
	If $e'=e$, then $\hat{C}_i$ still contains edge $f_i$.
	Since $T_L = T_U$, $f_i$ has a higher lower limit than $e$. 
	To verify that $e$ is maximal in $\hat{C}_i$, $Q_i$ needs to prove $w_e \ge w_{f_i} > L_{f_i}$. 
	This can only be done by querying $e$, which is a contradiction to $e \not\in Q$.
	
	We show via induction on $i \in \{0,\ldots,l\}$ that each $T^*_i$ contains a path $P^*_i$ from $a$ to $b$ with $e \in P^*_i$ such that $e$ has the highest upper limit in $P^*_i$ after $Q_i$ has been queried. 
	For this proof via induction we define $Q_0 = \emptyset$. 
	\emph{Base case $i=0$}: Since $G_0 = (E,T_L)$ is a spanning tree, $T^*_0 = T_L$ follows. 
	Therefore $P^*_0 = P$ is part of $T_L$ and by assumption $e \in P$ has the highest upper limit in $P^*_0$.
	
	\emph{Inductive step}: By induction hypothesis, there is a path $P^*_i$ from $a$ to $b$ in $T^*_i$ with $e \in P^*_i$ such that $e$ has the highest upper limit in $P^*_i$ after querying $Q_i$.
	Consider cycle $\hat{C}_{i+1}$. 
	If an edge $e' \in \hat{C}_{i+1} \setminus P^*_i$ is maximal in $\hat{C}_{i+1}$, then $T^*_{i+1} = T^*_i \cup \{f_{i+1}\} \setminus \{e'\}$ contains path $P^*_i$. 
	Since $e$ by assumption is not queried, $e$ still has the highest upper limit on $P^*_i = P^*_{i+1}$ after querying $Q_{i+1}$ and the statement follows.

	Assume some $e' \in P^*_i \cap \hat{C}_{i+1}$ is maximal in $\hat{C}_{i+1}$, then $U_{e'} \le U_e$ follows by induction hypothesis since $e$ has the highest upper limit in $P^*_i$. 
	We already observed that $\hat{C}_{i+1}$ then cannot contain $e$. 
	Consider $P' = \hat{C}_{i+1} \setminus P^*_i$. 
	Since $e' \in P^*_i$ is maximal in $\hat{C}_{i+1}$, we can observe that $P' \subseteq T^*_{i+1}$ holds.  
	It follows that path $P^*_{i+1} = P' \cup (P^*_i \setminus\hat{C}_{i+1})$ with $e \in P^*_{i+1}$ is part of $T^*_{i+1}$.
	Since $e$ is not queried, it still has a higher upper limit than all edges in $P^*_i$. 
	Additionally, we can observe that after querying $Q_{i+1}$ no $u \in P'$ can have an upper limit $U_{u} \ge U_{e'}$. 
	If such an $u$ would exist, querying $Q_{i+1}$ would not verify that $e'$ is maximal on $\hat{C}_{i+1}$, which contradicts the assumption. 
	Using $U_e \ge U_{e'}$, we can conclude that $e$ has the highest upper limit on $P^*_{i+1}$ and the statement follows.
\end{proof}

Using this observation, we derive two lemmas that allow us to identify witness sets of size two. 
These lemmas are valuable for our algorithms since they allow us to identify witness sets on a cycle $C_i$ independent of what MST $T_{i-1}$ is verified for graph $G_{i-1}$. 
While most algorithms for MST under uncertainty iteratively resolve the cycles closed by adding the edges $f_1,\ldots,f_l$ (or execute the analogous cut-based algorithm), the following lemmas allow us to query edges using a less local strategy.

\LemMstWitnessFir*

\begin{proof}
	To prove the lemma, we have to show that each feasible query set contains at least one element of $\{f_i,l_i\}$. Let $Q$ be an arbitrary feasible query set. 
	By Lemma~\ref{lemma_mst_1}, $Q_{i-1} := Q \cap E_{i-1}$ is a feasible query set for $G_{i-1}$ and verifies some MST $T_{i-1}$ for $G_{i-1}$. 
	We show that $l_i \not\in Q_{i-1}$ implies either $l_i \in Q $ or $f_i \in Q$. 	
	
	Assume $l_i \not\in Q_{i-1}$ and let $C$ be the unique cycle in $T_{i-1} \cup \{f_i\}$.
	Since $T_L = T_U$, edge $f_i$ has the highest upper limit in $C$ after querying $Q_{i-1}$.
	While we only assume $T_L = T_U$ for the initially given instance,~\cite{megow17mst} show that this still implies that $f_i$ has the highest upper limit in $C$. 
	If we show that $l_i\not\in Q_{i-1}$ implies $l_i \in C$, we can apply Lemma~\ref{lemma_mst_2} to derive that $\{f_i,l_i\}$ is a witness set for graph $G_i$, and thus either $f_i \in Q_i \subseteq Q$ or $l_i \in Q_i \subseteq Q$.
	For the remainder of the proof we show that $l_i\not\in Q_{i-1}$ implies $l_i \in C$. 
	Let $a$ and $b$ be the endpoints of $f_i$, then the path $P = C_i \setminus \{f_i\}$ from $a$ to $b$ is part of $T_L$ and $l_i$ has the highest upper limit in $P$. 
	Using $l_i \not\in Q_{i-1}$ we can apply Observation~\ref{obs_mst_1} to conclude that there must be a path $\hat{P}$ from $a$ to $b$ in $T_{i-1}$ such that $l_i$ has the highest upper limit on $\hat{P}$ after querying $Q_{i-1}$. 
	Therefore, $C = \hat{P} \cup \{f_i\}$ and it follows $l_i \in C$.
	
	If $w_{f_i} \in I_{l_i}$ and $l_i \not\in Q_{i-1}$, then $l_i$ must be queried to identify the maximal edge on $C$, thus it follows that $\{l_i\}$ is a witness set.
\end{proof}

\LemMstWitnessSec*

\begin{proof}
	Consider the set of edges $X_{i}$ in the cut of $G$ defined by the two connected components of $T_L \setminus \{l_i\}$. 
	By assumption, $l_i,f_i \in X_i$. However, $f_j \not\in X_i$ for all $j < i$,  
	otherwise $l_i \in C_j$ for an $f_j \in X_i$ with $j < i$ would follow and contradict the assumption. 
	We can observe $X_i \cap E_i = \{f_i,l_i\}$. 
	
	Let $Q$ be any feasible query set.
	According to Lemma~\ref{lemma_mst_1}, $Q_{i-1} = E_{i-1} \cap Q$ verifies an MST $T_{i-1}$ for $G_{i-1}$. 
	Consider the unique cycle $C$ in $T_{i-1} \cup \{f_i\}$. 
	As observed in~\cite{megow17mst}, $f_i$ has the highest upper limit on $C$ after querying $Q_{i-1}$. 
	Since $f_i \in X_i$, $f_i \in C$ and $C$ is a cycle, it follows that another edge in $X_i\setminus \{f_i\}$ must be part of $C$.
    We already observed $X_i \cap E_i = \{l_i,f_i\}$, and therefore $l_i \in C$. 
    Lemma~\ref{lemma_mst_2} implies that $\{f_i,l_i\}$ is a witness set. 
	If $w_{l_i} \in I_{f_i}$, then $f_i$ must be queried to identify the maximal edge in $C$, so it follows that $\{f_i\}$ is a witness set.
\end{proof}

\subsection{Proofs of Subsection~\ref{subsec_mst_phase_1} (Phase 1)}

In this section, we prove the lemmas of Subsection~\ref{subsec_mst_phase_1}. 
We start by characterizing non-prediction mandatory free instances (and cycles).

\LemMSTPredFreeIff*

\begin{proof}
	{
		Assume $\pred{w}_{f_i} \ge U_{e}$ and $\pred{w}_e \le L_{f_i}$ holds for each $e \in C_i \setminus \{f_i\}$ and each cycle $C_i$ with ${i} \in \{1,\ldots,l\}$.
		Then each $f_i \in E \setminus T_L$ is predicted to be maximal on $C_i$ and each $e \in T_L$ is predicted to be minimal in $X_e$.
		Assuming the predictions are correct, we can observe that each vertex cover of bipartite graph $\bar{G}$ is a feasible query set~\cite{erlebach14mstverification}. 
		Define $\bar{G} = (\bar{V},\bar{E})$ with $\bar{V} = E$ {(excluding trivial edges)} and $\bar{E} = \{ \{f_i,e\} \mid i \in \{1,\ldots,l\}, e \in C_i \setminus \{f_i\} \text{ and } I_e \cap I_{f_i} \not= \emptyset\}$. 
		Since both $Q_1 := T_L$ and $Q_2 := E \setminus T_L$ are vertex covers for $\bar{G}$, $Q_1$ and $Q_2$ are feasible query sets under the assumption that the predictions are correct.
		This implies that no element is part of every feasible solution because $Q_1 \cap Q_2 = \emptyset$.
		We can conclude that no element is prediction mandatory and the instance is prediction mandatory free.
	}
	
	{
		For the other direction we show the contraposition.
		Assume there is a cycle $C_i$ such that $\w_{f_{i}} \in I_e$ or $\w_e \in I_{f_i}$ for some $e \in C_i \setminus \{e\}$.
		Let $C_i$ be such a cycle with the smallest index.
		If $\w_{f_i} \in I_e$ for some $e \in C_i \setminus \{e\}$, then also $\w_{f_i} \in I_{l_i}$ for the edge $l_i$ with the highest upper limit in $C_i \setminus \{f_i\}$.
		(This is because we assume $T_L = T_U$.)
		Under the assumption that the predictions are true, Lemma~\ref{lemma_mst_witness_set_1} implies that $l_i$ is mandatory and thus prediction mandatory.
		It follows that $G$ is not prediction mandatory free.}
	
	{
		Assume $\w_e \in I_{f_i}$.
		We can conclude $e \not\in C_j$ for each $j < i$.
		This is because $\w_e \in I_{f_i}$ and $j < i$ would imply $\w_e \in I_{f_j}$.
		As we assumed that $C_i$ is the first cycle with this property, $e \in C_j$ leads to a contradiction.
		Under the assumption that the predictions are true, Lemma~\ref{lemma_mst_witness_set_2} implies that ${f}_i$ is mandatory and thus prediction mandatory.
		It follows that $G$ is not prediction mandatory free.}
\end{proof}

Recall that $C_i$ is the non-prediction mandatory free cycle with the smallest index such that all $C_j$ with $j < i$ are prediction mandatory free {and that $l_i$ is the edge with the highest upper limit in $C_i \setminus \{f_i\}$.}

\MstPhaseOneCaseA*

\begin{proof}
	{To show that querying $\{f_i,l_i\}$ satisfies the two statements of Section~\ref{subsec_mst_phase_1}, we show that either $\{f_i,l_i\} \subseteq Q$ for each feasible query set $Q$ or $h_{f_i} + h_{l_i} \ge 1$ for the hop distance $h_{f_i} + h_{l_i}$ of $f_i$ and $l_i$.}
	
	By assumption, all $C_j$ with $j < i$ are prediction mandatory free. 
	We claim that this implies $l_i \not\in C_j$ for all $j < i$. 
	Assume, for the sake of contradiction, that there is a $C_j$ with $j < i$ and $l_i \in C_j$. 
	Then, $T_L = T_U$ and $i < j$ imply that $f_i$ and $f_j$ have larger upper and lower limits than $l_i$ and, since $L_{f_i} \ge L_{f_j}$, it follows $I_{l_i} \cap I_{f_i} \subseteq I_{l_i} \cap I_{f_j}$.
	Thus, $\pred{w}_{l_i} \in I_{f_i}$ implies $\pred{w}_{l_i} \in I_{{f}_j}$, which contradicts $C_j$ being prediction mandatory free.	
	According to Lemma~\ref{lemma_mst_witness_set_2}, $\{f_i,l_i\}$ is a witness set.
	 
	Consider any feasible query set $Q$, then $Q_{i-1}$ verifies the MST $T_{i-1}$ for graph $G_{i-1}$ and $Q$ needs to identify the maximal edge on the unique cycle $C$ in $T_{i-1} \cup \{f_i\}$. 
	Following the argumentation of Lemma~\ref{lemma_mst_witness_set_2}, we can observe $l_i,f_i \in C$.
	Since we assume $T_L = T_U$, we can also observe that $f_i$ has the highest upper limit in $C$. 
	By Observation~\ref{obs_mst_1} $l_i$ has the highest upper limit in $C \setminus \{f_i\}$ after querying $Q_{i-1} \setminus \{l_i\}$. 
	
	If $w_{l_i} \in I_{f_i}$, then $f_i$ is part of any feasible query set according to Lemma~\ref{lemma_mst_witness_set_2}.
	Otherwise, $w_{l_i} \le L_{f_i} < \pred{w}_{l_i}$ and ${h_{l_i}} \ge 1$. 
	If $w_{f_i} \in I_{l_i}$, then $l_i$ is part of any feasible query set according to Lemma~\ref{lemma_mst_witness_set_1}.
	Otherwise, $w_{f_i} {\ge} U_{l_i} {>} \pred{w}_{f_i}$ and ${h_{f_i}} \ge 1$.
	In conclusion, either $\{f_i,l_i\} \subseteq Q$ for any feasible query set $Q$ or $h_{f_i} + h_{l_i} \ge 1$
\end{proof}

\MstPhaseOneCaseB*

\begin{proof}
	By assumption, all $C_j$ with $j < i$ are prediction mandatory free. 
	According to Lemma~\ref{lemma_mst_witness_set_1}, $\{f_i,l_i\}$ is a witness set.
	{ 
	Assume that the edge $l_i'$ exists. 
	To show that the query strategy for this case satisfies the statements of Subsection~\ref{subsec_mst_phase_1}, 
	we show that either $|\{f_i,l_i,l_i'\} \cap Q| \ge 2$ for any feasible query set {$Q$} (in case $w_{f_i} \in I_{l_i}$ and $w_{l_i} \not\in I_{f_j}$ for each $j$ with $l_i \in C_j$) or $h_{f_i} + h_{l_i} \ge 1$.
	}
	
	Assume that either $w_{f_i} \not\in I_{l_i}$ or $w_{l_i} \in I_{f_j}$ for some $j$ with $l_i \in C_j$.
	If $w_{f_i} \not\in I_{l_i}$, then $w_{f_i} \ge U_{l_i} > \pred{w}_{f_i}$ and $h_{f_i} \ge 1$ follows. 
	If $w_{l_i} \in I_{f_j}$, then $\pred{w}_{l_i} \le L_{f_j} < w_{l_i}$ and $h_{l_i} \ge 1$ follows.
	Therefore, if either  $w_{f_i} \not\in I_{l_i}$ or $w_{l_i} \in I_{f_j}$ for some $j$ with $l_i \in C_j$, then $h_{l_i} + h_{f_i} \ge 1$ follows.
	Now assume that $w_{f_i} \in I_{l_i}$ and $w_{l_i} \not\in I_{f_j}$ for each~$j$ with $l_i \in C_j$. 
	According to Lemma~\ref{lemma_mst_witness_set_1}, $w_{f_i} \in I_{l_i}$ implies that $l_i$ is part of any feasible query set. 
	Consider the relaxed instance where $l_i$ is already queried, then $w_{l_i} \not\in I_{f_j}$ for each $j$ with $l_i \in C_j$ implies that $l_i$ is minimal in $X_{l_i}$ and that the lower limit tree does not change by querying $l_i$. 
	It follows that $l_i'$ is the edge with the highest upper limit in $C_i \setminus \{f_i\}$ in the relaxed instance and, by Lemma~\ref{lemma_mst_witness_set_1}, $\{f_i,l_i'\}$ is a witness set. 
	In conclusion, either $|\{f_i,l_i,l_i'\} \cap Q| \ge 2$ for any feasible query set $Q$ or $h_{l_i} + h_{f_i} \ge 1$.
	
	Finally, assume that $l_i'$ does not exist. 
	To show that the query strategy for this case satisfies the statements of Subsection~\ref{subsec_mst_phase_1}, 
	we show that we either can guarantee that the algorithm will not query $f_i$ and querying $e=l_i$ satisfies the third statement for $f(e)=f_i$,
	or $h_{l_i} + h_{f_i} \ge 1$ and querying $\{l_i,f_i\}$ satisfies the second statement.
	If $w_{l_i} \in I_{f_i}$, then $\pred{w}_{l_i} \le L_{f_i} < w_{l_i}$ and $h_{l_i} \ge 1$. 
	Assume otherwise. 
	The non-existence of $l_i'$ implies that $l_i$ is the only element of $C_i \setminus \{f_i\}$ with an interval that intersects $I_{f_i}$. 
	Therefore $w_{l_i} \not\in I_{f_i}$ implies that $f_i$ is uniquely maximal on $C_i$ and not part of any MST.
	It follows that $f_i$ can, without loss of generality, be deleted. 
	This guarantees that the algorithm will not query, or even consider, $f_i$ afterwards.
\end{proof}

Recall that, for each $e \in T_L$, the set $X_e$ is defined as the set of edges in the cut between the two connected components of $T_L\setminus \{e\}$.

\MstPhaseOneCaseC*

\begin{proof}
	By assumption, all $C_j$ with $j < i$ are prediction mandatory free. 
	We claim that this implies $l_i' \not\in C_j$ for all $j < i$. 
	Assume, for the sake of contradiction, that there is a $C_j$ with $j < i$ and $l_i' \in C_j$. 
	Then, $T_L = T_U$ and $i < j$ imply that $f_i$ and $f_j$ have larger upper and lower limits than $l_i'$ and, since $L_{f_i} \ge L_{f_j}$, it follows $I_{l_i'} \cap I_{f_i} \subseteq I_{l_i'} \cap I_{f_j}$.
	Thus, $\pred{w}_{l_i'} \in I_{f_i}$ implies $\pred{w}_{l_i'} \in I_{{f}_j}$, which contradicts $C_j$ being prediction mandatory free.	
	According to Lemma~\ref{lemma_mst_witness_set_2}, $\{f_i,l_i'\}$ is a witness set.

	{ 
	Assume that the edge $f_j$ exists. 
	To show that the query strategy for this case satisfies the statements of Subsection~\ref{subsec_mst_phase_1}, 
	we show that either $|\{f_i,f_j,l_i'\} \cap Q| \ge 2$ for any	 feasible query set {$Q$} (in case $\w_{l_i'} \in I_{f_i}$ and $w_{f_i} \not\in I_{e}$ for each $e \in C_i$) or $h_{f_i} + h_{l_i'} \ge 1$.
	}
	
	Assume that either $w_{l_i'} \not\in I_{f_i}$ or $w_{f_i} \in I_{e}$ for some $e \in  C_i$.
	If $w_{l_i'} \not\in I_{f_i}$, then $w_{l_i'} \le L_{f_i} < \w_{l_i'}$ and $h_{l_i'} \ge 1$ follows. 
	If $w_{f_i} \in I_{e}$, then $\w_{f_i} \ge U_e > w_{f_i}$ and $h_{f_i} \ge 1$ follows.
	Therefore  $w_{l_i'} \not\in I_{f_i}$ or $w_{f_i} \in I_{e}$ for some $e \in C_i$ implies $h_{l_i'} + h_{f_i} \ge 1$.
	
	Now assume that $w_{l_i'} \in I_{f_i}$ and $w_{f_i} \not\in I_{e}$ for each $e \in C_i$. 
	According to Lemma~\ref{lemma_mst_witness_set_2}, $w_{l_i'} \in I_{f_i}$ implies that $f_i$ is part of any feasible query set. 
	Consider the relaxed instance where $f_i$ is already queried, then $w_{f_i} \not\in I_{e}$ for each $e \in C_i$ implies that $f_i$ is maximal in $C_i$ and that the lower limit tree does not change by querying $f_i$. 
	It follows that $f_j$ is the edge with the smallest index and $l_i' \in C_j$ in the relaxed instance and, by Lemma~\ref{lemma_mst_witness_set_2}, $\{f_j,l_i'\}$ is a witness set. 
	In conclusion, either $|\{f_i,f_j,l_i'\} \cap Q| \ge 2$ for any feasible query set $Q$ or $h_{l_i'} + h_{f_i} \ge 1$.

	Finally, assume that $f_j$ does not exist. 
	To show that the query strategy for this case satisfies the statements of Subsection~\ref{subsec_mst_phase_1}, 
	we show that we either can guarantee that the algorithm will not query $l_i'$ and querying $e=f_i$ satisfies the third statement for $f(e)=l_i'$,
	or $h_{l_i'} + h_{f_i} \ge 1$ and querying $\{l_i',f_i\}$ satisfies the second statement.
	If $w_{f_i} \in I_{l_i'}$, then $\pred{w}_{f_i} \ge U_{l_i'} > w_{f_i}$ and $h_{f_i} \ge 1$. 
	Assume otherwise. 
	The non-existence of $f_j$ implies that $f_i$ is the only element of $X_{l_i'} \setminus \{f_i\}$ with an interval that intersects $I_{l_i'}$. 
	Therefore $w_{f_i} \not\in I_{l_i'}$ implies that $l_i'$ is uniquely minimal in $X_{l_i'}$ and part of every MST.
	It follows that $l_i'$ can, without loss of generality, be contracted. 
		This guarantees that the algorithm will not query, or even consider, $l_i'$ afterwards.
\end{proof}

\MSTEndOfPhaseOne*

\begin{proof}
	Since Algorithm~\ref{ALG_mst_part_1} only terminates if Line~\ref{line_mst_one_if_pred_free} determines each $C_i$ to be prediction mandatory free, the instance after executing the algorithm is prediction mandatory free by definition {and Lemma~\ref{mst_pred_free_characterization}}.
	All elements queried in Line~\ref{line_mst_one_fillup} to ensure unique $T_L=T_U$ are mandatory by Lemma~\ref{mst_preprocessing} and never decrease the performance guarantee.
	
	We now consider the remaining queries.
	Since the last iteration is ignored, each iteration $i$ queries a set $P_i$ of $\gamma-2$ prediction mandatory elements in Line~\ref{line_mst_one_fillup} and a set $W_i$ in Line~\ref{line_mst_one_lemma_queries}.
	According to Theorem~\ref{Theo_hop_distance_mandatory_distance} each element of $P_i$ is either mandatory or contributes one to the hop distance. 
	To be more precise, let $h'(e)$ with $e \in E$ be the number of edges $e'$ such that the value of $e'$ passes over an endpoint of $I_e$.
	From the arguments in the proof of Theorem~\ref{Theo_hop_distance_mandatory_distance}, it can be seen that, for each edge~$e$ that is prediction mandatory at some point but not mandatory, we have that $h'(e) \geq 1$.
	For a subset $U \subseteq E$, let $h'(U) = \sum_{e \in U} h'(e)$.
	Note that $k_h = h'(E)$ holds by reordering summations.
		
	Lemmas~\ref{mst_phase1_case_a},~\ref{mst_phase1_case_b} and~\ref{mst_phase1_case_c} imply that the three statements of Section~\ref{subsec_mst_phase_1} hold for set $W_i$.
	
	We slightly rephrase the second statement and state the error-dependent guarantee in terms of $h'$:
	
	\begin{compactenum}
		\setcounter{enumi}{1}
		\item  If the algorithm queries a witness set {$W_i=\{e_1,e_2\}$} of size two, then either {$W_i \subseteq Q$} for each feasible query set $Q$ or the hop distances of $e_1$ and $e_2$ satisfy $h'(e_1) + h'(e_2) \ge 1$.
	\end{compactenum}

	Technically, this rephrased statement does not always hold, since Lemmas~\ref{mst_phase1_case_b} and~\ref{mst_phase1_case_c} count \enquote{hops} that are caused by values of queried elements $e$ passing over interval borders of non-queried elements $e'$.
	Those errors are then counted by $h'(e')$ instead of $h'(e)$.
	We conceptually move those errors from $h'(e')$	to $h'(e)$ by increasing $h'(e)$ by one and decreasing $h'(e')$ by one; afterwards, the rephrased statement holds.
	Since the queried elements will not be considered again, this operation does not count errors multiple times.
	Also, this operation does not use errors $h'(P_j)$ of previous iterations $j < i$, as the lemmas only count hops over interval borders of non-trivial (non-queried) intervals and the sets $P_j$ have been queried.

	Consider first the set of elements $E$ that are queried because they fulfill the third statement of Subsection~\ref{subsec_mst_phase_1}.
	We assume without loss of generality that $E \subseteq \OPT$ and treat each $e \in E$ as a witness set of size one.
	We can do this without loss of generality since if $e \not\in \OPT$, we know  $f(e) \in \OPT$ for a distinct $f(e)$ and can charge $e$ against $f(e)$ instead.
	All other sets $Q_i := P_i \cup W_i$ are compared only against $\OPT \cap Q_i$ and therefore we can guarantee that $f(e)$ is not used to charge for another element. 
	Since the proofs of the Lemmas~\ref{mst_phase1_case_b} and~\ref{mst_phase1_case_c} show that each $f(e)$ can be either deleted (because it is uniquely maximal on a cycle) or contracted (because it is uniquely minimal in a cut), no $f(e)$ will be relevant for the second phase.

	Define $Q_i := P_i \cup W_i$. We show that $|Q_i| \le \min\{(1+\frac{1}{\gamma}) \cdot (|Q_i \cap \OPT| + h'(Q_i)), \gamma \cdot |Q_i \cap \OPT|\}$ for each $Q_i$.
	As all $Q_i$'s are disjoint, this implies the lemma.
	
	If the set $W_i$ in Line~\ref{line_mst_one_lemma_queries} contains one element, then we can assume that that element is part of any feasible solution.
	It follows that the set $Q_i$ is a witness set of size $\gamma - 1$ and therefore does not violate the robustness.
	Additionally, at least $|Q_i| - h'(P_i)$ elements of $Q_i$ are part of any feasible query set $Q$. This implies $|Q_i| \le \min\{ |Q_i\cap\OPT| + h'(Q_i), \gamma \cdot |Q_i \cap \OPT|\}$.

	If the set $W_i$ of Line~\ref{line_mst_one_lemma_queries} contains two elements, then either ${W_i} = \{e_1,e_2\} \subseteq Q$ for any feasible query set $Q$ or $h'(e_1) + h'(e_2) \ge 1$.
	{It} follows that {at least} $|Q_i| - h'(Q_i)$ elements of $Q_i$ are part of any feasible query set $Q$.
	This implies $|Q_i| \le (|Q_i \cap \OPT| + h'(Q_i)))$.
	Additionally, $Q_i$ is a witness set of size at most $\gamma$, which implies $|Q_i| \le \gamma \cdot |Q_i \cap \OPT|$.
	
	If the set $W_i$ in Line~\ref{line_mst_one_lemma_queries} contains three elements, then $|{W_i} \cap Q| \ge 2$ for any feasible query set $Q$.
	It follows that at least $\frac{\gamma}{\gamma + 1} \cdot |Q_i| - h'(P_i)$ elements of ${Q_i}$ are part of any feasible query set $Q$.
	Additionally, ${Q_i}$ is a set of size $\gamma+1$ and at least two elements of ${Q_i}$ are part of any feasible solution.
 	It follows $Q_i \le  \min\{(1+\frac{1}{\gamma}) \cdot (|Q_i \cap \OPT| + h'(Q_i)), \gamma \cdot |Q_i \cap \OPT|\}$.
	
	Since $|Q_i| \le \min\{(1+\frac{1}{\gamma}) \cdot (|Q_i \cap \OPT| + h'(Q_i)), \gamma \cdot |Q_i \cap \OPT|\}$ holds for each $Q_i$ and all $Q_i$ are disjoint, the lemma follows.
\end{proof}

\subsection{Proofs of Subsection~\ref{subsec_mst_phase_2} (Phase 2)}

	Recall that the vertex cover instance $\bar{G}$ for an MST under uncertainty instance $G=(V,E)$ is defined as $\bar{G} = (\bar{V},\bar{E})$ with $\bar{V} = E$ (excluding trivial edges) and $\bar{E} = \{ \{f_i,e\} \mid i \in \{1,\ldots,l\}, e \in C_i \setminus \{f_i\} \text{ and } I_e \cap I_{f_i} \not= \emptyset\}$ \cite{erlebach14mstverification,megow17mst}.
	Recall that $VC$ is a minimum vertex cover of $\bar{G}$ and that $h$ is a maximum matching of $\bar{G}$ that matches each $e \in VC$ to a distinct $h(e) \not\in VC$.
	By definition, $\bar{G}$ is a bipartite graph.

\MstPhaseTwoOne*

\begin{proof}
	We start by showing the first part of the lemma, i.e., if $b$ is such that each $f'_{i}$ with $i < b$ is maximal in cycle $C_{f'_i}$, then $\{f'_{i},h(f'_{i})\}$ is a witness set for each $i \le b$.
	
	Consider an arbitrary $f'_i$ and $h(f'_i)$ with $i \le b$.
	By definition of $h$, the edge $h(f'_i)$ is part of the lower limit tree. 
	Let $X_i$ be the cut between the two components of $T_L \setminus \{h(f'_i)\}$, then we claim that $X_i$ only contains $h(f'_i)$ and edges in $\{f'_{1},\ldots,f'_{g}\}$ (and possibly irrelevant edges that do not intersect $I_{h(f'_i)}$).
	To see this, assume an $f_j \in \{f_1,\ldots,f_l\}$ with $f_j \not\in VC$ was part of $X_i$. 
	Since $f_j \not\in VC$, each edge in $C_j \setminus \{f_j\}$ must be part of $VC$ as otherwise $VC$ would not be a vertex cover. 
	But if $f_j$ is in cut $X_i$, then $C_j$ must contain $h(f'_i)$. 
	By definition, $h(f'_i) \not\in VC$ holds  which is a contradiction. 
	We can conclude that $X_i$ only contains $h(f'_i)$ and edges in $\{f'_{1},\ldots,f'_{g}\}$.
	
	Now consider any feasible query set $Q$, then $Q_{i'-1}$ verifies an MST $T_{i'-1}$ for graph $G_{i'-1}$ where $i'$ is the index of $f'_i$ in the order $f_1,\ldots,f_l$. 
	Consider again the cut $X_i$. 
	Since each $\{f'_1,\ldots,f'_{i-1}\}$ is maximal in a cycle by assumption, $h(f'_{i})$ is the only edge in the cut that can be part of $T_{i'-1}$.
	Since one edge in the cut must be part of the MST, we can conclude that $h(f'_i) \in T_{i'-1}$. 
	Finally, consider the cycle $C$ in $T_{i'-1}\cup \{f'_i\}$. 
	We can use that $f'_i$ and $h(f'_i)$ are the only elements in $X_i \cap (T_{i'-1}\cup \{f'_i\})$ to conclude that $h(f'_i) \in C$ holds. 
	According to Lemma~\ref{lemma_mst_2}, $\{f'_i,h(f'_i)\}$ is a witness set. 
	
	We continue by showing the second part of the lemma, i.e., if $d$ is such that each $l'_{i}$ with $i < d$ is minimal in cut $X_{l'_i}$, then $\{l'_{i},h(l'_{i})\}$ is a witness set for each $i \le d$.
	
	Consider an arbitrary $l'_i$ and $h(l'_i)$ with $i \le d$. 
	By definition of $h$, the edge $h(l'_i)$ is not part of the lower limit tree.
	Consider $C_{h(l'_i)}$, i.e., the cycle in $T_L \cup \{h(l_i')\}$, then we claim that $C_{h(l'_i)}$ only contains $h(l'_i)$ and edges in $\{l'_{1},\ldots,l'_{k}\}$ (and possibly irrelevant edges that do not intersect $I_{h(l'_i)}$). 
	To see this, recall that $l_i' \in VC$, by definition of $h$, implies $h(l_i') \not\in VC$.
	For $VC$ to be a vertex cover, each $e \in C_{h(l_i')}\setminus\{h(l_i')\} $ must either be in $VC$ or not intersect $h(l_i')$.
	Consider the relaxed instance where the true values for each $l'_j$ with $j<d$ and $j \not= i$ are already known. 
	By assumption each such $l'_j$ is minimal in its cut $X_{l'_j}$.
	Thus, we can w.l.o.g contract each such edge.
	It follows that in the relaxed instance $l'_i$ has the highest upper limit in $C_{h(l'_i)} \setminus \{h(l'_i)\}$.
	According to Lemma~\ref{lemma_mst_witness_set_1}, $\{l'_i, h(l'_i)\}$ is a witness set.
\end{proof}

	Note that {if a query reveals that} an $f_i$ is not maximal in $C_i$ or an $l_i$ is not minimal in $X_{l_i}$, then the vertex cover instance changes.
	The following lemma considers the situation where the vertex cover instance in Line~\ref{line_mst_two_restart_if} of Algorithm~\ref{ALG_mst_part_2} has changed in comparison to the initial vertex cover instance of the iteration.
	Thus, the algorithm restarts.
	Let $\bar{G}$ be the initial vertex cover instance and let $\bar{G}'$ be the changed vertex cover instance that the algorithm considers after the restart.

\MstPhaseTwoTwo*

\begin{proof}
	The goal of this proof is to show that the number of times Algorithm~\ref{ALG_mst_part_2} executes the Otherwise-part of Line~\ref{line_mst_two_ensure_three_robust} is limited by $2 \cdot k_h$ if the algorithm uses recovery strategy B and restarts with the vertex cover and matching as described in the lemma.
	
	We can observe that the otherwise-part of Line~\ref{line_mst_two_ensure_three_robust} is only executed if the previous line queried an element $e'$ such that $\{e',h(e')\} \cap W \not= \emptyset$. 
	This can only happen if {either $h(e')$ or $e'$} was added to set $W$ in a previous restart, i.e., {$h(e') = h(e)$ or $e' = h(e)$} for an element $e$ that was queried before $e'$.
	Thus, we can show that the number of times the Otherwise-part of Line~\ref{line_mst_two_ensure_three_robust} is executed is limited, by showing that the number of elements $h(e)$ that are re-matched after being matched to a first partner $e$ is limited.
	
	Now consider an $h(e)$ that is added to $W$ in Line~\ref{line_mst_two_ensure_three_robust} after its first partner $e$ gets queried in Line~\ref{line_mst_two_query_vc}, then $h(e)$ can only get re-matched to another element $e'$ if the algorithm restarts afterwards. 
	If the algorithm does not restart, then it continues with the matching that already matched $h(e)$ to its original partner $e$ and therefore $h(e)$ will not be matched to another element.
	Consider the first restart after $e$ was queried and let $\bar{G}$ be the vertex cover instance before the restart, let $h$ be the maximum matching for $\bar{G}$ and let $\bar{G}'$ be the vertex cover instance after the restart.
	We can observe that $\{e,h(e)\}$ is not an edge in vertex cover instance $\bar{G'}$ because $e$ became trivial and $\bar{G'}$ by definition does not contain trivial elements.
	It follows that $h(e)$ is not matched by the partial matching  $\overline{h}=\{\{e,e'\} \in h \mid \{e,e'\} \in \bar{E}'\}$.
	We can conclude that each $h(e) \in W$ that is matched a second time, was not matched by the partial matching $\overline{h}$ in the restart after $e$ was queried.
	The only way for $h(e)$ to be matched a second time is if it is added to the matching while completing $\overline{h}$ using the standard augmenting path algorithm.
	Note that this does not need to happen in the restart directly after $e$ was queried, but since $h(e)$ at some point left the matching it can only be matched a second time if it is re-introduced to the matching when executing the augmenting path algorithm in a restart, i.e., it was not matched before executing the augmenting path algorithm but is matched afterwards.
	Thus, if we show that the total number of elements that were not matched before an execution of the augmenting path algorithm but are matched after the execution is bounded by $2 \cdot k_h$, then the lemma follows.
	
 	Consider any restart $i$.
 	Define $H_i$ to be the set of elements that were queried since the last restart before~$i$.
 	We will show in the subsequent Lemma~\ref{lemma_restart_rec_B} that the number of elements that are not matched by $\overline{h}$ but become matched after executing the augmenting path algorithm in restart $i$ is bounded by $2 \cdot \sum_{e\in H_i} h_e$ where $h_e$ is the hop distance of element $e$.
 	
 	We can observe that all sets $H_i$ are pairwise disjoint because no element is queried multiple times.
	Let $d$ be the total number of restarts. Since the sets $H_i$ with $i \in \{1,\ldots,d\}$ are disjoint, {it follows} $$\sum_{i\in \{1,\ldots,d\}} 2 \cdot \sum_{e \in H_i} h_e \le 2 \cdot k_h.$$ 
	Therefore Lemma~\ref{lemma_restart_rec_B} implies that the total number of elements that were not matched before executing the augmenting path but were matched afterwards is, over all restarts, bounded by $2 \cdot k_h$.
	This implies the lemma.
\end{proof}

In order to show Lemma~\ref{lemma_restart_rec_B}, we first show another auxiliary lemma.

\begin{lemma}
	\label{lemma_mst_tree_change}
	Let $G=(V,E)$ be an instance with unique $T_L$ and $T_U$ such that $T_L = T_U$. 
	Let $G'=(V',E')$ be an instance obtained from $G$ by executing a query set $Q$ with unique $T_L'$ and $T_U'$ such that $T_L' = T_U'$, where $T_L'$ and $T_U'$ are the lower and upper limit tree of $G'$.
	Then $e \in T_L \setminus T_L'$ implies $e \in Q$, and $e \in T_L' \setminus T_L$ implies $e \in Q$.
\end{lemma}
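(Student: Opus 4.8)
The claim is that when we pass from an instance $G$ with unique $T_L = T_U$ to an instance $G'$ obtained by querying a set $Q$ and again forcing unique $T_L' = T_U'$, any edge that leaves the lower limit tree (in either direction) must have been queried. The plan is to argue the contrapositive for each direction: if $e \notin Q$, then $e$ cannot have switched sides. First I would recall the basic facts about how queries affect limit trees: querying an edge $e$ replaces its interval $I_e = (L_e, U_e)$ by the singleton $\{w_e\}$ with $L_e \le w_e \le U_e$, so lower limits only increase and upper limits only decrease; for an unqueried edge, $L_e$ and $U_e$ are unchanged. Also, after the preprocessing of Lemma~\ref{mst_preprocessing}, $T_L$ is the unique MST for the lower limits $w^L$ and $T_U$ the unique MST for the upper limits $w^U$, and the same holds in $G'$.

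For the direction $e \in T_L \setminus T_L'$ with $e \notin Q$: Suppose $e \notin Q$. Since $e \in T_L$, Corollary~-style cut reasoning (or directly: $e$ is uniquely the minimum-lower-limit edge in the cut $X_e$ of $T_L \setminus \{e\}$, by uniqueness of $T_L$) tells us $L_e < L_g$ for every other edge $g$ in that cut $X_e$. Now in $G'$, the lower limit of $e$ is still $L_e$ (as $e \notin Q$), while for every $g \in X_e \setminus \{e\}$ the lower limit is $L_g' \ge L_g > L_e = L_e'$. Hence $e$ is still the strict minimum in the cut $X_e$ with respect to the lower limits of $G'$, which forces $e \in T_L'$ (any MST must pick the unique minimum edge across that cut), contradicting $e \notin T_L'$. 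For the direction $e \in T_L' \setminus T_L$ with $e \notin Q$: Suppose $e \notin Q$. Since $e \notin T_L$, consider the cycle $C_e \subseteq T_L \cup \{e\}$; by uniqueness of $T_L$, $e$ is the strict maximum-lower-limit edge on $C_e$, i.e.\ $L_e > L_f$ for all $f \in C_e \setminus \{e\}$. In $G'$, the lower limit of $e$ is still $L_e$ and the lower limit of each $f \in C_e \setminus \{e\}$ is $L_f' \ge L_f$, but we need strictness. Here I would use that $T_L' = T_U'$ in $G'$ together with $L_e' = L_e$: if some $f \in C_e\setminus\{e\}$ had $L_f' \ge L_e$, then... actually the cleaner route is to apply the analogous upper-limit argument. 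The subtlety is that a queried edge's lower limit can rise, so $e$ could stop being the unique maximum on $C_e$. To handle this I would argue: take the cycle $C_e'$ closed by adding $e$ to $T_L'$ and let $f$ be a max-lower-limit edge on $C_e' \setminus \{e\}$; then the standard MST exchange shows $T_L' \cup \{f\} \setminus \{e\}$ would be a lower limit tree for $G$ unless $f$ was queried or the exchange is blocked — and one shows that any blocking edge must itself lie in $Q$, eventually contradicting $e \notin Q$ by tracing the chain of exchanges back to the unique $T_L$ of $G$, in the same spirit as Observation~\ref{obs_mst_1}.

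The main obstacle is exactly this second direction: an edge that was \emph{not} in $T_L$ entering $T_L'$ requires that some edge of $T_L$ was removed and replaced, and because queried edges can have their lower limits rise, the "which edge gets swapped out" bookkeeping is not immediate — one genuinely needs that only queried edges can create the opportunity for such a swap. I expect the clean way to close this is to observe that $T_L \triangle T_L'$ decomposes (via the standard symmetric-difference-of-two-spanning-trees argument) into edge-disjoint alternating cycles, and on each such cycle the MST optimality of $T_L$ in $G$ versus $T_L'$ in $G'$ forces every edge on the cycle whose limits changed — hence every edge on the cycle that is not shared — to be in $Q$; here I would lean on the uniqueness of both $T_L$ and $T_L'$ to get strict inequalities and rule out "tied" swaps of unqueried edges. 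Once the lower-limit statement is established for one direction by the cut argument and for the other by this alternating-cycle argument, the lemma follows; the proof is symmetric in $G \leftrightarrow G'$ only for the first direction, so both directions genuinely need to be done, but the second reduces to the first by swapping roles once we know $T_L$ and $T_L'$ play symmetric structural roles (unique lower $=$ unique upper tree in both instances).
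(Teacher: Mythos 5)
Your first direction ($e \in T_L \setminus T_L'$) matches the paper's proof: $e$ has the unique minimum lower limit in the cut $X_e$ by uniqueness of $T_L$, querying other edges in $X_e$ only raises their lower limits while $L_e$ is unchanged if $e \notin Q$, so $e$ would still be the unique minimum in $G'$ and remain in $T_L'$.

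The gap is in the second direction. You correctly locate the obstruction --- on the cycle $C_e$ in $T_L \cup \{e\}$ the lower limits of the \emph{other} edges may rise past $L_e$, so a lower-limit cycle argument cannot conclude $e \notin T_L'$ --- and you even remark that ``the cleaner route is to apply the analogous upper-limit argument,'' but you never carry that route out. Instead you pivot to an exchange/alternating-cycle argument on $T_L \,\triangle\, T_L'$ that does not close the gap: the two trees are optimal for \emph{different} weight functions ($L$ on $G$ versus $L'$ on $G'$), so the standard MST exchange lemma and the claim that every non-shared edge on an alternating cycle is queried do not follow as stated, and your sketch speaks of ``the cycle closed by adding $e$ to $T_L'$'' when $e$ is already in $T_L'$ in this direction, so no such cycle exists. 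The paper closes the second direction with exactly the upper-limit step you brushed past: since $T_L = T_U$ and $T_L' = T_U'$, the hypothesis $e \in T_L' \setminus T_L$ is equivalent to $e \in T_U' \setminus T_U$; as $e \notin T_U$ and $T_U$ is unique, $e$ has the strictly largest \emph{upper} limit on $C_e$; queries only \emph{decrease} upper limits of other edges and leave $U_e$ unchanged when $e \notin Q$, so $e$ remains the unique maximum on $C_e$ in $G'$ and hence cannot be in any upper limit tree of $G'$, a contradiction unless $e \in Q$. The asymmetry of queries (lower limits only rise, upper limits only fall) exactly matches the asymmetry of the two directions: direction one is a cut argument with lower limits, direction two is a cycle argument with upper limits, and the equalities $T_L = T_U$ and $T_L' = T_U'$ are what translate one into the other.
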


\begin{proof}
	Let $e \in T_L \setminus T_L'$, then $e \in T_L$ and $T_L$ being unique imply that $e$ has the unique minimal lower limit in the cut $X_e$ of $G$ between the two connected components of $T_L \setminus \{e\}$.
	Thus, $e$ is part of any lower limit tree for $G$.
	For $e$ to be not part of $T'_L$, it cannot have the unique minimal lower limit in the cut $X_e$ of $G'$.
	Since querying elements in $X_e \setminus \{e\}$ only increases their lower limits, this can only happen if $e \in Q$.
	
	Let $e \in T_L' \setminus T_L$. Then $T_L = T_U$ and $T_L' = T_U'$ imply $e \in T_U' \setminus T_U$.
	Since $e \not\in T_U$ and $T_U$ is unique, it follows that $e$ has the unique largest upper limit in the cycle $C_e$ of $T_U \cup \{e\}$.
	Thus, $e$ is not part of any upper limit tree for $G$.
	For $e$ to be part of $T_U'$, is cannot have the unique largest upper limit in the cycle $C_e$ of $G'$.
	Since querying elements in $C_e \setminus \{e\}$ only decreases their upper limits, this can only happen if $e \in Q$.
\end{proof}

\begin{lemma}\label{lemma_restart_rec_B}
Let $\bar{G}$ and $\bar{G}'$ be the vertex cover instances before and after a restart, let $H$ be the set of elements queried since the previous restart, and let $h$ be the maximum matching for $\bar{G}$. Then the number of elements that are not matched by $\overline{h}=\{\{e,e'\} \in h \mid \{e,e'\} \in \bar{E}'\}$ but become matched after completing $\overline{h}$ using the augmenting path algorithm is bounded by $2 \cdot \sum_{e \in H} h_e$.
\end{lemma}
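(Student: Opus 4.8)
\emph{Proof plan.} The argument splits into three steps, of which the last is the real work.

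\textbf{Step 1 (reduction to a matching-deficiency bound).} The augmenting-path algorithm started from the partial matching $\overline{h}$ never un-matches a vertex: augmenting along an $\overline{h}$-augmenting path $v_0v_1\cdots v_{2r+1}$ turns exactly the two endpoints $v_0,v_{2r+1}$ from unmatched into matched and leaves every interior vertex matched (with a new partner). Hence the number of elements that are $\overline{h}$-unmatched but matched by the returned maximum matching $h'$ is exactly $2\bigl(|h'|-|\overline{h}|\bigr)$, i.e.\ twice the number of augmentations performed. So it suffices to prove $|h'|-|\overline{h}|\le \sum_{e\in H}h_e$, where $|h'|$ is the cardinality of a maximum matching of $\bar G'$.

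\textbf{Step 2 (the lower-limit tree changes by at most $\sum_{e\in H}h_e$).} Phase~$2$ maintains a unique $T_L=T_U$ throughout, so Lemma~\ref{lemma_mst_tree_change}, applied with the instance at the start of the current run of the \texttt{for}-loop as $G$, the instance at the restart as $G'$, and $Q=H$, gives $T_L\triangle T_L'\subseteq H$. Each edge of $T_L\triangle T_L'$ is charged to its own hop count: if $e\in T_L\setminus T_L'$ then $e\in T_L$, and since the instance is prediction mandatory free, Lemma~\ref{mst_pred_free_characterization} gives $\overline{w}_e\le L_f$ for every $f\in X_e\setminus\{e\}$; but $e$ can leave the lower-limit tree after being queried only if $w_e>L_f$ for the edge $f$ of smallest lower limit in $X_e\setminus\{e\}$, and $f$ lies on a cycle with $e$ (so $I_f\in A_e$), whence the value of $e$ passes over $L_f$ and $h_e\ge 1$. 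Symmetrically, if $e\in T_L'\setminus T_L$ then $e=f_i$ was predicted maximal on $C_i$ (so $\overline{w}_e\ge U_g$ for all $g\in C_i\setminus\{e\}$), whereas $e$ joining the tree forces $w_e<U_{l_i}$ for the edge $l_i$ of largest upper limit on $C_i\setminus\{e\}$, again giving $h_e\ge 1$. For distinct $e$ these are crossings of the values of distinct intervals over distinct endpoints, so they are counted in $k_h$ without repetition; summing gives $|T_L\triangle T_L'|\le\sum_{e\in T_L\triangle T_L'}h_e\le\sum_{e\in H}h_e$.

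\textbf{Step 3 (from the tree change to the matching deficiency — the main obstacle).} It remains to show $|h'|-|\overline{h}|\le|T_L\triangle T_L'|$, via a structural comparison of the bipartite vertex-cover instances $\bar G$ and $\bar G'$ together with König duality. First, deleting the now-trivial vertices of $H$ from $\bar G$ never creates deficiency on its own: for any $e$ in the König-optimal vertex cover $VC$ of $\bar G$, the set $VC\setminus\{e\}$ is a vertex cover of $\bar G-e$, so the restriction of $h$ remains a maximum matching of $\bar G-e$; hence the deficiency of $\overline{h}$ in $\bar G'$ can only come from (i)~the edges that move on and off the fundamental cycles $C_i$ and fundamental cuts $X_e$ because $T_L$ changed, and (ii)~the few queried edges that lie outside the current $VC$. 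The plan is to argue that, because each path $C_i\setminus\{f_i\}$ and each cut $X_e$ changes only along the $|T_L\triangle T_L'|$ tree edges that were swapped, the new edges of $\bar G'$ can all be covered by adjoining at most $|T_L\triangle T_L'|$ extra vertices to the restriction of $VC$, yielding a vertex cover of $\bar G'$ of size at most $|VC|+|T_L\triangle T_L'|$ minus the number of non-surviving $h$-edges; König duality then converts this into $|h'|\le|\overline{h}|+|T_L\triangle T_L'|$. The delicate part is exactly this bookkeeping: controlling how many new intersecting partners a non-swapped edge's cycle/cut can pick up after a bounded number of tree swaps, handling the queried-edges-outside-$VC$ case (which is limited precisely because $h(e)$ is only queried when $\{e,h(e)\}\cap W\neq\emptyset$), and making sure the extra vertices are charged to tree-swap edges already accounted for in Step~2 rather than double-counted. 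This will use Lemmas~\ref{lemma_mst_witness_set_1} and~\ref{lemma_mst_witness_set_2} and the fact used throughout Section~\ref{sec:mst} that a swapped tree edge becomes, in the relaxed instance, either uniquely maximal on a cycle or uniquely minimal on a cut. Combining Steps~1--3 yields the bound $2\sum_{e\in H}h_e$.
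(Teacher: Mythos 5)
Your Step~1 is correct and coincides with the paper's opening move: an augmenting path turns exactly two unmatched vertices into matched ones and never unmatches anything, so the quantity to bound is $2\,(|h'|-|\overline h|)$, equivalently twice the number of augmentations. Your Step~2 is a valid observation (by Lemma~\ref{lemma_mst_tree_change}, $T_L\triangle T_L'\subseteq H$, and the prediction-mandatory-free property forces every tree-swap edge to have $h_e\ge 1$), but notice that it is not the route the paper's proof takes, and --- more importantly --- it doesn't yet say anything about $|h'|$.

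The problem is Step~3, which you yourself call ``the real work'' and ``the delicate part'' and then only sketch as a ``plan.'' The statement you would need --- that the minimum vertex cover of $\bar G'$ exceeds $|\overline h|$ by at most $|T_L\triangle T_L'|$ --- is neither proved nor obviously true: a single tree swap can simultaneously alter the fundamental cycles/cuts of many edges, so it is not clear that one extra cover vertex per swapped tree edge suffices, and you give no argument that it does. The paper does not pass through $|T_L\triangle T_L'|$ at all. Instead it builds a vertex cover of $\bar G'$ directly from $\overline{VC}$, and for each uncovered edge of $\bar G'$ it produces, through an explicit three-way case analysis (new edges of $\bar G'\setminus\bar G$; edges covered in $\bar G$ only because $f\in VC$ with $h(f)$ now trivial or off-cycle; the symmetric case with $e\in VC$), a \emph{distinct} hop of a \emph{queried} element of $H$ to charge the added vertex to, with disjointness of the charges certified by the property table (in/out of $T_L$, in/out of $VC$, trivial/non-trivial). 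Several of those charges land on queried elements that are not tree-swap edges at all (e.g.\ a queried $f'\notin T_L$ with $f'\notin T_L'$), so they are invisible to your $|T_L\triangle T_L'|$ accounting. Until the bookkeeping in Step~3 is actually carried out --- which amounts to reproducing a case analysis of the same complexity as the paper's --- the proposal does not establish the lemma.
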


\begin{proof}
	Let $h'$ be the maximum matching constructed from $\overline{h}$ using the augmenting path algorithm.
	In each iteration, the augmenting path algorithm increases the size of the matching by one and matches at most two elements that were not matched before.
	If we can show that the number of iterations of the augmenting path algorithm is bounded by $\sum_{e \in H} h_e$, it follows that at most $2\cdot\sum_{e \in H} h_e$ elements that are not matched by $\overline{h}$ are matched by $h'$, which implies the lemma.
	
	According to K\"onig`s Theorem, see, e.g.,~\cite{Biggs1986}, the size of $h'$ is upper bounded by the size of the minimum vertex cover for~$\bar{G}'$. 
	Since the augmenting path algorithm increases the size of the matching by one in each iteration, the number of iterations is bounded by $|VC'| - |\overline{h}|$, where $|VC'|$ is the size of a minimum vertex cover $VC'$ for $\bar{G}'$ and $|\overline{h}|$ is the size of the matching $\overline{h}$, i.e., the number of edges of  $\bar{G}'$ in $\overline{h}$.
	We show that the size of the minimum vertex cover is at most $|\overline{h}| + \sum_{e \in H} h_e$ and the statement follows.

	In order to do so, we define $\overline{VC} = \{e \in VC \mid \exists e' \text{ s.t. } \{e,e'\} \in \overline{h}\}$ for $\bar{G}$ where $VC$ is the vertex cover for $\bar{G}$ as defined by $h$.
	Note that $|\overline{VC}| = |\overline{h}|$.
	We show that we can construct a vertex cover for $\bar{G}'$ by adding at most $\sum_{e \in H} h_e$ elements to $\overline{VC}$, which implies $|VC'| \le |\overline{h}| + \sum_{e \in H} h_e$ for the minimum vertex cover $VC'$ of $\bar{G}'$.
	
	We argue how to extend $\overline{VC}$ such that it covers each edge in $\bar{G}'$ by adding at most $\sum_{e\in H} h_e$ elements.
	We sequentially consider edges $\{f,e\} \in \bar{E}'$ that are not covered by $\overline{VC}$ and add an element of $\{f,e\}$ to $\overline{VC}$ in order to cover $\{f,e\}$.
	In the process we argue that we can charge each added element to a distinct error.
	{We will consider pairs of elements $\{e,e'\}$ such that the value of $e'$ passes over a boundary of $e$.
	In that case we say that $e$ contributes to the hop distance $h_{e'}$.} 
	
	Consider any edge of $\bar{G}'$ that is not covered by the current $\overline{VC}$.
	Either that edge is part of $\bar{E}' \setminus \bar{E}$ or is part of $\bar{E} \cap \bar{E'}$ but both endpoints of the edge are not part of $\overline{VC}$.
	{Let $T_L' = T_U'$ be the unique upper and lower limit tree after the restart, i.e., the vertex cover instance $\bar{G}'$ is based on $T_L'$.}
	
	\textbf{Case 1:} Consider an $\{f,e\} \in \bar{E}' \setminus \bar{E}$ that is not covered by the current $\overline{VC}$ with $e \in T'_L$ and $f \not\in T'_L$. 
	By definition of the vertex cover instance, $\{f,e\} \in \bar{E}'$ and $f \not\in T'_L$ imply $e \in C'_f$ such that $I_f \cap I_{e} \not= \emptyset$, where $C'_f$ is the unique cycle in $T_L' \cup \{f\}$.
	
	By definition, $\bar{G}'$ only contains non-trivial elements and thus $e$ and $f$ are non-trivial.
	According to Lemma~\ref{lemma_mst_tree_change}, $e$ being non-queried and $e \in T_L'$ imply $e \in T_L = T_U$.
	Similar, $f$ being non-queried and $f \not\in T_L'$ imply $f \not\in T_L = T_U$.
	
	Define $X'_{e}$ to be the set of edges in the cut of $G'$ between the two connected components of $T_L' \setminus \{e\}$.
	Remember that $f \in X'_e$ since $f \not\in T_L'$ and $e \in C'_f$.
	Since $\{f,e\} \not\in \bar{E}$ implies $e\not\in C_f$, there must be an element $l \in T_L \cap (X'_{e} \setminus \{e,f\})$ such that $l \in C_f$, where $C_f$ is the unique cycle in $T_L \cup \{f\}$.
	As the instance is prediction mandatory free, $l \in C_f$ implies $\w_l \not\in I_f$.

	{Further, $X'_e \cap T_L' = \{e\}$ holds by definition and implies $l \not\in T_L'$.}
	According to Lemma~\ref{lemma_mst_tree_change}, $l \in T_L \setminus T_L'$ implies that $l$ must have been queried after the previous restart.
	If $w_l \in I_e$, this implies that $l$ has the {unique} smallest upper limit in cut $X'_e$ and is therefore part of $T_U'$.
	This would imply $T_L' \not= T_U'$ which is a contradiction.
	If $w_l \not\in I_e$, then also $w_l \ge U_e$ ($w_l \le L_e$ cannot be the case because $l$ would have the smallest lower limit in $X'_e$ which would contradict $l \not\in T_L'$).
 	As $I_e \cap I_f \not= \emptyset$, we can conclude that $w_l \ge U_e$ implies $w_l > L_f$.
 	
 	{It follows $\w_l \le L_f < w_l$ and $l$ passes over $L_f$. Therefore $f$ contributes to the hop distance $h_l$.}
 	We already argued that $l$ was queried and thus $l \in H$ and we can afford to add $f$ to the vertex cover in order to cover edge $\{f,e\}$.	
 	Afterwards, all edges incident to $f$ are covered and we therefore do not need to consider any of those edges again. 
 	Thus, the error contributed by $f$ to $h_l$ will not be counted multiple times in Case~$1$.
 	{Further, $l \in T_L$ holds for the element $l$ that passes over an interval boundary and $f \not\in T_L$ holds for the non-trivial element $f$ that is passed over.}
 	{This combination of properties for the pair of elements $\{l,f\}$ that contribute an error is mutually exclusive to the properties of the element pairs considered in the following cases (cf. Table~\ref{table_restart_rec_B}).
 	Thus, the error contributed by $f$ to $h_l$ will not be counted in the other cases.
 	}

	\textbf{Case 2:} Consider an edge $\{f,e\}$ with $f\not\in T_L$ and $e \in T_L$ that is part of $\bar{E} \cap \bar{E'}$ but $f,e\not\in \overline{VC}$, then
	$\{f,e\} \in \bar{E}$ implies that either a) $f \in VC$ or b) $e \in VC$. 
	
	\textbf{Case 2a):} Assume $f \in VC$, then $h$ matches $f$ to an element $h(f)$ and we can conclude $\{h(f),f\}\not\in \bar{E}'$ since otherwise $\overline{h}$ would also match $f$ and $h(f)$ and therefore $f \in \overline{VC}$ would follow by definition of $\overline{VC}$.
	
	From $\{h(f),f\}\not\in \bar{E}'$ it follows that either $h(f) \not\in C'_f$, or $h(f) \in C'_{f}$ and $h(f)$ is trivial after the restart.
	By definition, $h(f) \in C'_f$ and $h(f)$ being non-trivial would imply $\{f,h(f)\} \in \bar{E}'$, a contradiction.

	Assume $h(f) \in C'_f$ and $h(f)$ is trivial.
	For $h(f)$ to be trivial it must have been queried after the last restart. 
	We can observe that $h(f)$ was not queried in Line~\ref{line_mst_two_ensure_three_robust} of Algorithm~\ref{ALG_mst_part_2} as a result of $\{f,h(f)\} \cap W = \emptyset$ because then $f$ would have been queried in Line~\ref{line_mst_two_query_vc} and
	$f$ being trivial would contradict $\{e,f\} \in \bar{E}'$.

	Since $f \in VC$ implies $h(f) \not\in VC$, we can conclude that $h(f)$ was also not queried as part of $VC$ in Line~\ref{line_mst_two_query_vc} but to ensure unique $T_L = T_U$ using Lemma~\ref{mst_preprocessing}.
	{Directly after the previous restart}, $h(f)$ had the uniquem minimum upper and lower limit in the cut $X_{h(f)}$ of $G$ between the two connected components of $T_L \setminus \{h(f)\}$
	{because $h(f) \in T_L=T_U$ and $T_L,T_U$ are unique}.
	Since $h(f)$ was queried to ensure $T_L = T_U$, it must, at some point before it was queried, have been part of the current lower limit tree but not of the upper limit tree.
	Thus, some element $f'$ of $X_{h(f)} \setminus \{h(f)\}$ must have been queried and $w_{f'} \in I_{h(f)}$. (Otherwise $h(f)$ would still have the unique smallest upper limit in $X_{h(f)}$ and therefore be still part of the current upper limit tree.)
	 
	Since the input instance is prediction mandatory free, it follows $\w_{f'} \ge U_{h(f)} > w_{f'}$.
	{Therefore $f'$ passes over $U_{h(f)}$ and $h(f)$ contributes to the hop distance $h_{f'}$.}	
	Since ${f'}$ was queried, ${f'} \in H$ and we can afford to add $f$ to $\overline{VC}$ in order to cover the edge $\{e,f\}$.
	
	Afterwards, all edges incident to $f$ are covered and we therefore do not need to consider any of those edges again in Case 2a).
	As $f \in VC$ is the unique partner of $h(f) \not\in VC$ and $f$ will not be considered again in Case 2a), the error contributed by $h(f)$ to $h_{f'}$ will not be counted again in Case 2a).
	For the element $h(f)$ that is passed over, it holds that $h(f) \in T_L$ and $h(f)$ is trivial and, for the element $f'$ that passes over, it holds that $f' \not\in T_L$ and $f'$ is trivial.
	This combination of properties for the pair of elements $\{h(f),f'\}$ that contribute an error is mutually exclusive to the properties of the element pairs considered in the other cases (cf. Table~\ref{table_restart_rec_B}).
	Thus, the error contributed by $h(f)$ to $h_{f'}$ will not be counted in the other cases.
	
	Now assume that $h(f)$ is non-trivial and $h(f) \not\in C'_f$.
	Note that $h(f) \in C_f \cap T_L$, since $f \not\in T_L$ and $\{f, h(f)\} \in \bar{E}$.
	Thus $f \in X_{h(f)}$ and, since $h(f) \not\in C'_f$, some edge $f' \in C'_f \cap T'_L$ must be in $X_{h(f)}$.
	From $f' \in X_{h(f)}$ follows $f' \not\in T_L$, since $T_L$ by definition only contains one element of cut $X_{h(f)}$ and that is $h(f)$.
	Additionally $f' \in X_{h(f)}$ implies $h(f) \in C_{f'}$, and since the instance is prediction mandatory free also $\w_{f'} \ge U_{h(f)}$.
	According to Lemma~\ref{lemma_mst_tree_change}, $f' \in T_l' \setminus T_L$ implies that $f'$ must have been queried {after the previous restart} and therefore is trivial.
	If $w_{f'} \ge U_{h(f)}$, then $I_{h(f)} \cap I_{f}\not=\emptyset$ implies $w_{f'} > L_f$.
	This contradicts $f' \in T'_L$ because $f'$ would have the unique highest lower limit on $C'_f$.
	If $w_{f'} < U_{h(f)}$, then $w_{f'} < U_{h(f)} \le \w_{f'}$.	
	{Therefore $f'$ passes over $U_{h(f)}$ and $h(f)$ contributes to the hop distance $h_{f'}$.}	
	Since ${f'}$ was queried, we have that ${f'} \in H$ and we can afford to add $f$ to $\overline{VC}$ in order to cover the edge $\{e,f\}$.
	
	Afterwards, all edges incident to $f$ are covered and we therefore do not need to consider any of those edges again in Case 2a).
	As $f \in VC$ is the unique partner of $h(f) \not\in VC$ and $f$ will not be considered again in Case 2a) , the error contributed by $h(f)$ to $h_{f'}$ will not be counted again in Case 2a).
	For the element $h(f)$ that is passed over, it holds that $h(f) \in T_L \setminus VC$ and $h(f)$ is non-trivial and, for the element $f'$ that passes over, it holds that $f' \not\in T_L$ and $f'$ is trivial.
	This combination of properties for the pair of elements $\{h(f),f'\}$ that contribute an error is mutually	 exclusive to the properties of the element pairs considered in the other cases (cf. Table~\ref{table_restart_rec_B}).
	Thus, the error contributed by $h(f)$ to $h_{f'}$ will not be counted in the other cases.

	\textbf{Case 2b):} Assume $e \in VC$, then $h$ matches $e$ to an element $h(e)$ and we can conclude that $\{h(e),e\}\not\in \bar{E}'$, since otherwise $\overline{h}$ would also match $e$ and $h(e)$, and therefore $e \in \overline{VC}$ would follow.
	Since $\{h(e),e\} \in \bar{E}$, it follows $e \in C_{h(e)}$ and $I_e\cap I_{h(e)} \not= \emptyset$.
	
	From $\{h(e),e\}\not\in \bar{E}'$ follows that either $e \not\in C'_{h(e)}$, or  $e \in C'_{h(e)}$ and $h(e)$ is trivial after the restart. 
	By definition, $e \in C'_{h(e)}$ and $h(e)$ being non-trivial would imply $\{e,h(e)\} \in \bar{E}'$, a contradiction.
	
	Assume $e \in C'_{h(e)}$ and $h(e)$ is trivial.
	For $h(e)$ to be trivial it must have been queried after the last restart.
	We can observe that $h(e)$ was not queried in Line~\ref{line_mst_two_ensure_three_robust} of Algorithm~\ref{ALG_mst_part_2} as a result of $\{e,h(e)\} \cap W = \emptyset$ because then $e$ would have been queried in Line~\ref{line_mst_two_query_vc}, and
	$e$ being trivial would contradict $\{e,f\} \in \bar{E}'$.
	
	Since $e \in VC$ implies $h(e) \not\in VC$, we can conclude that $h(e)$ was also not queried as part of $VC$ in Line~\ref{line_mst_two_query_vc} but to ensure $T_L = T_U$ using Lemma~\ref{mst_preprocessing}.	 
	{Directly after the previous restart}, $h(e)$ had the unique maximum upper and lower limit in cycle $C_{h(e)}$ {because $h(e) \not\in T_L=T_U$ and $T_L, T_U$ are unique}.
	Since $h(e)$ was queried to ensure $T_L = T_U$, it must,  at some point before it was queried, have been part of the current lower limit tree but not of the upper limit tree.
	Thus, 
	some element $e'$ of $C_{h(e)}$ must have been queried and $w_{e'} \in I_{h(e)}$.
	(Otherwise $h(e)$ would still have the unique largest lower limit in $C_{h(e)}$ and therefore not be part of the current lower limit tree.)
	Since the input instance is prediction mandatory free, $\w_{e'} \le L_{h(e)} < w_{e'}$.
	{Therefore $e'$ passes over $L_{h(e)}$ and $h(e)$ contributes to the hop distance $h_{e'}$.}	
	Since ${e'}$ was queried, ${e'} \in H$ and we can afford to add $e$ to $\overline{VC}$ in order to cover the edge $\{e,f\}$.

	Afterwards, all edges incident to $e$ are covered and we therefore do not need to consider any of those edges again {in Case 2b)}. 
	As $e \in VC$ is the unique partner of $h(e) \not\in VC$ and $e$ will not be considered again in Case 2b), the error contributed by $h(e)$ to $h_{e'}$ will not be counted again in Case 2a).	
	For the element $h(e)$ that is passed over, it holds that $h(e) \not\in T_L$ and $h(e)$ is trivial and, for the element $e'$ that is passes over, it holds that $e' \in T_L$ and $e'$ is trivial.	
	This combination of properties for the pair of elements $\{h(e),e'\}$ that contribute an error by is mutually exclusive to the properties of the element pairs considered in the other cases (cf. Table~\ref{table_restart_rec_B}).
	Thus, the error contributed by $h(e)$ to $h_{e'}$ will not be counted in the other cases.

	Now assume that $e \not\in C'_{h(e)}$.
	Note that $e \in C_{h(e)} \cap T_L$, since $h(e) \not\in T_L$ and $\{e, h(e)\} \in \bar{E}$.
	Thus $h(e) \in X_{e}$ and, since $e \not\in C'_{h(e)}$, some edge $e' \in C'_{h(e)} \cap T'_L$ must be in $X_{e}$.
		
	From $e' \in X_{e}$ it follows that $e' \not\in T_L$, since $T_L$ by definition only contains one element of cut $X_{e}$ and that is $e$.
	Additionally $e' \in X_{e}$ implies $e \in C_{e'}$, and since the instance is prediction mandatory free also $\w_{e'} \ge U_{e}$.
	
	By Lemma~\ref{lemma_mst_tree_change}, $e' \in T_L' \setminus T_L$ implies that $e'$ must have been queried {after the previous restart} and is trivial.
	If $w_{e'} \ge U_{e}$, then $I_{h(e)} \cap I_{e} \not=\emptyset$ implies $w_{e'} > L_{h(e)}$.
	This contradicts $e' \in T'_L$ because $e'$ would have the unique {highest lower limit} on $C'_{h(e)}$.
	If $w_{e'} < U_{e}$, then $w_{e'} < U_{e} \le \w_{e'}$.
	{Therefore $e'$ passes over~$U_{e}$ and $e$ contributes to the hop distance $h_{e'}$.}	
	{Since ${e'}$ was queried, we have that ${e'} \in H$ and we can afford to add $e$ to $\overline{VC}$ in order to cover the edge $\{e,f\}$.}
	
	Afterwards, all edges incident to $e$ are covered and we therefore do not need to consider any of those edges again in Case 2b). Therefore the error contributed by $e$ to $h_{e'}$ will not be counted again in Case 2b).
	For the element $e $ that is passed over, it holds that $e \in T_L \cap VC$ and $e$ is non-trivial and, for the element $e'$ that passes over, it holds that $e' \not\in T_L$ and $e'$ is trivial.
	This combination of properties for the pair of elements $\{e,e'\}$ that contribute an error is mutually exclusive to the properties of the element pairs considered in the other cases (cf. Table~\ref{table_restart_rec_B}).
	Thus, the error contributed by $e$ to $e'$ will not be counted in the other cases.

	\begin{table}[t]
		\centering
		\begin{tabular}{p{2cm}|p{2.5cm}p{1.75cm}p{2.5cm}p{2cm}p{2.5cm}}
			& Case 1 & Case 2a) & Case 2a) & Case 2b) & Case 2b)\\
			\hline
		\\
		Passed \newline Element &$f \not\in T_L$ \newline non-trivial in $G'$& $h(f) \in T_L$ \newline trivial in $G'$  & $h(f) \in T_L\setminus VC$ \newline non-trivial in $G'$ & $h(e) \not\in T_L$ \newline trivial in $G'$ & $e \in T_L \cap VC$ \newline non-trivial in $G'$\\
		\\
		Passing \newline Element & $l \in T_L$ \newline trivial in $G'$  & $f' \not\in T_L$ \newline trivial in $G'$ & $f' \not\in T_L$ \newline trivial in $G'$ & $e' \in T_L$ \newline trivial in $G'$ & $e' \not\in T_L$ \newline trivial in $G'$ \\[2ex]
		\end{tabular}
		\caption{Errors considered in the different cases of the proof of Lemma~\ref{lemma_restart_rec_B}.
		The first row shows properties of the element whose interval border is passed over, and the second row shows properties of the passing element. 
		Note that Cases 2a) and 2b) are listed twice because both contain two sub-cases.}
		\label{table_restart_rec_B}	
	\end{table}

	In summary, we can exhaustively execute Cases $1$ and $2$ until all edges of $\bar{G}'$ are covered while adding at most $\sum_{e\in H} h_e$ elements.
	As explained at the beginning of the proof, this implies the lemma.
\end{proof}

\MSTEndOfPhaseTwo*

\begin{proof}
	{We first show the first statement of the lemma, that the instance remains prediction mandatory free.}
	Let $G$ be the prediction mandatory free instance at the beginning of an iteration and let $T_L$ be the corresponding lower limit tree.
	Let~$G'$ be the instance after the next restart and assume~$G'$ is not prediction mandatory free.
	{We show that $G$ being prediction mandatory free implies that $G'$ is prediction mandatory free via proof by contradiction.}
	
	Let $T_L'$ be the lower limit tree of $G'$, let $f_1',\ldots,f'_{l'}$ be the (non-trivial) edges in $E' \setminus T_L'$ ordered by lower limit non-decreasingly, and let $C_i'$ be the unique cycle in $T_L' \cup \{f_i'\}$.
	By definition of the algorithm, $T_L' = T_U'$ holds and $T_L' = T_U'$ is unique.
	We can w.l.o.g. ignore trivial edges in $E' \setminus T_L'$ since those are maximal in a cycle and can be deleted.
	Since $G'$ is not prediction mandatory free, there must be some $C_i'$ such that either $\w_e \in I_{f_i'}$ or $\w_{f_i'} \in I_e$ for some non-trivial $e \in C_i' \setminus \{f_i'\}$.

	Assume $e \not\in T_L$.
	Since $e$ is part of $T_L' = T_U'$, Lemma~\ref{lemma_mst_tree_change} implies that $e$ must have been queried and therefore is trivial, which is a contradiction.
	Assume $e \in T_L$ and $\w_e \in I_{f_i'}$.
	Since $e \in T_L$, cycle $C_i'$ must contain some $f \in X_e \setminus \{e\}$, where $X_e$ is the cut between the two components of $T_L \setminus \{e\}$ in $G$.
	Instance $G$ being prediction mandatory free implies $\w_e \not\in I_f$ where $I_f$ denotes the uncertainty interval of $f$ before querying it.
	If $\w_e \in I_{f_i'}$, this implies $L_{f_i'} < L_{f}$. It follows that $f$ has the highest lower limit in $C_i'$, which contradicts $f_i'$ having the highest lower limit in $C'_i$.

	Assume $e \in T_L$ and $\w_{f_i'} \in I_e$. 
	Remember that $f_i'$ is non-trivial and $f_i' \not\in T_L' = T_U'$. 
	According to Lemma~\ref{lemma_mst_tree_change}, it follows $f_i' \not\in T_L = T_U$.
	Let $C_{f_i'}$ be the cycle in $T_L \cup \{f_i'\}$.
	Since $G$ is prediction mandatory free, $\w_{f_i'} \not\in I_{e'}$ for each $e' \in C_{f_i'} \setminus \{f_i'\}$, which implies $U_e > U_{e'}$.
	It follows that the highest upper limit on the path between the two endpoints of $f_i'$ in $T_U' = T_L'$ is strictly higher than the highest upper limit on the path between the two endpoints of $f_i'$ in $T_U=T_L$.
	We argue that this cannot happen and we have a contradiction to $e \in T_L$ and $\pred{f'_i} \in I_e$.
	
	{Let $P$ be the path between the endpoints $a$ and $b$ of $f_i'$ in $T_U$ and let $P'$ be the path between $a$ and $b$ in $T_U'$.
	Define $U_P$ to be the highest upper limit on $P$.
	Observe that the upper limit of each edge can only decrease from $T_U$ to $T_U'$ since querying edges only decreases their upper limits.
	Therefore each $e' \in P' \cap P$ cannot have a higher upper limit than $U_P$.
	It remains to argue that the upper limit of each $e' \in P' \setminus P$ cannot be larger than $U_P$.
	Consider the set $\mathcal{S}$ of maximal subpaths $S \subseteq P'$ such that $P \cap P' = \emptyset$.
	Each $e' \in P' \setminus P$ is part of such a subpath $S$.
	Let $S$ be an arbitrary element of $\mathcal{S}$, then there is a cycle $C \subseteq S \cup P$ with $S \subseteq C$.
	Assume $e' \in S$ has a strictly larger upper limit than $U_P$, then an element of $S$ has the unique highest upper limit on $C$.
	It follows that subpath $S$ and path $P'$ cannot be part of any upper limit tree in the instance $G'$, which contradicts the assumption of $P'$ being a path in $T_U'$.}

	{We conclude that the graph $G'$ is prediction mandatory free.}
	{Note that this proof is independent of the individual recovery strategy that is used in Algorithm~\ref{ALG_mst_part_2} and relies only on maintaining the property $T'_L=T'_U$ being unique.}
	
	{We continue by showing the performance guarantees starting with \textbf{recovery A}.}	
	If all predictions are correct, the algorithm queries exactly the minimum vertex cover $VC$ which, according to~\cite{erlebach14mstverification}, is optimal for the input instance of Algorithm~\ref{ALG_mst_part_2}, i.e., $|\ALG| \le \opt$. 
	If not all predictions are correct, the algorithm additionally queries elements of $T_L \setminus T_U$ in Line~\ref{line_mst_two_query_vc}.
	Since those elements are mandatory, querying them does not violate the $2$-robustness.
	Additionally, the algorithm might execute  recovery strategy A in Line~\ref{line_mst_two_recovery}, i.e., query all elements in $W$ and afterwards ensure $T_L = T_U$.
	According to Lemma~\ref{mst_phase2_1}, each $h(e) \in W$ forms a witness set of size two with a distinct already queried $e \in VC$.
	In summary, the algorithm only queries disjoint witness sets of size one and two which implies $|\ALG| \le 2\cdot \opt$.

	{We conclude the proof by showing the performance guarantee of \textbf{recovery B}. First, we show the robustness.} 
	All elements that were queried because they were in $T_L\setminus T_U$ are mandatory and querying them never decrease the robustness.
	According to Lemma~\ref{mst_phase2_1}, each element $e$ queried in Line~\ref{line_mst_two_query_vc} forms a witness set with the element $h(e)$.
	If $h(e) \not\in \ALG$ or $h(e)$ is queried as an element of $T_L\setminus T_U$, then $\{e,h(e)\}$ is disjoint to all other $\{e',h(e')\}$ pairs that are considered at Line~\ref{line_mst_two_query_vc}.
	If $h(e)$ is queried in Line~\ref{line_mst_two_query_vc} or~\ref{line_mst_two_ensure_three_robust} in a later iteration, then $h(e)$ must have been re-matched to an element $e'$ after a restart.
	In this case, the algorithm queries all of $\{h(e),e,e'\}$ because of Line~\ref{line_mst_two_ensure_three_robust}.
	It follows that $\{h(e),e,e'\}$ is a witness set and, since all elements are queried and will not be considered again, this is disjoint to all other considered witness sets.
	In summary, the algorithm queries only (subsets of) disjoint witness sets of at most size three.
	This implies that $|\ALG| \le 3 \cdot |\OPT|$.
	
	{We continue by showing the error-dependent guarantee.}
	Consider set $\ALG$.
	Observe that $|W| \le |\OPT|$ holds since $|VC| \le |\OPT|$ and, {for each element $h(e)\in W$, there is a distinct element $e\in VC$ such that $\{e,h(e)\}$ is a witness set. }
	Each $h(e) \in W$ was added to $W$ in Line~\ref{line_mst_two_ensure_three_robust} after the distinct $e \in VC$ was queried in Line~\ref{line_mst_two_query_vc}.
	Let~$S$ be the set of those queried elements, then $|S| = |W| \le |\OPT|$.
	
	Consider some $e \in \ALG \setminus S$, then $e$ was queried either $(i)$ as an element of $T_L \setminus T_U$ in Line~\ref{line_mst_two_query_vc} or~\ref{line_mst_two_ensure_three_robust},  or $(ii)$ as part of an witness set $\{e,h(e)\}$ with $\{e,h(e)\} \cap W \not= \emptyset$ in Line~\ref{line_mst_two_query_vc} or~\ref{line_mst_two_ensure_three_robust}.
	
	$(i)$ $e$ was queried as an element of $T_L \setminus T_U$ in Line~\ref{line_mst_two_query_vc} or~\ref{line_mst_two_ensure_three_robust}. {We argue that $e$ contributes at least one to the hop distance $k_h$. The element $e$ is not prediction mandatory, as the instance is by Lemma~\ref{mst_end_of_phase_one} prediction mandatory free but, as $e \in T_L\setminus T_U$, it is mandatory. The proof of Theorem~\ref{Theo_hop_distance_mandatory_distance} shows that $e$ contributes at least one to $k_h$.}
	To be more precise, let $h'(e)$ with $e \in E$ be the number of edges $e'$ such that the value of $e'$ passes over an endpoint of $I_e$.
	From the arguments in the proof of Theorem~\ref{Theo_hop_distance_mandatory_distance}, it can be seen that, for each edge~$e$ that is non-prediction mandatory at some point but is mandatory, we have that $h'(e) \geq 1$.
	For a subset $U \subseteq E$, let $h'(U) = \sum_{e \in U} h'(e)$.
	Note that $k_h = h'(E)$ holds by reordering summations.
	Thus at most $h'(E_{i}) \le k_h$ elements that satisfy~$(i)$ are queried, where $E_{i}$ denotes the set of all elements that satisfy~$(i)$.

	$(ii)$ $e$ was queried as part of an witness set $\{e,h(e)\}$ with $\{e,h(e)\} \cap W \not= \emptyset$ in Line~\ref{line_mst_two_query_vc} or~\ref{line_mst_two_ensure_three_robust}.
	Let $\{e,h(e)\}$ be a set that is queried in Line~\ref{line_mst_two_query_vc} and~\ref{line_mst_two_ensure_three_robust}.  
	Since both $e$ and $h(e)$ are queried, there must be some $l \in \{e,h(e)\} \cap W$.
	This means that $l$ must have been matched to a different element in an earlier restart.
	By Lemmas~\ref{mst_end_of_phase_one} and~\ref{mst_end_of_phase_2},
	the instance at the beginning of each restart is prediction mandatory free.
	We can apply Lemma~\ref{mst_phase2_2} and conclude that the number of elements that are re-matched is at most $2 \cdot k_h$.
	It follows that at most $2 \cdot k_h$ sets $\{e,h(e)\}$ can be queried in Line~\ref{line_mst_two_query_vc} and~\ref{line_mst_two_ensure_three_robust} and, {thus, in total} at most $4 \cdot k_h$ elements.
	
	We have that $\ALG = S \cup E_{ii} \cup E_{i}$, where $E_{i}$ denotes the set of elements that satisfy $(i)$ and $E_{ii}$ denotes the set of elements that satisfy $(ii)$.
	Since $|S| \le |\OPT|$, $|E_i| \le k_h$ and $|E_{ii}| \le 4 \cdot k_h$, we obtain a performance guarantee of $|\ALG| \le |\OPT| + {5} \cdot k_h$
\end{proof}

\begin{coro}
	\label{mst_error_sensitiv_guarantee}
Let $\ALG$ be the set of queries made by the algorithm that first executes Algorithm~\ref{ALG_mst_part_1}, and then Algorithm~\ref{ALG_mst_part_2} with recovery strategy B.
Then, $|\ALG| \le (1+\frac{1}{\gamma}) \cdot \opt + (5 + \frac{1}{\gamma}) \cdot k_h$.
\end{coro}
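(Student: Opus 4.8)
The plan is to combine the guarantees of Lemma~\ref{mst_end_of_phase_one} (Phase~1) and Lemma~\ref{mst_end_of_phase_2} (Phase~2, recovery~B) in essentially the same way as in the proof of Theorem~\ref{theorem_mst2}, but tracking the hop-distance charges carefully enough to avoid the extra additive unit that the naive sum produces. Write $\ALG = \ALG_1 \cup P \cup \ALG_2$ exactly as in the proof of Theorem~\ref{theorem_mst2}: $\ALG_1$ is what Algorithm~\ref{ALG_mst_part_1} queries outside its last execution of Line~\ref{line_mst_one_fillup}, $P$ is the (at most $\gamma-2$) prediction mandatory elements from that last execution, and $\ALG_2$ is what Algorithm~\ref{ALG_mst_part_2} with recovery~B queries. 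Fix an optimal query set $\OPT = \OPT_1 \cup \OPT_2$ with $\OPT_1 = \OPT\cap\ALG_1$ and $\OPT_2 = \OPT\setminus\ALG_1$. Lemma~\ref{mst_end_of_phase_one} gives $|\ALG_1| \le (1+\frac1\gamma)(|\OPT_1| + k_h^{(1)})$ and Lemma~\ref{mst_end_of_phase_2} gives $|\ALG_2| \le |\OPT_2\setminus P| + 5\,k_h^{(2)}$, where I write $k_h^{(1)}, k_h^{(2)}$ for the portions of the hop distance charged in the two phases; each $e\in P$ is mandatory or contributes one to $k_h$ by the proof of Theorem~\ref{Theo_hop_distance_mandatory_distance}, so $P$ contributes at most $k_h^{(P)}$ errors not in $\OPT_2$.

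The key step — and the only place where any work beyond Theorem~\ref{theorem_mst2} is needed — is to argue that the hop-distance errors charged in Phase~1 (to bound $|\ALG_1|$ and to account for $P$) and the hop-distance errors charged in Phase~2 (the $\le k_h$ elements of type $(i)$ and the $\le 4k_h$ elements of type $(ii)$ in the proof of Lemma~\ref{mst_end_of_phase_2}) are disjoint, i.e.\ $k_h^{(1)} + k_h^{(P)} + k_h^{(2)} \le k_h$. The reason is structural: Phase~1 charges only errors where the value of some edge passes over an interval boundary of an edge that was \emph{prediction mandatory} (hence non-trivial) at a moment when the instance was still non-prediction-mandatory-free, and those edges together with their "passed-over" boundaries are consumed (queried, deleted, or contracted) before Phase~2 begins; whereas Phase~2, operating on a prediction-mandatory-free instance, charges errors involving only boundaries of edges of $T_L\setminus T_U$ or the matching-partner structure of the vertex-cover instance — the detailed case analysis in Lemma~\ref{lemma_restart_rec_B} (summarized in Table~\ref{table_restart_rec_B}) already shows those Phase~2 charges are pairwise disjoint, and none of the passed-over/passing element pairs it uses can coincide with a Phase~1 pair because the latter involve non-trivial prediction mandatory edges that are gone by Phase~2. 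I would spell this out by noting that once an edge is queried it is never "passed over" again and never re-considered, and that the preprocessing in Lemma~\ref{mst_preprocessing} keeps $T_L = T_U$ unique throughout, so a single global $h'(\cdot)$ accounting (as defined inside the proofs of Lemma~\ref{mst_end_of_phase_one} and Lemma~\ref{mst_end_of_phase_2}) suffices and $\sum_e h'(e) = k_h$.

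Granting disjointness, the computation is routine: $|P| \le k_h^{(P)} + |\OPT_2\cap P|$, so
\[
|\ALG_2 \cup P| \;\le\; |\OPT_2\setminus P| + 5\,k_h^{(2)} + k_h^{(P)} + |\OPT_2\cap P| \;=\; |\OPT_2| + 5\,k_h^{(2)} + k_h^{(P)},
\]
and adding $|\ALG_1| \le (1+\frac1\gamma)(|\OPT_1| + k_h^{(1)})$ and using $k_h^{(1)} + k_h^{(P)} + 5\,k_h^{(2)} \le 5(k_h^{(1)}+k_h^{(P)}+k_h^{(2)}) \le 5\,k_h$ together with $1+\frac1\gamma > 1$ yields
\[
|\ALG| \;\le\; (1+\tfrac1\gamma)|\OPT| + (1+\tfrac1\gamma)k_h^{(1)} + 5\,k_h^{(2)} + k_h^{(P)} \;\le\; (1+\tfrac1\gamma)\opt + (5+\tfrac1\gamma)k_h,
\]
where in the last inequality I bound $(1+\frac1\gamma)k_h^{(1)} + k_h^{(P)} + 5\,k_h^{(2)} \le 5(k_h^{(1)}+k_h^{(P)}+k_h^{(2)}) + \frac1\gamma k_h^{(1)} \le 5\,k_h + \frac1\gamma k_h$. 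I expect the main obstacle to be writing the disjointness argument rigorously without re-deriving the entire Phase~2 case analysis — the cleanest route is to observe that the charging scheme in all proofs is really one global assignment injecting charged-and-removed edges into distinct ordered pairs contributing to $k_h$, so "disjoint across phases" follows from "an edge queried in one phase is never charged in another."
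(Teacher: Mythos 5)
Your overall framework — the decomposition $\ALG = \ALG_1 \cup P \cup \ALG_2$, the appeal to Lemma~\ref{mst_end_of_phase_one} and Lemma~\ref{mst_end_of_phase_2}, and the arithmetic at the end — matches the paper's proof. However, your central disjointness claim is too strong and is not what the paper establishes. You assert $k_h^{(1)} + k_h^{(P)} + k_h^{(2)} \le k_h$ where $k_h^{(2)}$ is defined so that $|\ALG_2| \le |\OPT_2\setminus P| + 5\,k_h^{(2)}$; in particular $k_h^{(2)}$ aggregates both the type-$(i)$ charges (elements of $T_L\setminus T_U$) and the type-$(ii)$ charges (re-matched elements, bounded by $4k_h$ via Lemma~\ref{mst_phase2_2}). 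The paper makes no such claim, and in fact explicitly notes after Lemma~\ref{mst_end_of_phase_2} that ``the factor is $5$ instead of $4$ because the errors used to bound the number of executions of the Otherwise-part of Line~\ref{line_mst_two_ensure_three_robust} and the errors used to charge for the queried elements of $T_L\setminus T_U$ $\ldots$ are not necessarily disjoint.'' The type-$(ii)$ bound $|E_{ii}| \le 4k_h$ comes from $2\sum_i\sum_{e\in H_i}h_e$, where $h_e$ counts hop events indexed by the \emph{passing} edge $e\in\ALG_2$, whereas the Phase-1 and type-$(i)$ charges $h'(\ALG_1),h'(P),h'(E_i)$ count hop events indexed by the \emph{passed-over} edge; a single hop event $(e',e)$ with $e'\in\ALG_2$ and $e\in\ALG_1\cup P\cup E_i$ is counted in both. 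Your structural argument in the middle paragraph does not exclude such overlaps, and the injection you sketch at the end is not carried out.

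What the paper actually does is finer and avoids needing your strong disjointness: it separates $E_i$ from $E_{ii}$, uses only that the three edge sets $\ALG_1$, $P$, $E_i$ are pairwise disjoint so that $h'(\ALG_1)+h'(P)+h'(E_i) \le h'(E) = k_h$ (this is immediate, since $h'$ is a function summed over edges), and keeps $|E_{ii}| \le 4k_h$ as a raw global term. The summation then yields $(1+\frac1\gamma)\opt + (1+\frac1\gamma)\bigl(h'(\ALG_1)+h'(P)+h'(E_i)\bigr) + 4k_h \le (1+\frac1\gamma)\opt + (5+\frac1\gamma)k_h$ with no appeal to cross-phase disjointness of the $E_{ii}$ charges. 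Your computation happens to produce the same constant, but only because you implicitly assume the unproven and likely-false inequality $k_h^{(1)}+k_h^{(P)}+k_h^{(2)} \le k_h$; if $k_h^{(2)}$ were replaced by the full $k_h$ (as the lemma actually gives), your derivation would land at $(6+\frac1\gamma)k_h$. To close the gap you should adopt the paper's split of $E_i$ from $E_{ii}$ and observe that the only disjointness needed is the trivial disjointness of the edge sets $\ALG_1$, $P$, $E_i$.
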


\begin{proof}
	Let $\ALG = \ALG_1 \cup P \cup \ALG_2$, where $\ALG_1$ denotes the queries of Algorithm~\ref{ALG_mst_part_1} without the last execution of Line~\ref{line_mst_one_fillup}, $P$ denotes the queries in the last execution of Line~\ref{line_mst_one_fillup} and $\ALG_2$ denotes the queries of Algorithm~\ref{ALG_mst_part_2}.
	Furthermore, let $\OPT = \OPT_1 \cup \OPT_2$ be an optimal query set with $\OPT_1 = \OPT \cap \ALG_1$ and $\OPT_2 = \OPT \setminus \ALG_1$.
	
	Since each element of $P$ is prediction mandatory, {the proof of Theorem~\ref{Theo_hop_distance_mandatory_distance}} implies that each such element is either mandatory or contributes {at least} one to the hop distance.
	From the arguments in the proof of Theorem~\ref{Theo_hop_distance_mandatory_distance}, it can be seen that, for each edge~$e$ that is prediction mandatory at some point but not mandatory, we have that $h'(e) \geq 1$, where $h'$ is defined as in the proof of~Lemma~\ref{mst_end_of_phase_2}.
	Thus, we have the following guarantees for the sets $\ALG_1$, $P$ and $\ALG_2$:
	\begin{itemize}
		\item For $\ALG_1$, the proof of Lemma~\ref{mst_end_of_phase_one} implies $|\ALG_1| \le (1+\frac{1}{\gamma}) \cdot (|\OPT_1| + h'(\ALG_1))$.
		\item 	For set $P$, we have $|P| \le |P \cap \OPT_2| + h'(P)$.
		\item For $\ALG_2$, we have that $|\ALG_2| \le |\OPT_2| + |E_i| + |E_{ii}| \le |\OPT_2| + h'(E_i) + 4 \cdot k_h$, where $E_i$ and $E_{ii}$ denote the sets of elements that satisfy $(i)$ and $(ii)$, respectively, as defined in the proof of Lemma~\ref{mst_end_of_phase_2}.
	\end{itemize}
	Summing up the guarantees, we get $$|\ALG| = |\ALG_1 \cup P \cup \ALG_2| \le (1+ \frac{1}{\gamma}) \cdot |\OPT| + (1+\frac{1}{\gamma}) (h'(\ALG_1) + h'(P) + h'(E_i)) + 4 \cdot k_h.$$ 
	By definition, it holds that $\ALG_1$, $P$ and $E_i$ are pairwise disjoint, which implies $h'(\ALG_1) + h'(P) + h'(E_i) \le k_h$. Thus, we can conclude 
	$$
		\ALG \le (1+ \frac{1}{\gamma}) \cdot \opt + (5+ \frac{1}{\gamma}) \cdot k_h.
	$$
\end{proof}

\section{Appendix for the Sorting Problem (Section \ref{sec:sorting})}
\label{app:sorting}
We present an analysis of Algorithm~\ref{fig:sorting1cons2rob}, obtaining the following theorem.
Remember that we assume that each interval is either trivial or open; however the algorithm can be modified to allow for closed and half-open intervals as in~\cite{halldorsson19sortingqueries}.
We say that an interval~$I_j$ {\em forces} a query in interval~$I_i$ if~$I_i$ is queried in Line~\ref{line:mandatoryproper}, \ref{line:mandatory} or~\ref{line:mandatory2} because $I_j \subseteq I_i$ or $w_j \in I_i$.

\sortingsingleset*

To prove that the algorithm indeed solves the problem, we must show that, at every execution of Line~\ref{line:looppaths}, every component of the intersection graph is a path.
We must have a proper interval graph, because every interval that contains another interval is queried in Line~\ref{line:mandatoryproper}, and every remaining interval that contains a queried value is queried in Line~\ref{line:mandatoryproper} or~\ref{line:mandatory}.
We claim that, at every execution of Line~\ref{line:looppaths}, the graph contains no triangles; for proper interval graphs, this implies that each component is a path, because the 4-star $K_{1,3}$ is a forbidden induced subgraph~\cite{wegner67properinterval}.
Suppose by contradiction that there is a triangle $abc$, and assume that $L_a \leq L_b \leq L_c$; it holds that $U_a \leq U_b \leq U_c$ because no interval is contained in another.
Since~$I_a$ and~$I_c$ intersect, we have that $U_a \geq L_c$, so $I_b \subseteq I_a \cup I_c$ and it must hold that $\pred{w}_b \in I_a$ or $\pred{w}_b \in I_c$, a contradiction since we query intervals that contain a predicted value in Line~\ref{line:containpredicted}.

We also need to prove that $(\mathcal{I}, \mathcal{E})$ is a forest of arborescences.
It cannot contain directed cycles because we avoid this in Line~\ref{line:addedge}.
Also, there cannot be two arcs with the same destination because we assign a single parent to each prediction mandatory interval.

\begin{theorem}
\label{thm_sorting_opt+kM}
Algorithm~\ref{fig:sorting1cons2rob} spends at most $\opt + k_M$ queries.
\end{theorem}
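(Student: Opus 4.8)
The plan is to account for all queries made by Algorithm~\ref{fig:sorting1cons2rob} and charge each one either to a distinct element of $\OPT$ or to a distinct element contributing to $k_M = |\mathcal{I}_P \sym \mathcal{I}_R|$. Since the algorithm solves a single-set (or disjoint-set) sorting instance, the optimum offline solution $\OPT$ is a minimum vertex cover in the interval graph together with the forced (known mandatory) intervals, and the key structural fact from \cite{halldorsson19sortingqueries} is that any two intersecting intervals form a witness set, and in any clique at most one query can be avoided. First I would partition $\ALG$ into three groups: (1) the initially known mandatory intervals and intervals forced in Lines~\ref{line:mandatoryproper}, \ref{line:mandatory}, \ref{line:mandatory2}; (2) the prediction mandatory intervals in $\mathcal{S}$ queried in Line~\ref{line:containpredicted}; (3) the vertex-cover intervals queried in Lines~\ref{line:pathodd}, \ref{line:patheven1}, \ref{line:patheven2}.

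For group (1), every such interval is really mandatory (it is in each feasible query set once the forcing interval's value or containment is known), hence it lies in $\mathcal{I}_R \subseteq \OPT$, so these are charged to themselves. For group (2), each $I_i \in \mathcal{S}$ is prediction mandatory, so $I_i \in \mathcal{I}_P$; if moreover $I_i \in \mathcal{I}_R$ then $I_i \in \OPT$ and we charge it to itself, and if $I_i \notin \mathcal{I}_R$ then $I_i \in \mathcal{I}_P \setminus \mathcal{I}_R$ and contributes one to $k_M$. The crux is that these two charges never collide — but a prediction mandatory interval that is also really mandatory is charged only to $\OPT$, and one that is not really mandatory is charged only to $k_M$, so the charges are disjoint by construction. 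Here I must also use the clique-partition produced in Lines~\ref{line:cliquepartitionbegin}--\ref{line:rootisolated}: within each clique $C_{\pi(i)}$ (a clique of the \emph{initial} interval graph), at most one element can be missing from $\OPT$, so if the clique has size $s$, at least $s-1$ of its members are in $\OPT$; the one possibly missing member, if it is not mandatory, is precisely a $\mathcal{I}_P \setminus \mathcal{I}_R$ element. Thus group (2) costs at most $|\OPT \cap \mathcal{S}| + |\mathcal{I}_P \setminus \mathcal{I}_R|$, with the optimum-charges staying disjoint from those used for group (1) because the clique partition is over $\mathcal{S}$, which is disjoint from group~(1).

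For group (3), I would use the path structure: at each execution of Line~\ref{line:looppaths} the current component is a path $P = x_1 \cdots x_p$ with $p \geq 2$, and the edges of $P$ are witness pairs. If $p$ is odd, the single minimum vertex cover has size $(p-1)/2$ and every feasible solution must hit at least that many of the $p-1$ edges greedily, so $|\OPT \cap P| \geq (p-1)/2$ and the queried vertex cover is charged injectively to $\OPT \cap P$. If $p$ is even, both minimum vertex covers have size $p/2$ and $|\OPT \cap P| \geq p/2$; the subtlety handled by the clique-partition sizes $|C_{x_1}|$ is only about which vertex cover to query so that later forced queries don't blow up — for the counting it suffices that $|\OPT \cap P| \geq p/2$, so the queried vertex cover is again charged injectively to $\OPT \cap P$. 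Since distinct paths (and distinct iterations) involve disjoint interval sets, all these charges to $\OPT$ are disjoint from each other and from groups (1) and (2), because the vertex-cover intervals were not prediction mandatory (they were not in $\mathcal{S}$) and were not yet known mandatory.

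The main obstacle I anticipate is verifying the disjointness of all the optimum-charges across the three groups \emph{simultaneously} — in particular ensuring that an interval queried as part of a path vertex cover in group~(3) is never double-counted as the $\OPT$-representative of a group~(2) clique, and that forced queries triggered \emph{after} a path is processed (Line~\ref{line:mandatory2}) are genuinely new mandatory intervals not already charged. I would resolve this by processing the charges in the algorithm's own temporal order and maintaining the invariant that every interval charged to $\OPT$ so far is distinct and really mandatory or part of a processed witness structure; then summing, $|\ALG| \leq |\OPT| + |\mathcal{I}_P \setminus \mathcal{I}_R| \leq |\OPT| + k_M$, which is the claim. Together with $k_M \leq k_h$ (Theorem~\ref{Theo_hop_distance_mandatory_distance}) and the trivial $k_\# $-to-$k_M$ relation for this problem, plus the easy $2$-robustness from cliques, this yields Theorem~\ref{thm:sorting:singleset}.
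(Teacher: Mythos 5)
Your decomposition diverges from the paper's in one crucial place, and the divergence creates a real gap. You place the intervals queried in Line~\ref{line:mandatory2} into group~(1) and charge each of them to itself inside $\OPT$, arriving at the bound $|\ALG| \le \opt + |\mathcal{I}_P \setminus \mathcal{I}_R|$. This bound is actually false: consider $I_1=(0,3)$, $I_2=(2,5)$, $I_3=(4,7)$, $I_4=(6,9)$, $I_5=(8,11)$ with $w_1=1$, $w_2=2.5$, $w_3=5.5$, $w_4=8.5$, $w_5=10$, and predictions $\pred{w}_i$ chosen outside all other intervals, so $\mathcal{I}_P=\emptyset$. Then $\mathcal{I}_R=\{I_1,I_5\}$, $\OPT=\{I_1,I_3,I_5\}$, $\opt=3$, $k_M=2$, but the algorithm queries the vertex cover $\{I_2,I_4\}$ of the odd path and is then forced in Line~\ref{line:mandatory2} to query $I_1$ and $I_5$, so $|\ALG|=4 > 3 = \opt + |\mathcal{I}_P\setminus\mathcal{I}_R|$. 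Concretely, this is exactly the collision you flagged but did not resolve: the forced intervals $M_P$ in Line~\ref{line:mandatory2} lie in $P\cap\OPT$, so the injection of the vertex cover $P'$ into $P\cap\OPT$ and the self-charges of $M_P$ cannot both exist whenever $|P'|+|M_P| > |P\cap\OPT|$, and the example realizes this ($2+2 > 3$). No temporal-order invariant can repair it, because the needed inequality is simply false.

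The paper's proof sidesteps the collision entirely by charging the Line~\ref{line:mandatory2} queries not to $\OPT$ but to $\mathcal{I}_R\setminus\mathcal{I}_P$, the other half of $k_M$ that your accounting never uses. Each such interval is mandatory (so in $\mathcal{I}_R$) and cannot be prediction mandatory, because every element of $\mathcal{I}_P$ is queried no later than Line~\ref{line:containpredicted}; hence it lies in $\mathcal{I}_R\setminus\mathcal{I}_P$, and since $\opt + k_M = \opt + |\mathcal{I}_P\setminus\mathcal{I}_R| + |\mathcal{I}_R\setminus\mathcal{I}_P|$, these charges live in a separate budget bucket from the $\OPT$-charges even though the underlying intervals belong to $\OPT$ as sets. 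Once you redirect the $M_P$-charges to $\mathcal{I}_R\setminus\mathcal{I}_P$, the rest of your decomposition (groups~(1) minus Line~\ref{line:mandatory2}, (2), and~(3)) goes through and the argument coincides with the paper's.
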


\begin{proof}
Fix an optimum solution $\OPT$.
Every interval queried in Lines~\ref{line:mandatoryproper} and~\ref{line:mandatory} is in $\OPT$.
Every interval queried in Line~\ref{line:containpredicted} that is not in $\OPT$ is clearly in $\mathcal{I}_P \setminus \mathcal{I}_R$.

For each path~$P$ considered in Line~\ref{line:looppaths}, let $P'$ be the intervals queried in Lines~\ref{line:pathodd}--\ref{line:patheven2}.
It clearly holds that $|P'| \leq |P \cap \OPT|$.
Finally, every interval queried in Line~\ref{line:mandatory2} is in $\mathcal{I}_R \setminus \mathcal{I}_P$ because we query all prediction mandatory intervals at the latest in Line~\ref{line:containpredicted}.
\end{proof}

The next result follows from Theorems~\ref{Theo_hop_distance_mandatory_distance} and~\ref{thm_sorting_opt+kM}.

\begin{theorem}
Algorithm~\ref{fig:sorting1cons2rob} spends at most $\opt + k_h$ queries.
\end{theorem}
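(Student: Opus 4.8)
The final statement to prove is: \emph{Algorithm~\ref{fig:sorting1cons2rob} spends at most $\opt + k_h$ queries.}

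\medskip

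The plan is to derive this directly from two results already available in the excerpt: Theorem~\ref{Theo_hop_distance_mandatory_distance}, which asserts $k_M \le k_h$ for every instance of the sorting problem, and Theorem~\ref{thm_sorting_opt+kM}, which states that Algorithm~\ref{fig:sorting1cons2rob} makes at most $\opt + k_M$ queries. Chaining these gives $|\ALG| \le \opt + k_M \le \opt + k_h$, which is exactly the claim. So the proof is essentially a one-line combination, and the only thing to be careful about is that both cited results genuinely apply to the object at hand: Theorem~\ref{thm_sorting_opt+kM} is stated for exactly this algorithm, and Theorem~\ref{Theo_hop_distance_mandatory_distance} is stated for the sorting problem under uncertainty in general (with the $A_i$ for the hop distance taken as the union of all sets containing $I_i$, which for a single set or disjoint sets is just the one set containing $I_i$), so no gap arises.

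\medskip

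Concretely, I would write: Fix an instance of the sorting problem for a single set (or disjoint sets) and let $\ALG$ be the set of queries made by Algorithm~\ref{fig:sorting1cons2rob}. By Theorem~\ref{thm_sorting_opt+kM}, $|\ALG| \le \opt + k_M$. By Theorem~\ref{Theo_hop_distance_mandatory_distance}, $k_M \le k_h$ for this instance. Therefore $|\ALG| \le \opt + k_M \le \opt + k_h$, as claimed. That is the whole argument.

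\medskip

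There is no real obstacle here: the statement is explicitly flagged in the excerpt as following from Theorems~\ref{Theo_hop_distance_mandatory_distance} and~\ref{thm_sorting_opt+kM}, and the only ``work'' is to observe that the inequalities compose in the right direction (larger error bound $k_h \ge k_M$ only weakens the guarantee, so the $\opt + k_M$ bound is subsumed). If one wanted a self-contained presentation one could instead re-run the charging argument from the proof of Theorem~\ref{thm_sorting_opt+kM} and, at the step where a query not in $\OPT$ is charged to a membership in $\mathcal{I}_P \sym \mathcal{I}_R$, invoke the mapping constructed in the proof of Theorem~\ref{Theo_hop_distance_mandatory_distance} to instead charge each such interval to a distinct endpoint-passing event counted by $k_h$; but since both ingredients are already proved, the clean two-step deduction above is preferable and is what I would put in the paper.
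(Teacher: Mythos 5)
Your proof matches the paper's exactly: the paper also derives this theorem by chaining Theorem~\ref{thm_sorting_opt+kM} ($|\ALG|\le\opt+k_M$) with Theorem~\ref{Theo_hop_distance_mandatory_distance} ($k_M\le k_h$). No gaps; this is the intended one-line deduction.
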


For the remaining two theorems in this section, we need the following lemma.

\begin{lemma}
\label{lem:endpoints}%
Fix the state of $\mathcal{S}$ as in Line~\ref{line:fixS}.
For each interval $I_j$, let $S_j = \{ I_i \in \mathcal{S} : \pi(i) = j \}$.
For any path~$P$ considered in Line~\ref{line:looppaths} and any interval $I_j \in P$ that is not an endpoint of~$P$, it holds that $S_j = \emptyset$.
\end{lemma}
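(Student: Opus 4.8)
\textbf{Proof plan for Lemma~\ref{lem:endpoints}.}
The plan is to argue by contradiction: suppose some interval $I_j$ lies in the interior of a path $P$ considered in Line~\ref{line:looppaths} and yet has a nonempty $S_j$, i.e.\ there is an interval $I_i \in \mathcal{S}$ with $\pi(i)=j$. By construction of $\pi$ in Line~\ref{line:findparent}, $\pi(i)=j$ means $\pred{w}_j \in I_i$. First I would recall what it means for $I_j$ to be an interior vertex of $P$: it has two distinct neighbors in the current intersection graph, one on each ``side'' of $I_j$. Since at this point the graph is a proper interval graph (every interval containing another was queried in Line~\ref{line:mandatoryproper}), and $I_j$ is not an endpoint, there exist intervals $I_a, I_b$ intersecting $I_j$ with, say, $L_a < L_j < L_b$ and correspondingly $U_a < U_j < U_b$, and $a,b$ are the two neighbors of $j$ on $P$.

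The key observation is that $I_i \in \mathcal{S}$ means $I_i$ was \emph{not} queried in Line~\ref{line:mandatoryproper} (it survives into the set $\mathcal{S}$ fixed in Line~\ref{line:fixS}), so $I_i$ does not contain any other interval and is not contained in any other interval among the surviving ones; in particular $I_i$ is a vertex of the current proper interval graph, and moreover $I_i$ is prediction mandatory. Next I would use the characterization of prediction mandatory intervals for sorting (any interval containing another interval's predicted value is prediction mandatory, cf.\ Section~\ref{sec:sorting} and~\cite{halldorsson19sortingqueries}, together with Lemma~\ref{lema_mandatory_min}). The plan is to show $I_i$ must actually be an endpoint-neighborhood of $P$ or be queried: since $\pred{w}_j \in I_i$ and $I_j$ overlaps both $I_a$ and $I_b$, the predicted value $\pred{w}_j$ sits in $I_j \subseteq I_a \cup I_b \cup I_j$. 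The crucial step is that $I_i$, containing $\pred{w}_j$, and being a vertex of a proper interval graph whose component is the path $P$, must itself intersect $I_j$, hence $I_i$ is adjacent to $I_j$ on $P$ (or equal to $I_j$). But $I_i \in \mathcal{S}$ and $I_i$ contains the predicted value $\pred{w}_j$ — yet $I_i$ was supposed to be queried in Line~\ref{line:containpredicted} precisely because it is in $\mathcal{S}$; so $I_i$ is already queried before Line~\ref{line:looppaths}, contradicting $I_i$ being a (non-queried) vertex of $P$. Wait — this needs care: $I_i$ is queried in Line~\ref{line:containpredicted}, so it is no longer a vertex of the intersection graph when Line~\ref{line:looppaths} runs. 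So the real content is: if $S_j \neq \emptyset$, then \emph{after} querying all of $\mathcal{S}$ (Line~\ref{line:containpredicted}) and the forced mandatory intervals (Line~\ref{line:mandatory}), $I_j$ itself either gets queried (forced, since $w_i$ for the queried $I_i$ with $\pred{w}_j\in I_i$ may land in $I_j$) or becomes an endpoint because its neighbor $I_i$ on one side is removed.

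The careful argument I would carry out: let $I_i \in S_j$. Since $\pred{w}_j \in I_i$ and $I_i$ is not contained in $I_j$ (else $I_i$ queried in Line~\ref{line:mandatoryproper}) and $I_j$ is not contained in $I_i$, the intervals $I_i$ and $I_j$ properly overlap, so $I_i$ is adjacent to $I_j$ in the initial proper interval graph. As $I_i \in \mathcal{S}$, it is queried in Line~\ref{line:containpredicted}; hence $I_i$ is \emph{not} a vertex of the intersection graph at Line~\ref{line:looppaths}. Now I claim $I_j$, if it is still a vertex at Line~\ref{line:looppaths}, has at most one neighbor there. All of $I_j$'s original neighbors are either (a) queried in Lines~\ref{line:mandatoryproper}--\ref{line:mandatory} — these are gone — or (b) still present. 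The neighbors of $I_j$ on the ``$I_i$ side'' (those intervals $I_k$ with $I_k \cap I_j \neq \emptyset$ and overlapping $I_j$ from the same direction as $I_i$) all contain... here I need the proper-interval structure: among intervals overlapping $I_j$ from the left, $I_i$ is one of them, and any other left-overlapping $I_k$ either contains $\pred{w}_j$ too (then queried) or has $I_k \cap I_j$ not containing $\pred{w}_j$; using proper-interval ordering one shows every surviving left-neighbor of $I_j$ would have been forced queried because $I_i$'s true value $w_i$, once revealed in Line~\ref{line:containpredicted}, satisfies $w_i \in I_j$ or similar — this is where I expect to spend the most effort and is \textbf{the main obstacle}: pinning down precisely why \emph{every} interior-interval with $S_j \neq \emptyset$ gets eliminated or shifted to an endpoint. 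I would handle it by a clean case analysis on the two sides of $I_j$, invoking Corollary~\ref{cor_min_left_mandatory}/Lemma~\ref{lema_mandatory_min} (mandatory characterization) to show that if $I_j$ retains neighbors on both sides after Line~\ref{line:mandatory}, then no interval $I_i$ with $\pred{w}_j \in I_i$ could have survived into $\mathcal{S}$ in the first place (because such $I_i$ would be sandwiched and thus contain some other interval or be contained, or would itself be leftmost/second-leftmost forcing earlier mandatory queries), giving the contradiction $S_j = \emptyset$.
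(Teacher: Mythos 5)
There is a genuine gap, and you yourself flag where it is. Your setup is right (suppose $I_j$ is interior with neighbors $I_a, I_b$ on $P$, and some $I_i\in\mathcal{S}$ has $\pi(i)=j$, i.e.\ $\pred{w}_j\in I_i$), and your containment observations are correct ($I_i\not\subseteq I_j$, $I_j\not\subseteq I_i$, since such containments would have caused earlier queries in Line~\ref{line:mandatoryproper}). But the part you call ``the main obstacle'' is exactly the step the paper's proof closes, and your attempted resolution misses the mechanism.

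The missing chain is this. First, $\pred{w}_j\notin I_a\cup I_b$, because otherwise $I_a$ or $I_b$ would itself be prediction mandatory and queried in Line~\ref{line:containpredicted} --- yet they survive to Line~\ref{line:looppaths}. Hence $\pred{w}_j\in (I_i\cap I_j)\setminus (I_a\cup I_b)$, which, combined with $I_i\not\subseteq I_j$, $I_j\not\subseteq I_i$, and $I_a\not\subseteq I_i$, $I_b\not\subseteq I_i$ (all ruled out by Line~\ref{line:mandatoryproper} and $I_i\in\mathcal{S}$), forces $I_i\subseteq I_a\cup I_j\cup I_b$. Now the decisive point: once $I_i$ is queried in Line~\ref{line:containpredicted}, its true value $w_i$ lies in $I_a\cup I_j\cup I_b$, so in \emph{every} case one of $I_a$, $I_j$, $I_b$ is forced to be queried in Line~\ref{line:mandatory} --- contradicting that all three are still present as vertices of $P$ at Line~\ref{line:looppaths}. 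You only considered the possibility ``$w_i$ may land in $I_j$'' and hedged; you need the full case split over where $w_i$ lands, and the observation that each case kills one of the three path vertices. Your alternative attempt (arguing ``$I_j$ has at most one neighbor after Line~\ref{line:mandatory}'' by eliminating all left-overlapping intervals) is not the right tack: that claim need not hold globally, and it is both harder and unnecessary. The clean argument is the pincer: establish $I_i\subseteq I_a\cup I_j\cup I_b$, then note $w_i$ must fall in one of them and force a query.

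Also note that for sorting you do not need the minimum-problem machinery of Lemma~\ref{lema_mandatory_min}/Corollary~\ref{cor_min_left_mandatory}; the relevant facts are just (i) an interval containing another interval, or the true value of a queried interval, is mandatory, and (ii) an interval containing another's predicted value is prediction mandatory. Invoking the minimum-problem corollary adds noise without adding power here.
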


\begin{proof}
Suppose by contradiction that some $I_j \in P$ that is not an endpoint of~$P$ has $S_j \neq \emptyset$.
Let~$I_a$ and~$I_b$ be its neighbors in~$P$, and let $I_i \in S_j$.
We have that $\pred{w}_j \notin I_a \cup I_b$, otherwise~$I_a$ or~$I_b$ would have been queried in Line~\ref{line:containpredicted}.
Thus~$(I_i \cap I_j) \setminus (I_a \cup I_b) \neq \emptyset$, because $\pred{w}_j \in I_i$.
It is not the case that $I_i \subseteq I_j$ or $I_j \subseteq I_i$: If $I_i \subseteq I_j$, then~$I_j$ would have been queried in Line~\ref{line:mandatoryproper} before~$P$ is considered; if $I_j \subseteq I_i$, then~$I_i$ would have been queried in Line~\ref{line:mandatoryproper} and $I_i \notin \mathcal{S}$.
Therefore it must be that $I_i \subseteq I_a \cup I_j \cup I_b$, otherwise $I_a \subseteq I_i$ or $I_b \subseteq I_i$ (again a contradiction for $I_i \in \mathcal{S}$), since $(I_i \cap I_j) \setminus (I_a \cup I_b) \neq \emptyset$.
However, if $I_i \subseteq I_a \cup I_j \cup I_b$, then~$I_i$ would have forced a query in~$I_a$,~$I_b$ or~$I_j$ in Line~\ref{line:mandatory}, a contradiction.
\end{proof}

\begin{theorem}
Algorithm~\ref{fig:sorting1cons2rob} performs at most $\opt + k_{\#}$ queries.
\end{theorem}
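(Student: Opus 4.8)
The plan is to re-run the accounting from the proof of Theorem~\ref{thm_sorting_opt+kM}, replacing the two charges to $\mathcal{I}_P\setminus\mathcal{I}_R$ and $\mathcal{I}_R\setminus\mathcal{I}_P$ by charges to inaccurately predicted intervals. This is genuinely needed rather than following from Theorem~\ref{Theo_hop_distance_mandatory_distance}: a single wrong prediction can flip the mandatory status of an arbitrarily large clique of intervals, so $k_M$ (and $k_h$) can be much larger than $k_{\#}$, and the point of the clique partition in Algorithm~\ref{fig:sorting1cons2rob} is precisely to ensure that such a clique never costs more than one wasted query.

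Concretely, fix an optimum solution $\OPT$. Exactly as in the proof of Theorem~\ref{thm_sorting_opt+kM}: every interval queried in Lines~\ref{line:mandatoryproper} and~\ref{line:mandatory} is mandatory and hence in $\OPT$; for each path $P$ handled in Line~\ref{line:looppaths} the set $P'$ queried in Lines~\ref{line:pathodd}--\ref{line:patheven2} is a minimum vertex cover of $P$, so $|P'|\le|P\cap\OPT|$ because each edge of $P$ is a witness pair; and all these $\OPT$-slots are pairwise disjoint and disjoint from $\OPT\cap(\text{Line~\ref{line:containpredicted} queries})$, since the intervals on the paths are still unqueried when the path phase begins. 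Hence it suffices to inject the \emph{wasted} queries of Line~\ref{line:containpredicted} (prediction-mandatory intervals not in $\OPT$) together with all queries of Line~\ref{line:mandatory2} into $\{I_i: w_i\neq\overline w_i\}$. For a Line~\ref{line:mandatory2} interval $I_m$: it is queried because the revealed value $w_j$ of an already queried $I_j$ lies in $I_m$, and by then no prediction-mandatory interval remains unqueried, so $\overline w_j\notin I_m$ and thus $w_j\neq\overline w_j$; charge $I_m$ to $j$. Here $I_m$ and $I_j$ both lie on the path currently being processed and $I_j\in P'$, and since two non-adjacent vertices of a proper-interval-graph path have disjoint intervals, each vertex of $P'$ has its revealed value inside at most one of its neighbours; choosing for each $I_m$ its leftmost admissible trigger makes $I_m\mapsto j$ injective within a path, and injective overall because the $P'$'s are disjoint.

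For a wasted interval $I_w$ of Line~\ref{line:containpredicted}, I would use the clique partition of Lines~\ref{line:cliquepartitionbegin}--\ref{line:rootisolated}: its class $C$ consists of intervals all containing a common predicted value, so $C$ is a clique of the initial interval graph and $\OPT$ contains at least $|C|-1$ of its members; thus $C$ has at most one wasted interval. If $w_w\neq\overline w_w$, charge $I_w$ to itself. Otherwise charge it to the forest-parent $\pi(w)$, where $(I_{\pi(w)},I_w)\in\mathcal{E}$: since $\overline w_{\pi(w)}\in I_w$ but $I_w\notin\mathcal{I}_R$ forces $w_{\pi(w)}\notin I_w$, we get $w_{\pi(w)}\neq\overline w_{\pi(w)}$. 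Distinct classes yield distinct indices, and $\pi(w)$ is never itself a wasted query (if $I_{\pi(w)}\in\mathcal S$ it lies in the same class as $I_w$ and, being accompanied there by the unique waste $I_w$, is in $\OPT$; if $I_{\pi(w)}\notin\mathcal S$ it is only ever queried as a mandatory interval), so these charges collide neither among themselves nor with the self-charges or the trigger-charges.

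The main obstacle is making all of these charges land injectively and pairwise disjointly: in particular, handling the case where $\mathcal{E}$ omits the natural arc $(I_{\pi(w)},I_w)$ to avoid a directed cycle, so that $I_w$ becomes a singleton class, which is where the property that only roots or children of roots can be isolated in the clique partition, together with Lemma~\ref{lem:endpoints}, must be invoked; and coordinating this with the choice of which minimum vertex cover Algorithm~\ref{fig:sorting1cons2rob} queries on even paths (Lines~\ref{line:patheven1} and~\ref{line:patheven2}), which was designed precisely so that those vertex-cover queries are always absorbed without creating a second charge on an already-used slot. This is the analogue, with the error measure $k_{\#}$ in place of $k_M$, of the argument sketched before Algorithm~\ref{fig:sorting1cons2rob} and carried out in Theorem~\ref{thm_sorting_opt+kM}.
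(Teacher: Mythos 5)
Your plan follows the same high-level charging structure as the paper's proof: wasted Line~\ref{line:containpredicted} queries charge either to their own prediction error or to their $\pi$-parent's prediction error, and Line~\ref{line:mandatory2} queries charge to the triggering vertex-cover query's prediction error. Both of these sub-claims are correct, and your observations that each trigger forces at most one Line~\ref{line:mandatory2} query (since the two path-neighbours of a vertex are disjoint), and that $\pi(w)$ itself cannot be a wasted Line~\ref{line:containpredicted} query, are sound. However, there is a genuine gap that you name but do not close, and it is not merely bookkeeping: the type-B charge (waste $\to\pi(w)$) and the type-C charge (Line~\ref{line:mandatory2} $\to$ trigger) can collide on the very same index.

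Concretely, suppose $P = x_1 x_2 \cdots x_p$ is an even path, $|C_{x_1}|=1$ so $x_1\in P'$, and $C_{x_1}=\{I_w\}$ where $I_w\notin\OPT$ is a waste with $w_w=\pred w_w$; then you charge $I_w$ to $\pred w_{x_1}$. If additionally $w_{x_1}\in I_{x_2}$, the Line~\ref{line:mandatory2} query of $I_{x_2}$ is also charged to $\pred w_{x_1}$. This is a double count of a single incorrect prediction. A small instance realizes it: one set with $I_1=(0,10)$, $I_2=(5,15)$, $I_3=(12,25)$, $I_4=(20,30)$, predictions $\pred w=(1,7,22,28)$, truths $w=(1,13,22,28)$, and $\OPT=\{I_2,I_3,I_4\}$. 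The algorithm queries all four intervals, $\opt=3$, $k_\#=1$, so the bound is tight, but your scheme charges both $I_1$ (waste, $\pi(1)=2$) and $I_3$ (triggered by $w_2\in I_3$) to the single error of $I_2$, giving only $|\ALG|\le\opt+2$. The paper's proof avoids this precisely because it is a \emph{partition} argument, not a pointwise charging: in subcase~(2a) of the even-path analysis, $I_{x_2}$ is mandatory (since $w_{x_1}\in I_{x_2}$) and is placed in a singleton $\OPT$-set of $\mathcal{S}'$ rather than being charged to an error, the clique $C_{x_1}\cup\{I_{x_1}\}$ is moved into $\mathcal{S}'$ to absorb the one error, and the remaining even path $\{x_3,\ldots,x_p\}$ is reanalyzed so that $P'\cap\mathcal{I}'=\emptyset$ is restored. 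You correctly identify this as the place the algorithm's choice of vertex cover on even paths matters, but your proposal as written does not execute the reallocation, and without it the bound does not hold.
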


\begin{proof}
Fix an optimum solution~$\OPT$.
We partition the intervals in~$\mathcal{I}$ into sets with the following properties.
One of the sets~$\tilde{S}$ contains intervals that are not queried by the algorithm.
We have a collection~$\mathcal{S}'$ in which each set has at most one interval not in $\OPT$.
Also, if it has one interval not in~$\OPT$, then we assign a prediction error to that set, in such a way that each error is assigned to at most one set.
(The interval corresponding to the prediction error does not need to be in the same set.)
Let~$\mathcal{I}'$ be the set of intervals with a prediction error assigned to some set in~$\mathcal{S}'$.
Finally, we have a collection~$\mathcal{W}$ such that for every $W \in \mathcal{W}$ it holds that $|\ALG \cap W| \leq |W \cap \OPT| + k_{\#}(W \setminus \mathcal{I}')$, where $k_{\#}(X)$ is the number of intervals in $X$ with incorrect predictions.
If we have such a partition, then it is clear that we spend at most $\opt + k_{\#}$ queries.

We begin by adding a different set to $\mathcal{S}'$ for each interval queried in Lines~\ref{line:mandatoryproper} and~\ref{line:mandatory}; all such intervals are clearly in $\OPT$, and we do not need to assign a prediction error.

Fix the state of $\mathcal{S}$ as in Line~\ref{line:fixS}.
To deal with the intervals queried in Line~\ref{line:containpredicted}, we add to~$\mathcal{S}'$ the set $S_j = \{I_i \in \mathcal{S} : \pi(i) = j\}$ for~$j = 1, \ldots, n$.
Note that each such set is a clique, because all intervals contain~$\pred{w}_j$.
Therefore, at most one interval in~$S_j$ is not in $\OPT$, and if that occurs, then $\pred{w}_j \neq w_j$, and we assign this prediction error to~$S_j$.

Let~$P = x_1 x_2 \cdots x_p$ with $p \geq 2$ be a path considered in Line~\ref{line:looppaths}, and let~$P'$ be the set of intervals in~$P$ that are queried in Lines~\ref{line:pathodd},~\ref{line:patheven1} or~\ref{line:patheven2}.
It clearly holds that $|P'| = \lfloor |P| / 2 \rfloor \leq |P \cap \OPT|$.
It also holds that at most $k_{\#}(P')$ intervals in~$P$ are queried in Line~\ref{line:mandatory2}: Each interval $I_j \in P'$ can force a query in at most one interval~$I_i$ in Line~\ref{line:mandatory2}, and in that case the predicted value of~$I_j$ is incorrect because $w_j \in I_i$ but $\pred{w}_j \notin I_i$, or $I_i$ would have been queried in Line~\ref{line:containpredicted}.
We will create a set $W \in \mathcal{W}$ and possibly modify~$\mathcal{S}'$, in such a way that $P \subseteq W$ and $P' \cap \mathcal{I}' = \emptyset$, so it is enough to show that
\begin{equation}
 |\ALG \cap W| \leq |W \cap \OPT| + k_{\#}(P').\label{eq:sortingpath}
\end{equation}
We initially take $W$ as the intervals in~$P$.
By Lemma~\ref{lem:endpoints}, it holds that $S_j = \emptyset$ for any $j \neq x_1, x_p$.
If $S_{x_1} \subseteq \OPT$, then we do not need to assign a prediction error to~$S_{x_1}$.
Otherwise, let~$I_i$ be the only interval in $S_{x_1} \setminus \OPT$.
The predicted value of $I_{x_1}$ is incorrect because $\pred{w}_{x_1} \in I_i$, and it must hold that $I_{x_1} \in \OPT$, or $\OPT$ would not be able to decide the order between~$I_{x_1}$ and~$I_i$.
If $x_1 \notin P'$, then we will not use its error in the bound of $|\ALG \cap W|$ if we prove Equation~(\ref{eq:sortingpath}).
Otherwise, we add~$I_i$ to~$W$ and remove it from~$S_{x_1}$, and now we do not need to assign a prediction error to~$S_{x_1}$.
We do a similar procedure for~$x_p$, and since at most one of $x_1, x_p$ is in~$P'$, we only have two cases to analyze: (1)~$W = P$, or (2)~$W = P \cup \{I_i\}$ with $\pi(i) \in \{x_1, x_p\}$.
\begin{enumerate}[(1)]
 \item $W = P$. Clearly $|\ALG \cap W| \leq |P'| + k_{\#}(P') \leq |W \cap \OPT| + k_{\#}(P')$.
 
 \item $W = P \cup \{I_i\}$, with $\pi(i) \in \{x_1, x_p\}$.
 Suppose w.l.o.g.\ that $\pi(i) = x_1$.
 Remember that $x_1 \in P'$, that $I_{x_1} \in \OPT$ and that its predicted value is incorrect.
 Since $x_1 \in P'$, it holds that $|P|$ is even and $x_2 \notin P'$.
 We have two cases.
 \begin{enumerate}
   \item $I_{x_2}$ is not queried in Line~\ref{line:mandatory2}.
   Then $I_{x_1}$ does not force a query in Line~\ref{line:mandatory2}, so
   \begin{eqnarray*}
   |\ALG \cap W| & \leq & |P' \cup \{I_i\}| + k_{\#}(P' \setminus \{x_1\}) \\
     & = & |P'| + 1 + k_{\#}(P' \setminus \{x_1\}) \\
     & \leq & |P \cap \OPT| + k_{\#}(P') \\
     & \leq & |W \cap \OPT| + k_{\#}(P').
   \end{eqnarray*}

   \item $I_{x_2}$ is queried in Line~\ref{line:mandatory2}.
   Then $I_{x_1}, I_{x_2} \in \OPT$, and $|\OPT \cap (P \setminus \{I_{x_1}, I_{x_2}\})| \geq |P' \setminus \{I_{x_1}\}|$ because $|P|$ is even.
   Therefore,
   \begin{eqnarray*}
   |\ALG \cap W| & \leq & |P' \cup \{I_{x_2}, I_i\}| + k_{\#}(P' \setminus \{I_{x_1}\}) \\
     & \leq & |P \cap \OPT| + 1 + k_{\#}(P' \setminus \{I_{x_1}\}) \\
     & \leq & |W \cap \OPT| + k_{\#}(P').   
   \end{eqnarray*}
 \end{enumerate}
\end{enumerate}

To conclude, we add the remaining intervals that are not queried by the algorithm to $\tilde{S}$.
\end{proof}

Now it remains to prove that the algorithm is 2-robust.
Consider the forest of arborescences $(\mathcal{I},\mathcal{E})$
that is constructed by Algorithm~\ref{fig:sorting1cons2rob}.
For each of these arborescences, the prediction mandatory intervals
contained in the arborescence are partitioned into cliques by
Lines~\ref{line:cliquepartitionbegin}--\ref{line:rootisolated}
of Algorithm~\ref{fig:sorting1cons2rob}.
Each of these clique partitions may contain a single clique of
size~$1$. As we would like to use the cliques in these clique
partitions as witness sets, cliques of size~$1$ require special
treatment. It turns out that the most difficult case is
where the clique of size~$1$ is formed by a prediction mandatory
interval $I_i$ that is the root of an arborescence. This happens if
the interval $I_{\pi(i)}$ that makes $I_i$ prediction mandatory
is a descendant of $I_i$ in that arborescence.
The following lemma shows that in this case we can revise the clique
partition of that arborescence in such a way that all cliques in that
clique partition have size at least~$2$.

\begin{lemma}
\label{lem:cliquerepartition}%
Consider an out-tree (arborescence) $T$ on a set of prediction mandatory
intervals, where an edge $(I_j,I_i)$ represents that $\pred{w}_j\in I_i$.
Let the root be $I_r$.
Let interval $I_m$ with $\pred{w}_m\in I_r$ be a descendant
of the root somewhere in $T$. Then the intervals in $T$ can be partitioned into
cliques (sets of pairwise overlapping intervals) in such a way
that all cliques have size at least~$2$.
\end{lemma}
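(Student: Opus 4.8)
The plan is to reformulate the statement in terms of \emph{star partitions} of $T$ and then repair the partition produced by the greedy rule used in Algorithm~\ref{fig:sorting1cons2rob}. Note that for any node $I_v$ the point $\pred{w}_v$ lies in $I_v$ and, because of the edges of $\mathcal{E}$, in every child of $I_v$; hence $\{I_v\}$ together with any subset of its children is a clique, and (Helly in one dimension) any family of pairwise intersecting intervals is a clique. Colour the nodes of $T$ bottom-up by the rule that $I_v$ is a \emph{leaf-child} if all of its children are \emph{centers}, and a center otherwise — so leaves of $T$ are leaf-children and every center has a leaf-child among its children. Assigning each center $I_v$ the clique consisting of $I_v$ and its leaf-children partitions $V(T)$ into cliques of size at least $2$, with the single exception that $I_r$ forms a singleton, which happens exactly when every child of $I_r$ is a center; moreover the restriction of this partition to the subtree rooted at any child of $I_r$ is then already a valid partition of that subtree. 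Thus, assuming $I_r$ is a singleton, it remains to absorb the single interval $I_r$ into the partition, using the hypothesis $\pred{w}_m\in I_r$ (equivalently, $\pred{w}_m$ lies in $I_r$, in $I_m$, and in every child of $I_m$).

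If $I_m$ is a center, its clique $C_m$ consists of $I_m$ and its leaf-children, all of which contain $\pred{w}_m$; then $C_m\cup\{I_r\}$ is again a clique, and replacing $C_m$ by $C_m\cup\{I_r\}$ (discarding $\{I_r\}$) gives the required partition. The harder case is that $I_m$ is a leaf-child, so that $I_m$ belongs to the clique $C_p$ of its parent $I_p$, which is a center lying strictly between $I_r$ and $I_m$ on the path $I_r=I_{u_0},I_{u_1},\ldots,I_{u_s}=I_m$. Here the natural move is to create the clique $\{I_r,I_m\}$ (valid via the common point $\pred{w}_m$) and repair $C_p$, which has lost $I_m$: if $I_p$ retains another leaf-child this is immediate, and otherwise $I_p$ must itself be re-homed, either by pairing it with its parent $I_{u_{s-2}}$ through the common point $\pred{w}_{u_{s-2}}$, or by pairing it downward with one of its own children through $\pred{w}_{u_{s-1}}$ — and each such move may in turn orphan one interval, so the repair propagates along a sub-walk of the root-to-$I_m$ path. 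To control the propagation it is useful to choose $I_m$ of minimum depth among the descendants of $I_r$ whose predicted value lies in $I_r$; this forces $\pred{w}_{u_p}\notin I_r$ for $0<p<s$, which, together with the one-dimensional structure of the uncertainty intervals, constrains the order of the path-interval endpoints and rules out the obstructing configurations (and in several small cases, such as short paths, lets one simply take a $3$-element clique $\{I_r,I_{u_{s-1}},I_m\}$, which is a clique by pairwise intersection).

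The main obstacle is exactly this last step: proving that the cascade of local repairs along the path always terminates without ever recreating a singleton clique. This requires a careful case analysis driven by the linear order of the interval endpoints (invoking one-dimensional Helly repeatedly) together with a potential/termination argument showing that the repair can always be pushed to a center retaining at least one leaf-child, where it stops. I expect the bulk of the proof to consist of verifying this for each way in which an orphaned interval can arise (orphan coming from the parent clique versus from a child clique, and according to whether the relevant path node is a center or a leaf-child).
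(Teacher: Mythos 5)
Your leaf-child/center colouring reproduces the CP partition, and the case where $I_m$ is a center is handled correctly, but the proposal is incomplete exactly where you flag it: you never establish that the cascade of local repairs along the root-to-$I_m$ path terminates without re-creating a singleton. Moreover, the proposed shortcut clique $\{I_r,I_{u_{s-1}},I_m\}$ is \emph{not} a clique in general. $I_{u_{s-1}}$ overlaps $I_m$ via $\pred{w}_{u_{s-1}}$, and $I_m$ overlaps $I_r$ via $\pred{w}_m$, but nothing forces $I_{u_{s-1}}$ to meet $I_r$. Concretely, with $I_r=(0,10)$, $\pred{w}_r=5$, $I_{u_1}=(4,15)$, $\pred{w}_{u_1}=12$, $I_{u_2}=(11,20)$, $\pred{w}_{u_2}=15$, $I_m=(9,16)$, $\pred{w}_m=9.5$, every edge condition holds, the depth-minimality you impose on $I_m$ holds, and yet $I_{u_2}\cap I_r=\emptyset$. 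It works only when $u_{s-1}$ is a child of $I_r$ (so that $\pred{w}_r$ supplies the missing overlap), and your argument leaves the case $s\ge 3$ open.

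The paper avoids the cascade with a one-shot global surgery. First normalize $T$ so that \emph{every} node containing $\pred{w}_r$ is a child of $I_r$ (re-attaching subtrees below the root, which is legal since $(I_r,I_i)$ is a valid edge whenever $\pred{w}_r\in I_i$). Apply CP to the child subtrees; three cases are dispatched directly (a singleton root clique absorbs $I_r$ together with any other singletons, a root clique of size $\ge 3$ donates its root to pair with $I_r$, or a child $I_i$ with $\pred{w}_i\in I_r$ lets $I_r$ join $I_i$'s clique). In the residual case all root cliques have size exactly $2$ and no child's prediction lies in $I_r$, so $I_m$ is strictly deeper. The paper then observes a 1D structural fact (after normalization, $I_i$ is the only node in its subtree $T'$ containing $\pred{w}_r$, and all other nodes of $T'$ lie to its right, so $I_i$ and $I_m$ both straddle $U_r$ and hence pairwise overlap $I_r$ and each other), detaches the subtree $T_m$ rooted at $I_m$, re-attaches $T_m$ as a child of $I_r$, and re-runs CP on $T_m$ and $T'\setminus T_m$. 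The only two remaining outcomes are immediate: a singleton root clique is merged with $I_r$ (pairwise overlap comes from the structural observation), and otherwise $I_r$ is added to the root clique of $T_m$, all of whose members contain $\pred{w}_m$. This detach-and-re-root step is the missing idea that replaces your propagating repair and makes the termination question disappear.
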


\begin{proof}
We refer to the clique partition method of Lines~\ref{line:cliquepartitionbegin}--\ref{line:rootisolated} in Algorithm~\ref{fig:sorting1cons2rob}
as algorithm~CP.
This method will partition the nodes of an arborescence into cliques, each consisting either of a subset of the children of a node, or of a subset of the children of a node plus the parent of those children.
In the case considered in this lemma,
all cliques will have size at least~$2$, except that the clique containing the root of the tree may have size~$1$.

We first modify $T$ as follows: If there is a node $I_i$ in $T$ that is not a child of the root $I_r$ but contains $\pred{w}_r$, then we make $I_r$ the parent of $I_i$ (i.e., we remove the subtree rooted at $I_i$ and re-attach it below the root).
After this transformation, all intervals that contain $\pred{w}_r$ are children of $I_r$.

Apply CP to each subtree of $T$ rooted at a child of $I_r$.
For each of the resulting partitions, we call the clique containing
the root of the subtree the \emph{root clique} of that subtree.
There are several possible outcomes that can be handled directly:
\begin{itemize}
\item At least one of the clique partitions has a root clique
of size~$1$. In that case we combine all these root cliques
of size~$1$ with $I_r$ to form a clique of size at least~$2$, and we are done:
This new clique together with all remaining cliques from the clique
partitions of the subtrees forms the desired clique partition.

\item All of the clique partitions have root cliques of size at least~$2$,
and at least one of them has a root clique of size at least $3$.
Let $I_s$ be the root node of a subtree whose root clique has
size at least~$3$. We remove~$I_s$ from its clique and form
a new clique from $I_s$ and $I_r$, and we are done.

\item All of the clique partitions have root cliques of size exactly~$2$,
and at least one of the children $I_i$ of $I_r$ has $\pred{w}_i\in I_r$.
Then we add $I_r$ to the root clique that contains~$I_i$. We can do
this because all intervals in that root clique contain $\pred{w}_i$.
\end{itemize}

Now assume that none of these cases applies, so we have the following
situation:
All of the clique partitions have root cliques of size exactly~$2$,
and every child of $I_r$ has its predicted value outside $I_i$,
i.e., $\pred{w}_i\notin I_r$. In particular, $I_m$, the interval
that makes $I_r$ prediction mandatory, cannot be a child of $I_r$.

\begin{figure}[t]
\centerline{\scalebox{0.6}{\input{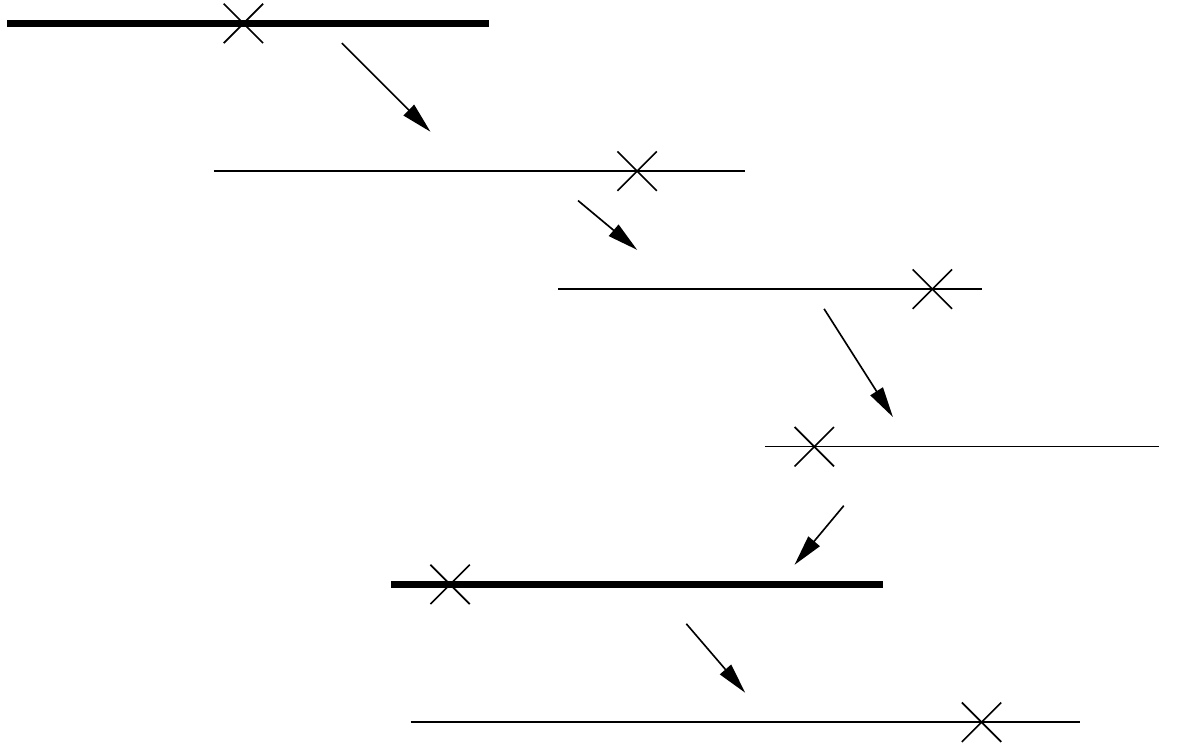_t}}}

\caption{Illustration of path from $I_r$ to $I_m$'s child $I_q$ in $T$}
\label{fig:tree}
\end{figure}

Let $T'$ be the subtree of $T$ that is rooted at a child
of $I_r$ and that contains $I_m$. Let the root of $T'$
be $I_i$.

Observe that $I_i$ is the only interval in $T'$ that
contains $\pred{w}_r$, because all such intervals are
children of $I_r$ in~$T$. Assume w.l.o.g.\ that $\pred{w}_i$ lies
to the right of $I_r$. Then all intervals in~$T'$,
except for $I_i$, lie to the right of~$\pred{w}_r$.
See Figure~\ref{fig:tree} for an illustration of
a possible configuration of the path from
$I_r$ to $I_m$ (and a child $I_q$ of~$I_m$) in~$T$.

Now re-attach the subtree $T_m$ rooted at $I_m$
as a child of $I_r$ (ignoring the fact that $\pred{w}_r$ is
not inside $I_m$), and let $T_i=T'\setminus T_m$ denote
the result of removing $T_m$ from $T'$. Re-apply CP
to the two separate subtrees $T_m$ and $T_i$.
The possible outcomes are:
\begin{itemize}
\item The root clique of at least one of the two subtrees has size~$1$.
We can form a clique by combining $I_r$ with those (one or two)
root cliques of size~$1$.
As both $I_i$ and $I_m$ intersect $I_r$ from the right, the
resulting set is indeed a clique. Together with all other
cliques from the clique partitions of $T_m$ and $T_i$,
and those of the other subtrees of $I_r$ in $T$,
we obtain the desired clique partition.
\item The root cliques of both subtrees have size at
least~$2$. We add $I_r$ to the root clique of $T_m$.
That root clique contains only intervals that contain
$\pred{w}_m$, and $I_r$ also contains $\pred{w}_m$, so
we do indeed get a clique if we add $I_r$ to that
root clique. This new clique, together with all other
cliques from the clique partitions of $T_m$ and $T_i$,
and those of the other subtrees of $I_r$ in $T$,
forms the desired clique partition.
\end{itemize}

This concludes the proof of the lemma.
\end{proof}

\begin{theorem}
Algorithm~\ref{fig:sorting1cons2rob} is $2$-robust.
\end{theorem}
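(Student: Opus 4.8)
The plan is to partition the set $\ALG$ of queries made by Algorithm~\ref{fig:sorting1cons2rob} into \emph{blocks} $B_1,\dots,B_t$ together with pairwise disjoint sets $O_i\subseteq\OPT$, one per block, such that $O_i$ consists of intervals belonging to $B_i$ and $|B_i|\le 2|O_i|$; summing over $i$ then gives $|\ALG|\le 2|\OPT|=2\opt$. The blocks come from the three places where the algorithm queries: (i)~the known mandatory intervals queried in Lines~\ref{line:mandatoryproper}, \ref{line:mandatory} and~\ref{line:mandatory2}; (ii)~the prediction mandatory intervals of $\mathcal{S}$ queried in Line~\ref{line:containpredicted}, which Lines~\ref{line:cliquepartitionbegin}--\ref{line:rootisolated} group into cliques of the initial interval graph; and (iii)~the minimum vertex covers $P'$ queried on the path components in Lines~\ref{line:pathodd}--\ref{line:patheven2}.

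The easy blocks are handled first. A known mandatory interval $I$ lies in every feasible query set, so I take the singleton block $\{I\}$ with $O=\{I\}$ (ratio~$1$). A clique $C$ of the interval graph with $|C|\ge 2$ is a witness configuration: any feasible solution can avoid querying at most one interval of a clique~\cite{halldorsson19sortingqueries}, hence $|C\cap\OPT|\ge|C|-1\ge 1$ and $|C|\le 2|C\cap\OPT|$; I take $O=C\cap\OPT$. For a path component $P=x_1x_2\cdots x_p$ considered in Line~\ref{line:looppaths}, every edge $\{I_{x_j},I_{x_{j+1}}\}$ is a pair of intersecting intervals, hence a witness set, so $\OPT\cap P$ must contain a vertex cover of the path $P$; therefore $|\OPT\cap P|\ge\lfloor p/2\rfloor=|P'|$ no matter which of the (possibly two) minimum vertex covers the algorithm queries, and this block has ratio~$1$ with $O=\OPT\cap P$.

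The obstacle is the singleton cliques produced by the partition, and more generally reconciling the Phase~1 clique partition with the Phase~2 vertex cover choices. By the way the partition is built, a singleton clique $\{I_i\}$ can only be a root of an arborescence of $(\mathcal{I},\mathcal{E})$ (this happens exactly when the edge $(I_{\pi(i)},I_i)$ was suppressed because $I_{\pi(i)}$ is a descendant of $I_i$) or a clique $C_{x_1}=\{I_i\}$ consisting of the unique $\mathcal{S}$-child $I_i$ of a root $I_{x_1}$ that is not itself in $\mathcal{S}$. In the first case $I_{\pi(i)}$ is a descendant of $I_i$ with $\pred{w}_{\pi(i)}\in I_i$, which is precisely the hypothesis of Lemma~\ref{lem:cliquerepartition}; I discard all cliques the algorithm assigned to that arborescence and replace them by the re-partition of the lemma, whose cliques all have size at least~$2$, so they become ordinary source-(ii) blocks. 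In the second case $\pred{w}_{x_1}\in I_i$, so $I_i$ intersects $I_{x_1}$; by Lemma~\ref{lem:endpoints}, a non-endpoint of a path component has no children in the forest, so $I_{x_1}$ must be an endpoint of a path component $P=x_1\cdots x_p$, and each such path has at most two attached singletons $I_a$ (with $\pi(a)=x_1$) and $I_b$ (with $\pi(b)=x_p$). I remove $\{I_a\}$ and $\{I_b\}$ from the clique-partition blocks and instead form the single block $B=P'\cup\{I_a,I_b\}$ (using whichever of $I_a,I_b$ exist), of size at most $\lfloor p/2\rfloor+2$. The witness pairs $\{I_a,I_{x_1}\},\{I_{x_1},I_{x_2}\},\dots,\{I_{x_{p-1}},I_{x_p}\},\{I_{x_p},I_b\}$ form a path on $p+2$ vertices, so $|\OPT\cap B|\ge\lfloor p/2\rfloor+1$, giving $|B|\le 2|\OPT\cap B|$ with $O=\OPT\cap B$; the case with only one attachment is handled the same way on a path of $p+1$ vertices.

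The main step left to carry out carefully is the bookkeeping of disjointness: verifying that the interval sets underlying the blocks partition $\ALG$ (the known mandatory intervals, the intervals of $\mathcal{S}$ distributed among cliques and the two possible attachments, and the Phase~2 vertex covers), that each $O_i$ is contained in its block so the $O_i$ are pairwise disjoint, that no singleton child-of-root interval $I_a$ is absorbed into two different path blocks (immediate, since $\pi(a)$ is a single vertex), and that re-partitioning via Lemma~\ref{lem:cliquerepartition} does not interfere with any attachment (the descendants in such an arborescence all lie in $\mathcal{S}$, hence were queried in Phase~1 and appear in no path component). I would also check the boundary inequalities $\lfloor p/2\rfloor+2\le 2(\lfloor p/2\rfloor+1)$ and the analogous ones for $p$ odd, which hold for all $p\ge 2$. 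I expect the bookkeeping of which intervals are attributed to which block—especially the interaction between the isolated-root re-partition, the attached singletons, and the known mandatory queries triggered in Line~\ref{line:mandatory2}—to be the most delicate part, whereas the combinatorial facts needed (the clique witness property and the chain-of-witness-pairs structure of paths) are already available.
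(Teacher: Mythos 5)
Your block decomposition, the use of Lemma~\ref{lem:cliquerepartition} for isolated roots, and the use of Lemma~\ref{lem:endpoints} to attach singleton cliques at path endpoints are exactly the paper's ingredients, and the high-level structure is right. But the accounting framework you set up---partitioning $\ALG$ into blocks $B_i$ with pairwise disjoint $O_i\subseteq\OPT$ \emph{belonging to} $B_i$---cannot actually support the per-block inequalities you assert, and this is not a cosmetic slip. Consider a path component $P=x_1\cdots x_7$ with $P'=\{I_{x_2},I_{x_4},I_{x_6}\}$, and suppose each queried vertex of $P'$ forces one mandatory neighbour so that $M_P=\{I_{x_1},I_{x_3},I_{x_5}\}$ is queried in Line~\ref{line:mandatory2}; then $\OPT\cap P$ can equal $\{I_{x_1},I_{x_3},I_{x_5},I_{x_7}\}$, so $\OPT\cap P'=\emptyset$. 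Your path block $B=P'$ then admits no $O_P\subseteq B\cap\OPT$, and the inequality $|B|\le 2|O_P|$ fails; if you instead take $O_P=\OPT\cap P$ as written, the singleton blocks you formed for $I_{x_1},I_{x_3},I_{x_5}$ already claim those same $\OPT$ elements, so the $O_i$ are not disjoint and cannot be summed to $|\OPT|$. The same problem recurs for the attached blocks: your bound $|\OPT\cap B|\ge\lfloor p/2\rfloor+1$ is a fact about the extended path $\{I_a\}\cup P\cup\{I_b\}$, not about $B=P'\cup\{I_a,I_b\}$, which again may be disjoint from $\OPT\cap P$.

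The fix is to do what the paper does: partition the \emph{input}, including intervals the algorithm never queries, and prove the local bound $|\ALG\cap W|\le 2|\OPT\cap W|$ for each part $W$. The path part is then $W\supseteq P$ in its entirety, so unqueried intervals of $P$ contribute to $|\OPT\cap W|$ but not to $|\ALG\cap W|$, and the forced Line~\ref{line:mandatory2} queries are charged inside $W$ rather than as separate singletons. Since each vertex of $P'$ forces at most one Line~\ref{line:mandatory2} query (its two path-neighbours are non-adjacent, hence have disjoint intervals, so a single queried value lies in at most one of them), one gets $|\ALG\cap P|\le 2\lfloor p/2\rfloor\le 2|\OPT\cap P|$, and for $p$ odd at least one vertex of $P$ is never queried. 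With this reformulation your singleton-attachment and Lemma~\ref{lem:cliquerepartition} steps go through, though you still need to cover two cases you do not mention: the parent $I_j$ of a singleton clique $C_j=\{I_i\}$ may have been queried in Line~\ref{line:mandatoryproper} or~\ref{line:mandatory} rather than lying in any path component (then fold $I_j$ into $C_j$, since $I_j\in\OPT$ and $I_j$ intersects $I_i$), and $I_j$ may form a one-vertex component never reached by Line~\ref{line:looppaths} (again fold the unqueried $I_j$ into $C_j$).
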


\begin{proof}
Fix an optimum solution $\OPT$.
We partition the input into a set $S$ of intervals that are not queried, plus a set~$S'$ of intervals in $\OPT$, plus a collection~$\mathcal{C}$ of sets with size at least~$2$ that are cliques in the initial intersection graph, plus a collection~$\mathcal{W}$ such that, for each $W \in \mathcal{W}$, the algorithm queries at most $2 \cdot |W \cap \OPT|$ intervals in~$W$.
If we have such a partition, then it is clear that we spend at most $2 \cdot \opt$ queries.

For every arborescence that meets the condition of Lemma~\ref{lem:cliquerepartition}, we take the revised
clique partition whose existence is guaranteed by that lemma and add all its cliques to $\mathcal{C}$. These
arborescences need no longer be considered in the rest of this proof.

We continue by adding all sets~$C_j$
computed by Algorithm~\ref{fig:sorting1cons2rob} to~$\mathcal{C}$, for all $j \in \{1, \ldots, n\}$ where $C_j \neq \emptyset$
and where $C_j$ has not been part of an arborescence that was already handled using Lemma~\ref{lem:cliquerepartition} in the previous
paragraph.
When building a set ~$C_j$ for $j=\pi(i)$ in Algorithm~\ref{fig:sorting1cons2rob},
we always pick a vertex~$I_i \in \mathcal{S}$ that is furthest from the root; therefore, if~$I_j$ is queried in Line~\ref{line:containpredicted}, then~$I_j$ will still be in~$\mathcal{S}$ when we pick~$I_i$ and build~$C_j$.
Thus, if a set $C_j \in \mathcal{C}$ has size~$1$, then~$I_j$ is not queried in Line~\ref{line:containpredicted}.

Now, if there is~$C_j \in \mathcal{C}$ of size~$1$ and~$I_j$ was queried in Line~\ref{line:mandatoryproper} or~\ref{line:mandatory}, then we include~$I_j$ in~$C_j$.

At this point, if there is a $C_j \in \mathcal{C}$ of size~$1$, then~$I_j$ must belong to
some path~$P$ that is a component of the dependency graph just before
Line~\ref{line:vcbeginsort} of Algorithm~\ref{fig:sorting1cons2rob} is executed
for the first time. If $P$ is a path of length one, i.e., consist only of~$I_j$, then we
add $I_j$ to~$C_j$ so that $C_j$ becomes a clique of size~$2$.
It remains to consider the case where $P$ is a path of length at least two.
Then~$P$ is considered in Line~\ref{line:looppaths} and,
by Lemma~\ref{lem:endpoints}, we have that~$I_j$ must be one of the endpoints of~$P$.
Let $P = x_1 x_2 \cdots x_p$.
We add all intervals in~$P$ to a set~$W$.
If~$|C_{x_1}| = 1$, then we make $W := W \cup C_{x_1}$ and $\mathcal{C} := \mathcal{C} \setminus \{C_{x_1}\}$.
We do a similar step if~$|C_{x_p}| = 1$.
Since~$P$ is a path, at least $\lfloor p/2 \rfloor$ of its intervals are in $P \cap \OPT$.
However, the graph induced by~$W$ may no longer be a path.
(For example, an interval in~$C_{x_1}$ may intersect~$I_{x_2}$ and~$I_{x_3}$, but not~$I_{x_4}$.)
Still, it is not hard to see that $|W \cap \OPT| \geq \lfloor |W|/2 \rfloor$ as well,
since any solution to the problem queries a vertex cover in the intersection graph.
If~$|W|$ is even, then we simply add~$W$ to~$\mathcal{W}$ and we are done; so assume that~$|W|$ is odd.
We divide the analysis in two cases:
\begin{enumerate}[(1)]
 \item $p$ is odd.
 Then the algorithm queries $I_{x_2}, I_{x_4}, \ldots, I_{x_{p-1}}$ in Line~\ref{line:pathodd}.
 Each of those intervals can force at most one query in Line~\ref{line:mandatory2}, therefore we have at least one interval in~$W$ that is never queried by the algorithm.
 Since $|W \cap \OPT| \geq \lfloor |W|/2 \rfloor$, clearly the algorithm queries at most $2 \cdot |W \cap \OPT|$ intervals in~$W$, and we include~$W$ in~$\mathcal{W}$.
 \label{case:odd}
 \item $p$ is even.
 Then either~$|C_{x_1}| = 1$ or~$|C_{x_p}| = 1$ (but not both), otherwise~$|W|$ would be even.
 Thus we have two subcases:
 \begin{enumerate}[(2a)]
  \item $W = P \cup C_{x_1}$.
  The algorithm queries $I_{x_1}, I_{x_3}, \ldots, I_{x_{p-1}}$ in Line~\ref{line:patheven1}.
  We begin by adding $C_{x_1} \cup \{x_1\}$ to~$\mathcal{C}$.
  If~$I_{x_1}$ forces a query in~$I_{x_2}$ in Line~\ref{line:mandatory2}, then we add~$I_{x_2}$ to~$S'$, and the remaining of~$P$ to a new set~$W'$ in~$\mathcal{W}$; this will be an even path because it was even to begin with and we remove $I_{x_1}$ and $I_{x_2}$, so we will be fine.
  Otherwise, the remaining of~$P$ is an odd path for which less than half intervals are queried in Line~\ref{line:patheven1}, and we proceed similarly as in case~(\ref{case:odd}).
  \label{case:evenx1}
  \item $W = P \cup C_{x_p}$.
  Then the algorithm queries $I_{x_2}, I_{x_4}, \ldots, I_{x_p}$ in Line~\ref{line:patheven2}, and the analysis is symmetric to the previous subcase.
 \end{enumerate}
\end{enumerate}

At this point, it is clear that there are no more sets of size~$1$ in~$\mathcal{C}$.
The remaining intervals that are not queried are simply included in~$S$, the remaining intervals queried in Lines~\ref{line:mandatoryproper} and~\ref{line:mandatory} are included in~$S'$, and we clearly obtain a partition of the intervals as desired.
\end{proof}

\section{Appendix for the experimental results (Section ~\ref{sec:exp})}
In this section we describe {in detail} the instance and prediction generation as used in the simulations of Section~\ref{sec:exp}.
In addition, Section~\ref{subsub:exp_mst} shows experimental results for the MST problem under uncertainty.

Our algorithms rely on finding minimum vertex covers. 
For the minimum problem, we solved the vertex cover problem using the \emph{Coin-or branch and cut (CBC)\footnote{\url{https://github.com/coin-or/Cbc}, accessed November 4, 2020.}} mixed integer linear programming solver.
Since our MST algorithms only solve the vertex cover problem in bipartite graphs, we determined minimum vertex covers using a standard augmenting path algorithm~\cite{AhujaMO1993book}.

\subsection{Experimental results for the minimum problem}
\label{subsub:exp_min}

We generated test instances by drawing interval sets from interval graphs. As source material we used instances of the \emph{boolean satisfiability problem (SAT)} from the rich SATLIB~\cite{hoos2000satlib} library.
A \emph{clause} of a SAT instance is a set of variables (with polarities) where each variable can be represented by its index, i.e., the variables are {numbered}.
We interpret each clause $c$ as an interval based on the indices of variables in $c$ (ignoring the polarities).
Each clause $c$ can be interpreted as the interval $(L_c, U_c)$ with $L_c = c_{\min} - \eps$ and $U_c = c_{\max} + \eps$ for a small $\eps > 0$, where $c_{\min}$ and $c_{\max}$ denote the lowest and highest variable index in $c$. 
In non-trivial SAT instances, the complexity of the problem is created by clauses sharing variables.
This in many instances leads to a high overlap between the interval representations of the clauses which makes the resulting interval graphs  interesting source material for the minimum problem under uncertainty. 

To generate instances for the minimum problem, we uniformly at random draw a sample of (not necessarily distinct) \emph{root intervals} from the re-interpreted SAT instance.
For each root interval, we add a \emph{root set} $S$ to the minimum problem instance which contains the root interval and {up to} $r_w$ intersecting intervals where $r_w$ is a parameter of the instance generation.
Note that the size of the root sets also depends on whether the source SAT instance contains enough intervals that intersect the root interval.
The number of intersecting intervals (between $1$ and $r_w$) and the intersecting intervals themselves are again drawn uniformly at random.
Note that we only generate preprocessed instances, i.e., instances where the leftmost interval $I_i$ of a set $S$ does not fully contain any interval $I_j \in S\setminus\{I_i\}$, and therefore might discard drawn intervals that would lead to the instance becoming non-preprocessed.
We generate only preprocessed instances as these are the difficult instances: a non-preprocessed instance gives all algorithms access to \enquote{free information} in form of queries that are obviously part of any feasible solution which can be very useful in solving the instance.

To ensure that the generated instances have an interesting underlying interval graph structure, that is, an interesting vertex cover instance (cf. Section~\ref{sec:minimum}), each root set $S$ is used as a starting point of paths with length {up to} $r_d$ in the vertex cover instance where $r_d$ is a parameter of the instance generation.
For each non-leftmost interval $I_S \in S$ we draw an integer $r_d'$ between $0$ and $r_d - 1$ that denotes the length of the path starting at $I_S$.
If $r_d' > 0$, we generate a set $S'$ with $I_S$ as leftmost interval and size of at most $r_w$ by again drawing the number of intersecting intervals (between $1$ and $r_w$) and the intersecting intervals themselves uniformly at random.
The procedure is repeated recursively in set $S'$ with parameter $r_d'< r_d$.
Note that the generation of these paths also depends on whether generating them is possible using the source SAT instance.
This part of the instance generation ensures a more complex underlying interval graph structure in the generated instances.
{In total, the family of sets $\mathcal{S}$ of a generated instance consists of all root sets and all sets $S'$ that are added in the recursive procedure.
The set of intervals $\mathcal{I}$ of a generated instance consists of all intervals that are added during the root set generation and all intervals added in the recursive procedure}

The instances were generated by drawing between $75$ and $150$ root clauses using values $r_w = 10$ and $r_d = 2$.
The resulting instances contain between $47$ and $287$ intervals, and between $15$ and $126$ sets.
Since the number of variables and clauses in the source SAT instances influences the probability with which the generated sets share intervals, we used SAT instances of different sizes, containing between $411$ and $32316$ clauses, and between $100$ and $8704$ variables.

Thus far, we only described how to generate the intervals and sets. 
{This paragraph} describes the generation of the true values.
To generate the true values, we start with initial true values that are placed in a standard form such that no elements are mandatory. 
Note that it is not always possible to set the true values such that no elements are mandatory. 
In such cases, we start with true values such that only a small number of elements {are} mandatory.
Afterwards, we uniformly at random draw the number of mandatory elements of the generated instance and then sequentially select random elements whose true values can be adjusted such that the number of mandatory elements increases until the determined number of mandatory elements is reached (if possible).
To set a true value such that the number of mandatory elements increases, we exploit Lemma~\ref{lema_mandatory_min}.
We generate the true values in this way because an instance of the minimum problem is essentially characterized by its vertex cover instance and its set of mandatory elements. 
Since the set of mandatory elements has an important effect on the instance, it makes sense to generate them in a way such that a wide range of different mandatory element sets is covered and especially mandatory element sets of a wide range of sizes are considered.

Regarding the generation of predictions, we observed that just picking predicted values uniformly at random (or by using a Gaussian distribution) does not lead to predictions that cover a wide range of relative errors $k_M/\opt$. 
Thus, we employed a more sophisticated prediction generation.

For each instance, we start with a target prediction error of $v=0$.
For this target prediction error, we generate predictions with an error of $k_M \approx v$. (An error of $k_M = v$ is not always possible.)
We repeatedly generate such predictions while increasing $v$ until we cannot find any predictions with $k_M \approx v$.
To generate predictions with an error of $k_M \approx v$ for a given target value $v$, we start with tentative predicted values that equal the true values.
Then, we determine the set of elements $F$ whose predicted values can be placed such that the error $k_M$ for the tentative predicted values increases.
The procedure uniformly at random draws an $e \in F$ and places $\w_e$ such that the error increases by at least one.
We repeat this procedure until the targeted error (or, if not possible, an error close to the targeted error) is reached.

After we iteratively generate predictions with an increasing error, we equally divide the interval $[0,v_{\max}]$ into $25$ bins of equal size, where $v_{\max}$ is the maximum prediction error of the generated predictions, and, for each bin, select the $5$ predictions with the highest error within the bin. 
The resulting $125$ sets of predictions are then used for our experiments.

\subsection{Experimental results for the MST problem}
\label{subsub:exp_mst}

For the MST problem we generated instances based on the symmetric traveling salesman problem instances of the TSPLIB\footnote{\url{http://comopt.ifi.uni-heidelberg.de/software/TSPLIB95/tsp/}, accessed November 3, 2020.}.
We considered graphs of up to $90$ vertices and $4000$ edges.
For TSP instances with more than $90$ vertices, we instead selected and used connected sub-graphs with $90$ vertices.
The TSPLIB instances already contain the graph structure and the true edge weights. 

We generate the interval boundaries as described in~\cite{focke17mstexp}.
For each edge $e$ with true value $w_e$, an interval boundary is set close to $w_e$ with a probability of $\frac{1}{2}$.
That is, with a probability of $\frac{1}{2}$ we set either $L_e = w_e - \eps$ or $U_e = w_e + \eps$ for a small $\eps>0$.
This encourages the generation of instances where an intersecting interval contains the true value of $e$ and increases the chance of generating instances with mandatory elements.
The interval boundaries that are not set afterwards are drawn uniformly at random within a ratio of $d$ around the true value $w_e$ where $d$ is a parameter of the instance generation.

Our test instances were generated using different values for $d$, between $0.05$ and $1$.
While~\cite{focke17mstexp} observes that small choices for $d$ lead to more difficult instances in terms of the competitive ratio and, in particular, utilizes the parameter $d = 0.065$ to generate difficult instances, a small choice for $d$ also leads to smaller generated instances.
This is because a small $d$ increases the probability of small uncertainty intervals and therefore the probability that intervals on a cycle do not intersect.
Thus, instances effectively become smaller. 
Since we want to observe the performance of our algorithms for different choices of $\gamma$, we rely on instances with a wider range of relative errors and, thus, on bigger instances.

For each of the $102$ graphs we generated $5$ instances of the MST problem under uncertainty without predictions and for each such generated instance we in turn generated $100$ predictions.
While the minimum problem required a more sophisticated prediction generation to cover a wide range of prediction errors, we were able to cover a large enough range for the MST problem by selecting the predicted values uniformly at random.
By repeating the random prediction generation long enough, we ensure that a large enough relative error range is covered.

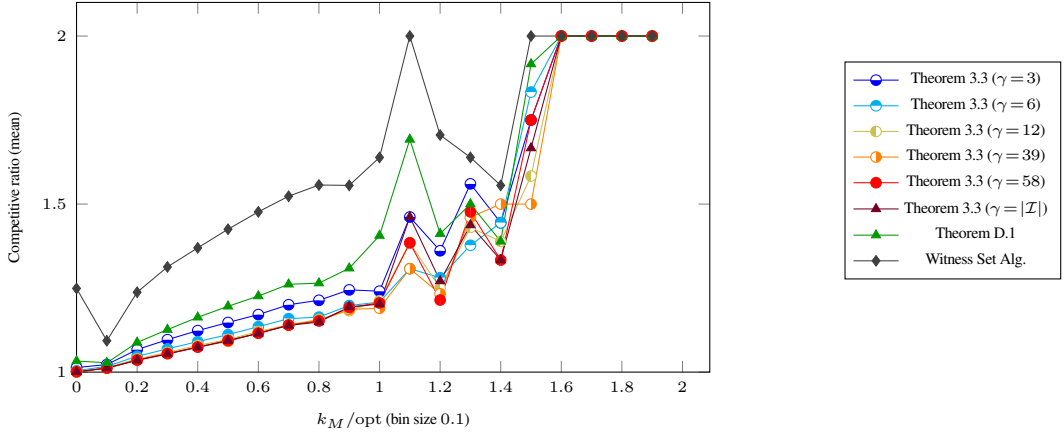
\begin{figure}[tb]
	\centering
	\begin{tikzpicture}
	    \begin{groupplot}[group style={group size= 1 by 3,horizontal sep = 1.25cm}]
	    	\nextgroupplot[
					,title style={font=\tiny}
					,xlabel={$k_M / \opt$ (bin size $0.1$)}
					,ylabel={Competitive ratio (mean)}
					,grid=minor
					,xmin = 0
					,ymin = 1
					,label style={font=\tiny},
					,tick label style={font=\tiny}  
					,legend entries={
									Theorem~\ref{theorem_mst2} ($\gamma = 3$),									
									Theorem~\ref{theorem_mst2} ($\gamma = 6$),										
									Theorem~\ref{theorem_mst2} ($\gamma = 12$),										
									Theorem~\ref{theorem_mst2} ($\gamma = 39$),
									Theorem~\ref{theorem_mst2} ($\gamma = 58$),										
									Theorem~\ref{theorem_mst2} ($\gamma = |\mathcal{I}|$),
									Theorem~\ref{theorem_mst1},
									Witness Set Alg.}
					,legend style={at={(axis cs:3.25,1.6)},anchor= east, font = \tiny}
					,table/col sep=comma,
					width=10cm,
					height=6.5cm]

				\addplot[blue,mark color=blue,mark=*,mark color=white,mark=halfcircle*,mark options={rotate=180}] table [x = Error_bin, y = MSTAlgorithmRecoveryB_003_cr] {exp_results/mst_error_sum.csv};	
				\addplot[cyan,mark color=white,mark=halfcircle*] table [x = Error_bin, y = MSTAlgorithmRecoveryB_006_cr] {exp_results/mst_error_sum.csv};	
				\addplot[yellow!75!black,mark color=white,mark=halfcircle*,mark options={rotate=90}] table [x = Error_bin, y = MSTAlgorithmRecoveryB_012_cr] {exp_results/mst_error_sum.csv};
				\addplot[orange,mark color=white,mark=halfcircle*,mark options={rotate=270}] table [x = Error_bin, y = MSTAlgorithmRecoveryB_039_cr] {exp_results/mst_error_sum.csv};	
				\addplot[red,mark color=red,mark=*,mark color=red,mark=halfcircle*,mark options={rotate=90}] table [x = Error_bin, y = MSTAlgorithmRecoveryB_058_cr] {exp_results/mst_error_sum.csv};	
				\addplot[purple!60!black,mark color=purple!60!black,mark=triangle*] table [x = Error_bin, y = MSTAlgorithmRecoveryB_195_cr] {exp_results/mst_error_sum.csv};	
				\addplot[green!60!black,mark color=green!60!black,mark=triangle*] table [x = Error_bin, y = MSTAlgorithmRecoveryA_cr] {exp_results/mst_error_sum.csv};	
	 			\addplot[darkgray,mark color=darkgray,mark=diamond*] table [x = Error_bin, y = CycleAlgorithm_cr] {exp_results/mst_error_sum.csv};
	    \end{groupplot}
		
	\end{tikzpicture}
	\caption{Experimental results for the MST problem under uncertainty. Instances and predictions were grouped into equal size bins according to their relative error.}
	\label{fig_mst_experiments}
\end{figure}

{ Figure~\ref{fig_mst_experiments} shows the results of the over $20,000$ simulations (instance and prediction pairs).
The figure compares the results of our prediction-based algorithms of Theorems~\ref{theorem_mst2} and~\ref{theorem_mst1} for different choices of the parameter $\gamma$ with the standard {\em witness set algorithm}. The latter sequentially resolves cycles by querying witness {sets} of size two and achieves the best possible competitive ratio of $2$ without predictions~\cite{erlebach08steiner_uncertainty}.
}

{Our prediction-based algorithms outperform the witness set algorithm for every relative error up to $1.4$.
For higher relative errors, the prediction-based algorithms match the performance of the witness set algorithm. 
	
Further, the parameter $\gamma$ reflects the robustness-performance tradeoff in the sense that the curves for different choices of $\gamma$ intersect and high values for $\gamma$ perform better for smaller relative errors, while smaller values for $\gamma$ perform better for high relative errors.}
The performance gap between the different values for $\gamma$ appears less significant for small relative errors, 
which suggests that selecting $\gamma$ not too close to the maximum value $|\mathcal{I}|$ might be beneficial.

In contrast to the results for the minimum problem (cf. Figure~\ref{fig_experiments}), the plots for the prediction-based algorithm are not monotonously increasing and instead contain jumps for higher relative errors. 
This is because the instances more strongly vary for the different error bins. 
In particular, most of the instances with predictions that lead to higher relative errors are small in the sense that $\opt$ is small. 
The variation in the instances between different error bins leads to jumps in the plots.
For relative errors of at least $1.6$ the instances are small enough such that the different choices for $\gamma$ essentially behave the same. 

\newpage
\small

\bibliographystyle{abbrv} 
\bibliography{paperv2}

\end{document}